\theoremstyle{plain}
\newtheorem{thm}{Theorem}[section]
\newtheorem{lem}[thm]{Lemma}
\newtheorem{prop}[thm]{Proposition}
\newtheorem{cor}[thm]{Corollary}
\theoremstyle{definition}
\newtheorem{dfn}{Definition}[section]
\newtheorem{ex}[thm]{Example}
\theoremstyle{remark}
\newtheorem{rem}{Remark}[section]
\DeclareMathOperator{\tr}{tr}
\DeclareMathOperator{\spn}{span}
\newcommand{\HRabi}{H_{\text{R}}}
\newcommand{\ZRabi}{Z_{\text{R}}}
\newcommand{\KRabi}{K_{\text{R}}}
\newcommand{\N}{\mathbb{N}} % natural numbers
\newcommand{\Z}{\mathbb{Z}} % integers
\newcommand{\R}{\mathbb{R}} % real numbers
\newcommand{\C}{\mathbb{C}} % complex numbers
\newcommand{\e}{\varepsilon}
\def\smallunderbrace#1{\mathop{\vtop{\m@th\ialign{##\crcr
   $\hfil\displaystyle{#1}\hfil$\crcr
   \noalign{\kern3\p@\nointerlineskip}%
   \tiny\upbracefill\crcr\noalign{\kern3\p@}}}}\limits}
\newcommand{\QEDhere}{\pushQED{\qed}\qedhere\popQED}
\newcommand{\mat}[1]{\begin{bmatrix}#1\end{bmatrix}}
\DeclareMathOperator{\pref}{Pre}
\newcommand{\matrixU}[1] {
  \mat{ u^{#1}  & 0 \\  0 & u^{-#1}}
}
\newcommand{\matrixId}{
  \mat{ \phantom{-}1  & \phantom{-}0 \\  \phantom{-}0 & \phantom{-}1}
}
\newcommand{\matrixZZ}{
  \mat{ \phantom{-}1 & -1 \\ -1 & \phantom{-}1 }
}
\newcommand{\matrixZO}{
  \mat{ -1 & -1 \\ \phantom{-}1 & \phantom{-}1 }
}
\newcommand{\matrixOZ}{
  \mat{ -1 & \phantom{-}1 \\ -1 & \phantom{-}1 }
}
\newcommand{\matrixOO}{
  \mat{ \phantom{-}1 & \phantom{-}1 \\ \phantom{-}1 & \phantom{-}1 }
}
\newcommand{\setA}[3]{ \mathcal{A}^{(#1)}_{#2 #3} }
\newcommand{\setC}[3]{ \mathcal{C}^{(#1)}_{#2 #3} }
\newcommand{\bv}{\mathbf{v}}
\newcommand{\bs}{\mathbf{s}}
\newcommand{\br}{\mathbf{r}}
\newcommand{\bZ}{\mathbf{0}}
\newcommand{\bO}{\mathbf{1}}
\newcommand{\bI}{\mathbf{I}}
\newcommand{\bJ}{\mathbf{J}}
\newcommand{\bA}{\mathbf{A}}
\newcommand{\bM}{\mathbf{M}}
\newcommand{\bC}{\mathbf{C}}
\newcommand{\bU}{\mathbf{U}}
\newcommand{\btZ}[1]{\Z_{2}^{#1}}
\newcommand{\btZdual}[1]{\widehat{\Z_{2}^{#1}}}
\begin{document}

%% Title and Authors

\title{Heat kernel for the quantum Rabi model}
\author{Cid Reyes-Bustos}
\author{Masato Wakayama}

\date{\today}

\subjclass[2010]{Primary 81Q10; Secondary 35K08, 47D06}

%Keywords and phrases:
\keywords{quantum Rabi model, non-commutative harmonic oscillator, heat kernel, Trotter-Kato product formula, path integrals, partition function, infinite symmetric group}

%% Abstract

\begin{abstract}  
  
The quantum Rabi model (QRM) is widely recognized as a particularly important model in quantum optics and beyond. It is considered to be the simplest and most fundamental system describing quantum light-matter interaction. The objective of the paper is to give an analytical formula of the heat kernel of the Hamiltonian explicitly by infinite series of iterated integrals. 
The derivation of the formula is based on the direct evaluation of the Trotter-Kato product formula without the use of Feynman-Kac path integrals.
More precisely, the infinite sum in the expression of the heat kernel arises from the reduction of the Trotter-Kato product formula into sums over the orbits of the action of the infinite symmetric group $\mathfrak{S}_\infty$ on  the group \(\Z_2^{\infty} \), and the iterated integrals are then considered as the orbital integral for each orbit. Here, the groups \(\Z_2^{\infty} \) and $\mathfrak{S}_\infty$ are the inductive limit of the families  $\{\Z_2^n\}_{n\geq0}$ and $\{\mathfrak{S}_n\}_{n\geq0}$, respectively. In order to complete the reduction, an extensive study of harmonic (Fourier) analysis on the inductive family of abelian groups $\Z_2^n\, (n \geq0)$ together with a graph theoretical investigation is crucial.
To the best knowledge of the authors, this is the first explicit computation for obtaining a closed formula of the heat kernel for a non-trivial realistic interacting quantum system. The heat kernel of this model is further given by a two-by-two matrix valued function and is expressed as a direct sum of two respective heat kernels representing the parity ($\Z_2$-symmetry) decomposition of the Hamiltonian by parity. 
\;

\end{abstract}

%%%%%%%%%%%%%%%%%%%%%%%%%%%%%%%%%%%%%%%%%%%%%%%%%%%%%%%%%%%%%
%%%%%%%%%%%%%%%%% Document body %%%%%%%%%%%%%%%%%%%%%%%%%%%%%
%%%%%%%%%%%%%%%%%%%%%%%%%%%%%%%%%%%%%%%%%%%%%%%%%%%%%%%%%%%%%

\maketitle

\tableofcontents

%%%%%%%%%%%%%%%%%%%%%%%%%%%%%%%%%%%%%%%%%%%%%%%%%%%%%%%%%%
\section{Introduction} \label{sec:intro}
%%%%%%%%%%%%%%%%%%%%%%%%%%%%%%%%%%%%%%%%%%%%%%%%%%%%%%%%%%

The quantum Rabi model (QRM) is widely recognized as the simplest and most fundamental model describing quantum light-matter interactions, that is, the interaction between a two-level system and a bosonic field mode (see e.g. \cite{bcbs2016} for a recent collection 
of introductory, survey and original articles after Isidor Rabi's seminal papers \cite{Rabi1936, Rabi1937} on the semi-classical (Rabi) model in 1936 and 1937 and the full quantization \cite{JC1963} established in 1963 by Jaynes and Cummings).
Quantum interaction models, of which the QRM may be considered as a distinguished representative, have been considered not only by theorists but also by experimentalists (e.g. \cite{YS2018}) as indispensable models %in quantum optics
  for advancing research on quantum computing and its implementation (see e.g \cite{BMSSRWU2017, HR2008}).
In spite of recent progress on theoretical/mathematical and numerical studies (e.g. see \cite{bcbs2016, Le2016, BdMZ2019, KRW2017} and their references), knowledge in several fundamental areas is still limited. For instance, explicit formulas for time evolution of the corresponding systems remain largely unknown (though certain partial results have been discussed in, e.g. \cite{LCDFGZ1987} for the Spin-Boson model and \cite{AYH1970, CR1995} for the Kondo effect), and similarly, qualitative information about the spectrum in several coupling regimes (stipulated by the physical parameters defining the model) and eigenvalues distribution (e.g. the unsolved conjecture for the QRM in \cite{B2011PRL}) remains sparse and mysterious (see, e.g. \cite{RS2017PRA}).

The Hamiltonian \(\HRabi\) of the QRM is precisely given by
\[
  \HRabi := \omega a^{\dagger}a + \Delta \sigma_z + g (a + a^{\dagger}) \sigma_x .
\]
Here, \(a^{\dagger}\) and \(a\) are the creation and annihilation operators of the single bosonic mode (\([a,a^{\dagger}]=1 \)), $\sigma_x, \sigma_z$ are the Pauli matrices (sometimes written as \(\sigma_1\) and \(\sigma_3\), but since there is no risk of confusion with the variable \(x\) to appear below in the heat kernel, we use the usual notations), $2\Delta$ is the energy difference between the two levels, $g$ denotes the coupling strength between the two-level system and the bosonic mode with frequency $\omega$ (subsequently, we set $\omega=1$ without loss of generality). The integrability of the QRM was established in \cite{B2011PRL} using the well-known $\Z_2$-symmetry of the Hamiltonian \(\HRabi\), usually called parity. The QRM actually appears ubiquitously in various quantum systems including cavity and circuit quantum electrodynamics, quantum dots and artificial atoms with potential applications in quantum information technologies (see \cite{HR2008, bcbs2016}). 
For instance, recent experimental results \cite{YS2016,YS2018}, aimed at measuring light shifts of superconducting flux qubits deep-strongly coupled to LC oscillators, agree with theoretical predictions based on the QRM and its asymmetric version. It is actually shown in \cite{YS2018} that in the deep-strong regime (i.e. in the case $(\Delta, \omega) \ll g$), the energy eigenstates are well described by entangled states. 

The purpose of the present paper is to obtain  closed explicit expressions for the heat kernel and the partition function of the QRM. Let us briefly recall the definitions. The heat kernel $\KRabi(x,y,t;g,\Delta)$ of $\HRabi$ is the integral kernel corresponding to the operator
\(e^{-t \HRabi} \) (one-parameter semigroup), that is, $\KRabi(x,y,t;g,\Delta)$ satisfies
\[
e^{-t \HRabi}\phi(x)= \int_{-\infty}^\infty  \KRabi(x,y,t;g,\Delta) \phi(y)dy
\]
for a compactly supported smooth function $\phi: \R \to \C^2$. Precisely, $K_{\text{Rabi}}(x,y,t;g,\Delta)$ is the (two-by-two matrix valued) function satisfying $\frac{\partial}{\partial t} \KRabi(x,y,t;g,\Delta)= - \HRabi \, \KRabi(x,y,t;g,\Delta)$ for all $t>0$ and
$\lim_{t \to 0}\KRabi(x,y,t;g,\Delta) = \delta_x(y) \bf{I}_2$ for $x,y \in \R$. 

In statistical physics, the partition function of a system is of fundamental importance as it describes the statistical properties of the system in thermodynamic equilibrium as a function of temperature and other parameters, such as the volume enclosing a gas. The partition function $\ZRabi(\beta;g,\Delta)$ of the quantum system QRM is given by the trace of the Boltzmann factor $e^{-\beta E(\mu)}$, $E(\mu)$ being the energy of the state $\mu$:
\[
  \ZRabi(\beta;g,\Delta):= \text{Tr}[e^{-t \HRabi}]= \sum_{\mu\in \Omega} e^{-\beta E(\mu)},
\]
where $\Omega$ denotes the set of all possible (eigen-)states of $\HRabi$. 

In general, the computation of the heat kernel of an operator is considered to be a difficult problem and it often involves the use of mathematically transcendental techniques such the Feynman path integrals or the well-defined (rigorous) Feynman-Kac formulas (for the relation between the path integral and the (Lie-) Trotter-Kato product formula, see e.g. \cite{Calin2011} and  \cite{I2003}). For instance, the heat kernel was expressed by the Feynman-Kac path integral in \cite{HH2012}. In contrast, the method presented in this paper for the explicit computation of the heat kernel of the QRM is based on  detailed calculations using the Trotter-Kato product formula (or the exponential product formula for the semigroup) directly  for a pair of (in general) self-adjoint unbounded operators \cite{Kato1978, AI2008}. In particular, we do not employ path-integrals or probabilistic methods, instead, we deal with the limit appearing in the Trotter-Kato product formula by using harmonic (Fourier) analysis on the inductive family of abelian finite groups \(\{\Z_2^{n}\}_{n\geq0}\).
Each elements of  \(\Z_2^{n}\), for \(n \in \N \), may be interpreted as a path between two points alternating
between two ``states''.  For illustrative purposes, in figure \ref{fig:paths} we show two paths
corresponding to two elements of \(\Z_2^{n}\) where the two states are denoted by ``+'' and ``-''. In this way,
in our computation, in place of all paths in the path integral, we employ all paths in the inductive limit $\Z_2^\infty:=\varinjlim \Z_2^n$ as \(n \to \infty\) (with a natural point measure). Our infinite, yet countable, number of paths in $\Z_2^\infty$
may then be considered as sort of representatives of all paths in the Feynman-Kac path integral. 
Symbolically, we may think there is some equivalence $\sim$ such that 
\[
  \{ \text{ paths } \}/\sim \quad \longleftrightarrow \, \Z_2^{\infty}
\]
is a bijection. A deeper understanding of this expected relation is an important open question.
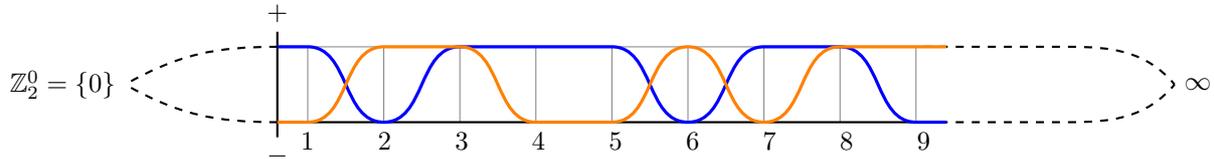
\begin{figure}[!ht]
  \centering
  \begin{tikzpicture}[domain=0:4]

    %grid
    \draw[step=1,very thin,color=gray] (1,0.0) grid (9,1);

    % axis x
    \draw[thick] (1,0) -- (9.1,0) 
    node[pos=1,below] {$9$}
    node[pos=0,below]{1} node[pos=0.125,below]{2} node[pos=0.25,below]{3}
    node[pos=0.375,below]{4} node[pos=0.5,below]{5} node[pos=0.625,below]{6}
    node[pos=0.750,below]{7} node[pos=0.875,below]{8}; % node[pos=0.9,below]{9};

    % endpoints
    \draw[thick,dashed] (0.6,1) to[out=180,in=30] (-1.4,0.5) ;
    \draw[thick,dashed] (0.6,0) to[out=180,in=-30] (-1.4,0.5) node[left] {$\Z_2^0=\{0\}$};

    \draw[thick,dashed] (9.1,1) -- (11,1) ;
    \draw[thick,dashed] (9.1,0) -- (11,0) ;
    \draw[thick,dashed] (11,1) to[out=0,in=130] (12.4,0.5) ;
    \draw[thick,dashed] (11,0) to[out=0,in=-130] (12.4,0.5) node[right] {$\infty$} ;

    % axis y
    \draw[thick] (0.6,-0.2) node[below] {$-$} -- (0.6,1.2) node[above] {$+$};

    \draw[color=blue,very thick] (0.6,1) -- (1,1) to[out=0,in=180] (2,0) to[out=0,in=180] (3,1) to[out=0,in=180] (4,1) to[out=0,in=180] (5,1) to[out=0,in=180] (6,0) to[out=0,in=180] (7,1) to[out=0,in=180] (8,1) to[out=0,in=180] (9,0) --  (9.4,0); % -- (10,1);

    \draw[color=orange,very thick] (0.6,0) to[out=0,in=180] (1,0) to[out=0,in=180] (2,1) to[out=0,in=180] (3,1) to[out=0,in=180] (4,0) to[out=0,in=180] (5,0) to[out=0,in=180] (6,1) to[out=0,in=180] (7,0) to[out=0,in=180] (8,1) to[out=0,in=180] (9,1) to[out=0,in=180] (9.4,1); %-- (10,0);

    %\draw[color=green,very thick] (0,0) -- (1,0) -- (2,0) -- (3,1) -- (4,1) -- (5,1) --
    %                             (6,0) -- (7,1) -- (8,0) -- (9,0) -- (10,0);
    
  \end{tikzpicture}
  \caption{Two paths in $\Z_2^{n}$ for \(n=9\).}
  \label{fig:paths}
\end{figure}

The resulting formulas for the heat kernel and partition function are given as infinite series (actually, a uniformly convergent power series in the parameter \(\Delta\)) where each of the summands consist of a $k$-iterated integral $(k=0,1,2,\ldots)$. 
Concretely, for the heat kernel of QRM, the main result of this paper, given in Theorem \ref{thm:heat_kernel}, is the analytical formula of the form
\begin{align*}
  \KRabi(x,y,t ;g,\Delta) =  K_0(x,y,t;g) \sum_{\lambda=0}^{\infty} (t\Delta)^{\lambda} e^{-2g^2 (\coth(\tfrac{t}2))^{(-1)^\lambda}}  \Phi_\lambda(x,y,t;g).
\end{align*}
Here, \(K_0(x,y,t;g)\) is explicitly given through Mehler's kernel, that is, the heat propagator of the quantum harmonic oscillator (see Theorem \ref{thm:heat_kernel} for the definition) and \(\Phi_\lambda(x,y,t;g)\) is a \(2 \times 2\) matrix-valued
function given by
\begin{align*}
  \Phi_\lambda(x,y,t;g) := \idotsint\limits_{0\leq \mu_1 \leq \cdots \leq \mu_\lambda \leq 1}  e^{4g^2 \frac{\cosh(t(1-\mu_\lambda))}{\sinh(t)}(\frac{1+(-1)^\lambda}{2}) + \xi_{\lambda}(\bm{\mu_{\lambda}},t)}\begin{bmatrix}
    (-1)^{\lambda} \cosh  &  (-1)^{\lambda+1} \sinh  \\
    -\sinh &  \cosh
  \end{bmatrix}
             \left( \theta_{\lambda}(x,y,\bm{\mu_{\lambda}},t) \right) d \bm{\mu_{\lambda}}. 
\end{align*}
We refer to \S~\ref{sec:limit} for the definition of the functions \(\xi_\lambda(\bm{ \mu_{\lambda}},t)\) and \(\theta_{\lambda}(x,y,\bm{\mu_{\lambda}},t)\) which are explicitly represented by finite sums of hyperbolic functions. An analogous expression is given in Corollary \ref{cor:Partition_function} for the partition function of QRM obtained from the heat kernel formula by taking trace.
In the language of $\Z_2^\infty$-paths described above, the infinite sum (series) in the expression of the heat kernel is considered to be taken over the orbits $\mathcal{O}_\lambda$ of the infinite symmetric group $\mathfrak{S}_\infty$ defined by the inductive family of symmetric groups $\{\mathfrak{S}_n\}_{n\geq0}$ and the summand given by iterated integral (over the $\lambda$th simplex) can be regarded as the orbital integral for each orbit $\mathcal{O}_\lambda$ in $\Z_2^\infty$ (cf. Lemma \ref{lem:sumint}). A short but detailed discussion is given in \S \ref{sec:remark-computations} and we leave to \cite{RW2020z} the more detailed discussion along the induced representation viewpoint of Tsilevich and Vershik \cite{TV2007}.
  
It is relevant to mention that similar expressions have been found in the study of evolution operators (propagators) or related quantities in other physical models. For the Hamiltonian associated to the Kondo problem (the model for a quantum impurity coupled to a large reservoir of non-interacting electrons), a special matrix coefficient (a correlation function) of the heat kernel was obtained in \cite{AYH1970} (see Equation (1) therein) as a power series with coefficients given by iterated integrals (see also \cite{Kondo2012} for an extensive discussion). This study is continued in \cite{CR1995} to obtain numerical results on the behavior of the correlation function for long times and its asymptotics for the Kondo problem. An attempt to obtain the thermodynamic properties of a Kondo impurity using the Monte Carlo method was first considered in \cite{SS1971}, but in contrast to the method developed in \cite{CR1995}, it involves simulation in the grand ensemble, a technically more difficult problem. 
For the Spin-Boson model, which may be regarded as a generalization of the QRM, the formal expression for \(P(t) := \langle \sigma_z(t) \rangle \) where \(\sigma_z(t)\) is the Heisenberg representation with respect to the full Hamiltonian at time \(t\), a quantity related to the qualitative behaviour of the system, is given in (4.17-19) of
\cite{LCDFGZ1987} as a power series in a parameter with iterated integral coefficients. In both cases, the formulas
are obtained by the evaluation of Feynman-Kac path-integrals. Some recent developments, mainly on the experimental implementation side can be found in the review \cite{LeHur2018}.

On the topic of the QRM, in \cite{ZZ1988} the authors gave an approximated (or incomplete) formula for the propagator
using path-integral techniques. For the Spin-Boson model, and the QRM as a special case, a %(non-explicit) 
Feynman-Kac formula for the heat kernel (semi-group generator) was obtained in \cite{HH2012,HHL2012} via a Poisson point process and a Euclidean field.
However, for the study of longtime behavior of the system, the use of numerical computations or approximations is inevitable (see for instance \cite{LF2011, CPMP}). In addition, we note that the heat kernel of the QRM (Theorem \ref{thm:heat_kernel}) might be computed from the Feynman-Kac path integral expression of Theorem 3.2 in \cite{HH2012} using the method computing $P(t)$ developed in \cite{LCDFGZ1987} (or, back to the definition of the path integral and make it to be a Riemann sum as in this paper) but it seems to require a similar volume of computation with a necessary mathematically rigorous discussion.

At this point, it is significant to mention a mathematical model that shares some features with the QRM and that actually, in a specific sense, may be considered to be a generalization of the QRM.
The non-commutative harmonic oscillator (NCHO) is the model defined by a deformation of the tensor product of the quantum harmonic oscillator (cf. \cite{HT1992}) and the two dimensional trivial representation of Lie algebra $\mathfrak{sl}_2$ (see \cite{PW2001, P2010S}). The QRM is obtained from the NCHO (with generalized parameters in the definition of the Hamiltonian) through a confluence process (i.e. two regular singularities merge to an irregular singularity) in the Heun ODE picture of respective models \cite{W2015IMRN}. Considering the NCHO as a sort of {\it covering} of the QRM (in the sense of the confluence procedure), one might expect the explicit derivation of the heat kernel to be simpler than in the case of the QRM, however, the heat kernel for the NCHO is yet to be obtained.
It is worth remarking here that the spectrum of the NCHO is known to posses many rich arithmetic structure (e.g. modular forms, particular congruences, automorphic integrals) via the special values (i.e. moments of partition function) of its spectral zeta function \cite{IW2005a,IW2005b,KW2004,KW2007,KW2012RIMS,KW2019,L2018,LOS2016PAMS}. We expect the spectrum of QRM (via spectral zeta function \cite{RW2020z, Sugi2016}) to have similar rich number theoretic structure as well. 

The paper is organized as follows. A large part of the paper (\S \ref{sec:spvl} through \S \ref{sec:limit}) is devoted to obtaining an explicit expression of the heat kernel of the QRM. In \S \ref{sec:spvl}, we make preliminary calculations based on the Trotter-Kato product formula by employing newly defined two-by-two matrix-valued creation and annihilation operators $b$ and $b^\dag$ depending on the parameter $g$.
The resulting expression is a limit (\(N \to \infty \)) that,  at a glance, resembles a Riemann sum where
each of the summands contains a sum of exponential terms over subsets \(\setC{\ell}{i}{j}\) of
\(\Z_2^N\) for \(2 \le \ell \leq N\) and \(i,j \in \{0,1\} \) (see Definition \ref{dfn:subsets} in \S \ref{sec:matrixG}).
However, the presence of (infinitely many) changes of signature with non-trivial coefficients in the exponential terms is an obstacle for the direct evaluation of this expression.

In order to overcome the aforementioned difficulties in the evaluation of the limit, in \S \ref{sec:four-transf} we make use of harmonic analysis on the finite groups \(\Z_2^{k} \, (k \in \Z_{\geq 0})\). 
Concretely, by noticing a natural bijection between \(\setC{\ell}{i}{j}\) and \(\Z_2^{\ell-2}\), we transform the sum over \(\setC{\ell}{i}{j}\)
into an equivalent sum over the dual group of \( \Z_2^{\ell-2}\) using the Fourier transform. To simplify the resulting expressions, in addition to the standard theory of harmonic analysis, we also develop certain graph theoretical and combinatorial techniques in \S \ref{sec:graph-theoretical}. The transformed sum is then seen to consist of a radial function part (for \(\rho \in \Z_2^{\ell-2}\)) that is controlled by fixing \(|\rho |= \lambda \), while the sum of remaining part is evaluated as a multiple integral in \S \ref{sec:transf-summ-into}. We remark that transforming the computation into the dual stage (i.e. the Fourier image) is not only indispensable in order to evaluate the sum in practice, but it also reveals certain structural information that appears in the final expression of the heat kernel  (cf. Lemma \ref{lem:sumexp}). Precisely, one of the advantages of employing harmonic analysis on $\Z_2^k$ is that  the product containing sign changes from the  non-commutative part appearing in the (finite approximate) expression of the heat kernel \eqref{eq:integralbasic} can be transformed into a single trigonometric function of  a alternating sum of ``$q$-numbers" $[i]_q=\frac{1-q^i}{1-q}$ (similar to the evaluation of a Gauss sum in number theory). The remaining advantage is that it gives a systematic treatment of the object by separating the radial part and the others in the dual group of $\Z_2^k$ (see also Remark \ref{q-Fourier}). It should be noticed that  the graph theoretical discussion in \S \ref{sec:graph-theoretical} enables us to obtain such separation. 

The limit is evaluated in  \S \ref{sec:limit}, thus completing the computation of the heat kernel. The final result (Theorem \ref{thm:heat_kernel}) shows that the heat kernel of the QRM is expressed as an infinite sum of the terms given
by $k$-iterated integrals $(k=0,1,2,\ldots)$. It is important to notice that the point-wise convergence of the iterated integral kernels to the heat kernel follows from the fact that the corresponding sequence of the Trotter-Kato approximation operators converges not only in the strong operator topology but in the operator norm topology as well (see \cite{AI2008} and references therein). The explicit form of the partition function (Corollary \ref{cor:Partition_function}) then follows directly from that of the heat kernel. As mentioned above, the Hamiltonian for the QRM possesses a parity (\(\Z_2\)-symmetry). From this fact, we see that the heat kernel for the model, given by two-by-two matrix of operators, is expressed as the direct sum of two heat kernels which represent the parity decomposition (Theorem \ref{Split_Kernel}). We then derive the explicit formula for the partition function for each parity (Corollary \ref{cor:parityPart}).

To the best knowledge of the authors, this is the first non-trivial example of explicit computation of the heat kernel of an interacting quantum system.  Although similar formulas have been derived before (e.g. for the spin dynamics of the Kondo model or of the general Spin-Boson model), this was done only for the reduced density matrix of the two-level system (the spin) and not for the full system. Actually, as already mentioned, for the heat kernel of the Kondo Hamiltonian, only a special matrix element was obtained in \cite{AYH1970, CR1995} as a sum of iterated integrals  (see also Appendices B, C, D in \cite{LCDFGZ1987}).

The method developed in this paper using the Trotter-Kato product formula may be generalized to other similar quantum systems. For instance, it may be extended in a straightforward way for the study of the heat kernel of generalizations of the QRM like the asymmetric quantum Rabi model (AQRM) or the Dicke model (see e.g \cite{B2013MfI}) and the two photons quantum Rabi model (see, e.g. \cite{RMR2020}). We also expect it may be used for more complicated models like the Spin-Boson model. We note, in particular, that the method does not use any $\Z_2$-symmetry of the QRM Hamiltonian (see Remark \ref{non-commutativity}). We believe that this method may play the role of a compass in the study of other Hamiltonians and their heat kernels. Actually, the time evolution operator $e^{-it\HRabi}\, (t\in \R)$, obtained by analytic continuation of $e^{-t\HRabi}$ with respect to \(t\), is of great importance in physics.
Here, according to Stone's theorem, the operator $e^{-it\HRabi}\, (t\in \R)$ is unitary and describes a strongly continuous one-parameter group of unitary transformations in the Hilbert space since $\HRabi$ is self-adjoint. For the details, we direct the reader to \cite{RW2020z}.
Moreover, there has been active studies and significant efforts on the estimation of the Trotter error in view of potential applications, e.g. to quantum simulation. See the recent study \cite{C2021a} (and \cite{C2021} for an updated
version) and the references therein. Although the QRM Hamiltonian is simpler than the general quantum systems considered in the aforementioned studies, we expect that the method for computation developed in the present paper may 
contribute to the estimation of the Trotter error. We will consider this point in another occasion.

Furthermore, we remark that although the QRM is in the scientific spotlight in theoretical and experimental physics, a full-fledged classification and consequent theoretical prediction of coupling regimes remains unclear (see, e.g. \cite{RS2017PRA}).  In particular, the current coupling regime classification was initially based on the agreement between the QRM and its rotating-wave approximation, for instance, in \cite{Lv2018}  using a trapped-ion system the authors demonstrate the breakdown of the rotating-wave approximation of the QRM as the parameters move from one coupling regime to another. Further approximations that work over the different regions under certain conditions have been proposed (see \cite{LFB2020} for the approximation of the ground state for AQRM in all coupling regimes). A coupling regime classification based on the spectrum has been proposed in \cite{RS2017PRA}, which in addition to the parameters of the system takes into account the energy the system can access. We expect that precise numerical computations based by the explicit analytical formulas of the heat kernel and partition function may also contribute to investigations in this direction.

%%%%%%%%%%%%%%%%%%%%%%%%%%%%%%%%%%%%%%%%%%%%%%%%%%%%%%%%%%%%%%%%%%%%%%%%%%%%%%%%%%%%%%%%%%%
\section{Preliminary calculations based on the Trotter-Kato product formula}
\label{sec:spvl}
%%%%%%%%%%%%%%%%%%%%%%%%%%%%%%%%%%%%%%%%%%%%%%%%%%%%%%%%%%%%%%%%%%%%%%%%%%%%%%%%%%%%%%%%%%%%

The Hamiltonian of the quantum Rabi model (QRM) is given by
\[
  \HRabi = a^{\dagger}a + \Delta \sigma_z + g (a + a^{\dagger}) \sigma_x ,
\]
where \(\sigma_x,\sigma_z \) are the
Pauli matrices
\[
  \sigma_x =
  \begin{pmatrix}
    0 & 1  \\ 1 & 0
  \end{pmatrix}, \qquad
  \sigma_z =
  \begin{pmatrix}
    1 & 0 \\ 0 & -1
  \end{pmatrix},
\]
and \(a^{\dagger},a\) are the creation and annihilation operators of the quantum harmonic oscillator satisfying the commutation relation \([a,a^{\dag}]=1 \). In this paper we tacitly assume \(\hbar = \omega= 1 \) without loss of generality.

Since the third term $g(a+a^\dag)\sigma_x$ of the Hamiltonian $\HRabi$ does not commute with $a^\dag a$ and $\Delta\sigma_z$, in order to reduce the number of non-commuting relations, we make a change of operator in the following way. Set \( b = b(g) := a +  g \sigma_x \) (and whence \( b^\dag= a^\dag +  g \sigma_x \)). Thus we easily obtain the following lemma.

\begin{lem} 
The Hamiltonian \(\HRabi \) is expressed as
\begin{align*}
  \HRabi 
  & = (a^{\dagger} + g \sigma_x)(a + g \sigma_x) + \Delta \sigma_z - g^2  \\
  & = b^{\dagger} b - g^2 + \Delta \sigma_z.  \QEDhere
\end{align*} 
\end{lem}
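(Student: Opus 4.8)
The plan is to verify both displayed equalities by a direct expansion of the product $b^\dagger b$, using nothing beyond the commutation behaviour of the operators on the two tensor factors together with the elementary identity $\sigma_x^2 = \bI_2$. There is essentially no obstacle here; the single point deserving a moment's care is the ordering of $a,a^\dagger$ against $\sigma_x$, which I would settle at the outset.

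First I would note that $a$ and $a^\dagger$ act on the bosonic Fock space while $\sigma_x$ acts on $\C^2$, so that as operators on the tensor product they commute, i.e. $a\sigma_x = \sigma_x a$ and $a^\dagger\sigma_x = \sigma_x a^\dagger$. With $b = a + g\sigma_x$ and $b^\dagger = a^\dagger + g\sigma_x$ I then expand
\[
  b^\dagger b = (a^{\dagger} + g\sigma_x)(a + g\sigma_x) = a^{\dagger}a + g\,(a^{\dagger}\sigma_x + \sigma_x a) + g^2\sigma_x^2 .
\]
By the commutativity just recorded the cross terms collapse to $g\sigma_x(a + a^{\dagger}) = g(a+a^{\dagger})\sigma_x$, and since $\sigma_x^2 = \bI_2$ the final term is simply the scalar $g^2$. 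This gives $b^\dagger b = a^\dagger a + g(a+a^\dagger)\sigma_x + g^2$, which is precisely the first displayed line after $\Delta\sigma_z - g^2$ is added to both sides.

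Finally I would subtract the constant $g^2$ and add $\Delta\sigma_z$ to obtain
\[
  b^{\dagger} b - g^2 + \Delta\sigma_z = a^{\dagger}a + \Delta\sigma_z + g(a+a^{\dagger})\sigma_x = \HRabi ,
\]
which identifies the two expressions. The only conceptual content of the lemma is that this quadratic completion absorbs the linear coupling $g(a+a^{\dagger})\sigma_x$ into the single product $b^\dagger b$ at the cost of the scalar shift $-g^2$; concretely, $b$ and $b^\dagger$ are the annihilation/creation operators for a spin-dependent displacement of the oscillator, and recasting $\HRabi$ in this form is what reduces the number of non-commuting summands ahead of the Trotter-Kato analysis.
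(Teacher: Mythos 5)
Your proof is correct and follows essentially the same route as the paper, which treats the lemma as an immediate consequence of expanding $(a^{\dagger}+g\sigma_x)(a+g\sigma_x)$ using $\sigma_x^2=\bI_2$ and the fact that $a,a^{\dagger}$ commute with $\sigma_x$ as operators on $L^2(\R)\otimes\C^2$. Nothing further is needed.
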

Notice that while the operators \( b \), \(b^{\dagger}\) satisfy the commutation relation \([b, b^{\dagger} ]= \bI_2 \),
the operator \(b\) does not commute with \(\Delta \sigma_z\). In this sense, we regard
\[
  b^{\dagger} b - g^2
\]
as a (two dimensional) non-commutative version of the quantum harmonic oscillator.
From the commutation relation \([b, b^{\dagger} ] = \bI_2\), it is clear that the
operator \(b^{\dagger}b - g^2 \) is self-adjoint and bounded below. The operator \(\Delta \sigma_z \) is also self-adjoint and bounded for trivial reasons.
We remark that is it a well-known fact (see for example \cite{Sugi2016}) that \(\HRabi\) is a self-adjoint bounded below operator. Therefore, the operators \(b^{\dagger}b - g^2\) and \(\Delta \sigma_z \) satisfy
the conditions of the Trotter-Kato product formula (cf. \cite{Calin2011,Kato1978,Weidmann1980}) and
we have
\[
  e^{- t \HRabi} = e^{- t (b^{\dagger}b -g^2 + \Delta \sigma_z)} = \lim_{N\to \infty} (e^{-t (b^{\dagger}b -g^2)/N} e^{-t(\Delta \sigma_z)/N})^N,
\]
in the strong operator topology. Moreover the sequence $\{(e^{-t (b^{\dagger}b -g^2)/N} e^{-t(\Delta \sigma_z)/N})^N\}_{N=1,2,\ldots}$ of Trotter-Kato's approximation operators convergences in the operator norm topology when $N\to\infty$. In fact, we have 
\[
 ||e^{- t (b^{\dagger}b -g^2 + \Delta \sigma_z)} - (e^{-t (b^{\dagger}b -g^2)/N} e^{-t(\Delta \sigma_z)/N})^N||_{\text{op}} =O(N^{-1}), 
\]
where $||A||_{\text{op}}:=\sup_{v\not=0}\frac{||Av||}{||v||}$ denotes the operator norm (see the review paper \cite{IT2009} for the general theory leading  to this fact). Moreover, pointwise uniformly convergence of the iterated integral kernels to the heat kernel follows from the convergence in operator norm topology (\cite{AI2008,IT2009}, cf. \cite{IT2006}). In Section \ref{sec:analyt-prop-heat} we briefly discuss how the pointwise and uniform convergence can be verified directly from the resulting series. % I_personal2019
 
The objective of this section is to compute the integral kernel \(D_N(x,y,t) \) of the $N$-th power operator
\[
  (e^{-t (b^{\dagger}b -g^2)} e^{-t(\Delta \sigma_z)})^N,
\]
explicitly. Concretely, in \S \ref{sec:quantum-rabi-model} we compute the integral kernel \(K(x,y,t)\) of the
operator
\[
  e^{-t (b^{\dagger}b -g^2)} e^{-t(\Delta \sigma_z)},
\]
following the standard procedure for the quantum harmonic oscillator. The  computation of the $N$-th power kernel  
is divided into a scalar part in \S \ref{sec:integral} and a non-commutative part in \S \ref{sec:matrixG}.

In this paper we consider the Hamiltonian \(\HRabi\) as an operator acting on
the Hilbert space \(\mathcal{H} = L^2(\R) \otimes \C^2\).
For convenience of the reader we recall that the creation and annihilation operators are realized by
\[
  a = \frac{1}{\sqrt2} \left( x + \frac{d}{d x} \right), \qquad a^{\dag} = \frac{1}{\sqrt2} \left( x - \frac{d}{d x} \right)
\]
as operators acting on the Schwartz space \(\mathcal{S}(\R) (\subset L^2(\R)\) which has a basis consisting of Hermite functions, cf. \cite{HT1992, AAR1999}).

In order to improve clarity, we do not specify dependencies to the system parameters $g,\Delta$ in the notation for the functions used in
the intermediate computations. In other words, it may be assumed that these functions tacitly depend on the system parameters.

%%%%%%%%%%%%%%%%%%%%%%%%%%%%%%%%%%%%%%%%%%%%%%%%%%%%%%%%%%%%%%%%%%%%%%%%%%%%%%%%
\subsection{Quantum Rabi model and quantum harmonic oscillators}
\label{sec:quantum-rabi-model}
%%%%%%%%%%%%%%%%%%%%%%%%%%%%%%%%%%%%%%%%%%%%%%%%%%%%%%%%%%%%%%%%%%%%%%%%%%%%%%%%

As a first step in the computation of the heat kernel of the QRM, in this subsection we compute the integral kernel \(D(x,y,t)\) of
the operator
\[
  e^{-t (b^{\dagger}b -g^2)} e^{-t(\Delta \sigma_z)}.
\]
First, we notice that by the elementary identity
\[
  e^{-t(\Delta \sigma_z)} =
  \mat{
    e^{-t \Delta} & 0  \\
    0 & e^{ t \Delta}
  },
\]
the integral kernel for the operator \(e^{-t(\Delta \sigma_z)} \) is given by \(D_2(x,y,t) = e^{-t(\Delta \sigma_z)} \delta(x-y)\), where \(\delta\)
is the Dirac measure. Thus, the remainder of this subsection is dedicated to computing the integral kernel of 
\(e^{-b^{\dagger} b - g^2} \). As we have remarked before, the commutation identity  \([b, b^{\dagger} ] = \bI_2\) holds  and
thus the computation of the integral kernel follows the general procedure for the quantum harmonic oscillator.

In particular, if we find the ground state \(\psi_0\) for \(b\), that is, a solution  of \(b \psi_0  = 0\), then the spectrum
of \(b^{\dagger} b\) is equal to \(\Z_{\geq 0}\) with each eigenvalue having multiplicity \(2\).

The general solution of the differential equation system \(b  \psi = 0  \) with \(\psi = {}^{T}(\psi_1,\psi_2) \) is given by
\begin{align*}
  \psi_1(x) &= c_1 e^{-x^2/2  - \sqrt{2} g x}  - c_2 e^{-x^2/2  + \sqrt{2} g x}, \\
  \psi_2(x) &= c_1 e^{-x^2/2  - \sqrt{2} g x}  + c_2 e^{-x^2/2  + \sqrt{2} g x}
\end{align*}
for arbitrary constants \(c_1, c_2 \in \C\). It is clear then that
\begin{align*}
  \bar{\psi}_0^{(1)}(x) = e^{-x^2/2  - \sqrt{2} g x}
  \begin{bmatrix}
    1 \\
    1
  \end{bmatrix},
  \qquad \bar{\psi}_0^{(2)}(x) = e^{-x^2/2  + \sqrt{2} g x}
  \begin{bmatrix}
    -1 \\
    1
  \end{bmatrix},
\end{align*}
are two linearly independent eigenfunctions of \( b^{\dagger} b\) corresponding to eigenvalue \(\lambda = 0 \).
We obtain directly
\[
  \left(\bar{\psi}_0^{(1)},\bar{\psi}_0^{(1)}\right)_{\mathcal{H}} = \left( \bar{\psi}_0^{(2)},
    \bar{\psi}_0^{(2)}\right)_{\mathcal{H}} =  2 e^{2 g^2} \sqrt{\pi},
\]
where \((\cdot,\cdot)_{\mathcal{H}}\) is the inner-product induced in \( \mathcal{H} = L^2(\R) \otimes \C^{2}\) by the usual \( L^2(\R)\) inner-product.

For \(\lambda = n\) the orthonormal eigenstates are given by
\begin{align*}
  \psi_n^{(1)}(x) = \frac{1}{\sqrt{2} \, e^{ g^2}} H_n(x + \sqrt{2} g) \frac{  \bar{\psi}_0^{(1)}}{(\sqrt{\pi} \, n! \, 2^{n})^{1/2}},
  \qquad \psi_n^{(2)}(x) = \frac{1}{\sqrt{2} \, e^{ g^2}} H_n(x - \sqrt{2} g) \frac{\bar{\psi}_0^{(2)}}{(\sqrt{\pi} \, n! \, 2^{n})^{1/2}},
\end{align*}
where \(H_n \) is the \(n\)-th Hermite polynomial. Due to the normalization factor \(e^{-g^2}\), we have
\begin{align*}
  \psi_n^{(1)}(x) = \frac{1}{\sqrt{2} } H_n(x + \sqrt{2} g) \frac{e^{-(x+\sqrt{2}g)^2/2}}{(\sqrt{\pi} \, n! \, 2^{n})^{1/2}} \begin{bmatrix}
    1 \\
    1
  \end{bmatrix},
  \qquad \psi_n^{(2)}(x) = \frac{1}{\sqrt{2} } H_n(x - \sqrt{2} g) \frac{ e^{-(x-\sqrt{2}g)^2/2 }}{(\sqrt{\pi} \, n! \, 2^{n})^{1/2}}  \begin{bmatrix}
    -1 \\
    1
  \end{bmatrix}.
\end{align*}

The heat kernel \(D_1(x,y,t)\) of \(b^{\dagger} b - g^2 \) is given formally by the Schwartz kernel
\[
  D_1(x,y,t) := \sum_{\lambda} \psi_\lambda(x) {}^{T} \psi_\lambda(y) e^{-(\lambda-g^2 ) t},
\]
where the sum is over the eigenvalues \(\lambda\) of \(b^{\dagger} b \) (counting multiplicities) and \(\psi_\lambda(x) \) is the eigenfunction corresponding
to the eigenvalue \(\lambda\). It is left to verify the convergence and that \( D_1(x,y,t) \to \delta(x-y)\bI_2\) as \(t \to 0 \).

Convergence follows component-wise by Mehler's formula (Poisson kernel expression, cf.  \cite{AAR1999,Calin2011}) for Hermite polynomials
\[
  \sum_{n=0}^{\infty} \frac{H_n(x) H_n(y)}{2^n n!} r^n = (1-r^2)^{-1/2} e^{(2 x y r - (x^2 + y^2)r^2)/(1-r^2)},
\]
valid for \( |r | < 1\).

For the second property, recall that, as \(r \to 1 \) the following completeness identity holds
\[
  \sum_{n=0}^{\infty} \frac{H_n(x) H_n(y) e^{- \frac{x^2+y^2}{2}}}{\sqrt{\pi} \, 2^n n!} r^n = \delta(x-y),  \qquad \qquad  r\to 1
\]
in the sense of distributions.

Thus, applying the substitutions \(x \to x \pm \sqrt{2} g  \), \( y\to y \pm \sqrt{2} g  \) we
see that
\[
  \sum_{n=0}^{\infty} \psi_n^{(1)}(x) {}^{T}\psi_n^{(1)}(y) r^n =
  \frac{\delta(x-y)}{2} \matrixOO ,  \qquad \qquad  r\to 1,
\]
and
\[
  \sum_{n=0}^{\infty} \psi_n^{(2)}(x) {}^{T}\psi_n^{(2)}(y) r^n =   \frac{\delta(x-y)}{2} \matrixZZ,  \qquad \qquad  r\to 1,
\]
giving the desired expression when \(r = e^{-t} \).

We write
\[
  D_1(x,y,t) = \sum_{j=1}^{2} \sum_{n=0}^{\infty} \psi_n^{(j)}(x) {}^{T} \psi_{n}^{(j)}(y) e^{-t(n-g^2)},
\]
then, by Mehler's formula we have
\begin{align*}
  \sum_{n=0}^{\infty} \psi_n^{(1)}(x) {}^{T} \psi_{n}^{(1)}(y) u^n u^{g^2} &= \frac{1}{2 \sqrt{\pi}} \sum_{n}^{\infty} \frac{H_n(x+\sqrt{2}g)H_n(y+\sqrt{2}g)}{2^n n!}
                                                        u^n e^{-\frac12( (x+\sqrt{2}g)^2 + (y+\sqrt{2}g)^2) } u^{-g^2} \matrixOO \\
  &=\frac{1}{2 \sqrt{\pi (1-u^2)}} \exp\left( -\frac{1-u}{1+u} \frac{(x+y + 2\sqrt{2}g)^2}{4} - \frac{1+u}{1-u}\frac{(x-y)^2}{4} \right)  u^{-g^2} \matrixOO,
\end{align*}
 with \(u := e^{-t} \), and similarly, we obtain
\begin{align*}
  \sum_{n=0}^{\infty} \psi_n^{(2)}(x) {}^{T} \psi_{n}^{(2)}(y) u^n u^{g^2} &=\frac{1}{2 \sqrt{\pi (1-u^2)}} \exp\left( -\frac{1-u}{1+u} \frac{(x+y - 2\sqrt{2}g)^2}{4} - \frac{1+u}{1-u}\frac{(x-y)^2}{4} \right)  u^{-g^2} \matrixZZ.
\end{align*}

By factoring out common terms we obtain
\begin{align*}
  D_1(x,y,t) = \frac{1}{ \sqrt{\pi (1-u^2)} } &\exp\left(  -\frac{1-u}{1+u}  \frac{((x+y)^2 + 8 g^2)}{4} -  \frac{1+u}{1-u} \frac{(x-y)^2}{4} \right) u^{-g^2}\\
             & \times \frac12 \left( \exp\left( -\frac{1-u}{1+u} \sqrt{2} g (x+y)  \right) \bM_{1 1} +\exp\left( \frac{1-u}{1+u} \sqrt{2} g (x+y)  \right) \bM_{0 0} \right),
\end{align*}
where
\begin{equation*}
  \bM_{0 0} := \matrixZZ \qquad  \bM_{1 1} := \matrixOO.
\end{equation*} 

The matrix terms (including the scalar factor \(\frac12\)) are equal to
\[
  \mat{
    \cosh(\frac{1-u}{1+u} \sqrt{2} g (x+y)) & -\sinh(\frac{1-u}{1+u} \sqrt{2} g (x+y)) \\
    -\sinh(\frac{1-u}{1+u} \sqrt{2} g (x+y)) & \cosh(\frac{1-u}{1+u} \sqrt{2} g (x+y)) \\
  } = \exp\left(- \frac{1-u}{1+u} \sqrt{2} g (x+y) \sigma_x \right),
\]
and the final expression for the heat kernel of \(b^{\dagger} b - g^2\) is
\begin{align*}
 D_1(x,y,t) = \frac{1}{\sqrt{\pi (1-u^2)}}\exp\left( -\frac{1-u}{1+u} \frac{((x+y)^2 + 8 g^2)}{4} - \frac{1+u}{1-u}\frac{(x-y)^2}{4} \right) & u^{-g^2} \exp\left(- \frac{1-u}{1+u} \sqrt{2} g (x+y) \sigma_x \right).
\end{align*}

Summarizing the discussion above, we have the following explicit description for \(K(x,y,t)\).

\begin{prop}
\label{prop:local_kernel}
The integral kernel \(D(x,y,t)\) for \(e^{-t (b^{\dagger}b -g^2)} e^{-t(\Delta \sigma_z)}  \) is given,
 with $u=e^{-t}$, by
\begin{align*}
  D(x,y,t) 
 =  \frac{u^{-g^2}}{\sqrt{\pi (1-u^2)}} & \exp\left( -\frac{1-u}{1+u} \frac{((x+y)^2 + 8 g^2)}{4} -
      \frac{1+u}{1-u}\frac{(x-y)^2}{4} \right)  \\
      & \times \exp\left(- \frac{1-u}{1+u} \sqrt{2} g (x+y) \sigma_x \right) e^{-t(\Delta \sigma_z)}.
\end{align*}
\end{prop}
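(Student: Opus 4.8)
The plan is to realize $D(x,y,t)$ as the composition of the two integral kernels that have already been computed separately, namely the kernel $D_1(x,y,t)$ of $e^{-t(b^{\dagger}b-g^2)}$ obtained via Mehler's formula above, and the kernel $D_2(z,y,t)=e^{-t(\Delta\sigma_z)}\delta(z-y)$ of $e^{-t(\Delta\sigma_z)}$. Recall that if an operator $A$ has integral kernel $K_A$ and $B$ has integral kernel $K_B$, then the product $AB$ has kernel $\int_{-\infty}^{\infty}K_A(x,z)K_B(z,y)\,dz$. Taking $A=e^{-t(b^{\dagger}b-g^2)}$ and $B=e^{-t(\Delta\sigma_z)}$ then reproduces the operator in the correct order.

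First I would write out the composition integral
\[
D(x,y,t)=\int_{-\infty}^{\infty}D_1(x,z,t)\,e^{-t(\Delta\sigma_z)}\delta(z-y)\,dz
\]
and apply the sifting property of the Dirac measure to collapse it to $D_1(x,y,t)\,e^{-t(\Delta\sigma_z)}$. Since both factors are $2\times2$ matrix-valued, the order of multiplication is essential: because $B$ acts first on a test function, the diagonal factor $e^{-t(\Delta\sigma_z)}$ must multiply $D_1$ from the right, which is exactly how it appears in the stated formula (after the $\sigma_x$-exponential). Substituting the explicit expression for $D_1(x,y,t)$ derived just above — the scalar Gaussian prefactor with $u=e^{-t}$ together with the factor $\exp(-\tfrac{1-u}{1+u}\sqrt{2}\,g(x+y)\sigma_x)$ — then yields precisely the claimed expression.

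Analytically there is no deep obstacle here, since the substantive work (diagonalizing $b^{\dagger}b$, applying the Poisson-kernel form of Mehler's formula componentwise, and verifying $D_1\to\delta(x-y)\bI_2$ as $t\to0$) was carried out in the derivation of $D_1$. The only point that requires a word of justification is the distributional composition with $D_2$: one should note that for each fixed $t>0$ the function $z\mapsto D_1(x,z,t)$ is smooth and rapidly decaying, so its pairing against $\delta(z-y)$ is well-defined and simply evaluates at $z=y$, making the collapse of the integral legitimate. With that observed, the proposition follows immediately as a summary of the preceding computation.
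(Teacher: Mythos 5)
Your proposal is correct and follows essentially the same route as the paper: the paper's proof likewise writes $D(x,y,t)=\int_{-\infty}^{\infty}D_1(x,z,t)D_2(z,y,t)\,dz$ and collapses the integral via the defining (sifting) property of the Dirac distribution, with the explicit form of $D_1$ having been established beforehand by the Mehler-formula computation. Your additional remarks on the matrix ordering and on the smoothness and decay of $z\mapsto D_1(x,z,t)$ justifying the distributional pairing are sound refinements of the same argument.
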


\begin{proof}
Since 
\[
 D(x,y,t) =  \int_{-\infty}^{\infty} D_1(x,z,t) D_2(z,y,t) d z
\]
the desired expression follows from the definition of the Dirac distribution \(\delta\). 
\end{proof}

To simplify later computations we write
\begin{align*}
  D(x,y,t)  = K_0(x,y,u) \exp \left(-  2 g^2  \frac{1-u}{1+u}   \right)  \exp\left(- \frac{1-u}{1+u} \sqrt{2} g (x+y) \sigma_x \right) e^{-t(\Delta \sigma_z)},
\end{align*}
with
\begin{equation}
  \label{eq:K0}
 K_0(x,y,u) := \frac{u^{-g^2}}{\sqrt{\pi (1-u^2)}} \exp\left( - \frac{1+u^2}{2(1-u^2)} (x^2 + y^2) +  \frac{2 u x y}{1-u^2} \right).
\end{equation}

%%%%%%%%%%%%%%%%%%%%%%%%%%%%%%%%%%%%%%%%%%%%%%%%%%%%%%%%%
\subsection{The $N$-th power kernel $D_N(x,y,t)$}
%%%%%%%%%%%%%%%%%%%%%%%%%%%%%%%%%%%%%%%%%%%%%%%%%%%%%%%%%%

In the remainder of this section we compute explicitly the integral kernel \(D_N(x,y,t) \) of the
operator
\[
  \left(e^{-t (b^{\dagger}b -g^2)} e^{-t(\Delta \sigma_z)} \right)^N
\]
given by the integral
\begin{equation}
  \label{eq:integralbasic}
  \int_{-\infty}^{\infty} \cdots \int_{-\infty}^{\infty} D(x,v_1,t) D(v_1,v_2,t) \cdots D(v_{N-1},y,t) d v_{N-1} d v_{N-2} \cdots d v_1.  
\end{equation} 

Writing \(v_0 = x\) and \(v_N = y\), we see that the integrand of \eqref{eq:integralbasic} is given by the product of the scalar factor
\[
  \frac{u^{-N g^2}}{(\pi (1-u^2))^{N/2}} \exp\left(\sum^{N}_{i=1} \left(-   \frac{1+u^2}{2(1-u^2)} (v_i^2 + v_{i-1}^2)
      +   \frac{2 u v_i v_{i-1}}{1-u^2}  -  2  g^2 \frac{1-u}{1+u} \right) \right)
\]
and the matrix factor
\begin{equation}
  \label{eq:matterms}
  \overrightarrow{\prod}_{i=1}^{N} \Bigg\{\left[\cosh\left( \sqrt{2} g \frac{1-u}{1+u}  (v_{i-1}+v_i)\right) \bI - \sinh\left( \sqrt{2} g \frac{1-u}{1+u}  (v_{i-1}+ v_i)\right)\bJ \right] u^{\Delta\sigma_z}\Bigg\},
\end{equation}
where \(\overrightarrow{\prod}_{i=1}^{N}A_i\) denotes the (ordered) product $A_1A_2\cdots A_N$ of the matrices $A_i$'s and we write $\bJ = \sigma_x= \mat{0&1\\1&0 }$ for
simplicity.

Let us introduce some general notation. We write \(\btZ{k} \) for \(\{0,1\}^k\) with \(k \geq 1\), both as a set and as an abelian group (that is, for the group \((\Z/2\Z)^{k} = \Z/2\Z \times \Z/2\Z \times \cdots \times \Z/2\Z\)($\, k$-times)), for \( k= 0\) we define \(\btZ{0} = \{0\} \) both as a set and (trivial) group. To simplify the notation, at times we consider an element \(\bs \in \btZ{k}\) as a function \(\bs : \{0,1,\cdots,k\} \to \{0,1\}\) where \(\bs(i)\) is the \(i\)-th component of
\(\bs \in \btZ{k} \). 

Since, for \(\alpha\in \R \), we have
\begin{align*}
  \cosh\left(\alpha\right) \bI - \sinh\left(\alpha\right)\bJ = \frac12\left( \bI + \bJ \right) e^{-\alpha}
  + \frac12\left( \bI - \bJ \right) e^{\alpha},
\end{align*}
the multiplication of matrices in \eqref{eq:matterms} gives a linear combination of terms
\begin{equation}
\label{eq:non-com-term}
  G_N(u,\bs) \prod_{i=1}^{N}\exp\left( (-1)^{\bs(i)} \sqrt{2} g \frac{1-u}{1+u} (v_i + v_{i-1}) \right),
\end{equation}
where the choice of  \(\bs \in \btZ{N}\) depends on the factors \(\sinh\) and \(\cosh\) appearing in the expansion of \eqref{eq:matterms} and
\(G_N(u, \bs)\) is a matrix-valued function given by
\[
  G_N(u,\bs) :=  \frac{1}{2^N} \overrightarrow{\prod}_{i=1}^{N}[\bI+(-1)^{1-\bs(j)}\bJ]u^{\Delta\sigma_z}.
\]

In addition, for \(\bs \in \btZ{N}\),  by defining
\begin{align} \label{def:integral}
  I_N &(v_0,v_N,u,\bs) \nonumber \\
 & := \frac{u^{-N g^2}}{(\pi (1-u^2))^{N/2}} \int_{-\infty}^{\infty} \cdots \int_{-\infty}^{\infty} \exp\left(\sum^{N}_{i=1} \left(-   \frac{1+u^2}{2(1-u^2)} (v_i^2 + v_{i-1}^2) +  \frac{2  u v_i v_{i-1}}{1-u^2} -  2 g^2 \frac{1-u}{1+u} \right) \right) \nonumber \\
                  &\quad \times \exp\left(\sqrt{2} g  \frac{1-u}{1+u} \sum_{i=1}^{N} (-1)^{\bs(i)}  (v_i + v_{i-1}) \right) d v_{N-1} d v_{N-2} \cdots  d v_1,
\end{align}
we see that $D_N(x,y,t)$ is given by
\begin{align}
  \label{eq:Npkernel}
  D_N(x,y,t) = \sum_{\bs \in \btZ{N}} G_N(u,\bs) I_N(x,y,u, \, \bs).
\end{align}

%%%%%%%%%%%%%%%%%%%%%%%%%%%%%%%%%%%%%%%%%%%%%%%%%%%%%%%%%%%%%%%%%%%%%%%%%%%%%%%%%%
\subsection{Scalar part} \label{sec:integral}
%%%%%%%%%%%%%%%%%%%%%%%%%%%%%%%%%%%%%%%%%%%%%%%%%%%%%%%%%%%%%%%%%%%%%%%%%%%%%%%%%%

The computation of \(D_N(x,y,t)\) is, by \eqref{eq:Npkernel}, is divided into a scalar part, given by
\(I_N(x,y,u, \, \bs)\) and a non-commutative part \(G_N(u,\bs)\). In this subsection we compute the integrals
in the expression of \(I_N (v_0,v_N,u,\bs) \) via multivariate Gaussian integration.

Notice that the variables \(v_0\) and \(v_N\) are not to be integrated in \eqref{def:integral}. Therefore,
\(I_N(v_0,v_N,u,\bs)\) can be rewritten as
\begin{align*}
  \frac{u^{-N g^2}}{(\pi (1-u^2))^{N/2}} &\exp\left( -\frac{1+u^2}{2(1-u^2)}(v_0^2 + v_N^2) + \sqrt{2}g \frac{1-u}{1+u}((-1)^{\bs(1)}v_0 + (-1)^{\bs(N)}v_N) - 2 N g^2 \frac{1-u}{1+u}  \right) \\
  &\int_{-\infty}^{\infty} \cdots \int_{-\infty}^{\infty} \exp\left(- \frac{1+u^2}{1-u^2} \sum^{N-1}_{i=1} v_i^2 +  \frac{2  u}{1-u^2} \sum_{i=1}^{N-2}  v_i v_{i+1}  + \frac{2 u }{1-u^2} (v_0 v_1 +  v_{N-1} v_N) \right) \\
                  & \times \exp\left(  \sqrt{2} g \frac{1-u}{1+u}  \sum_{i=1}^{N-1} ((-1)^{\bs(i+1)} + (-1)^{\bs(i)}) v_i \right) d v_{N-1} d v_{N-2} \cdots  d v_1.
\end{align*}

The quadratic form in variables $v_i\,(i=1,2,\ldots, N-1)$ inside of the exponential in the integrand above is equal to
\[
  (1+u^2) \sum^{N-1}_{i=1} v_i^2 - 2  u \sum_{i=1}^{N-2} v_i v_{i+1} = {}^{T} \bv \bA_{N-1} \bv,
\]
for a vector \(\bv \in \R^{N-1}\) and tridiagonal matrix \(\bA_{N-1}\) given by
\[
   \bv = {}^{T} (v_1,v_2,\cdots,v_{N-1}), \qquad \bA_{N-1} =
   \begin{pmatrix}
     1+u^2 & -u & 0  & \cdots & 0& 0 \\
     -u & 1 + u^2 & -u & \cdots & 0& 0 \\
     0 & -u &1 + u^2 &  \cdots & 0& 0 \\
     \vdots & \vdots & \vdots & \ddots & \vdots & \vdots  \\
     0 & 0 & 0 & \cdots & 1+u^2 & -u  \\
     0 & 0 & 0 & \cdots & -u & 1+u^2
   \end{pmatrix}.
\]
  
Moreover, by defining
\[
  \mathbf{B}(\bs) := \frac{2 u}{1-u^2} \left( v_0 \bm{e}_1  +  v_N \bm{e}_{N-1} \right) +  \sqrt{2} g \frac{1-u}{1+u}  \mathbf{C}(\bs),
\]
where \(\bm{e}_i\) is the \(i\)-th standard basis vector of \(\R^{N-1}\) and
\[
  \mathbf{C}(\bs) := {}^T\left[(-1)^{\bs(1)} + (-1)^{\bs(2)}, (-1)^{\bs(2)} + (-1)^{\bs(3)}, \cdots, (-1)^{\bs(N-1)} + (-1)^{\bs(N)}  \right],
\]
we write \(I_N(v_0,v_N,u,\bs)\) as
\begin{align} \label{eq:integral_1}
  \frac{u^{-N g^2}}{(\pi (1-u^2)^{N/2}} &\exp \left( -\frac{1+u^2}{2(1-u^2)}(v_0^2 + v_N^2) + \sqrt{2}g \frac{1-u}{1+u}((-1)^{\bs(1)}v_0 + (-1)^{\bs(N)}v_N) - 2 N  g^2 \frac{1-u}{1+u}   \right) \\ \nonumber
                                     & \times \int_{-\infty}^{\infty} \cdots \int_{-\infty}^{\infty} \exp \left( - \frac{1}{1-u^2} {}^{T} \bv \bA_{N-1} \bv + {}^T \mathbf{B}(s) \bv \right)  d v_{N-1} d v_{N-2} \cdots  d v_1.
\end{align}

Next, we obtain the expression for \(\det(\bA_{N-1})\). For that, we need a lemma on Chebyshev polynomials of the second kind \(U_n(x)\), defined by the three-term recurrence relation
\begin{align*}
  U_{n+1}(x) &= 2 x U_n(x) - U_{n-1}(x),
\end{align*}
with initial values \(U_0(x) = 1\) and \(U_1(x) = 2 x\).

\begin{lem}
  \label{lem:chebyshev}
  For \(n \in \Z_{\geq 0}\), we have
  \[
    U_n \left(-\frac{1+u^2}{2u} \right) = (-1)^n \frac{1-u^{2(n+1)}}{u^n (1-u^2)}.
  \]
\end{lem}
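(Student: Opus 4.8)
The plan is to argue by induction on $n$, exploiting the fact that $U_n$ is uniquely determined by the three-term recurrence together with its two initial values. Write $x := -\tfrac{1+u^2}{2u}$ and abbreviate the claimed value by
\[
f(n) := (-1)^n \frac{1-u^{2(n+1)}}{u^n(1-u^2)}.
\]
Since $U_{n+1}(x) = 2x\,U_n(x) - U_{n-1}(x)$ with $U_0(x)=1$ and $U_1(x)=2x$, it suffices to check that $f$ obeys the identical recurrence and matches at $n=0,1$. This is a purely formal identity of rational functions in $u$ (and in particular valid for $u=e^{-t}$ with $t>0$, where the denominators $1-u^2$ and $u^n$ do not vanish).

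First I would dispose of the base cases. For $n=0$ one has $f(0)=1=U_0(x)$, and for $n=1$ the factorization $1-u^4=(1-u^2)(1+u^2)$ gives $f(1) = -\tfrac{1+u^2}{u} = 2x = U_1(x)$ after cancelling the factor $1-u^2$.

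The inductive step is the heart of the matter: assuming $U_k(x)=f(k)$ for all $k\le n$, I would verify $2x\,f(n)-f(n-1)=f(n+1)$. Substituting $2x=-\tfrac{1+u^2}{u}$ and clearing the common denominator $u^{n+1}(1-u^2)$, the claim reduces to the polynomial identity
\[
(1+u^2)\bigl(1-u^{2(n+1)}\bigr) - u^2\bigl(1-u^{2n}\bigr) = 1 - u^{2(n+2)},
\]
which follows by expanding and observing the telescoping cancellation of the intermediate powers $u^2$ and $u^{2n+2}$. The only place where an error is likely is the bookkeeping of the signs $(-1)^n$ and the shift between $u^{n-1}$, $u^n$, $u^{n+1}$ in the denominators, so I would keep the parity of each term explicit when combining $2x\,f(n)$ and $-f(n-1)$ over the common denominator.

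I expect no genuine obstacle here, as the statement is an elementary verification. As a cleaner alternative that sidesteps induction, I would note that $x = -\tfrac{u+u^{-1}}{2} = \tfrac{z+z^{-1}}{2}$ with $z=-u$, so the standard closed form $U_n\bigl(\tfrac{z+z^{-1}}{2}\bigr) = \tfrac{z^{n+1}-z^{-(n+1)}}{z-z^{-1}}$ yields directly $U_n(x) = (-1)^n \tfrac{u^{n+1}-u^{-(n+1)}}{u-u^{-1}}$; multiplying numerator and denominator by $u^{n+1}$ and simplifying $\tfrac{u^{2(n+1)}-1}{u^n(u^2-1)}$ recovers $f(n)$, giving the result in one stroke.
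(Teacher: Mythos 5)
Your proposal is correct and its main argument is essentially the paper's own proof: induction on $n$ using the three-term recurrence, with base cases $n=0,1$ and the inductive step reducing to the same telescoping identity $(1+u^2)\bigl(1-u^{2(n+1)}\bigr) - u^2\bigl(1-u^{2n}\bigr) = 1-u^{2(n+2)}$. Your closed-form alternative via $U_n\bigl(\tfrac{z+z^{-1}}{2}\bigr) = \tfrac{z^{n+1}-z^{-(n+1)}}{z-z^{-1}}$ with $z=-u$ is also valid (and shorter), but the paper does not take that route.
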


\begin{proof}
  Set \(z = -\frac{1+u^2}{2u}\). Then, the result clearly holds for
  \(U_0(z)\) and
  \[
    U_1(z) =  2 z =  -\frac{1+u^2}{u} = - \frac{1-u^4}{u(1-u^2)}.
  \]
  The recurrence relation for \(U_n(z) \)gives
  \begin{align*}
    U_{n+1}(z) = 2 z U_n(z) - U_{n-1}(z)  &= - (-1)^n \frac{1+u^2}{u}  \left( \frac{1-u^{2(n+1)}}{u^n (1-u^2)} \right) - (-1)^{n-1}  \left( \frac{1-u^{2 n}}{u^{n-1} (1-u^2)} \right) \\
    &=  (-1)^{n+1} \frac{1}{u^{n+1}(1-u^2)} \left( (1+u^2)(1-u^{2(n+1)}) - u^2(1-u^{2n})) \right). \\
                                          &= (-1)^{n+1} \frac{1-u^{2(n+2)}}{u^{n+1}(1-u^2)},
  \end{align*}
  as desired.
\end{proof}

\begin{lem}
  \label{lem:matA}
  For \(N \geq 2 \), the matrix \(\bA_{N-1}\) is positive definite and its determinant is given by
  \[
    \det(\bA_{N-1}) = (1+u^2 + u^4 + \cdots  + u^{2(N-1)}) = \frac{1-u^{2 N}}{1-u^2}.
  \]
  Furthermore, the inverse of \(\bA_{N-1}\) is symmetric and given by
  \[
    (\bA_{N-1}^{-1})_{i j} = u^{j-i} \frac{(1-u^{2 i})(1-u^{2(N -j)})}{(1-u^{2 N})(1-u^2)},
  \]
  for \(i \leq j\).
\end{lem}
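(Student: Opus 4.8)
The plan is to treat the three claims in order, exploiting the tridiagonal Toeplitz structure of $\bA_{N-1}$ together with the Chebyshev identity of Lemma~\ref{lem:chebyshev}. For the determinant, I would write $D_m := \det(\bA_m)$ for the $m\times m$ tridiagonal matrix with diagonal $1+u^2$ and off-diagonals $-u$, setting $D_0 := 1$ and $D_1 := 1+u^2$. Laplace expansion along the last row gives the three-term recurrence $D_m = (1+u^2)D_{m-1} - u^2 D_{m-2}$. Putting $z := -\tfrac{1+u^2}{2u}$ and $P_m := (-u)^m U_m(z)$, the Chebyshev recurrence $U_{m+1}(z)=2zU_m(z)-U_{m-1}(z)$ turns into $P_{m+1}=(1+u^2)P_m - u^2 P_{m-1}$ because $-2uz = 1+u^2$; since $P_0 = D_0$ and $P_1 = D_1$, induction yields $D_m = P_m$, and Lemma~\ref{lem:chebyshev} gives $D_m = (-u)^m(-1)^m\frac{1-u^{2(m+1)}}{u^m(1-u^2)} = \frac{1-u^{2(m+1)}}{1-u^2}$. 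The case $m=N-1$ is the asserted value of $\det(\bA_{N-1})$. Positive definiteness is then immediate from Sylvester's criterion: $\bA_{N-1}$ is real symmetric, and its leading principal minors are exactly $D_1,\dots,D_{N-1}$, each equal to $1+u^2+\cdots+u^{2k}\ge 1>0$ for real $u$; hence $\bA_{N-1}\succ 0$, and in particular it is invertible.

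The inverse is the crux. I would verify the stated formula directly: writing $B$ for the symmetric matrix with $B_{ij} = u^{|j-i|}\frac{(1-u^{2\min(i,j)})(1-u^{2(N-\max(i,j))})}{(1-u^2)(1-u^{2N})}$, I compute the $(i,k)$ entry of $\bA_{N-1}B$, which by tridiagonality equals $-u\,B_{i-1,k} + (1+u^2)B_{i,k} - u\,B_{i+1,k}$ (dropping the out-of-range terms at $i=1,\,N-1$). The key observation is that for $i\le k$ one has $B_{i,k} = c_k(u^{-i}-u^i)$ and for $i\ge k$ one has $B_{i,k} = d_k(u^{i}-u^{2N-i})$ for suitable constants $c_k,d_k$; both expressions are, as functions of $i$, solutions of the homogeneous recurrence $-u\,x_{i-1}+(1+u^2)x_i-u\,x_{i+1}=0$ (whose solution space is spanned by $u^{i}$ and $u^{-i}$), and they agree at $i=k$. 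Consequently the combination vanishes for every $i\ne k$, while at $i=k$ the switch between the two homogeneous branches produces a Wronskian-type jump that evaluates to exactly $1$, so $\bA_{N-1}B = I$ and $B=\bA_{N-1}^{-1}$. Alternatively one may invoke the classical closed form for the inverse of a symmetric tridiagonal matrix, $(\bA_{N-1}^{-1})_{ij} = (-1)^{i+j}(-u)^{j-i}\frac{D_{i-1}\,D_{N-1-j}}{\det\bA_{N-1}}$ for $i\le j$, where $D_{i-1}$ and $D_{N-1-j}$ are the leading and trailing principal minors computed above; the sign factor $(-1)^{i+j}(-u)^{j-i}$ collapses to $u^{j-i}$, and substituting the three determinants gives the claimed expression at once.

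The main obstacle is the bookkeeping in the direct verification: one must correctly handle the boundary rows $i\in\{1,N-1\}$ and, above all, the transition at $i=k$, where the factorised form of $B_{i,k}$ changes which factor carries the $i$-dependence; evaluating the residual jump there and simplifying the resulting identities among the factors $(1-u^{2m})$ without sign errors is the delicate part. Routing through the general tridiagonal-inverse formula sidesteps most of this tedium, at the cost of citing (or re-deriving via Cramer's rule) the fact that the off-diagonal cofactors of a tridiagonal matrix factor as a product of one leading and one trailing minor — which is exactly the structural reason the answer comes out in the separated product form stated in the lemma.
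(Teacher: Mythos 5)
Your proposal is correct, and its ``alternative'' route for the inverse is in fact the paper's actual proof: the paper simply cites the da Fonseca--Petronilho closed form \((\bA_{N-1}^{-1})_{ij} = (-1)^{i+j+1}\tfrac1u\,U_{i-1}(z)U_{N-1-j}(z)/U_{N-1}(z)\) with \(z=-\tfrac{1+u^2}{2u}\) and then applies Lemma \ref{lem:chebyshev}, and your cofactor formula \((-1)^{i+j}(-u)^{j-i}D_{i-1}D_{N-1-j}/\det\bA_{N-1}\) is the same statement after the conversion \(D_m=(-u)^mU_m(z)\) (indeed \((-1)^{i+j}(-u)^{j-i}(-u)^{(i-1)+(N-1-j)-(N-1)}=(-1)^{i+j+1}\tfrac1u\), matching the paper's prefactor, while your direct sign collapse \((-1)^{i+j}(-u)^{j-i}=u^{j-i}\) gives the stated form at once). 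Where you genuinely differ, you add value: the paper dismisses the determinant as ``direct computation,'' whereas you actually prove it, and neatly so --- the substitution \(D_m=(-u)^mU_m(z)\) turns the three-term minor recurrence \(D_m=(1+u^2)D_{m-1}-u^2D_{m-2}\) into the Chebyshev recurrence, so Lemma \ref{lem:chebyshev} does double duty; and for positive definiteness the paper invokes Gershgorin (for \(0<u<1\), the relevant range \(u=e^{-t}\)), while you use Sylvester's criterion, which your determinant computation hands you for free and which works for every real \(u\) since each leading minor equals \(1+u^2+\cdots+u^{2k}\ge 1\). The one caveat is your primary route for the inverse: the direct verification via the two homogeneous branches \(u^{-i}-u^i\) and \(u^{i}-u^{2N-i}\) is sound in outline, but the crucial jump computation at \(i=k\) --- the check that the Wronskian-type discontinuity equals exactly \(1\) --- is asserted rather than performed, and on its own that would be a gap; since your fallback route is complete (modulo the standard cofactor factorization for tridiagonal matrices, which is precisely the cited result the paper relies on), the proposal as a whole stands.
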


\begin{proof}
  The matrix \(\bA_{N-1}\) is symmetric and since \(1+u^2 > 2 u^2 \) for \( 0 < u <1 \), by the Gershgorin circle
  theorem (see \cite{Var2002})
  all the eigenvalues of \(\bA_{N-1} \) are positive. Therefore, \(\bA_{N-1}\) is positive
  definite (see also \cite{AF2011}). The determinant expression is obtained by direct computation.
  From \cite{daFonseca2001}, it is known that the inverse $\bA_{N-1}$ is given by
  \[
    (\bA_{N-1}^{-1})_{i j} = (-1)^{i+j+1} \frac1u \frac{U_{i-1}(-\frac{1+u^2}{2 u}) U_{N-1-j}(-\frac{1+u^2}{2 u})}{U_{N-1}(-\frac{1+u^2}{2 u})}
  \]
  for \(i \leq j\) and where \(U_n(x)\) is the Chebyshev polynomials of the second kind. The desired expression then follows from Lemma \ref{lem:chebyshev}.
\end{proof}

Let us introduce notation to simplify the expression of  \(I_N(x,y,u, \,\bs) \). For \(\bs \in \btZ{N}\) and \(i,j \in \{1,2,\cdots,N\}\), define
\begin{gather} \label{eq:notation}
  \eta_i(\bs) := (-1)^{\bs(i)}+(-1)^{\bs(i+1)},\\
  \Lambda^{(j)}(u) := u^{j-1} \left( 1 - u^{2 (N-j) +1} \right), \qquad   \Omega^{(i,j)}(u) =  u^{j-i} \left(1-u^{2 i} \right) \left( 1 -u^{2 (N - j)} \right). \nonumber
\end{gather}

\begin{thm}
  \label{thm:IN}
  For \(N \in \Z_{\geq 1} \), we have
  \begin{align} \label{eq:IN}
  I_N(x,y,&u, \,\bs) =  K_0(x,y, u^N) \exp \left( \sqrt{2} g \frac{(1-u)}{(1-u^{2 N})}  \sum_{j=1}^{N} (-1)^{\bs(j)}\left(  x \Lambda^{(j)}(u) + y \Lambda^{(N-j+1)}(u)  \right) \right)   \nonumber   \\
                &\times  \exp \left(    \frac{  g ^2 (1-u)^2}{2(1+u)^2 (1-u^{2 N})} \bigg(  \sum_{i=1}^{N-1} \eta_i(\bs)^2 \Omega^{(i,i)}(u)  + 2 \sum_{i<j} \eta_i(\bs) \eta_j(\bs) \Omega^{(i,j)}(u) \bigg) -   \frac{2 N g^2 (1-u)}{1+u} \right).
\end{align}
\end{thm}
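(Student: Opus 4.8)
The plan is to evaluate the $(N-1)$-fold Gaussian integral appearing in \eqref{eq:integral_1} in closed form and then match the result, factor by factor, against the three pieces of \eqref{eq:IN}. The inner integral is $\int_{\R^{N-1}}\exp\bigl(-\tfrac{1}{1-u^2}{}^{T}\bv\bA_{N-1}\bv + {}^{T}\mathbf{B}(\bs)\bv\bigr)\,d\bv$, so I would apply the standard multivariate Gaussian formula, which is justified by Lemma \ref{lem:matA}: positive-definiteness of $\bA_{N-1}$ guarantees convergence, and the lemma supplies both $\det(\bA_{N-1})=(1-u^{2N})/(1-u^2)$ and the entries of $\bA_{N-1}^{-1}$. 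With the quadratic form written as $-\tfrac12{}^{T}\bv\bigl(\tfrac{2}{1-u^2}\bA_{N-1}\bigr)\bv$, the integral equals $\pi^{(N-1)/2}(1-u^2)^{N/2}(1-u^{2N})^{-1/2}\exp\bigl(\tfrac{1-u^2}{4}{}^{T}\mathbf{B}(\bs)\bA_{N-1}^{-1}\mathbf{B}(\bs)\bigr)$. The first thing I would record is that multiplying by the prefactor $u^{-Ng^2}(\pi(1-u^2))^{-N/2}$ collapses the scalar constant to exactly $u^{-Ng^2}/\sqrt{\pi(1-u^{2N})}$, which by \eqref{eq:K0} is precisely the coefficient of $K_0(x,y,u^N)$.

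The substance is the expansion of ${}^{T}\mathbf{B}(\bs)\bA_{N-1}^{-1}\mathbf{B}(\bs)$. Splitting $\mathbf{B}(\bs)$ into its $v_0,v_N$-part $\tfrac{2u}{1-u^2}(v_0\bm{e}_1+v_N\bm{e}_{N-1})$ and its $g$-part $\sqrt{2}g\tfrac{1-u}{1+u}\mathbf{C}(\bs)$ produces three groups that I would treat separately: a quadratic piece in $v_0,v_N$, mixed linear pieces, and a piece depending only on $\mathbf{C}(\bs)$. For the quadratic piece I would use $(\bA_{N-1}^{-1})_{11}=(\bA_{N-1}^{-1})_{N-1,N-1}=(1-u^{2(N-1)})/(1-u^{2N})$ and $(\bA_{N-1}^{-1})_{1,N-1}=u^{N-2}(1-u^2)/(1-u^{2N})$; the cross term gives $2u^N v_0 v_N/(1-u^{2N})$ directly, while combining the $(v_0^2+v_N^2)$ contributions with the $-\tfrac{1+u^2}{2(1-u^2)}(v_0^2+v_N^2)$ already in the prefactor leaves, after a short simplification whose numerator is $-(1-u^2)(1+u^{2N})$, exactly the Mehler exponent $-\tfrac{1+u^{2N}}{2(1-u^{2N})}(x^2+y^2)+\tfrac{2u^N xy}{1-u^{2N}}$ of $K_0(x,y,u^N)$.

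For the $g^2$-piece I would observe that $\mathbf{C}(\bs)_i=\eta_i(\bs)$ and that, by the definition $\Omega^{(i,j)}(u)=u^{j-i}(1-u^{2i})(1-u^{2(N-j)})$, Lemma \ref{lem:matA} reads $(\bA_{N-1}^{-1})_{ij}=\Omega^{(i,j)}(u)/((1-u^{2N})(1-u^2))$ for $i\le j$. Hence $\tfrac{1-u^2}{4}\cdot 2g^2\tfrac{(1-u)^2}{(1+u)^2}\,{}^{T}\mathbf{C}(\bs)\bA_{N-1}^{-1}\mathbf{C}(\bs)$ reproduces verbatim the $\Omega$-sum in the second exponential factor of \eqref{eq:IN}, the overall $(1-u^2)$ cancelling against the one in $\bA_{N-1}^{-1}$; the remaining constant $-2Ng^2(1-u)/(1+u)$ descends directly from the prefactor.

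The main obstacle, and the step I would carry out most carefully, is the mixed linear term, which must combine with the prefactor's $\sqrt{2}g\tfrac{1-u}{1+u}((-1)^{\bs(1)}v_0+(-1)^{\bs(N)}v_N)$ to yield the first exponential factor. Using $(\bA_{N-1}^{-1})_{1j}=u^{j-1}(1-u^{2(N-j)})/(1-u^{2N})$, the coefficient of $v_0=x$ becomes $\sqrt{2}g\tfrac{1-u}{1+u}\bigl((-1)^{\bs(1)}+u\sum_{j=1}^{N-1}(\bA_{N-1}^{-1})_{1j}\eta_j(\bs)\bigr)$, and I must show this equals $\sqrt{2}g\tfrac{1-u}{1-u^{2N}}\sum_{j=1}^{N}(-1)^{\bs(j)}\Lambda^{(j)}(u)$. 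Clearing the denominators $(1+u)(1-u^{2N})$ reduces this to a polynomial identity which I would verify by matching the coefficient of each $(-1)^{\bs(k)}$: the telescoping contributions of $\eta_{k-1}$ and $\eta_k$ must assemble into $(1+u)\Lambda^{(k)}(u)=(1+u)u^{k-1}(1-u^{2(N-k)+1})$, which I would check in the three regimes $k=1$, $2\le k\le N-1$, and $k=N$, the boundary cases using $\Lambda^{(1)}(u)=1-u^{2N-1}$ and $\Lambda^{(N)}(u)=u^{N-1}(1-u)$. The coefficient of $v_N=y$ follows identically under the reflection $j\mapsto N+1-j$, $\bm{e}_1\mapsto\bm{e}_{N-1}$, which yields the $y\,\Lambda^{(N-j+1)}(u)$ terms. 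Collecting the three matched groups gives \eqref{eq:IN}.
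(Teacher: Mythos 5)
Your proposal is correct and follows essentially the same route as the paper's proof: multivariate Gaussian integration justified by Lemma \ref{lem:matA}, the same three-way splitting of ${}^{T}\mathbf{B}(\bs)\bA_{N-1}^{-1}\mathbf{B}(\bs)$ into quadratic, mixed, and $\mathbf{C}(\bs)$ pieces, and the same matching against $K_0(x,y,u^N)$, the $\Lambda$-sum, and the $\Omega$-sum. Your coefficient-by-coefficient check of $(-1)^{\bs(k)}$ in the three regimes $k=1$, $2\le k\le N-1$, $k=N$ is exactly the paper's identity $u^j(1-u^{2(N-j)})+u^{j-1}(1-u^{2(N-j+1)})=(1+u)u^{j-1}(1-u^{2(N-j)+1})$ together with its absorption of the boundary term $(-1)^{\bs(1)}(1-u^{2N})v_0$, so the two arguments coincide in substance.
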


\begin{proof}
  Since \(\bA_{N-1}\) is positive definite by Lemma \ref{lem:matA}, by multivariate Gaussian integration (see e.g. \cite{Chirik2009}) in \eqref{eq:integral_1} we obtain
  \begin{align*}
    \int_{-\infty}^{\infty} \cdots \int_{-\infty}^{\infty} & \exp \left( - \frac{1}{1-u^2}  {}^{T} \bv \bA_{N-1} \bv + {}^T \mathbf{B}(s) \bv \right)  d v_{N-1} d v_{N-2} \cdots  d v_1 \\
                            &= \sqrt{ \frac{(1-u^2)^{N-1} \pi^{N-1}}{\det(\bA_{N-1})} } \exp\left(  \frac{1-u^2}{4} {}^{T} \mathbf{B}(s) (\bA_{N-1})^{-1} \mathbf{B}(s) \right) \\
                            &= \sqrt{\frac{\pi^{N-1}(1-u^2)^N}{(1-u^{2 N})}} \exp\left(  \frac{1-u^2}{4} {}^{T} \mathbf{B}(s) (\bA_{N-1})^{-1} \mathbf{B}(s) \right),
  \end{align*}

  Thus, we have
  \begin{align*} %\label{eq:integral_2}
    I_N &(v_0,v_N,u,\bs) \nonumber \\
   &= \frac{u^{-N g^2}}{\sqrt{\pi (1-u^{2 N})}} \exp\left( -\frac{1+u^2}{2(1-u^2)}(v_0^2 + v_N^2) + \sqrt{2}g \frac{1-u}{1+u}\left((-1)^{\bs(1)}v_0 + (-1)^{\bs(N)}v_N \right) - 2 N g^2 \frac{1-u}{1+u}  \right)  \nonumber \\
    &\qquad \times \exp\left(  \frac{1-u^2}{4} {}^{T} \mathbf{B(\bs)} \bA_{N-1}^{-1} \mathbf{B(\bs)} \right). 
  \end{align*}

  From the definitions, we see that
  \begin{align*}
    {}^{T} \mathbf{B(\bs)} \bA_{N-1}^{-1} \mathbf{B(\bs)}  = &  \left(  \frac{2 u  }{1-u^2} \left( v_0 {}^T \bm{e}_1  + v_N {}^T \bm{e}_{N-1}\right) \right) \bA_{N-1}^{-1} \left( \frac{2 u  }{1-u^2} \left( v_0 \bm{e}_1  + v_N \bm{e}_{N-1}\right)\right) \\
   & + 2 \left( \frac{2 u  }{1-u^2} \left( v_0 {}^T \bm{e}_1  + v_N {}^T \bm{e}_{N-1}\right) \right) \bA_{N-1}^{-1} \left( \sqrt{2} g \frac{1-u}{1+u}  \mathbf{C}(\bs) \right) \\
   &+ \left( 2  g^2 \frac{(1-u)^2}{(1+u)^2} \right) {}^T\mathbf{C}(\bs) \bA_{N-1}^{-1} \mathbf{C}(\bs) ,
  \end{align*}
  the second line is justified by the symmetry of the inverse of the matrix \(\bA_{N-1}^{-1}\).  By Lemma \ref{lem:matA}, we have
  \begin{align*}
    &\frac{1-u^2}{4} \left(  \frac{2 u  }{1-u^2} \left( v_0 {}^T \bm{e}_1  + v_N {}^T \bm{e}_{N-1}\right) \right)  \bA_{N-1}^{-1} \left( \frac{2 u  }{1-u^2} \left( v_0 \bm{e}_1  + v_N \bm{e}_{N-1}\right)\right) \\
     &\qquad = \frac{ u^2}{(1-u^2)} \left( \frac{1-u^{2(N-1)}}{1-u^{2 N}} (v_0^2 + v_N^2 ) + 2 u^{N-2}\frac{1-u^2}{1-u^{2 N}} v_0 v_N \right),
  \end{align*}
  adding the term \(-\frac{1+u^2}{2(1-u^2)}(v_0^2 + v_N^2) \) we obtain
  \[
    -\frac{1+u^{2 N}}{2(1-u^{2 N})}(v_0^2 + v_N^2) + \frac{2 u^N}{1-u^{2 N}} v_0 v_N,
  \]
  giving the expression \(K_0(x,y,u^N) \) of \eqref{eq:IN} by setting \(v_0=x \) and \(v_N = y \)).

  For the second term, we have
  \begin{align*}
    &  \frac{2(1-u^2)}{4} \left( \frac{2 u  }{1-u^2} \left( v_0 {}^T \bm{e}_1  + v_N {}^T \bm{e}_{N-1}\right) \right) \bA_{N-1}^{-1} \left( \sqrt{2} g \frac{1-u}{1+u}  \mathbf{C}(s) \right) = \frac{\sqrt{2} g(1-u)}{(1-u^{2 N})(1+u)} \\
    & \qquad \times \left( v_0 \sum_{j=1}^{N-1} u^j (1-u^{2 (N-j)}) ((-1)^{\bs(j+1)}+(-1)^{\bs(j)}) + v_N \sum_{j=1}^{N-1} u^{N-j} (1-u^{2 j}) ((-1)^{\bs(j+1)}+(-1)^{\bs(j)})\right).
  \end{align*}
  By rewriting the first sum by using the identity
  \[
    u^j (1-u^{2 (N-j)}) +  u^{j-1} (1-u^{2 (N-j+1)}) = (1+u) u^{j-1} (1-u^{2 (N-j)+1}),
  \]
  and adding the term \( (-1)^{\bs(1)} (1-u^{2 N}) v_0\), we obtain the expression
  \[
    v_0 \sum_{j=1}^{N} (-1)^{\bs(j)} u^{j-1}(1-u^{2(N-j)+1}),
  \]
  and similarly for the second sum, giving the expression in the sum in the first line of \eqref{eq:IN}

  Finally, the term
  \[
    \frac{1-u^2}{4} \left( 2  g^2 \frac{(1-u)^2}{(1+u)^2} \right) \left({}^T\mathbf{C}(s) \bA_{N-1}^{-1} \mathbf{C}(s) \right),
  \]
  is given by
  \begin{align*}
    \frac{  g ^2 (1-u)^2}{2(1+u)^2 (1-u^{2 N})} \bigg( & \sum_{i=1}^{N-1} ( (-1)^{\bs(i+1)}+(-1)^{\bs(i)} )^2 (1-u^{2 i})(1-u^{2(N-i)}) \\
              & + 2 \sum_{i<j} ( (-1)^{\bs(i+1)}+(-1)^{\bs(i)} ) ( (-1)^{\bs(j+1)}+(-1)^{\bs(j)}) u^{j-i} (1-u^{2 i})(1-u^{2(N-j)}) \bigg),
  \end{align*}
  yielding the expression in the second line of \eqref{eq:IN}. The proof is completed by setting \( v_0=x \) and \(  v_N =y \).
\end{proof}

%%%%%%%%%%%%%%%%%%%%%%%%%%%%%%%%%%%%%%%%%%%%%%%%%%%%%%%%%%%%%%%%%%
\subsection{Non-commutative part}
\label{sec:matrixG}
%%%%%%%%%%%%%%%%%%%%%%%%%%%%%%%%%%%%%%%%%%%%%%%%%%%%%%%%%%%%%%%%%%

In this section we explicitly describe the matrix-valued function $G_k(u,\Delta,s)$ for \(k \geq 1 \), then
by using the resulting expression and  the previous computation for \(I_k(x,y,u,\bs)\) we give
a limit formula for the heat kernel of QRM reminding us of a Riemannian sum. %resembling

To simplify the notation, we denote by \(\bM_{i j}\), for \(i,j = 0,1\),  the matrices
\begin{equation}
  \label{eq:fourmatrices}
  \bM_{0 0} := \matrixZZ, \bM_{0 1} := \matrixZO, \bM_{1 0} := \matrixOZ, \bM_{1 1} := \matrixOO,
\end{equation}
where we included the previously defined matrices \(\bM_{0 0}\) and \(\bM_{1 1}\) for reference.

\begin{prop}
\label{prop:matrices}
  For \(\bs\in \btZ{k}\), we have
  \begin{equation*} %\label{eq:GN}
        G_k(u,\bs) = \frac{ \prod^{k-1}_{i=1}(1+(-1)^{\bs(i) -\bs(i+1)} u^{2\Delta})}{u^{(k-1)\Delta} 2^k} \bM_k(\bs) \matrixU{\Delta},
  \end{equation*}
  where the matrix \(\bM_k(\bs)\) is given by
  \[
    \bM_k(\bs) =
      \mat{ 1  & (-1)^{1-\bs(1)} \\ (-1)^{1-\bs(1)} & 1 }
       \overrightarrow{\prod}_{i=1}^{k-1}  \left( \matrixId +
      \mat{
          -|\bs(i) - \bs(i+1)|  & -(\bs(i+1) - \bs(i)) \\
          (\bs(i+1) - \bs(i)) & -|\bs(i) - \bs(i+1)|
        } \right).
  \]
\end{prop}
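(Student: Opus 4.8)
The plan is to exploit that every factor in the defining product of $G_k(u,\bs)$ is a rank-one matrix. Writing $U := \matrixU{\Delta} = u^{\Delta\sigma_z}$ and introducing the column vectors $\bm{w}_0 := {}^T(1,-1)$ and $\bm{w}_1 := {}^T(1,1)$, one checks at once that $\bI + (-1)^{1-\bs(i)}\bJ = \bM_{\bs(i)\bs(i)} = \bm{w}_{\bs(i)}\,{}^T\bm{w}_{\bs(i)}$, with $\bM_{00},\bM_{11}$ as in \eqref{eq:fourmatrices}. Hence
\begin{equation*}
  2^k\, G_k(u,\bs) = \bm{w}_{\bs(1)}\,{}^T\bm{w}_{\bs(1)}\, U\, \bm{w}_{\bs(2)}\,{}^T\bm{w}_{\bs(2)}\, U \cdots \bm{w}_{\bs(k)}\,{}^T\bm{w}_{\bs(k)}\, U .
\end{equation*}
Each inner block ${}^T\bm{w}_{\bs(i)}\, U\, \bm{w}_{\bs(i+1)}$ is a scalar, so the product telescopes to
\begin{equation*}
  2^k\, G_k(u,\bs) = \left( \prod_{i=1}^{k-1} {}^T\bm{w}_{\bs(i)}\, U\, \bm{w}_{\bs(i+1)} \right) \bm{w}_{\bs(1)}\,{}^T\bm{w}_{\bs(k)}\, U ,
\end{equation*}
leaving a single trailing factor $U = \matrixU{\Delta}$, exactly the shape of the claimed formula.

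First I would evaluate the $k-1$ scalar factors. A one-line computation gives ${}^T\bm{w}_a\, U\, \bm{w}_b = u^{\Delta} + (-1)^{|a-b|} u^{-\Delta} = (-1)^{|a-b|} u^{-\Delta}\bigl( 1 + (-1)^{a-b} u^{2\Delta} \bigr)$. Taking the product over $i = 1,\dots,k-1$ then reproduces the scalar prefactor $u^{-(k-1)\Delta}\prod_{i=1}^{k-1}(1 + (-1)^{\bs(i)-\bs(i+1)} u^{2\Delta})$ of the statement, times an additional sign $(-1)^m$, where $m := \sum_{i=1}^{k-1}|\bs(i)-\bs(i+1)|$ is the number of sign changes of $\bs$.

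It remains to absorb this residual sign into the matrix part, i.e. to prove the identity $\bM_k(\bs) = (-1)^m\, \bm{w}_{\bs(1)}\,{}^T\bm{w}_{\bs(k)}$. For this I would examine the transfer matrices $T_i := \matrixId + \mat{ -|\bs(i)-\bs(i+1)| & -(\bs(i+1)-\bs(i)) \\ (\bs(i+1)-\bs(i)) & -|\bs(i)-\bs(i+1)| }$ occurring in the definition of $\bM_k(\bs)$. A short case check on the pair $(\bs(i),\bs(i+1))$ shows that $T_i = \bI$ when $\bs(i)=\bs(i+1)$, while $T_i$ is one of the rotations $\mat{0&-1\\1&0}$, $\mat{0&1\\-1&0}$ otherwise, and in all cases ${}^T\bm{w}_{\bs(i)}\, T_i = (-1)^{|\bs(i)-\bs(i+1)|}\,{}^T\bm{w}_{\bs(i+1)}$. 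An immediate induction on the length of the product gives ${}^T\bm{w}_{\bs(1)}\,\overrightarrow{\prod}_{i=1}^{k-1} T_i = (-1)^m\,{}^T\bm{w}_{\bs(k)}$; since the leading factor of $\bM_k(\bs)$ is $\bM_{\bs(1)\bs(1)} = \bm{w}_{\bs(1)}\,{}^T\bm{w}_{\bs(1)}$, this yields $\bM_k(\bs) = (-1)^m\,\bm{w}_{\bs(1)}\,{}^T\bm{w}_{\bs(k)}$. Substituting back transfers the stray $(-1)^m$ into $\bM_k(\bs)$ and matches the telescoped expression with the claim; the base case $k=1$ (empty product, both sides equal to $\tfrac{1}{2}\bM_{\bs(1)\bs(1)}\matrixU{\Delta}$) is immediate.

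The only genuinely delicate point is this sign bookkeeping: the factor $(-1)^m$ generated by the scalar inner products must be exactly the one concealed in $\bM_k(\bs)$, so that the scalar prefactor in the statement correctly carries no sign while the matrix $\bM_k(\bs)$ carries all of it. The transfer-matrix identity ${}^T\bm{w}_{\bs(i)}\, T_i = (-1)^{|\bs(i)-\bs(i+1)|}\,{}^T\bm{w}_{\bs(i+1)}$ is what guarantees this, and verifying it (a routine $2\times 2$ case analysis) is the main, if elementary, obstacle.
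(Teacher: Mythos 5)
Your proof is correct, but it takes a genuinely different route from the paper's. The paper argues by induction on \(k\): it first establishes Lemma \ref{lem:matword} (that \(\bM_k(\bs)\) equals the matrix \(\bM_{ij}\) determined solely by \(i=\bs(1)\), \(j=\bs(k)\)) via a word-reduction argument, and then carries out the inductive step for \(G_{k+1}\) by an explicit case analysis over \((\bs(1),\bs(k),\bs(k+1))\), computing products such as \(\bM_{i j}\matrixU{\Delta}\bM_{00}\) case by case. You instead exploit, once and for all, the rank-one structure \(\bI+(-1)^{1-\bs(i)}\bJ=\bm{w}_{\bs(i)}\,{}^T\bm{w}_{\bs(i)}\), telescope the ordered product into the scalars \({}^T\bm{w}_{\bs(i)}U\bm{w}_{\bs(i+1)}=(-1)^{|\bs(i)-\bs(i+1)|}u^{-\Delta}(1+(-1)^{\bs(i)-\bs(i+1)}u^{2\Delta})\), and absorb the residual sign \((-1)^m\) through the transfer identity \({}^T\bm{w}_{\bs(i)}T_i=(-1)^{|\bs(i)-\bs(i+1)|}\,{}^T\bm{w}_{\bs(i+1)}\); all three ingredients check out (I verified the inner products, the three cases of \(T_i\), and the sign cancellation \((-1)^m\cdot(-1)^m=1\) in the final substitution). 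What each approach buys: yours requires no induction on the length of the product and no separate case lemma, and your closed form \(\bM_k(\bs)=(-1)^m\,\bm{w}_{\bs(1)}\,{}^T\bm{w}_{\bs(k)}\) with \(m\equiv\bs(1)+\bs(k)\pmod 2\) actually recovers Lemma \ref{lem:matword} as an immediate corollary, so it is both shorter and slightly stronger; the paper's proof, while more pedestrian, uses nothing beyond multiplication of explicit two-by-two matrices and keeps the four matrices \(\bM_{ij}\) (which reappear throughout \S\ref{sec:matrixG}) in plain view at every step.
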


Before proving Proposition \ref{prop:matrices}, we observe that the matrix \(\bM_k(\bs) \) is only one of the
matrices \(\bM_{0 0}\), \(\bM_{0 1}\), \(\bM_{1 0}, \) and \(\bM_{1 1}\) (see \eqref{eq:fourmatrices}).
In fact, \(\bM_k(\bs) \) only depends on the first and last entry of \(\bs \in \btZ{k}\). 

\begin{lem}
  \label{lem:matword}
  Let \(\bs \in \btZ{k}\). If \(\bs(1) = i \) and \(\bs(k) = j \), then
  \[
     \bM_k(\bs)  = \bM_{i j},
   \]
   for \(i,j = 0,1 \).
\end{lem}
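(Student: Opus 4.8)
The plan is to reduce the a priori sequence-dependent ordered product defining $\bM_k(\bs)$ to a single matrix power determined only by the endpoints $\bs(1)$ and $\bs(k)$. First I would introduce, for each $1 \le \ell \le k-1$, the increment $d_\ell := \bs(\ell+1)-\bs(\ell)$, which lies in $\{-1,0,1\}$ since every entry of $\bs$ lies in $\{0,1\}$. Writing $|d_\ell| = |\bs(\ell)-\bs(\ell+1)|$, the $\ell$-th factor of the product in Proposition \ref{prop:matrices} becomes
\[
\matrixId + \mat{-|d_\ell| & -d_\ell \\ d_\ell & -|d_\ell|} = \mat{1-|d_\ell| & -d_\ell \\ d_\ell & 1-|d_\ell|},
\]
so it equals exactly one of three matrices according as $d_\ell = 0$, $1$, or $-1$.

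Next I would observe that, setting $R := \mat{0 & -1 \\ 1 & 0}$, those three matrices are precisely $R^0 = \matrixId$ (for $d_\ell=0$), $R = \mat{0 & -1 \\ 1 & 0}$ (for $d_\ell=1$), and $R^{-1} = \mat{0 & 1 \\ -1 & 0}$ (for $d_\ell=-1$); in other words the $\ell$-th factor is simply $R^{d_\ell}$. The crux of the argument is that all these factors are powers of the single matrix $R$ and therefore commute, so the ordered product collapses to one power,
\[
\overrightarrow{\prod}_{\ell=1}^{k-1} R^{d_\ell} = R^{\,\sum_{\ell=1}^{k-1} d_\ell}.
\]
The exponent then telescopes, $\sum_{\ell=1}^{k-1} d_\ell = \bs(k)-\bs(1) = j-i$, which makes the dependence on only the endpoints manifest: the whole product equals $R^{\,j-i}$.

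It remains to fold in the prefactor $\mat{1 & (-1)^{1-\bs(1)} \\ (-1)^{1-\bs(1)} & 1}$. With $\bs(1)=i$ this prefactor is $\bM_{00} = \matrixZZ$ when $i=0$ (since $(-1)^{1-0}=-1$) and $\bM_{11} = \matrixOO$ when $i=1$ (since $(-1)^{1-1}=1$). Multiplying by $R^{\,j-i}$, the two diagonal cases $i=j$ give $R^0 = \matrixId$ and hence reproduce $\bM_{00}$ and $\bM_{11}$ at once; for the two off-diagonal cases I would carry out the single $2\times2$ products $\bM_{00}\,R = \bM_{01}$ (when $(i,j)=(0,1)$) and $\bM_{11}\,R^{-1} = \bM_{10}$ (when $(i,j)=(1,0)$). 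These four verifications exhaust all pairs $(i,j)\in\{0,1\}^2$ and yield $\bM_k(\bs) = \bM_{ij}$.

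The only step requiring genuine insight is the commutativity observation of the second paragraph: once one recognizes that each factor is a power of the common rotation $R$, the ordering in $\overrightarrow{\prod}$ becomes irrelevant and the telescoping sum exposes the dependence on the endpoints alone. Everything after that is bookkeeping, namely the four routine $2\times2$ multiplications, and I do not expect any real obstacle there.
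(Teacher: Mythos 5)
Your proof is correct, and it takes a cleaner route than the paper's. The paper argues combinatorially: it first discards the identity factors arising from repeated entries, reduces the word $\bs$ to its alternating form $\bar{\bs}$ (e.g.\ $0(10)^m$ or $0(10)^m 1$), and then collapses the product by noting that adjacent transition matrices $\mat{0 & -1 \\ 1 & 0}$ and $\mat{0 & 1 \\ -1 & 0}$ cancel in pairs. You instead observe that \emph{every} factor, including the identity ones, is a power $R^{d_\ell}$ of the single rotation $R = \mat{0 & -1 \\ 1 & 0}$ with $d_\ell = \bs(\ell+1)-\bs(\ell)$, so commutativity of powers of $R$ gives $\overrightarrow{\prod}_{\ell=1}^{k-1} R^{d_\ell} = R^{\sum_\ell d_\ell}$, and the exponent telescopes to $j-i$. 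This is the same underlying mechanism (the two nontrivial factors are mutually inverse rotations), but your telescoping argument replaces the paper's word-reduction step, which is stated somewhat informally and requires an implicit justification that deleting identity factors and cancelling adjacent inverse pairs exhausts the product; your version makes the dependence on the endpoints alone completely transparent in one line, at the cost of no additional machinery. The four terminal $2\times 2$ multiplications you list ($\bM_{00}R^{0}$, $\bM_{11}R^{0}$, $\bM_{00}R = \bM_{01}$, $\bM_{11}R^{-1} = \bM_{10}$) all check out, so there is no gap.
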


\begin{proof}
  Let us consider only the case \(\bs(1)=0\), since the case \(\bs(1) =1 \) is proved in a similar fashion.
  Notice that if \(\bs(i+1) = \bs(i) \), the matrix inside the product in the definition of \(\bM_k(\bs)\)
  corresponding to the index \(i \in \{1,2,\cdots,k-1\}\) is the identity. Let us consider the vector 
 \(\bs\) as a word on the alphabet \(\Z_2 = \{0,1\}\) in the standard way and \(\bar{\bs}\) the word resulting
 of removing contiguous occurrences of ones or zeros. Then, if \(\bs(1) = 0\), and \(\bs(k)= 0 \),
 \(\bar{\bs}= 0 (1 0)^k \)  with \(k \in \Z_{\geq 0}\) and  here exponentiation means concatenation of words.
 From the definition of \(\bM_k(\bs) \) we see then that
  \[
   \matrixZZ \left( \mat{ 0 & -1 \\ 1 & 0 } \mat{ 0 & 1 \\ -1 &0 }\right)^k  = \matrixZZ,
  \]
  since the expression in the parenthesis is equal to the identity matrix. Similarly, for \(\bs(1) = 0 \) and
  \(\bs(k) = 1 \), we have \(\bar{\bs}= 0 (1 0)^k 1\) for \(k \in \Z_{\geq 0}\). Therefore,
  \[
    \matrixZZ \left( \mat{ 0 & -1 \\ 1 & 0 } \mat{ 0 & 1 \\ -1 &0 }\right)^k \mat{ 0 & -1 \\ 1 & 0 } = \matrixZO ,
  \]
  and the result follows.

\end{proof}

\begin{proof}[Proof of Proposition \ref{prop:matrices}]
  The case \(k=1 \) is trivial. Furthermore we easily by direct computation that
  \begin{align*}
    G_2(u,(0,0)) &= \frac{1+u^{2\Delta}}{2^2 u^{\Delta}} \matrixZZ \matrixU{\Delta}, & G_2(u,(0,1)) &= \frac{1-u^{2\Delta}}{2^2 u^{\Delta}} \matrixZO \matrixU{\Delta} \\
    G_2(u,(1,0)) &= \frac{1-u^{2\Delta}}{2^2 u^{\Delta}} \matrixOZ \matrixU{\Delta},  &   G_2(u,(1,1)) &= \frac{1+u^{2\Delta}}{2^2 u^{\Delta}} \matrixOO \matrixU{\Delta}.
  \end{align*}
  Now, we suppose the result holds for \(k \in \Z_{\geq2}\). Let \(\bs \in \btZ{k+1}\) and consider
  \[
    G_{k+1}(u,\bs) =  \frac{1}{2^{k+1}} \overrightarrow{\prod}_{i=1}^{k+1}[\bI+(-1)^{1-\bs(j)}\bJ] \matrixU{\Delta},
  \]
  by the hypothesis, this is just
  \[
    \frac{ \prod^{k-1}_{i=1}(1+(-1)^{\bs(i) -\bs(i-1)} u^{2\Delta})}{u^{(k-1)\Delta} 2^{k+1}} \bM_k(s') \matrixU{\Delta} [\bI+(-1)^{1-\bs(k+1)}\bJ]
    \matrixU{\Delta},
  \]
  with \(\bs' \in \btZ{k}\).

  Suppose that \(\bs(k+1) = 0\), we consider the product
  \[
    \bM_k(\bs') \matrixU{\Delta} \matrixZZ,
  \]
  we are going to verify the result for the possible combinations of \(\bs(1)\) and \(\bs(k)\).

  First, if \(\bs(1)=0\) and \(\bs(k)=0 \), by Lemma \ref{lem:matword}, the above product is
  \[
    \matrixZZ \matrixU{\Delta} \matrixZZ = \frac{1+u^{2\Delta}}{u^{\Delta}} \matrixZZ = \frac{1+u^{2\Delta}}{u^{\Delta}}   \bM_k(s') \matrixId,
  \]
  which is the desired expression.
  In the case \(\bs(1)=0\) and \(\bs(k)=1 \), we have
  \[
    \matrixZO \matrixU{\Delta} \matrixZZ = \frac{1-u^{2\Delta}}{u^{\Delta}} \matrixZZ = \frac{1-u^{2\Delta}}{u^{\Delta}}  \bM_k(s') \mat{ \phantom{-}0 & \phantom{-}1 \\ -1 & \phantom{-}0 }.,
  \]
  while  in the case \(\bs(1)=1\) and \(\bs(k)=0 \) we have
  \[
    \matrixOZ \matrixU{\Delta} \matrixZZ = \frac{1+u^{2\Delta}}{u^{\Delta}} \matrixOZ = \frac{1+u^{2\Delta}}{u^{\Delta}}   \bM_k(s') \matrixId,
  \]
  and finally, the in the case \(\bs(1)=1\) and \(\bs(k)=1 \), we have
  \[
    \matrixOO \matrixU{\Delta} \matrixZZ = \frac{1-u^{2\Delta}}{u^{\Delta}} \matrixOZ = \frac{1-u^{2\Delta}}{u^{\Delta}}   \bM_k(s') \mat{ \phantom{-}0 & \phantom{-}1 \\ -1 & \phantom{-}0 }.
  \]
  The case of \(\bs(k+1) =1  \) is completely analogous.
\end{proof}

By \eqref{eq:Npkernel}, the heat kernel of the QRM is given by the limit expression
\begin{align*}
  \KRabi(x,y,t; g, \Delta) =\lim_{N\to \infty} \sum_{\bs \in \btZ{N}} G_N(u^{\frac1N},\bs) I_N(x,y,u^{\frac1N}, \, \bs).
\end{align*}

To deal with the sum over \( \btZ{N}\) in the expression above, we introduce  a  partition of \( \btZ{N} \).

\begin{dfn}
  \label{dfn:subsets}
  Let \(N \in \Z_{\geq 1} \) and \(i,j \in \Z_2\).
  \begin{enumerate}
  \item The subset \(\setC{N}{i}{j} \subset \btZ{N} \) is given by
    \[
      \setC{N}{i}{j} = \{ \bs \in \btZ{N} \, |\,  \bs(1) = i, \bs(N) = j \}.
    \]
  \item For \( 3 \leq  k \leq N \) the subset \(\setA{k,N}{i}{j} \subset \btZ{N}\) is given by
    \[
      \setA{k,N}{i}{j} = \{ \bs \in \btZ{N} \, |\, \bs(1) = i , \bs(k-1) = 1 - j, \bs(n) = j \text{ for } k \leq n \leq N \},
    \]
  \item We have
    \begin{align*}
      \setA{1,N}{0}{0} &= \{  (0,0,0,0,\cdots,0 ) \}, &\setA{2,N}{0}{1} &= \{  (0,1,1,1,\cdots,1 )\}, \\
      \setA{1,N}{1}{1} &= \{  (1,1,1,,1,\cdots,1) \}, &\setA{2,N}{1}{0} &= \{  (1,0,0,0,\cdots,0 ) \},
    \end{align*}
    and \(\setA{k,N}{i}{j} = \emptyset \) for \(k =1,2 \) if it is not one of the four sets above.
  \end{enumerate}
\end{dfn}

For \(N \geq 2 \), the sets \(\setA{k,N}{i}{j} \subset \btZ{N}\) form a partition of \(\btZ{N}\), that is,
\begin{equation}
  \label{eq:partiZ}
   \btZ{N} = \bigsqcup_{ 1 \leq k \leq N} \, \bigsqcup_{i,j \in \Z_2}  \setA{k,N}{i}{j},
\end{equation}
 from where it is clear that for \(i,j \in \Z_2\), we have \(\# \setA{k,N}{i}{j} = 2^{k-3} \) for \(k \geq 3\) and \(\# \setA{n,N}{i}{j} = 1\) if \(n=1,2\).

We frequently use the constant elements \( \bZ_k = (0,0,\cdots,0), \bO_k = (1,1,\cdots,1) \in \btZ{k}\) for \(k \in \Z_{\geq 1} \).
For \(\br \in \btZ{k}\) and \(\bs \in \btZ{\ell}\) with \(k,\ell \in \Z_{\geq 1}\), we denote by \(\br \oplus \bs \in \btZ{k + \ell} \) the element obtained by
concatenation in the natural way.

We note that any element \(\bs \in \setA{k,N}{0}{0}\) for \(k \geq 3 \) can be expressed as
\begin{equation}
  \label{eq:oplusvec}
  \bs = \bar{s} \oplus \bZ_{N-k+1}
\end{equation}
with \(\bar{s} \in \setC{k-1}{0}{1}\). Similar expressions hold for elements of \(\setA{k,N}{0}{1}\), \(\setA{k,N}{1}{0}\) and \(\setA{k,N}{1}{1}\).

Additionally, by Lemma \ref{lem:matword}, the matrix  \(\bM_k(\bs) \) only depends on the first and last entry of \(\bs \in \btZ{k} \)
and thus it is fixed over any subset \(\setC{k}{i}{j} \in \btZ{k}\) for \(i,j =0,1\) (c.f. Definition \ref{dfn:subsets}). In practice, it is convenient to work with the scalar part of the function \(G_k(u,\bs) \) above.

\begin{dfn} 
  For \(k \geq 1 \), the function \(g_k(u,\bs) \) is given by
  \begin{equation*}
    g_k(u,\bs) = \frac{ \prod^{k-1}_{i=1}(1+(-1)^{\bs(i) -\bs(i+1)} u^{2\Delta})}{u^{(k-1)\Delta} 2^k}.
  \end{equation*}
\end{dfn}

The result of Proposition \ref{prop:matrices} is then written as 
\[
  G_k(u,\bs) = g_{k}(u,\bs) \bM_k(\bs) \matrixU{\Delta},
\]
where we note that the degree of \(g_k(u,\bs)\) as a polynomial in \(u^{\Delta}\) is \(2(k-1)\).

With the foregoing notation, the heat kernel $\KRabi(x,y,t)$ of the QRM, given by 
\begin{align*}
  \KRabi(x,y,t; g, \Delta) &= \lim_{N \to \infty} D_N(x,y,u^{\frac1N}) = \lim_{N \to \infty} \sum_{\bs \in \btZ{N}} G_N(u^{\frac1N},\bs) I_N(x,y, u^{\frac1N},\bs),
\end{align*}
is, by  \eqref{eq:IN} and \eqref{eq:partiZ}, equal to
\begin{align} \label{eq:lim1}
   \KRabi(x,y,t; g, \Delta) &= K_0(x,y,u) \lim_{N \to \infty} \sum_{k=1}^N \sum^1_{i,j =0}  \sum_{\bs \in \setA{k,N}{i}{j}}  G_N(u^{\frac1N},\bs) \bar{I}_N(x,y, u^{\frac1N},\bs),
\end{align}
with
\begin{align*} %\label{eq:IN_bar}
  \bar{I}_N(x,y,u,& \,\bs) = \exp  \left( \frac{\sqrt{2} g(1-u)}{1-u^{2 N}} \sum_{j=1}^{N} (-1)^{\bs(j)}\left(  x \Lambda^{(j)}(u) + y \Lambda^{(N-j+1)}(u) \right)   \right)   \nonumber \\
    & \times  \exp \Bigg(  \frac{g^2 (1-u)^2}{2 (1+u)^2 (1-u^{2 N})} \bigg(  \sum_{i=1}^{N-1} \eta_i(s)^2 \Omega^{(i,i)}(u) + 2 \sum_{i<j} \eta_i(s) \eta_j(s) \Omega^{(i,j)}(u)  \bigg) -   \frac{2 N g^2 (1-u)}{1+u} \Bigg).
\end{align*}

Note that for \(k \geq 2 \), by \eqref{eq:oplusvec}, the expression inside the limit in \eqref{eq:lim1} is given by
\begin{align*}
  \sum_{k \geq 2}^N &\bigg( \sum_{\substack{\bs = \bar{\bs} \oplus \bZ_{N-k+1} \\ \bar{\bs} \in \setC{k-1}{0}{1} } }
  + \sum_{\substack{\bs = \bar{\bs} \oplus \bZ_{N-k+1} \\ \bar{\bs} \in \setC{k-1}{1}{1}}}
  + \sum_{\substack{\bs = \bar{\bs} \oplus \bO_{N-k+1} \\ \bar{\bs} \in \setC{k-1}{0}{0}}}
  + \sum_{\substack{\bs = \bar{\bs} \oplus \bO_{N-k+1} \\ \bar{\bs} \in \setC{k-1}{1}{0}}}  G_{N}(u^{\frac1N},\bs) \bar{I}_N(x,y, u^{\frac1N},\bs) \bigg),
\end{align*}
and we remark that \(\setC{1}{i}{j} = \emptyset\) with \( i \neq j \).

Next, we describe how the term \(\bar{I}_N(x,y, u^{\frac1N},\bs)\) factors in each of the sums. For \(\bar{\bs} \in \btZ{k-1}\) with \(k \geq 1\), write
\begin{align*} %\label{eq:defJ}
  \bar{I}_N(x,y,u, \,\bar{\bs} \oplus \bZ_{N-k+1} ) &= J^{(k,N)}_0(x,y,u) R_0^{(k,N)}(u,\bar{\bs}),  \nonumber\\
  \bar{I}_N(x,y,u, \,\bar{\bs} \oplus \bO_{N-k+1} ) &= J^{(k,N)}_1(x,y,u) R_1^{(k,N)}(u,\bar{\bs}),
\end{align*}
with functions \(J^{(k,N)}_\mu(x,y,u)\) and \( R_\mu^{(k,N)}(u,\bar{\bs}) \) for \(\mu \in \{0,1\}\) given in Definition \ref{dfn:JR} below.
Notice that in the first line \( \bar{\bs} \in \setC{k-1}{i}{1} \) and in the second line \( \bar{\bs} \in \setC{k-1}{i}{0} \) for \(i=0,1\).

\begin{dfn} \label{dfn:JR}
  For \( k \geq 1\), the function \(J^{(k,N)}_{\mu}(x,y,u)\) is given by
  \begin{align*}
    J^{(k,N)}_{\mu}(x,y,u) &= \exp\left((-1)^\mu \frac{\sqrt{2} g(1-u)}{1-u^{2 N}} \left(  \sum_{j=k}^{N}\left( x \Lambda^{(j)}(u) + y \Lambda^{(N-j+1)}(u)  \right) \right) \right)  \nonumber \\
                           & \times  \exp \left(  \frac{ 2 g^2 (1-u)^2}{ (1+u)^2 (1-u^{2 N})} \bigg(  \sum_{i=k}^{N-1} \Omega^{(i,i)}(u)  \,   + 2 \sum_{i=k}^{N-2} \sum_{j=i+1}^{N-1} \Omega^{(i,j)}(u) \bigg)
                             -  2 N g^2 \frac{  (1-u)}{1+u} \right),
  \end{align*}
  while  \(R_{\mu}^{(k,N)}(u,\bar{\bs}) \) is given, for \(\bar{\bs} \in \btZ{k-1} \), by
  \begin{align*}
    R_{\mu}^{(k,N)}(u,\bar{\bs}) &= \exp \left( \frac{\sqrt{2} g(1-u)}{1-u^{2 N}}  \sum_{j=1}^{k-1} (-1)^{\bar{\bs}(j)} \left( x \Lambda^{(j)}(u) + y \Lambda^{(N-j+1)}(u)  \right)    \right)   \nonumber \\
                                 &\times  \exp \Bigg( \frac{g^2 (1-u)^2}{2 (1+u)^2 (1-u^{2 N})} \Bigg[  \sum_{i=1}^{k-2} \eta_i(\bar{\bs})^2 \Omega^{(i,i)}(u)  + 2 \sum_{i=1}^{k-2} \sum_{j=i+1}^{k-2} \eta_i(\bar{\bs})\eta_j(\bar{\bs}) \Omega^{(i,j)}(u) \\
                                 & \qquad \qquad \qquad \qquad \qquad \qquad \qquad + 4 (-1)^{\mu} \sum_{i=1}^{k-2} \sum_{j=k}^{N-1} \eta_i(\bar{\bs}) \Omega^{(i,j)}(u)  \Bigg] \Bigg).
  \end{align*}  
\end{dfn}

Suppose that \( \bs = \bs_1 \oplus \bZ_{N-k+1} \) with  \(  \bs_1  \in \setC{k-1}{v}{1}\) and \(v \in \{0,1\} \), then it is easy to see that
\[
  G_N(u^{\frac1N},\bs) =  \left( \frac{1-u^{\frac{2\Delta}N}}{2 u^{\frac{\Delta}N}} \right) \left( \frac{1+u^{\frac{2\Delta}N}}{2 u^{\frac{\Delta}N}} \right)^{N-k}  g_{k-1}(u^{\frac1N},\bs_1) \bM_N(\bs)  \matrixU{\frac{\Delta}N},
\]
with similar expressions for other cases. Therefore, the sum inside the limit (starting from \(k=2 \)) is given by
\begin{align} \label{eq:InnerSum1}
   \left( \frac{1-u^{\frac{2\Delta}N}}{2 u^{\frac{\Delta}N}} \right)& \sum_{k \geq 2}^N  \left( \frac{1+u^{\frac{2\Delta}N}}{2 u^{\frac{\Delta}N}} \right)^{N-k} \Bigg[ J_0^{(k,N)}(x,y,u^{\frac1N}) \Bigg( \sum_{v=0}^1 \bM_{v 0} \sum_{ \bs \in \setC{k-1}{v}{1}} g_{k-1}(u^{\frac1N},\bs) R_0^{(k,N)}(u^{\frac1N},\bs) \Bigg) \nonumber \\
  &+  J_1^{(k,N)}(x,y,u^{\frac1N})\Bigg( \sum_{v=0}^1 \bM_{v 1} \sum_{ \bs \in \setC{k-1}{v}{0}} g_{k-1}(u^{\frac1N}, \bs) R_1^{(k,N)}(u^{\frac1N},\bs)  \Bigg) \Bigg]  \matrixU{\frac{\Delta}N}. 
\end{align}

Next, we make some considerations to further simplify the expression of the heat kernel.
First, we notice that the matrix factor
\[
\matrixU{\frac{\Delta}N}
\]
is the identity matrix at the limit $N\to\infty$, so we omit it in the subsequent discussion. Similarly, without loss of generality, we drop the term corresponding to \(k=2\), since it vanishes due to the presence of the factor \((1-u^{2\Delta/N})\). This is analogous to removing a finite number of terms from a Riemann sum.

Summing up, the expression for the heat kernel \(\KRabi(x,y,t; g ,\Delta)\) is given by
\begin{align*}
  & K_0(x,y,u) \lim_{N \to \infty} \Bigg( \frac12 \left(\frac{1+u^{\frac{2\Delta}N}}{2 u^{\frac{\Delta}N}} \right)^{N-1}\left( J^{(1,N)}_0(x,y,u^{\frac1N}) \matrixZZ  + J^{(1,N)}_1(x,y,u^{\frac1N}) \matrixOO  \right) \\
             & + \left( \frac{1-u^{\frac{2\Delta}N}}{2 u^{\frac{\Delta}N}} \right) \sum_{k \geq 3}^N   \left( \frac{1+u^{\frac{2\Delta}N}}{2 u^{\frac{\Delta}N}} \right)^{N-k}   \\
   & \times \Bigg[ J_0^{(k,N)}(x,y,u^{\frac{1}N}) \Bigg( \bM_{0 0} \sum_{ \bs \in \setC{k-1}{0}{1}} g_{k-1}(u^{\frac1N},\bs) R_0^{(k,N)}(u^{\frac{1}N},\bs)  + \bM_{1 0} \sum_{ \bs \in \setC{k-1}{1}{1}} g_{k-1}(u^{\frac{1}N},\bs) R_0^{(k,N)}(u^{\frac1N},\bs) \Bigg) \\
        & + J_1^{(k,N)}(x,y,u^{\frac1N}) \Bigg( \bM_{0 1} \sum_{ s \in \setC{k-1}{0}{0}} g_{k-1}(u^{\frac1N}, \bs) R_1^{(k,N)}(u^{\frac1N},\bs)  + \bM_{1 1} \sum_{ s \in \setC{k-1}{1}{0}} g_{k-1}(u^{\frac1N},  \bs) R_1^{(k,N)}(u^{\frac1N},\bs) \Bigg) \Bigg] \Bigg).
\end{align*}

Notice that the limit in the expression \eqref{eq:InnerSum1} resembles a Riemann sum of the type
\[
  \lim_{N\to \infty} \sinh\left(\frac{t}{N}\right) \sum_{k=1}^{N} f\left( \frac{k t}{N} \right) = \int_0^t f(x) d x,
\]
for a Riemann integrable function \(f:[0,t] \to \R\). However, due to the presence of alternating sums
depending of \(k\) in \( R_\mu^{(k,N)}(u,\bar{\bs})\) and in \(g_{k-1}(u^{\frac{1}N},\bs)\) it is not possible to
interpret the limit directly as a Riemann sum.

%%%%%%%%%%%%%%%%%%%%%%%%%%%%%%%%%%%%%%%%%%%%%%%%%%%%%%%%%%%%%%%%%%
\section{Harmonic analysis on \(\Z_2^{k} \)}
\label{sec:four-transf}
%%%%%%%%%%%%%%%%%%%%%%%%%%%%%%%%%%%%%%%%%%%%%%%%%%%%%%%%%%%%%%%%%%

Denote by \(\C[\btZ{k}]\) the group algebra of the abelian group \(\Z_2^{k}\). For \(f, h \in \C[\btZ{k}] \)
the elementary identity (Parseval's identity)
\begin{equation}
  \label{eq:sumC}
    \sum_{s \in \btZ{k}} f(s) h(s) = (f * h)(0) = \frac{1}{2^k}\widehat{\left(\widehat{f} \cdot \widehat{h}\right)}(0) = \frac{1}{2^k} \sum_{\rho \in \btZ{k}} \widehat{f}(\rho) \widehat{h}(\rho),
\end{equation} holds, where \(\widehat{f} \) (resp. \(\widehat{h}\) ) is the Fourier transform of \(f\) (resp. \(h\)) defined below (see \eqref{eq:defFour}).

In this section we use the identity \eqref{eq:sumC} to transform the sum appearing \eqref{eq:InnerSum1} into an
expression that can be evaluated as a Riemann sum. First, we compute the Fourier transform of \(g_{k}(u^{\frac1N},  \bs)\), then in \S \ref{sec:four-transf-R} we describe the Fourier transform of \(R_\eta^{(k,N)}(u^{\frac1N},\bs)\). In \S \ref{sec:graph-theoretical} we collect a number of combinatorial results to simplify the  expression of the Fourier transform of \(R_\eta^{(k,N)}(u^{\frac1N},\bs)\). In \S \ref{sec:transformation-sum}, we use identity \eqref{eq:sumC} to simplify the expression \eqref{eq:InnerSum1} and in \S \ref{sec:transf-summ-into} we transform finite sums into definite integrals using the standard method with Riemann-Stieltjes integrations and estimate the order of the residual terms.

We begin by setting the notation and recalling the basic properties of the Fourier transform in \(\Z_2^{k} \), we refer the reader to
\cite{Cecc2008} for more details.
For \(\rho \in \btZ{k} \ \), define the character \(\chi_{\rho}(\bs) \in \btZdual{k} \) by
\begin{equation*} %\label{eq:character}
  \chi_{\rho}(\bs) := (-1)^{ (\bs | \rho)},
\end{equation*}
where \((\cdot|\cdot) \) is the standard inner product in \(\btZ{k}\). It is known that  all the characters in the dual
group \(\btZdual{k}\) are obtained in this way. Then, for \(f \in \C[\btZ{k}] \), the Fourier transform \(\widehat{f}(\rho)\) is given by
\begin{equation}
  \label{eq:defFour}
  \widehat{f}(\rho) = \mathcal{F}(f) :=  \sum_{\bs \in \btZ{k} } f(\bs) \chi_{\rho}(\bs),
\end{equation}
for \(\rho \in \btZ{k}\). Since \( \widehat{f} \in \C[\btZ{k}]\), the Fourier inversion formula is given by
\[
  f = \frac{1}{2^k} \hat{\hat{f}}.
\]

Next, we equip the set  \( \setC{k+2}{v}{w} \) with a abelian group structure such
that \( \setC{k+2}{v}{w} \simeq \btZ{k}\). We naturally identify an element  \( \bs  \in \setC{k+2}{v}{w}  \) via the
projection \( \bar{\bs} \in \btZ{k}\) given by
\begin{equation}
  \label{eq:projectionVect}
  \bs = (v,s_1,s_2,\cdots,s_k,w) \longmapsto \bar{\bs} = (s_1,s_2,\cdots,s_k).
\end{equation}

Clearly,  the sum \eqref{eq:sumC} may be regarded as a sum over \( \setC{k+2}{v}{w} \) by lifting an element
\( \bs \in \btZ{k} \) to \( \setC{k+2}{v}{w} \) by using the inverse of the projection \eqref{eq:projectionVect}.

In the case of the function \(g_k(u,\bs) \) we define a special notation. 
\begin{dfn} %\label{dfn:g_k}
  Let \(v,w \in \{0,1\} \). Then, for \(\bs \in \btZ{k}\) with \(k \geq 1 \), define the function \(g^{(v,w)}_k(u,\bs)\) by
  \begin{equation*} %\label{eq:gkdef}
    g^{(v,w)}_k(u,\bs) := \frac{1}{2^k} (1+(-1)^{v+\bs(1)}u^{2 \Delta}) (1+(-1)^{w+\bs(k)}u^{2\Delta}) \prod^{k-1}_{i=1}(1+(-1)^{\bs(i) +\bs(i+1)} u^{2 \Delta} ).
  \end{equation*}
  In addition, for \(\rho \in \btZ{0}\), define
  \[
    g_0^{(v,w)}(u,\bs) = 1+(-1)^{v+w} u^{2 \Delta}.
  \]
\end{dfn}

For \(\bs \in \setC{k+2}{v}{w}\), we have
\begin{equation*} %\label{eq:equiv_gk_gkvw}
  4 u^{(k+1)\Delta} g_{k+2}(u,\bs) =  g^{(v,w)}_k(u,\bar{\bs}),
\end{equation*}
and in addition, we note that the degree of \(g^{(v,w)}_k(u,\bar{\bs})\) as a polynomial in \(u^{\Delta}\) is  \(2(k+1)\).

For fixed \(u,\Delta \in \R \), the function \(g^{(v,w)}_k(u,\bs)\) is an element of the group algebra \(\C[\btZ{k}]\) of
the abelian group \(\btZ{k}\). Since the parameters \(g,\Delta >0 \) and \(u \in \{0,1\} \) are assumed to be fixed, in the remainder of this
section as it is obvious we may omit the dependence of \(g\),\(\Delta\) and \(u \) from certain functions.

Next, we give an explicit expression for the Fourier transform \(\widehat{g^{(v,w)}_k}(\rho)\) for arbitrary character \( \rho \in \btZ{k}\).

\begin{dfn} \label{dfn:varphi}
  Let \(\rho = (\rho_1,\rho_2,\cdots,\rho_k) \in \btZ{k} \). The function \(\vert \cdot \vert : \btZ{k} \to \C \) is
  given by
  \begin{equation*} %\label{eq:norm}
    \vert \rho \vert = \| \rho \|_1 := \sum_{i=1}^{k} \rho_i.
  \end{equation*}  
  Let \(j_1 < j_2 < \cdots < j_{\vert \rho \vert} \) the position of the ones in \(\rho\), that is, \(\rho_{j_i} = 1\) for
  all \(i \in \{ 1,2,\cdots,\vert \rho \vert \}\) and if \(\rho_i =1\) then \(i \in \{j_1,j_2,\cdots,j_{\vert \rho \vert}\} \).
  The function  \( \varphi_k : \btZ{k} \to \C \) is given by
  \begin{equation} \label{eq:phi}
    \varphi_k(\rho)  := \sum_{i=1}^{\vert \rho \vert} (-1)^{i-1} j_{\vert \rho \vert + 1 - i}
    = j_{\vert \rho \vert} - j_{\vert \rho \vert-1} +  \cdots + (-1)^{\vert \rho \vert-1} j_1,
  \end{equation}
  and \( \varphi_k(\bZ) =0 \) where \(\bZ \) is the identity element in \(\btZ{k}\). For \(k=0\), define
  \(\varphi_k(\rho) = |\rho| = 0 \) where \(\rho\) is the unique element of \(\btZ{0}\).
\end{dfn}

Let \(\rho = (\rho_1,\rho_2,\cdots,\rho_{k}) \in \btZ{k} \) and \(\delta = (\rho_1,\rho_2,\cdots,\rho_{k-1}) \in \btZ{k-1}\). From the definition 
we obtain 
\begin{align}
  \label{eq:idenvarphi}
  \varphi_k(\rho) = (-1)^{\rho_{k}} \varphi_{k-1}(\delta)  + \rho_{k} k.
\end{align}
 
\begin{prop} \label{prop:fourier_g}
  For \( \rho \in \btZ{k}\), we have
  \begin{equation*}
    \widehat{g^{(v,w)}_k}(u,\rho) =  (-1)^{v \vert\rho\vert} \left( u^{2 \varphi_k(\rho) \Delta } + 
    (-1)^{v + w} u^{2(k+1 - \varphi_k(\rho))\Delta} \right)
  \end{equation*}
\end{prop}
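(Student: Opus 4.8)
\emph{Proof plan.} The recursion \eqref{eq:idenvarphi} placed immediately before the statement strongly suggests a proof by induction on $k$, and this is the route I would take. Throughout I abbreviate $q := u^{2\Delta}$ to keep the expressions compact. For the base case $k=0$ the group $\btZ{0}$ is trivial, so $\widehat{g^{(v,w)}_0}(u,\rho)$ equals the single value $g^{(v,w)}_0(u,\bs) = 1 + (-1)^{v+w}q$; since $|\rho|=0$ and $\varphi_0(\rho)=0$, this agrees with the claimed right-hand side $(-1)^{0}\left(q^{0} + (-1)^{v+w}q^{1}\right)$.

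For the inductive step the key observation is a factorization that peels off the last coordinate. Writing $\bs = (\bs',s_k)$ with $\bs' = (s_1,\dots,s_{k-1}) \in \btZ{k-1}$, one checks directly from the definition that
\begin{equation*}
  g^{(v,w)}_k(u,\bs) = \tfrac12\, g^{(v,s_k)}_{k-1}(u,\bs')\,\bigl(1+(-1)^{w+s_k}q\bigr),
\end{equation*}
where the crucial point is that the \emph{right boundary parameter} of the shorter function is the summation variable $s_k$ itself: the factor $(1+(-1)^{s_{k-1}+s_k}q)$ coming from $i=k-1$ in the product defining $g^{(v,w)}_k$ is exactly the right-hand boundary factor of $g^{(v,s_k)}_{k-1}$, the leftover factor is $(1+(-1)^{w+s_k}q)$, and the constant $\tfrac{1}{2^k}$ splits as $\tfrac12\cdot\tfrac{1}{2^{k-1}}$. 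Since $(\bs\,|\,\rho) = (\bs'\,|\,\delta) + \rho_k s_k$ with $\delta = (\rho_1,\dots,\rho_{k-1})$, linearity of the Fourier transform \eqref{eq:defFour} then gives
\begin{equation*}
  \widehat{g^{(v,w)}_k}(u,\rho) = \tfrac12 \sum_{s_k \in \{0,1\}} \bigl(1+(-1)^{w+s_k}q\bigr)(-1)^{\rho_k s_k}\,\widehat{g^{(v,s_k)}_{k-1}}(u,\delta).
\end{equation*}

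Now I would substitute the inductive hypothesis for $\widehat{g^{(v,s_k)}_{k-1}}(u,\delta)$, whose two exponents are $\varphi_{k-1}(\delta)$ and $k-\varphi_{k-1}(\delta)$, and expand the two-term sum over $s_k$. This produces four monomials in $q$; separating the cases $\rho_k=0$ and $\rho_k=1$, two of them cancel after the $s_k$-summation while the surviving pair reassembles into $q^{\varphi_k(\rho)} + (-1)^{v+w}q^{k+1-\varphi_k(\rho)}$, the exponent being identified through the recursion \eqref{eq:idenvarphi} (which gives $\varphi_k(\rho)=\varphi_{k-1}(\delta)$ when $\rho_k=0$ and $\varphi_k(\rho)=k-\varphi_{k-1}(\delta)$ when $\rho_k=1$). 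I expect the only delicate point to be the bookkeeping of the sign prefactor: since $|\rho| = |\delta| + \rho_k$, the factor $(-1)^{v|\delta|}$ coming from the hypothesis must be reconciled with the target factor $(-1)^{v|\rho|}$, which picks up an extra $(-1)^v$ precisely when $\rho_k=1$; tracking this together with the $(-1)^{v+w}$ cross-term is where a sign slip is most likely, so I would verify both cases explicitly.

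A more computational alternative, which avoids the recursion at the cost of heavier combinatorics, is to set $s_0=v$, $s_{k+1}=w$ and expand $\prod_{i=0}^{k}\bigl(1+(-1)^{s_i+s_{i+1}}q\bigr)$ as a high-temperature sum over subsets of bonds; summing over $s_1,\dots,s_k$ forces the incidence-parity constraint $x_{j-1}+x_j \equiv \rho_j \pmod 2$, which has exactly two solutions parametrized by $x_0\in\{0,1\}$, and counting the two resulting edge-weight totals against the alternating-interval structure of $\varphi_k(\rho)$ reproduces the two monomials. I would present the inductive proof as the primary argument, since \eqref{eq:idenvarphi} makes it by far the shorter route.
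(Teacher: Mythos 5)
Your proposal is correct and takes essentially the same route as the paper's own proof: induction on $k$ via the factorization that peels off the last coordinate, writing $\widehat{g^{(v,w)}_k}(u,\rho)$ as $\tfrac12\sum_{i\in\{0,1\}}(1+(-1)^{w+i}u^{2\Delta})(-1)^{\rho_k i}\,\widehat{g^{(v,i)}_{k-1}}(u,\delta)$ and reassembling the exponents through the recursion \eqref{eq:idenvarphi}. The sign bookkeeping you flagged works out exactly as you anticipated, since $(-1)^{v|\delta|}$ acquires the extra factor $(-1)^{v}$ precisely when $\rho_k=1$.
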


\begin{proof}
  The identity is immediately verified for the cases \(k=0,1\). Next, suppose
  that  \(\rho = (\rho_1,\rho_2,\cdots,\rho_{k+1}) \in \btZ{k+1} \) and
  let \(\delta = (\rho_1,\rho_2,\ldots,\rho_{k} ) \in \btZ{k} \).  Then we have
  \begin{align*}
    \widehat{g^{(v,w)}_{k+1}}(u,\rho) &= \sum_{\bs \in \btZ{k+1}} g_{k+1}^{(v,w)}(u,\bs) \chi_{\rho} (\bs)= \sum_{i=0}^{1} \sum_{\substack{s \in \btZ{k+1} \\ s(k+1)= i}} g_{k+1}^{(v,w)}(u,\bs) \chi_{\rho} (\bs) \\
     &= \frac12 \sum_{i=0}^{1} (-1)^{\rho_{k+1} \cdot i} \left( 1 + (-1)^{w + i} u^{2\Delta} \right) \sum_{\bs \in \btZ{k}} g_{k}^{(v,i)}(u,\bs) \chi_{\delta} (\bs) \\
     &= \frac12 \sum_{i=0}^{1} (-1)^{\rho_{k+1} \cdot i} \left( 1 + (-1)^{w + i} u^{2\Delta} \right) \widehat{g^{(v,i)}_{k}}(u,\delta) \\
      &=\frac12 \sum_{i=0}^{1} (-1)^{\rho_{k+1} \cdot i} \left( 1 + (-1)^{w + i} u^{2\Delta} \right) (-1)^{v |\delta|} \left( u^{2 \varphi_k(\delta) \Delta}
                                       + (-1)^{v + i} u^{2(k+1 - \varphi_k(\delta))\Delta} \right),
  \end{align*}
  the equality in the last line holding by the induction hypothesis. The expression above is equal to
  \begin{align*}
    \frac12 (-1)^{v |\delta|} \Bigg[ & \left( 1+(-1)^{w} u^{2 \Delta} \right) \left(u^{2 \varphi_k(\delta) \Delta} +(-1)^v u^{2(k+1 - \varphi_k(\delta)) \Delta} \right)  \\
                                     & + (-1)^{\rho_{k+1}} \left( 1-(-1)^{w} u^{2 \Delta} \right) \left(u^{2 \varphi_k(\delta) \Delta} -(-1)^v u^{2(k+1 - \varphi_k(\delta)) \Delta} \right)  \Bigg],
  \end{align*}
  the result then follows by considering the cases \(\rho_{k+1} \in \{ 0,1\} \) by the identity \eqref{eq:idenvarphi}.
  and the fact that   \(|\rho| = |\delta| + \rho_{k+1}\).
\end{proof}

In the subsequent discussion of the heat kernel it is necessary to consider a generalization of the function
\(\varphi_k \). We motivate the definition via the Fourier transform of  \(\varphi_k \in \C[\btZ{k}]\).

\begin{prop} %\label{prop:varphiFourier}
  Let \(\varphi_k : \btZ{k} \to \Z  \) be the function of Definition \ref{dfn:varphi}. We have
  \[
    \widehat{\varphi_k}(\rho) =
    \begin{cases}
      k 2^{k-1}  &\mbox{if } \rho = \bZ_k \\
      - 2^{k-1}  &\mbox{if } \rho = \bZ_i \oplus \bO_{k-i} \, (1 \leq i \leq k)  \\
      0  &\mbox{in any other case .} 
    \end{cases}
  \]
\end{prop}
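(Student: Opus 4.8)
The plan is to induct on $k$, combining the one-step identity \eqref{eq:idenvarphi} for $\varphi_k$ with the orthogonality of characters on $\btZ{k-1}$. The base cases are immediate: $\varphi_0\equiv 0$ gives $\widehat{\varphi_0}\equiv 0$, and $\varphi_1(0)=0$, $\varphi_1(1)=1$ give $\widehat{\varphi_1}(\rho)=(-1)^{\rho_1}$, matching the claimed values $1=1\cdot2^{0}$ at $\rho=\bZ_1$ and $-1=-2^{0}$ at $\rho=\bO_1$.

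First I would split the sum defining $\widehat{\varphi_k}(\rho)$ according to the last coordinate. Writing $\bs=(\bs',s_k)$ and $\rho=(\delta,\rho_k)$ with $\bs',\delta\in\btZ{k-1}$, the character factors as $\chi_\rho(\bs)=\chi_\delta(\bs')(-1)^{s_k\rho_k}$, while applying \eqref{eq:idenvarphi} to $\bs$ gives $\varphi_k(\bs)=(-1)^{s_k}\varphi_{k-1}(\bs')+s_k k$. Performing the sum over $s_k\in\{0,1\}$ and then over $\bs'\in\btZ{k-1}$, and invoking the orthogonality relation $\sum_{\bs'\in\btZ{k-1}}\chi_\delta(\bs')=2^{k-1}\delta_{\delta,\bZ_{k-1}}$ (where the sum of a nontrivial character vanishes), the whole expression collapses to the recursion
\begin{equation*}
  \widehat{\varphi_k}(\rho)=\bigl(1-(-1)^{\rho_k}\bigr)\,\widehat{\varphi_{k-1}}(\delta)+(-1)^{\rho_k}\,k\,2^{k-1}\,\delta_{\delta,\bZ_{k-1}}.
\end{equation*}

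The remainder is a case analysis on $\rho_k$ that solves this recursion. If $\rho_k=0$, the first term drops and the second survives only when $\delta=\bZ_{k-1}$, i.e.\ when $\rho=\bZ_k$, producing the diagonal value $k2^{k-1}$ and $0$ at every other $\rho$ ending in $0$. If $\rho_k=1$, the recursion reads $\widehat{\varphi_k}(\rho)=2\,\widehat{\varphi_{k-1}}(\delta)-k2^{k-1}\delta_{\delta,\bZ_{k-1}}$, into which I feed the inductive hypothesis. When $\delta=\bZ_{k-1}$ (so $\rho$ is a single trailing one) the two contributions combine as $2(k-1)2^{k-2}-k2^{k-1}=(k-1)2^{k-1}-k2^{k-1}=-2^{k-1}$; when $\delta$ is itself a nonempty block of trailing ones the Kronecker delta vanishes and the hypothesis yields $2\cdot(-2^{k-2})=-2^{k-1}$; and when $\delta$ is neither $\bZ_{k-1}$ nor a trailing block of ones, both terms vanish. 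Hence the nonvanishing characters are exactly $\bZ_k$ and the strings $\bZ_i\oplus\bO_{k-i}$ consisting of a nonempty block of ones at the tail, with values $k2^{k-1}$ and $-2^{k-1}$ respectively.

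I expect the only genuine obstacle to be the bookkeeping in this last step: one must describe the support of $\widehat{\varphi_k}$ precisely enough for the induction to close, verifying that appending a $1$ to a trailing block of ones again yields such a string, while appending a $1$ to any other pattern, or a $0$ to anything but $\bZ_{k-1}$, lands in the vanishing branch. Once the inductive description of the support is fixed, the single arithmetic identity $2(k-1)2^{k-2}=(k-1)2^{k-1}$ settles the single-trailing-one case and everything else is routine.
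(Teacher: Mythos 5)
Your proof is correct and follows essentially the same route as the paper's own: both split the Fourier sum according to the last coordinate, apply the identity \eqref{eq:idenvarphi} together with character orthogonality to obtain the recursion \(\widehat{\varphi_k}(\rho)=\bigl(1-(-1)^{\rho_k}\bigr)\widehat{\varphi_{k-1}}(\delta)+(-1)^{\rho_k}k\,2^{k-1}\delta_{\delta,\bZ_{k-1}}\), and then close the induction by the same case analysis on \(\rho_k\) with the arithmetic \(2(k-1)2^{k-2}-k2^{k-1}=-2^{k-1}\). Your explicit bookkeeping of the support (trailing blocks of ones) is a slightly more careful writing of the step the paper summarizes as ``the result follows by induction,'' but it is the same argument.
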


\begin{proof}
  The case \(k=0 \) is trivial. For \(k \geq 1 \), let \( \rho = (\rho_1,\rho_2,\cdots,\rho_k) \in \btZ{k} \)
  and \( \delta = (\rho_1,\rho_2,\cdots,\rho_{k-1}) \in \btZ{k-1}\), then we have
  \begin{align*}
    \widehat{\varphi_k}(\rho) &= \sum_{\bs \in \btZ{k}} \varphi_k(\bs) (-1)^{(\bs|\rho)} = \sum_{\substack{\bs \in \btZ{k}\\ s_k = 0}} \varphi_k(\bs) (-1)^{(\bs|\delta)} +  \sum_{\substack{\bs \in \btZ{k}\\ s_k = 1}} \varphi_k(\bs) (-1)^{(\bs|\delta)}  \\
    &= \sum_{\br \in \btZ{k-1}} \varphi_{k-1}(\br) (-1)^{(\br|\delta)} + (-1)^{\rho_k} \sum_{\br \in \btZ{k-1}} (k -\varphi_{k-1}(\br)) (-1)^{(\br|\delta)} \\
  &= (1+ (-1)^{\rho_k + 1})\widehat{\varphi_{k-1}}(\delta) + (-1)^{\rho_k} k \sum_{\br \in \btZ{k-1}} (-1)^{(\br|\delta)},
  \end{align*}
  where the equality in the second line follows by \eqref{eq:idenvarphi}.
  Next, suppose that \(\rho_k= 0\), then
  \[
    \widehat{\varphi_k}(\rho) = k \sum_{\bs \in \btZ{k-1}} (-1)^{(\bs|\delta)} =
    \begin{cases}
      k 2^{k-1}  &\mbox{if } \delta = \bZ_{k-1} \\
      0  &\mbox{if } \delta \neq \bZ_{k-1}
    \end{cases}.
  \]
  On the other hand, if \(\rho_k= 1\) we have
  \[
    \widehat{\varphi_k}(\rho) = 2 \widehat{\varphi_{k-1}}(\delta) - k \sum_{\bs \in \btZ{k-1}} (-1)^{(\bs|\delta)} =
    \begin{cases}
      (k-1) 2^{k-1} - k 2^{k-1}  &\mbox{if } \delta = \bZ_{k-1} \\
      2 \widehat{\varphi_{k-1}}(\delta) &\mbox{if } \delta \neq \bZ_{k-1} 
    \end{cases},
  \]
  and the result follows by induction.
\end{proof}

By virtue of the proposition above, for \( \rho = (\rho_1,\rho_2,\cdots,\rho_k) \) we can write
\begin{align*} %\label{eq:varphiFourierS}
  \varphi_k(\rho) &=  \frac{k}{2} - \frac{1}{2}\left( \sum_{i=1}^{k} (-1)^{\sum_{j=i}^k \rho_j} \right).
\end{align*}

\begin{dfn} \label{dfn:varphi2}
  For \(k \geq 1 \)  and \(t \in \C \), the function \(\varphi_k(\rho;t): \btZ{k} \to \C\) is defined by
  \begin{equation*} %\label{eq:varphiT}
    \varphi_k(\rho;t) := \frac{1}2 \sum_{i=1}^k\left( 1 - (-1)^{\sum_{j=i}^{k} \rho_j} \right) t^{i-1}.
  \end{equation*}  
\end{dfn}

In the following theorem we collect some properties and transformation formulas for \(\varphi_k(\rho;t)\). For an
integer \(i \in \Z_{\geq0} \) and \( t \in \C \), we write \([i]_t = \frac{1-t^i}{1-t}\). Notice that since
\[
  \lim_{t \to 1} \varphi_k(\rho,t) = \varphi_k(\rho),
\]
the identities of the following theorem also apply to \( \varphi_k(\rho)\).

\begin{thm} \label{thm:varphi} %\label{eq:sumt}
  Let \(\rho = (\rho_1,\rho_2,\cdots,\rho_k) \in \btZ{k} \), \(\check{\rho} = (\rho_k,\rho_{k-1},\cdots,\rho_1) \in \btZ{k} \) and \( v \in \{0,1\} \).
  Recall that for \(\br \in \btZ{k}, \bs \in \btZ{\ell} \), the vector \(\br \oplus \bs \in \btZ{k + \ell} \) denotes
  the concatenation of \(\br \) and \(\bs \). Then
  \begin{enumerate}
  \item \(\varphi_{k+1}(\rho \oplus (v);t) = v [k+1]_t + (-1)^v \varphi_k(\rho,t)\),
  \item \(\varphi_{k+1}((v) \oplus \rho;t) =   \varphi_k(\rho,t)t + \left( \frac{1-(-1)^{v + |\rho|}}{2} \right) \),
  \item \( \varphi_k(\check{\rho};t) = (-1)^{|\rho|} t^{k} \varphi_k(\rho;t^{-1}) + \left(\frac{1-(-1)^{|\rho|}}{2} \right) [k+1]_t\),
  \item \(\sum_{i=1}^{k} (-1)^{\sum_{j=i}^{k} \rho_j} t^{i-1}  = [k]_t- 2 \varphi_k(\rho;t)\).
  \end{enumerate}
\end{thm}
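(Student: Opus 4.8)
The plan is to prove (4) first, since it is essentially the definition rewritten, then to dispatch the two concatenation formulas (1) and (2) by direct manipulation of the defining sum, and finally to obtain the reflection formula (3) by induction, feeding (1) and (2) into the inductive step. For (4), I would split the defining sum as
\[
  2\varphi_k(\rho;t)=\sum_{i=1}^{k}t^{i-1}-\sum_{i=1}^{k}(-1)^{\sum_{j=i}^{k}\rho_j}t^{i-1}=[k]_t-\sum_{i=1}^{k}(-1)^{\sum_{j=i}^{k}\rho_j}t^{i-1},
\]
where \(\sum_{i=1}^k t^{i-1}=[k]_t\) is immediate from the definition of the \(q\)-number, and then rearrange.

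For (1), I would write \(\rho\oplus(v)\in\btZ{k+1}\) and peel off the top index \(i=k+1\), whose contribution is \(\tfrac12(1-(-1)^v)t^{k}\), equal to \(t^k\) if \(v=1\) and \(0\) if \(v=0\). For \(1\le i\le k\) the suffix sum gains the extra summand \(v\), so each sign is multiplied by \((-1)^v\); collecting terms gives \(\varphi_k(\rho;t)\) when \(v=0\) and \([k]_t-\varphi_k(\rho;t)\) when \(v=1\) (by (4)). Adding the top term together with \([k]_t+t^k=[k+1]_t\) reproduces \(v[k+1]_t+(-1)^v\varphi_k(\rho;t)\) in both cases at once. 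For (2) I would instead peel off the bottom index \(i=1\), whose suffix sum is the full sum \(v+|\rho|\) and contributes \(\tfrac12(1-(-1)^{v+|\rho|})\); reindexing the remaining terms by \(m=i-1\) turns each power \(t^{i-1}\) into \(t^{m}=t\cdot t^{m-1}\) while leaving the suffix sums of \(\rho\) intact, producing exactly \(t\,\varphi_k(\rho;t)\).

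The reflection formula (3) is where the real work lies, and I would prove it by induction on \(k\), the cases \(k=0,1\) being direct. Writing \(\sigma=\rho\oplus(v)\in\btZ{k+1}\) (so that \(|\sigma|=|\rho|+v\)), the key structural fact is \(\check\sigma=(v)\oplus\check\rho\). This lets me compute \(\varphi_{k+1}(\check\sigma;t)\) by applying (2) to the prepended vector and then the inductive hypothesis to \(\varphi_k(\check\rho;t)\) (noting \(|\check\rho|=|\rho|\)). On the other side I expand the target \((-1)^{|\sigma|}t^{k+1}\varphi_{k+1}(\sigma;t^{-1})\) using (1) evaluated at \(t^{-1}\). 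The leading pieces \((-1)^{|\rho|}t^{k+1}\varphi_k(\rho;t^{-1})\) then match on the nose, and what remains is to reconcile the \(q\)-number correction terms. This reduces to two elementary identities for Gaussian integers, the reflection \(t^{k+1}[k+1]_{t^{-1}}=t\,[k+1]_t\) and the telescoping \([k+2]_t=t\,[k+1]_t+1\), after which a short case split on the parities of \(|\rho|\) and \(v\) (equivalently on the signs \((-1)^{|\rho|}\) and \((-1)^{v}\)) closes the induction.

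I expect the main obstacle to be the bookkeeping of the parity prefactors together with the correction terms \(\tfrac{1-(-1)^{|\rho|}}{2}[k+1]_t\), precisely because a naive direct reindexing of the suffix sums of \(\check\rho\) shifts the starting index from \(m\) to \(m+1\) and so does \emph{not} reproduce \(\varphi_k(\rho;\cdot)\) verbatim. Routing the argument through the already-proved formulas (1) and (2), rather than attacking the reversed sum head-on, is what makes the parity factors and the \(q\)-number corrections cancel cleanly; the two Gaussian-integer identities above are the precise algebraic lemmas that effect this cancellation.
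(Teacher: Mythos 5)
Your proposal is correct, and for parts (1), (2) and (4) it coincides in substance with the paper's treatment: the paper likewise obtains (4) and (2) directly from Definition \ref{dfn:varphi2}, and notes (1) as the $t$-analogue of the recursion \eqref{eq:idenvarphi}; your peeling of the top/bottom index (with (4) used to handle the $v=1$ case of (1)) is just a more explicit write-up of the same computations. Where you genuinely diverge is (3). The paper proves the reflection formula by a direct manipulation of the defining sum: it reindexes the suffix sums of $\check{\rho}$ into prefix sums of $\rho$, extracts the sign $(-1)^{|\rho|}$ from each prefix sum, and reassembles the pieces via $[k]_{t^{-1}}$ in a three-line chain of equalities, with no induction. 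You instead run an induction on $k$, built on the structural identity $\check{\sigma}=(v)\oplus\check{\rho}$ for $\sigma=\rho\oplus(v)$: you feed (2) and the inductive hypothesis into $\varphi_{k+1}(\check{\sigma};t)$, expand the target with (1) evaluated at $t^{-1}$, and close with the two $q$-identities $t^{k+1}[k+1]_{t^{-1}}=t\,[k+1]_t$ and $[k+2]_t=1+t\,[k+1]_t$. I verified the inductive step: the leading terms $(-1)^{|\rho|}t^{k+1}\varphi_k(\rho;t^{-1})$ match exactly, and the residual identity between the correction terms holds in every parity case of $v$ and $|\rho|$, so your argument is complete. The trade-off is clear: the paper's computation is shorter and self-contained, while your route trades the delicate sign bookkeeping of the reindexing for a case analysis, and has the conceptual merit of exhibiting (3) as a formal consequence of the concatenation formulas (1) and (2) rather than as an independent calculation.
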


\begin{proof}
  The first claim is just the analog of \eqref{eq:idenvarphi}, the second follows immediately from the expression of \(\varphi_k\) in Definition \eqref{dfn:varphi2}. For the third one, we have
  \begin{align*}
    \varphi_k(\check{\rho};t) &= \frac{[k]_t}2  - \frac12 \left( \sum_{i=1}^k (-1)^{\sum_{j=1}^{k+1-i} \rho_j} t^{i-1} \right) =\frac{[k]_t}2  - \frac12 \left( \sum_{i=2}^k (-1)^{\sum_{j=i}^{k} \rho_j} t^{k+1-i} \right) (-1)^{|\rho|} - \frac12 (-1)^{|\rho|}\\
                 &=  \frac{(-1)^{|\rho|} t^k}2  [k]_{t^{-1}}  - \frac{ (-1)^{|\rho|} t^k}2 \left( \sum_{i=1}^k (-1)^{\sum_{j=i}^{k} \rho_j} t^{-i+1}  \right)  + \frac{[k]_t}{2} - \frac{(-1)^{|\rho|} t^k}2  [k]_{t^{-1}} + \frac{t^k}2  - \frac12 (-1)^{|\rho|} \\
                     &= (-1)^{|\rho|} t^k \varphi_k(\rho; t^{-1})  + [k+1]_{t} \left(\frac{1 - (-1)^{|\rho|}}2 \right)
  \end{align*}
  as desired. The last claim is obtained directly from the definition.
\end{proof}

In addition, it is not difficult to see from the formulas in Theorem \ref{thm:varphi} that if \(0< j_1 < j_2 < \cdots < j_{|\rho|} \leq k\) are the position of the ones in
\( \rho  \in \btZ{k}\), we have
\begin{equation}
  \label{eq:varphit}
  \varphi_k(\rho;t) = \sum_{i=1}^{\vert \rho \vert} (-1)^{i-1} [j_{\vert \rho \vert + 1 - i}]_t,
\end{equation}
so that \(\varphi_k(\rho;t) \)  is seen to be a $t$-analogue of the function \(\varphi_k(\rho) \) of  Definition \ref{dfn:varphi}.

To close the discussion of the function $\varphi_n$, let us describe with more detail the relation between the two
representations of the function $\varphi_n$. The main point is the underlying bijection
\begin{equation}
  \label{eq:bij1}
  \mathcal{S}_k^{(n)}:= \left\{ \rho \in \Z_2^k \, : \, |\rho|= n \right\} \longleftrightarrow \left\{ (j_1,j_2,\cdots,j_n) \in \Z_{\geq1}^{n} \,;\, j_1 <j_2<\cdots<j_n \leq k \right\} =: \mathcal{J}_n^{(k)}                                                            
\end{equation}
given by the position of the ones in $\rho \in \Z_2^k$ for $|\rho|=n \in \Z_{\geq 0}$. For \(n \geq 1 \), we define the function
$\phi^{(n)} :  \C^{n+1} \to \C$ by
\begin{equation}
  \label{eq:phidef}
  \phi^{(n)}(\bm{x},t) := \sum_{i=1}^{n} (-1)^{i-1} [x_{n + 1 - i}]_t,
\end{equation}
and for \(n=0\) we set $\phi^{(0)}= 0 $. Then, for \(\rho \in \mathcal{S}_k^{(n)}\) corresponding to $\bm{j}=(j_1,j_2,\cdots,j_n) \in \mathcal{J}^{(k)}_n$, we have
\begin{equation}
  \label{eq:phiPhirel}
  \varphi_k(\rho ;t) = \phi^{(n)}(\bm{j},t).
\end{equation}
We remark that, as a function on the variables $x_1,x_2,\cdots,x_n$, the right hand side of the equality does not
depend on $k$. This is the key property that we use in the sequel to evaluate the sums appearing in the heat kernel.

To get a better understanding of equation \eqref{eq:phiPhirel}, we introduce the inductive limit
\[
  \Z_2^{\infty} =  \varinjlim_{n} \Z_2^{n},
\]
where, for $ i \leq j$, the injective homomorphisms $f_{i j} : \Z_2^i \to \Z_2^j$ are given by
\[
  f_{i j}(\rho) = (\rho_1,\rho_2,\cdots,\rho_i,0,\cdots,0) \in \Z_2^{j}
\]
for $\rho = (\rho_1,\rho_2,\cdots,\rho_i) \in \Z_{2}^{i}$. %  that is, $f_{i j}(\rho) = \rho \oplus \bZ_{j-i}$.
Clearly, the functions \(\varphi_{k}\) for \(k \geq 1\) induce naturally a function $\varphi: \Z_2^{\infty}\times \C \to \C$.
% This is the natural setting for the property of equation \eqref{eq:phiPhirel} discussed above.

\begin{lem}[Universality] \label{lem:bij1}
  Let \( n \in \Z_{\geq 0}\). There is a bijection
  \[
    \mathcal{S}^{(n)}:= \left\{ \rho \in \Z_2^{\infty} \, : \, |\rho|= n \right\}  \longleftrightarrow  \left\{ j_1,j_2,\cdots,j_n \in \Z_{\geq1} \,;\, j_1 <j_2<\cdots<j_n  \right\} =: \mathcal{J}_n,
  \]
  Let \( \rho \in \mathcal{S}^{(n)}\),  corresponding to \(\bm{j} \in \mathcal{J}_n \), then we have
  \[
    \varphi(\rho ; t) = \phi^{(n)}(\bm{j},t). \qed
  \]
\end{lem}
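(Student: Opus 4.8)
The plan is to realize both sides as inductive limits of the finite objects already set up in \eqref{eq:bij1} and to pass the finite bijection to the limit. First I would record the two invariance facts that make the limit behave well. Since each transition map satisfies $f_{k,k+1}(\rho) = \rho \oplus (0)$, that is, it merely appends a zero, it leaves both the weight $|\rho|$ and the increasing sequence of positions of the ones unchanged. Consequently the sets $\mathcal{S}_k^{(n)}$ form an inductive system under the $f_{k,k+1}$ whose limit is exactly $\mathcal{S}^{(n)} = \{\rho \in \Z_2^{\infty} : |\rho| = n\}$ (an element of $\Z_2^\infty$ has only finitely many nonzero coordinates, so its weight and the set of positions of its ones are well defined), while on the index side the inclusions $\mathcal{J}_n^{(k)} \hookrightarrow \mathcal{J}_n^{(k+1)}$, which only relax the bound $j_n \le k$ to $j_n \le k+1$ and act as the identity on the underlying tuple, have limit $\mathcal{J}_n$.

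Next I would verify that the finite bijections of \eqref{eq:bij1}, each given by sending $\rho$ to the positions of its ones, are compatible with these two inductive systems. This is immediate from the previous paragraph: the square formed by $f_{k,k+1}$ on the left, the tuple inclusion on the right, and the position-of-ones maps on top and bottom commutes, because $f_{k,k+1}$ does not move the ones and the inclusion on $\mathcal{J}$ is the identity on tuples. The inductive limit of a compatible family of bijections is again a bijection, and this yields the asserted correspondence $\mathcal{S}^{(n)} \longleftrightarrow \mathcal{J}_n$. In parallel I would confirm that $\varphi$ is genuinely well defined on $\Z_2^{\infty}$, as asserted before the lemma: a function on the inductive limit descends precisely when the $\varphi_k$ are stable under the transition maps, and this is exactly the $v = 0$ case of Theorem \ref{thm:varphi}(1), which gives $\varphi_{k+1}(\rho \oplus (0); t) = \varphi_k(\rho; t)$.

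For the identity $\varphi(\rho; t) = \phi^{(n)}(\bm{j}, t)$ I would then argue as follows. Given $\rho \in \mathcal{S}^{(n)}$ corresponding to $\bm{j} = (j_1, \ldots, j_n) \in \mathcal{J}_n$, choose any representative $\rho \in \mathcal{S}_k^{(n)}$ with $k \ge j_n$; by definition of the induced function one has $\varphi(\rho; t) = \varphi_k(\rho; t)$, and \eqref{eq:phiPhirel} identifies this value with $\phi^{(n)}(\bm{j}, t)$. Since the right-hand side does not depend on $k$ (as noted immediately after \eqref{eq:phiPhirel}), the value is independent of the chosen representative, which is consistent with and in fact re-proves the well-definedness above. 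The case $n = 0$ is the trivial one: $\mathcal{S}^{(0)} = \{\bZ\}$ corresponds to the empty tuple in $\mathcal{J}_0$, and $\varphi(\bZ; t) = 0 = \phi^{(0)}$.

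I do not expect a genuine obstacle here; the single point that requires care — the crux rather than a difficulty — is the stability formula $\varphi_{k+1}(\rho \oplus (0); t) = \varphi_k(\rho; t)$, which guarantees that $\varphi$ descends to the inductive limit. Everything else is the formal passage to the limit of a compatible family of bijections, so the lemma is essentially the universal ($k \to \infty$) restatement of \eqref{eq:bij1} and \eqref{eq:phiPhirel} once that stability is in hand.
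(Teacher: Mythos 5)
Your proof is correct and is essentially the paper's own argument made explicit: the paper states the lemma without a separate proof, treating it as immediate from the finite-level bijection \eqref{eq:bij1}, the relation \eqref{eq:phiPhirel}, and the inductive limit construction, which is exactly what you verify by passing the compatible family of finite bijections to the limit. The single nontrivial point you isolate, the stability $\varphi_{k+1}(\rho \oplus (0);t) = \varphi_k(\rho;t)$ (Theorem \ref{thm:varphi}(1) with $v=0$), is precisely the fact the paper relies on when asserting that the functions $\varphi_k$ induce a function $\varphi$ on $\Z_2^{\infty}$.
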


The lemma above means, in practice, that while the function \(\varphi\) (or any of the individual functions $\varphi_k$ for $k\geq 0$) is,
in general, a complicated function, when restricted to elements of fixed norm $|\rho| =n$, it has a simple
representation given by $\phi^{(n)}$, that is, it is essentially a $q$-polynomial in the variable $t$. %This fact is used in \S \ref{sec:transformation-sum} and \S \ref{sec:transf-summ-into} to evaluate certain sums.

\begin{rem}
  The function \( \varphi_k(\rho;t) \) admits the following characterization. Denote by \( E(\mathbf{x};t) \)  the generating
  function for the elementary symmetric functions (see e.g. \cite{Macdonald1991})
  \[
    E(\mathbf{x};t) = \prod_{i=1}^{\infty} (1 + x_i t).
  \]
  Let \(F(\mathbf{x};t)\) be a (formal) function defined in infinite vectors
  \( \mathbf{x} = (x_1,x_2,x_3,\cdots) \) given by
  \[
    F( \mathbf{x}; t ) := E( \mathbf{x};-2) \sum_{i=1}^{\infty}  \frac{[i]_t x_i}{\prod_{j=1}^{i}(1-2 x_j)},
  \]
  then we have the equality
  \[
    \varphi_k(\rho;t)  = F((\rho_1,\rho_2,\cdots,\rho_k,0,0,0,\cdots);t).
  \]
  Indeed, by successive application of the first transformation formula, we obtain
  \begin{equation}
    \label{eq:remvarphi}
    \varphi_k(\rho;t) = \sum_{i=1}^k [i]_t \rho_i \prod_{j=i+1}^k (1- 2 \rho_j)
  \end{equation}
  since \(1-2\rho_j = (-1)^{\rho_j}\). 
\end{rem}

  \begin{rem}
    For \(k\geq 1 \), the function \(\varphi_k(\rho,t)\), with a small modification, may be interpreted as a morphism of
    abelian groups. To see this, we notice that by \eqref{eq:remvarphi} we have
    \begin{equation}
      \label{eq:varphihom}
      \varphi_k(\rho + \theta; t) = \varphi_k(\rho;t) + \varphi_k(\theta;t) \pmod{2},
    \end{equation}
    for  \(\rho, \theta \in \btZ{k}\). Next, by using equation \eqref{eq:varphit} as the definition of \(\varphi_k(\rho;t)\) we
    can consider \(\Z_2[t]_k\), the vector space of polynomials of degree less than \(k\)
    over the ring \(\Z_2\), as the codomain of \(\varphi_k(\rho;t)\), that is, \(\varphi_k(\cdot,t) : \btZ{k} \to \Z_2[t]_k\).
    Thus, the identity \eqref{eq:varphihom} exhibits \(\varphi_k(\rho;t) \) as an isomorphism of abelian groups and by linear extension, an isomorphism
    of vector spaces over \(\Z_2\).
  \end{rem}
  
%%%%%%%%%%%%%%%%%%%%%%%%%%%%%%%%%%%%%%%%%%%%%%%%%%%%%%%%%%%%%%%%%%%%%%%%%%%%%%%%%%
\subsection{Fourier transform  of $R_{\mu}^{(k,N)}$} %$\widehat{R_{\mu}^{(k,N)}}$
%%%%%%%%%%%%%%%%%%%%%%%%%%%%%%%%%%%%%%%%%%%%%%%%%%%%%%%%%%%%%%%%%%%%%%%%%%%%%%%%%%
\label{sec:four-transf-R}

In this section we describe the Fourier transform of the function \(R_{\mu}^{(k,N)}(u,\bs)\). For convenience, we recall
the definition
\begin{align*}
  R^{(k,N)}_{\mu}(u,\bs) &= \exp \left(  \frac{\sqrt{2} g(1-u)}{1-u^{2 N}}  \sum_{j=1}^{k-1} (-1)^{\bs(j)} \left( x \Lambda^{(j)}(u) + y \Lambda^{(N-j+1)}(u)  \right)    \right)   \nonumber \\
       &\qquad \times  \exp \Bigg( \frac{g^2 (1-u)^2}{2 (1+u)^2 (1-u^{2 N})} \bigg(  \sum_{i=1}^{k-2} \eta_i(s)^2 \Omega^{(i,i)}(u)  + 2 \sum_{i=1}^{k-2} \sum_{j=i+1}^{k-2} \eta_i(s)\eta_j(s) \Omega^{(i,j)}(u)  \\
  & \qquad  \qquad \qquad  \qquad \qquad  \qquad \qquad \qquad + 4 (-1)^{\mu} \sum_{i=1}^{k-2} \sum_{j=k}^{N-1} \eta_i(s) \Omega^{(i,j)}(u)  \bigg) \Bigg),
\end{align*}
from where is it clear that  \(R^{(k,N)}_{\mu} \in \C[\btZ{k-1}]\). As in the case of the function \(g_k(u,\bs)\), the Fourier
transform is computed in the abelian group \( \setC{k-1}{v}{w} \simeq \btZ{k-3} \), with \(v,w \in \{0,1\}\), and we denote by
\(R^{(v,w)}_{\mu} \in \C[\btZ{k-3}]\) the function resulting by applying the projection \eqref{eq:projectionVect} to
\(R^{(k,N)}_{\mu}(u,\bs)\).
We note that \(R^{(v,w,k,N)}_{\mu}(u,\bs)\) would be a more appropriate notation for \(R^{(v,w)}_{\mu}(u,\bs)\), but since \(k ,N \in \Z_{\geq 1} \)
remain fixed in the computations of this section and there is no risk of confusion we have dropped the dependence of \(k,N\) from the
notation of \(R^{(v,w)}_{\mu}(u,\bs)\).

We start with some general considerations. First,  suppose \(S\) is subset of characters \( S \subset \btZdual{k-3}\) and \( f \in \C[\btZ{k-3}] \) is given by
\begin{equation*} %\label{eq:fexp}
  f(\bs) := \exp\left( \sum_{\chi \in S} a_\chi \chi(\bs)  \right) = \sum_{\xi \in \btZdual{k-3}} C_\xi \xi(\bs),
\end{equation*}
for arbitrary \(a_{\chi} \in \C \) with \(\chi \in S \), and where \(C_\xi \in \C\) is the Fourier coefficient corresponding
to \(\xi \in \btZdual{k-3} \). The Fourier transform \(\widehat{f} \) is then given by
\[
  \widehat{f}(\rho) =  2^{k-3 }\sum_{\xi \in \btZdual{k-3}} C_\xi \delta_{\xi,\chi_{\rho}}.
\]

Therefore, in order to get the expression for the Fourier transform of \(f(\bs) \), it is enough to describe
the Fourier coefficients \(C_\xi \in \C\) in terms of \(a_{\chi} \in \C \). Let us consider the
case \(|S| =1 \), that is, \(S = \{ \chi \} \). In this case
\[
  f(s) = \cosh(a_\chi)  + \sinh(a_\chi)\chi(\bs),
\]
since any character \(\chi  \in \btZdual{k-3}\) is real.

To describe the general case, we introduce an ordering in \(S = \{\chi_1, \chi_2, \cdots,\chi_\ell \} \) with \(\ell = |S|\).
Then, for \(\mathbf{a} \in \C^{\ell} \) and an index vector \(\br \in \{0,1\}^\ell \) we define
\begin{align*} %\label{eq:Tfunc}
  T^{(\br)}(\mathbf{a}) &:= \prod_{i=1}^{\ell} \left[ \cosh(\mathbf{a}_i)^{(1-\br_i)}\sinh(\mathbf{a}_i)^{\br_i} \right], %  = \frac{1}{2^{\ell}} \prod_{i=1}^{\ell} \left[ e^{a_i} + (-1)^{\br_i} e^{-a_i}\right]
\end{align*}
where \(\mathbf{a}_i \) (resp. \(\br_i\)) denotes the \(i\)-th component of \( \mathbf{a}\) (resp. \(\br \)). 

The Fourier coefficients of \(f\) are given by
\[
  C_{\chi_\rho } \left(= C_{\rho } \right) = \sum_{\substack{\br \in \{0,1\}^{\ell} \\ \chi_\rho = \prod_{i=1}^{\ell} (\chi_i)^{\br_i}}} T^{(\br)}(\mathbf{a}),
\]
where \(\mathbf{a} = \{a_1,a_2,\cdots,a_\ell \} \in \C^{\ell} \) is the vector of coefficients. 
In particular, note that \(C_{\chi} \ne 0\) if and only if \(\chi\) is generated by elements in the set \(S\).

% \begin{ex}
%   Note that we have the following elementary identities
%   \[
%     \sum_{\mathbf{r} \in \btZ{\ell}} T^{(\mathbf{r})}(\mathbf{a}) = \exp\left( \sum_{i=1}^{\ell} a_i \right)
%   \]
% \end{ex}

Next, we specialise these considerations for the case of the function \(R^{(v,w)}_{\mu}(\bs) \in \C[\btZ{k-3}]\).
In this case, the set \(S_{k-3}\) (corresponding to the set $S$ in the discussion above) is given by
\[
  S_{k-3} = \left\{ \chi = \chi_\rho \in \btZdual{k-3} \, | \, \rho \in \btZ{k-3} \, , \, 0 < |\rho| \leq 2 \right\}.
\]
 In particular \(|S_{k-3}| = \frac{(k-3)(k-2)}2\), and if \(\chi_\rho \in S_{k-3} \), we have
\[
  \rho \in \{ \bm{e}_i + \bm{e}_j \, | \, 0 \leq i < j \leq k-3 \}
\]
where  \(\bm{e}_0 := \bZ\) is the zero vector. For \(\rho = \bm{e}_i + \bm{e}_j\)
we denote  \(\chi_\rho \in S_{k-3} \) by \(\chi_{i,j} \). Similarly,  we denote by \(a_{i,j}\) (resp. \(r_{i,j}\)) the entries of the
coefficient vector \(\mathbf{a}\) (resp. the vector \(\br \in \{0,1\}^{\ell} \)) in lexicographical ordering.

Note that the trivial character is omitted from the set, since
\[
  R^{(v,w)}_{\mu}(u,\bs) = \exp\left( a_0^{(\mu)} + \sum_{\chi \in S_{k-3}} a_\chi^{(\mu)} \chi(\bs) \right) = \exp( a_0^{(\mu)})  \exp\left( \sum_{\chi \in S_{k-3}} a_\chi^{(\mu)} \chi(\bs) \right) ,
\]
thus
\begin{equation*} %\label{eq:FourierR1}
  \widehat{R^{(v,w)}_{\mu}}(u,\bs) =  2^{k-3 }\exp(a_0^{(\mu)} ) \left(\sum_{\xi \in \btZdual{k-3}} C_\xi^{(\mu)} \delta_{\xi,\chi_{\rho}}\right).
\end{equation*}
We note here that in the case \(k=3\), \(R^{(v,w)}_{\mu}(u,\bs)=\widehat{R^{(v,w)}_{\mu}}(u,\rho) = \exp\left( a_0^{(\mu)} \right) \)
for \( \rho \in \btZ{0} \).

The next lemma describes the coefficients \(a_{i,j} \) for the case of the function \(R^{(v,w)}_{\mu}(u,\bs)\).
The proof is by direct computation from the definitions and we omit it.

\begin{lem}
  \label{lem:fourier4}
  The trivial character coefficient \(a_0^{(\mu)}\) is given by
  \begin{align*}
    a_0(u^{\frac{1}N}) &= O(\tfrac{1}{N}).
  \end{align*}
  For \(1 \leq i \leq k-3 \), the Fourier coefficient \(a^{(\mu)}_{0,i} \) is given by
  \begin{align*}
    a_{0,i}^{(\mu)}(u^{\frac{1}N}) &= \frac{\sqrt{2}g(1-u^{\frac{1}N})}{1-u^2}
                   \left( x u^{\frac{i}N} (1-u^{2-\frac{2i+1}{N}}) + y u^{1-\frac{i+1}{N}}(1-u^{\frac{2 i+1}N}) \right) \\
         &+ (-1)^{\mu} \frac{2g^2 (1-u^{\frac1N})}{(1-u^2)(1+u^{\frac{1}{N}})}
              u^{\frac{k}N - \frac{i+1}{N}}(1-u^{1-\frac{k}{N}})(1-u^{1-\frac{k-1}{N}})(1-u^{\frac{2i+1}{N}}) + O(\tfrac{1}{N^2}).
  \end{align*}
  For \(1\leq i < j \), the Fourier  coefficients \(a_{i,j} \) are given by
  \[
    a_{i,j}(u^{\frac1N}) = \frac{g^2 (1-u^{\frac{1}{N}})^2 }{1-u^{2}} u^{\frac{j}N - \frac{i+1}N}(1-u^{\frac{2i +1}N})(1-u^{2 - \frac{2j+1}{N}}). \qed
  \] 
\end{lem}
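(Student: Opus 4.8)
The plan is to expand the exponent of the projected function $R^{(v,w)}_{\mu}(u,\bs)\in\C[\btZ{k-3}]$ into a linear combination of characters $\chi_\rho$ and to read off the coefficients by grouping terms according to the weight $|\rho|$. Under the projection \eqref{eq:projectionVect} we fix $\bs(1)=v$ and $\bs(k-1)=w$ while $\bs(i+1)=s_i$ $(1\le i\le k-3)$ are the free coordinates, so that $(-1)^{\bs(i+1)}=\chi_{\bm{e}_i}(\bar{\bs})$ is a weight-one character whereas $(-1)^{\bs(1)}=(-1)^v$ and $(-1)^{\bs(k-1)}=(-1)^w$ are constants. The exponent of $R^{(k,N)}_\mu$ (Definition \ref{dfn:JR}) splits into a linear part in the $(-1)^{\bs(j)}$ and a quadratic part built from the $\eta_i(\bs)=(-1)^{\bs(i)}+(-1)^{\bs(i+1)}$; I would expand the latter using $\eta_i^2=2+2(-1)^{\bs(i)+\bs(i+1)}$ and the four-term expansion of each product $\eta_a\eta_b$, every summand of which is a character $\chi_\rho$ with $|\rho|\le 2$.

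I would first settle $a_{i,j}$ (weight $|\rho|=2$), which is exact. Fixing $\rho=\bm{e}_i+\bm{e}_j$ with $i<j$, the character $(-1)^{\bs(i+1)+\bs(j+1)}$ is produced precisely by the four products $\eta_a\eta_b$ with $a\in\{i,i+1\}$, $b\in\{j,j+1\}$, each carrying coefficient $2\,\Omega^{(a,b)}(u)$; in the adjacent case $j=i+1$ the term $\Omega^{(i+1,i+1)}$ is instead furnished by the diagonal contribution $\eta_{i+1}^2\,\Omega^{(i+1,i+1)}$, so the four-term sum is unchanged. The computational heart is the factorisation identity
\[
  \Omega^{(i,j)}+\Omega^{(i,j+1)}+\Omega^{(i+1,j)}+\Omega^{(i+1,j+1)}=(1+u)^2\,u^{j-i-1}(1-u^{2i+1})(1-u^{2N-2j-1}),
\]
proved by grouping the two terms with common factor $(1-u^{2i})$ (resp. $(1-u^{2i+2})$) to extract $(1+u)(1-u^{2N-2j-1})$, and then combining the two groups to extract a further $(1+u)(1-u^{2i+1})$. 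Multiplying by the prefactor $\tfrac{g^2(1-u)^2}{(1+u)^2(1-u^{2N})}$ (the quadratic prefactor of $R^{(k,N)}_\mu$ times $2$) cancels $(1+u)^2$, and the substitution $u\mapsto u^{\frac1N}$ yields the stated formula.

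For $a_{0,i}$ (weight one) there are two leading contributions. The linear part contributes its $j=i+1$ summand; inserting $\Lambda^{(i+1)}(u)=u^{i}(1-u^{2N-2i-1})$ and $\Lambda^{(N-i)}(u)=u^{N-i-1}(1-u^{2i+1})$ from \eqref{eq:notation} and substituting $u\mapsto u^{\frac1N}$ gives the first line. The coupling term $4(-1)^\mu\sum_a\eta_a\sum_{j=k}^{N-1}\Omega^{(a,j)}$ reaches $\chi_{\bm{e}_i}$ through $a=i$ and $a=i+1$, i.e. through $P_i+P_{i+1}$ with $P_a:=\sum_{j=k}^{N-1}\Omega^{(a,j)}(u)$. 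Here I would use the geometric-sum evaluation
\[
  P_a=\frac{(1-u^{2a})\,u^{k-a}(1-u^{N-k})(1-u^{N-k+1})}{1-u},
\]
whose factor $1/(1-u)$ promotes this nominally $O(N^{-2})$ term to $O(N^{-1})$, followed by $P_i+P_{i+1}=\tfrac{1+u}{1-u}\,u^{k-i-1}(1-u^{2i+1})(1-u^{N-k})(1-u^{N-k+1})$, whose factor $(1+u)$ reduces the prefactor's $(1+u)^2$ to $(1+u)$; after $u\mapsto u^{\frac1N}$ this is the second line. All remaining contributions to $a_{0,i}$ — the boundary diagonals $\eta_1^2$ (for $i=1$) and $\eta_{k-2}^2$ (for $i=k-3$), and the products $\eta_1\eta_b$, $\eta_a\eta_{k-2}$ in which a fixed coordinate $(-1)^v$ or $(-1)^w$ factors out — carry $(1-u)^2$ without any such enhancement and are absorbed into the error $O(N^{-2})$. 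The bound $a_0=O(N^{-1})$ follows in the same way: the two boundary terms of the linear part carry the single factor $(1-u)$, whereas every constant produced by the quadratic part carries $(1-u)^2$ and is not enhanced (the boundary sums $P_1,P_{k-2}$ lose their $1/(1-u)$ to the numerator factor $1-u^2$).

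The step I expect to be the real obstacle is the bookkeeping: assigning each summand of the expanded exponent to the correct $\chi_\rho$, and in particular handling the coincidences at adjacent indices $j=i+1$ and at the two boundaries, where $\eta_1$ and $\eta_{k-2}$ mix a free coordinate with the fixed values $v,w$, without double counting. Once this is organised, the two factorisation identities above do the real work, collapsing the a priori unwieldy sums of $\Omega^{(a,b)}$ into clean products and producing exactly the powers of $(1+u)$ that appear in the final formulas.
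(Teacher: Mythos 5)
Your proposal is, in substance, exactly the ``direct computation from the definitions'' that the paper declines to write out, and its two substantive computations are correct: the character bookkeeping (the four products $\eta_a\eta_b$ with $a\in\{i,i+1\}$, $b\in\{j,j+1\}$, with the diagonal $\eta_{i+1}^2$ supplying $\Omega^{(i+1,i+1)}$ in the adjacent case $j=i+1$; the two sources $a\in\{i,i+1\}$ for weight one), the factorisation $\Omega^{(i,j)}+\Omega^{(i,j+1)}+\Omega^{(i+1,j)}+\Omega^{(i+1,j+1)}=(1+u)^2u^{j-i-1}(1-u^{2i+1})(1-u^{2N-2j-1})$, and the geometric-sum evaluation of $P_a$ together with $P_i+P_{i+1}=\tfrac{1+u}{1-u}u^{k-i-1}(1-u^{2i+1})(1-u^{N-k})(1-u^{N-k+1})$ all check out and reproduce the stated formulas exactly after the substitution $u\mapsto u^{1/N}$, with the discarded boundary terms for $a_{0,i}$ genuinely of order $O(1/N^2)$.

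One sub-claim in your treatment of $a_0$ is, however, false as stated, though the conclusion survives. It is not true that every constant produced by the quadratic part is un-enhanced: the diagonal constants $2\sum_{a=1}^{k-2}\Omega^{(a,a)}$ and the adjacent-pair constants $2\sum_{a}\Omega^{(a,a+1)}$ are sums of roughly $k\sim N$ terms each of size $O(1)$ (for $a$ of order $N$ both factors of $\Omega^{(a,a)}(u^{1/N})$ are $O(1)$), hence these sums are $O(N)$; and $P_{k-2}$ does not lose its $1/(1-u)$, since its numerator factor is $1-u^{2(k-2)}$, not $1-u^{2}$, which after $u\mapsto u^{1/N}$ is $O(1)$ rather than $O(1/N)$. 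These contributions to $a_0$ are therefore $O(1/N^{2})\cdot O(N)=O(1/N)$, not $O(1/N^{2})$. Since the lemma asserts only $a_0=O(1/N)$, and every contribution --- the linear boundary terms and the enhanced quadratic constants alike --- is indeed $O(1/N)$, the stated bound still follows; but your argument should be amended to say precisely this rather than to claim the quadratic constants are negligible at order $1/N^{2}$.
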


%%%%%%%%%%%%%%%%%%%%%%%%%%%%%%%%%%%%%%%%%%%%%%%%%%%%%%%%%%%%%%%%%%%%%%%%%%%%%%%%%%
\subsection{Graph theoretical considerations} % for $\widehat{R_{\mu}^{(k,N)}}$ }
\label{sec:graph-theoretical} 
%%%%%%%%%%%%%%%%%%%%%%%%%%%%%%%%%%%%%%%%%%%%%%%%%%%%%%%%%%%%%%%%%%%%%%%%%%%%%%%%%%

For the case of Fourier transform $\widehat{R_{\mu}^{(k,N)}}$ of $R_{\mu}^{(k,N)}$, we have seen in \S\ref{sec:four-transf-R} that the elements $\chi_\rho \in S_{k-3}$ correspond to $\rho = \bm{e}_i + \bm{e}_j$ with $0 \leq i < j \leq k-3$. In addition, note that for \(\rho \in \btZ{k-3}\), defining the set
\[
  V^{(k-3)}_{\rho} = \left\{\br \in \{0,1\}^{|S_{k-3}|} \, \Big| \, \chi_\rho = \prod_{i=1}^{k-3} (\chi_{0,i})^{r_{0,i}} \prod_{1\leq i<j} (\chi_{i,j})^{r_{i,j}} \right\},
\]
the Fourier coefficients \(C_{\chi_\rho }\) are given by
\[
  C_{\chi_\rho } = \sum_{ \br \in V^{(k-3)}_{\rho} } T^{(\br)}(\mathbf{a}),
\]
where, for simplicity, and only in this subsection we drop the dependency of $\mu$ from the notation of $a_{i,j}^{(\mu)}$.
The structure of the sets \(V^{(k-3)}_{\rho}\) allows a graph-theoretical (combinatorial) description as we see below in Definition \ref{def:graph}. Using this description, in this subsection we prove several properties of the sets \(V^{(k-3)}_{\rho}\) used in the Section \ref{sec:transformation-sum} below. In particular, by the reduction procedure described in Proposition \ref{prop:V0_1}
we see that for our purposes it is enough to consider the case of \(\rho = \bZ\) (see Example \ref{example:V_0} for case of \(V^{(3)}_{\bZ}\)).  

Notice that, by the definition of \(S_{k-3}\), it is clear that for any \(\rho \in \btZ{k-3} \), the set
\(V^{(k-3)}_{\rho} \) is not empty. In fact, we see in Lemma \ref{lem:v0} that the set \(V_\rho^{(k-3)} \) has the same cardinality as \(V^{(k-3)}_{\bZ}\). 

Next, we give an alternative description of the elements of the set \(V_\rho^{(k-3)}\) as simple undirected graph allowing loops.

\begin{dfn} \label{def:graph}
  For \(\br \in \{0,1 \}^{|S_{k-3}|} = \btZ{\frac{1}2(k-2)(k-3)} \), the graph \(\mathcal{G}(\br)\) is the
  undirected graph with the vertex set
  \[
    V(\mathcal{G}(\br)) = \{1,\ldots, k-3 \},
  \]
  and edges determined by
    \begin{equation*}
    \begin{cases}
      (i,i) \in E(\mathcal{G}(\br)) & \text{ if } r_{0,i} = 1, \text{ for } 0<i \\
      (i,j) \in E(\mathcal{G}(\br)) & \text{ if } r_{i,j} = 1, \text{ for } 0<i<j,  
    \end{cases}
  \end{equation*}
  where \(E(\mathcal{G}(\br))\) is the edge set of \(\mathcal{G}(\br)\).

  We denote by \(\deg(\mathcal{G}({\br}))\) the (ordered) list of degree of the vertices of
  \(\mathcal{G}({\br})\).
\end{dfn}

Note that different to usual convention,  when the graph \(\mathcal{G}(\br)\) has a
loop \((i,i) \in E(\mathcal{G}(\br))\) we consider the loop to contribute \(1\) to the degree
of the vertex \(i\). 

\begin{ex}
  Let \(k=7\). For \(\br = (0, 0, 1, 1, 1, 1, 1, 0, 0, 1) \in \btZ{10}\) the graph \(\mathcal{G}({\br})\) is shown
  in Figure \ref{fig:graph1}.
  \begin{figure}[!ht]
    \centering
    \includegraphics[height=3cm]{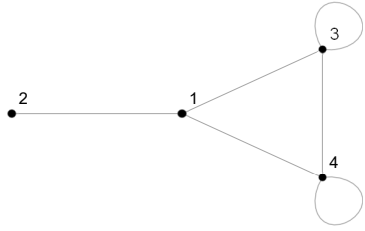}
    \caption{Graph \(\mathcal{G}({\br})\)  associated to the vector \(\br = (0, 0, 1, 1, 1, 1, 1, 0, 0, 1)\)}
    \label{fig:graph1}
  \end{figure}
  Actually, we easily verify that \(\br \in V^{(4)}_{\rho}\) with \(\rho = (1,1,1,1)\).
  Notice also that \(\deg(\mathcal{G}({\br})) = ((3,1,3,3) \equiv (1,1,1,1) \pmod{2}\).
\end{ex}

In fact, the last property of the example determines the set \(V^{(k-3)}_{\rho}\), as we can easily verify and
state in the following lemma.

\begin{lem}
  For \(k\ge 3\), we have
  \begin{equation} \label{eq:setVgrahp}
  V^{(k-3)}_{\rho} = \{\br \in \{0,1\}^{|S_{k-3}|} \,| \, \deg(\mathcal{G}({\br})) \equiv \rho \pmod{2} \}. \qed
\end{equation}
\end{lem}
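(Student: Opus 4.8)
The plan is to translate the multiplicative condition defining $V^{(k-3)}_{\rho}$ into an additive condition in the group $\btZ{k-3}$ via the canonical duality isomorphism, and then to recognize the resulting coordinate-wise identity as the parity of the degree sequence of the graph $\mathcal{G}(\br)$. First I would recall that the assignment $\rho \mapsto \chi_\rho$ is a group isomorphism $\btZ{k-3} \xrightarrow{\sim} \btZdual{k-3}$, since $\chi_\rho(\bs)\chi_\sigma(\bs) = (-1)^{(\bs|\rho) + (\bs|\sigma)} = (-1)^{(\bs|\rho+\sigma)} = \chi_{\rho+\sigma}(\bs)$ and every character arises as some $\chi_\rho$. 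Under this isomorphism the character $\chi_{0,i}$ corresponds to $\bm{e}_0 + \bm{e}_i = \bm{e}_i$ (recall $\bm{e}_0 = \bZ$), and $\chi_{i,j}$ corresponds to $\bm{e}_i + \bm{e}_j$ for $1 \leq i < j \leq k-3$.

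Consequently, the defining relation $\chi_\rho = \prod_{i=1}^{k-3}(\chi_{0,i})^{r_{0,i}} \prod_{1\leq i<j}(\chi_{i,j})^{r_{i,j}}$ of $V^{(k-3)}_{\rho}$ is equivalent to the identity in $\btZ{k-3}$
\[
  \rho = \sum_{i=1}^{k-3} r_{0,i}\, \bm{e}_i + \sum_{1\leq i<j} r_{i,j}\,(\bm{e}_i + \bm{e}_j).
\]
Reading off the $\ell$-th coordinate for $1 \leq \ell \leq k-3$ and using that $\bm{e}_i$ has $\ell$-th entry $\delta_{i\ell}$, I would obtain
\[
  \rho_\ell \equiv r_{0,\ell} + \sum_{j>\ell} r_{\ell,j} + \sum_{i<\ell} r_{i,\ell} \pmod{2}.
\]
These three contributions count, respectively, the loop $(\ell,\ell)$, the edges of $\mathcal{G}(\br)$ joining $\ell$ to a larger-indexed vertex, and those joining $\ell$ to a smaller-indexed one. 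By the convention fixed after Definition \ref{def:graph} that a loop contributes $1$ to the degree, this right-hand side is exactly $\deg_{\mathcal{G}(\br)}(\ell)$, so $\br \in V^{(k-3)}_{\rho}$ if and only if $\deg(\mathcal{G}(\br)) \equiv \rho \pmod{2}$, which is \eqref{eq:setVgrahp}.

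The computation is entirely routine, so there is no genuine obstacle; the only point requiring care is the bookkeeping of the nonstandard loop convention, ensuring that the diagonal coefficient $r_{0,\ell}$ contributes a single unit (not two) to the parity of $\deg_{\mathcal{G}(\br)}(\ell)$, in agreement with Definition \ref{def:graph}. Once that is matched, the equivalence of the two descriptions is immediate.
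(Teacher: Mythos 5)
Your proof is correct and is precisely the routine verification the paper leaves to the reader (the lemma is stated there with no written proof, only the preceding example as motivation): translating the character identity through the isomorphism $\rho \mapsto \chi_\rho$, reading off coordinates, and matching each coordinate sum $r_{0,\ell} + \sum_{j>\ell} r_{\ell,j} + \sum_{i<\ell} r_{i,\ell}$ with $\deg_{\mathcal{G}(\br)}(\ell) \pmod{2}$ under the loop-contributes-$1$ convention. Your explicit attention to that nonstandard convention is exactly the right point of care, since under the usual convention the diagonal term would vanish mod $2$ and the statement would fail.
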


\begin{prop}
  \label{prop:V0_1}
  For \(\rho \in \btZ{k-3} \), we have
  \[
    |V^{(k-3)}_\rho| = |V^{(k-3)}_{\bZ}| = 2^{(k-3)(k-4)/2},
  \]
  and the bijection \(\sigma_\rho :  V^{(k-3)}_{\bZ} \to V^{(k-3)}_{\rho} \) is given explicitly by the map 
  \[
    \br \in V^{(k-3)}_{\bZ}  \mapsto  \br + (\rho \oplus \bZ_{(k-3)(k-4)/2}) \pmod{2}  \in V^{(k-3)}_{\rho}.
  \]
  Furthermore, the map \(\sigma_{\rho}\) induces the relation
  \[
    T^{(\sigma_\rho(\br))}(\mathbf{a}) = \prod_{i=1}^{k-3}\left(\tanh(a_i)^{1-2 r_{0,i}}\right)^{\rho_i} T^{(\br)}(\mathbf{a}).
  \]
\end{prop}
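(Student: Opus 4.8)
The plan is to exploit the characterization \eqref{eq:setVgrahp} of $V^{(k-3)}_\rho$ as the set of $\br$ whose associated graph $\mathcal{G}(\br)$ has degree sequence congruent to $\rho$ modulo $2$, together with a clean separation of the coordinates of $\br$ into \emph{loop} coordinates $r_{0,i}$ (for $1 \leq i \leq k-3$) and \emph{edge} coordinates $r_{i,j}$ (for $1 \leq i < j \leq k-3$). Writing $m = k-3$, in the lexicographic ordering of the index pairs $(i,j)$ with $0 \leq i < j \leq m$ the loop coordinates occupy precisely the first $m$ slots, so the operation $\br \mapsto \br + (\rho \oplus \bZ_{(k-3)(k-4)/2})$ flips $r_{0,i}$ exactly when $\rho_i = 1$ and fixes every remaining coordinate. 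With the loop convention of Definition \ref{def:graph} (that $(i,i)$ contributes $1$ to the degree of $i$), the degree of vertex $i$ in $\mathcal{G}(\br)$ is
\[
  \deg_i(\mathcal{G}(\br)) = r_{0,i} + \sum_{j \neq i} r_{\{i,j\}},
\]
where $r_{\{i,j\}}$ denotes the edge coordinate of the unordered edge $\{i,j\}$.

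First I would settle the cardinality. For any fixed choice of the $\binom{m}{2}$ edge coordinates, the quantity $\sum_{j\neq i} r_{\{i,j\}}$ is determined, and the parity condition $\deg_i(\mathcal{G}(\br)) \equiv \rho_i \pmod 2$ then pins down the binary loop variable $r_{0,i}$ uniquely. Hence the projection of $\br$ onto its edge coordinates restricts to a bijection $V^{(m)}_\rho \to \{0,1\}^{\binom{m}{2}}$, giving $|V^{(m)}_\rho| = 2^{m(m-1)/2} = 2^{(k-3)(k-4)/2}$ independently of $\rho$; in particular $|V^{(m)}_\rho| = |V^{(m)}_{\bZ}|$.

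Next, for the bijection, observe that $\sigma_\rho$ fixes all edge coordinates and flips $r_{0,i}$ exactly for the indices with $\rho_i = 1$. Each such flip changes $\deg_i$ by $\pm 1$, hence toggles its parity while leaving all other vertex degrees unchanged. Thus if $\br \in V^{(m)}_{\bZ}$, so that $\deg_i \equiv 0$ for all $i$, then $\sigma_\rho(\br)$ satisfies $\deg_i \equiv \rho_i \pmod 2$, i.e. $\sigma_\rho(\br) \in V^{(m)}_\rho$. Since $\sigma_\rho$ commutes with the projection onto edge coordinates (which it fixes), and by the previous step both $V^{(m)}_{\bZ}$ and $V^{(m)}_\rho$ project bijectively onto $\{0,1\}^{\binom{m}{2}}$, the map $\sigma_\rho$ is a bijection.

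Finally, for the transformation of $T^{(\br)}$, note that $\sigma_\rho$ alters only the loop coordinates indexed by those $i$ with $\rho_i=1$; each such flip replaces the factor $\cosh(a_{0,i})^{1-r_{0,i}}\sinh(a_{0,i})^{r_{0,i}}$ in $T^{(\br)}(\mathbf{a})$ by $\cosh(a_{0,i})^{r_{0,i}}\sinh(a_{0,i})^{1-r_{0,i}}$, whose ratio to the old factor is $\tanh(a_{0,i})^{1-2r_{0,i}}$. Multiplying these ratios over the relevant indices (and recalling $a_i = \mathbf{a}_i = a_{0,i}$) yields
\[
  T^{(\sigma_\rho(\br))}(\mathbf{a}) = \prod_{i=1}^{k-3}\left(\tanh(a_i)^{1-2r_{0,i}}\right)^{\rho_i} T^{(\br)}(\mathbf{a}).
\]
The only genuine care needed is bookkeeping: verifying that the lexicographic indexing makes $\rho\oplus\bZ$ act precisely on the loop coordinates, and keeping the nonstandard loop-degree convention straight. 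Once these are fixed, each of the three claims reduces to a short verification, so I do not expect a serious conceptual obstacle.
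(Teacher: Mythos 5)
Your proof is correct, and it takes a partly different route from the paper's, so a comparison is worthwhile. The heart of the bijection argument is the same in both: adding $\rho \oplus \bZ_{(k-3)(k-4)/2}$ modulo $2$ flips exactly the loop coordinates $r_{0,i}$ with $\rho_i = 1$ and fixes all edge coordinates, hence toggles the parity of $\deg_i$ precisely at those vertices, so $\sigma_\rho$ maps $V^{(k-3)}_{\bZ}$ into $V^{(k-3)}_{\rho}$; and the $\tanh$-factor identity is the same one-line computation from the definition of $T^{(\br)}(\mathbf{a})$. Where you diverge is the cardinality and the way bijectivity is concluded. The paper counts $|V^{(k-3)}_{\bZ}|$ by citing the enumeration of even graphs (\S 1.4 of \cite{Haray1973}) and then deduces bijectivity of $\sigma_\rho$ from the fact that it is an involution (so it also carries $V^{(k-3)}_{\rho}$ back into $V^{(k-3)}_{\bZ}$). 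You instead prove directly, for \emph{every} $\rho$ simultaneously, that the parity constraints determine each loop coordinate $r_{0,i}$ uniquely once the edge coordinates are fixed, so that projection onto the $(k-3)(k-4)/2$ edge coordinates is a bijection $V^{(k-3)}_{\rho} \to \btZ{(k-3)(k-4)/2}$; both the count $2^{(k-3)(k-4)/2}$ and the bijectivity of $\sigma_\rho$ then fall out with no external input. Your route is more self-contained and buys something extra: the projection statement you establish is essentially Lemma \ref{lem:v0}(1) of the paper (there stated for $\rho = \bZ$, proved by the add-loops-to-odd-vertices construction, and completed by borrowing the cardinality from this very proposition), so under your argument that later lemma becomes immediate rather than resting on the Harary--Palmer count. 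What the paper's route buys in exchange is brevity at this spot, and the involution observation is a slick substitute for your projection argument.
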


\begin{proof}
  From \eqref{eq:setVgrahp}, we see that \( |V^{(k-3)}_{\bZ}|\)  is equal to the number of even graphs with \(k-3\)
  vertices. By \S 1.4 of \cite{Haray1973}, the number of such graphs is equal to \(2^{(k-3)(k-4)/2}\).

  Next, let us consider the effect of the map \( \sigma_\rho: V^{(k-3)}_{\bZ} \to \{0,1 \}^{|S_{k-3}|}\) on the associated graphs
  \(\mathcal{G}({\br}) \), in particular on the degree of a given vertex \(i \in \{1,2,3,\cdots,k-3\}\).
  First, it is clear that any edge \((i,j) \), with \(i \neq j\), in \(\mathcal{G}({\br}) \) is invariant under
  \(\sigma_\rho\), that is, if \((i,j) \) is an edge of \(\mathcal{G}(\br) \) then it is also an edge
    of \(\mathcal{G}(\sigma_{\rho}(\br)) \). Now, suppose that \(\rho_i = 1\) and the vertex \(i\) does not have a loop
  in \(\mathcal{G}({\br}) \) (i.e. \(\br_{0 i} =0\)), then the vertex \(i\) has a loop in
  \(\mathcal{G}({\sigma_{\rho}(\br)}) \). On the other hand, if the vertex \(i\) has a loop in \( \mathcal{G}({\br}) \),
  then \(i\) does not have a loop in \(\mathcal{G}({\sigma_{\rho}(\br)}) \).
  Thus the degree of \(i\) in \(\mathcal{G}({\sigma_{\rho}(\br)}) \) is \(\pm 1\) the degree of \(i\) in
  \( \mathcal{G}({\br}) \) (see Figure \ref{fig:graph2} for an example).

  \begin{figure}[!ht]
    \centering
    \begin{subfigure}{0.4\textwidth}
      \includegraphics[height=2cm]{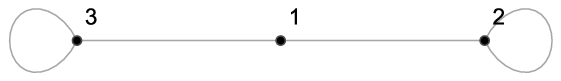}
      \caption{\(\br = (0,1,1,1,1,0)\)}
    \end{subfigure}
    ~
    \begin{subfigure}{0.4\textwidth}
      \includegraphics[height=2.2cm]{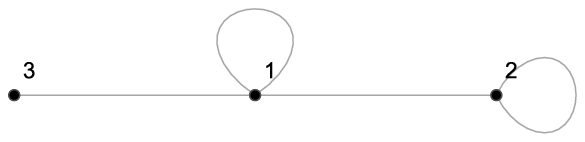}
      \caption{\( \sigma_{\rho}(\br) = (1,1,0,1,1,0)\)}
    \end{subfigure}
    \caption{Graphs corresponding to a vector \(\br\) and its image under \(\sigma_{\rho}\) for \(\rho = (1,0,1) \) }
    \label{fig:graph2}
  \end{figure}

  If \(\rho_i = 0\), there is no change in the degree of the vertex \(i\). Consequently, we have
  \[
    \deg(\mathcal{G}({\sigma_{\rho}(\br)})) \equiv  \deg(\mathcal{G}({\br})) + \rho \equiv \rho \pmod{2},
  \]
  and thus \(\sigma_\rho(V^{(k-3)}_{\bZ}) \subset V^{(k-3)}_{\rho}\). It is clear that the map \(\sigma_{\rho} \) is an involution, whence
  the second claim is proved.
  The third claim follows directly by the definition of \(T^{(\br)}(\mathbf{a}) \).
\end{proof}

\begin{ex}
  Suppose \(k=6\), thus \(|S_{3}|=6 \). Let \(\br = (0,1,1,1,1,0) \in V^{(3)}_{\bZ}\) and \(\rho = (1,0,1 ) \).
  Then, letting \(C = \sinh(a_4)\sinh(a_5) \cosh(a_6)  \), we have
  \begin{align*}
    T^{(\sigma_\rho(\br))}(\mathbf{a}) &= \sinh(a_1)\sinh(a_2)\cosh(a_3) C \\
                                      &= \left(\tanh(a_1) \coth(a_3)\right) \cosh(a_1)\sinh(a_2)\sinh(a_3) C \\
                                            &=\left(\tanh(a_1) \coth(a_3)\right) T^{(\br)}(\mathbf{a}).
  \end{align*}
\end{ex}

By the foregoing discussion,  the Fourier coefficients  are given by
\begin{equation}
  \label{eq:fouriercoeff7}
  C_{\rho }  = \sum_{\br \in V_{\bZ}^{(k-3)}} T^{(\sigma_{\rho}(\br))}(\mathbf{a}).
\end{equation}

Next, we define several projections on set \(V_{\bZ}^{(k-3)}\). 

\begin{dfn}
  Let \(k \geq 4 \), then
  \begin{itemize}
  \item \(p_1: \btZ{\frac{1}2(k-2)(k-3)} \to \btZ{k-3} \) is the projection of the first \(k-3\) components,
  \item \(p_2: \btZ{\frac{1}2(k-2)(k-3)} \to \btZ{\frac{1}2(k-3)(k-4)} \) is the projection of the last \((k-3)(k-4)/2 \) components.
  \end{itemize}
  Let \(k \geq 5 \), then:
  \begin{itemize}
  \item \(q_1 : \btZ{\frac{1}2(k-2)(k-3)} \to \btZ{k-4}  \) is the projection of the \(k-4\) components starting from \(k-2 \),
  \item \(q_2 : \btZ{\frac{1}2(k-2)(k-3)} \to \btZ{\frac{1}2(k-4)(k-5)}\) is the projection of the last \( (k-4)(k-5)/2 \) components.
  \end{itemize}
\end{dfn}

Note that if the appropriate domain of the functions are considered, the relations 
\[
  q_1 = p_1 \circ p_2,  \qquad q_2 = p_2 \circ p_2.
\]
hold. 

\begin{ex}
  Let \(k=7\) and \(\br = (0, 0, 1, 1, 1, 1, 1, 0, 0, 1) \in \br \in V^{(4)}_{\rho}\), then
  \[
    p_1(\br) = (0,0,1,1), \qquad p_2(\br) = (1,1,1,0,0,1),
  \]
  and \(\br = p_1(\br) \oplus p_2(\br)\). Also.
  \[
    q_1(\br) = (1,1,1), \qquad q_2(\br) = (0,0,1).
  \]
\end{ex}

The next results describes the structure of the set \(V_{\bZ}^{(k-3)}\) used in \S \ref{sec:transformation-sum} to
evaluate the sums over the Fourier transforms of the functions \(R_{\mu}^{(k,N)} \) and \(g_k(u,\bs)\) (cf. Lemma \ref{lem:sumv0}).

\begin{lem}
  \label{lem:v0}
  Let \(\br ,\bs \in V^{(k-3)}_{\bZ}\). 
  \begin{enumerate}
  \item If \(\br \neq \bs \), then \(p_2(\br) \neq p_2(\bs) \). In other
    words, \(p_2 \) is a bijection of \(V^{(k-3)}_{\bZ}\) onto \(\btZ{(k-3)(k-4)/2}\).
  \item If \(\br \in V^{(k-3)}_{\bZ}\), then \(|p_1(\br)| \equiv 0 \pmod{2}\). Moreover,
    \[
      p_1(V^{(k-3)}_{\bZ}) = \{ \rho \in \btZ{k-3}\, | \, |\rho| \equiv 0 \pmod{2}  \}.
    \]
  \item We have
    \[
      V_\bZ^{(k-4)} = p_2 ( \{ v \in V_0^{(k-3)} \,:\, p_1(v) = \bZ_{k-3} \} ),
    \]
  \item For \(\bv \in \btZ{(k-4)(k-5)/2}\),
    \[
      p_1( \{ \sigma \in V_\bZ^{(k-3)} \, : \, q_2(\sigma) = \bv \}) = \{ \sigma \in \btZ{k-3} \, : \, |\sigma| \equiv 0 \pmod{2} \},
    \]
    and 
    \[
      q_1( \{ \sigma \in V_\bZ^{(k-3)} \, : \, q_2(\sigma) = \bv \}) = \btZ{k-4}.
    \]
    Moreover, the restriction of \(p_1\) and \(q_1\) to the above sets are bijections.
  \item For \(\bv \in \btZ{(k-4)(k-5)/2}\), let  \(\br \in V_{\bZ}^{(k-3)}\) such that \(q_2(\br) = \bv \). Let \( \mathbf{r_0} \in V_{\bZ}^{(k-3)} \) be the
    unique element such that \(p_1(\br_0) = \bZ_{k-3} \) and \(q_2(\br_0) = \bv\). Then,
    \[
      q_1(\br) =  q_1(\br_0) + (r_{0,2},r_{0,3}, \cdots, r_{0,k-3}) \pmod{2}.
    \]
  \end{enumerate}
\end{lem}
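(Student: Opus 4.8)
The plan is to exploit the $\mathbb{Z}_2$-linear structure underlying the set $V^{(k-3)}_{\bZ}$ and to reduce the identity to a single vertex-by-vertex parity computation on the difference $\br + \br_0$. First I would record the key structural fact that the degree map $\deg(\mathcal{G}(\cdot)) : \btZ{\frac12(k-2)(k-3)} \to \btZ{k-3}$ is $\mathbb{Z}_2$-linear: modulo $2$, each coordinate of $\deg(\mathcal{G}(\bs))$ is the sum of the loop variable and the edge variables incident to the corresponding vertex, where the nonstandard convention that a loop contributes $1$ to the degree is essential. Consequently $V^{(k-3)}_{\bZ} = \ker\big(\deg(\mathcal{G}(\cdot)) \bmod 2\big)$ is a subgroup of $\btZ{\frac12(k-2)(k-3)}$, and the projections $p_1, q_1, q_2$ are all $\mathbb{Z}_2$-linear maps.

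With this in hand, set $\mathbf{w} := \br + \br_0 \pmod 2$. Since $V^{(k-3)}_{\bZ}$ is a subgroup we have $\mathbf{w} \in V^{(k-3)}_{\bZ}$, and linearity of $q_2$ together with $q_2(\br) = q_2(\br_0) = \bv$ gives $q_2(\mathbf{w}) = \bZ$. In graph terms $\mathcal{G}(\mathbf{w})$ is thus an even graph whose only possible edges are the loops $w_{0,m}$ (for $1 \le m \le k-3$) and the edges $w_{1,m}$ joining vertex $1$ to vertex $m$ (for $2 \le m \le k-3$); all edges internal to $\{2, \ldots, k-3\}$ have cancelled. I would then apply the even-degree condition one vertex at a time: for each $m \in \{2, \ldots, k-3\}$ the vertex $m$ meets $\mathcal{G}(\mathbf{w})$ only through its loop and the single edge to vertex $1$, so $\deg_{\mathcal{G}(\mathbf{w})}(m) = w_{0,m} + w_{1,m} \equiv 0 \pmod 2$, which forces $w_{1,m} = w_{0,m}$ for every such $m$.

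Translating this back finishes the proof. The relations $w_{1,m} = w_{0,m}$ say precisely that $q_1(\mathbf{w}) = (w_{0,2}, w_{0,3}, \ldots, w_{0,k-3})$; since $\br_0$ has all loop variables equal to zero (because $p_1(\br_0) = \bZ_{k-3}$), this loop vector equals $(r_{0,2}, r_{0,3}, \ldots, r_{0,k-3})$. On the other hand, linearity of $q_1$ gives $q_1(\mathbf{w}) = q_1(\br) + q_1(\br_0) \pmod 2$, and combining the two expressions yields the asserted identity $q_1(\br) = q_1(\br_0) + (r_{0,2}, r_{0,3}, \ldots, r_{0,k-3}) \pmod 2$. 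I expect the only genuine obstacle to be the bookkeeping: one must correctly match the abstract coordinates selected by $p_1$, $q_1$ and $q_2$ with the graph-theoretic data (loops at vertices $2, \ldots, k-3$ versus edges incident to vertex $1$ versus edges internal to $\{2, \ldots, k-3\}$), and one must keep track of the loop-degree convention, since it is exactly this convention that makes the loop variable $w_{0,m}$ appear in the vertex-parity relation. The existence and uniqueness of $\br_0$ presupposed in the statement already follow from the bijectivity of $q_1$ on the fibre $\{q_2 = \bv\}$ established in part (4).
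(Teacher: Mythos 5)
Your proposal proves only item (5) of the lemma; items (1)--(4) are nowhere addressed. Indeed, your closing sentence explicitly leans on item (4) (``the bijectivity of $q_1$ on the fibre $\{q_2 = \bv\}$'') to supply the existence and uniqueness of $\br_0$, so four of the five assertions --- that $p_2$ restricts to a bijection of $V_{\bZ}^{(k-3)}$ onto $\btZ{(k-3)(k-4)/2}$, that $p_1(V_{\bZ}^{(k-3)})$ is exactly the set of even-weight vectors, the identification of $V_{\bZ}^{(k-4)}$ with the $p_2$-image of the fibre $p_1 = \bZ_{k-3}$, and the two fibre bijections in (4) --- remain unproven. This is a genuine gap, not a cosmetic omission: the paper devotes most of its proof to precisely these parts (via parity counts of odd-degree vertices and explicit graph constructions such as adding loops at odd-degree vertices, or pairing up looped vertices by edges), and (5) is stated last exactly because it depends on them.

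That said, the argument you do give for (5) is correct, and it is tidier than the paper's. The paper proves (5) in two stages: first the special case $\bv = \bZ_{(k-4)(k-5)/2}$ by inspecting $\mathcal{G}(\br)$ directly, then the general case by constructing an auxiliary vector $\bar{\bv}$ with $p_1(\bar{\bv})=p_1(\br)$ and $q_2(\bar{\bv})=\bZ$, forming the sum $\br_0 + \bar{\br}$, checking that a sum of even graphs is even, and invoking (4) to conclude it equals $\br$. Your observation that $\deg(\mathcal{G}(\cdot)) \bmod 2$ is $\Z_2$-linear (a fact that hinges on the loop-counts-one convention), so that $V_{\bZ}^{(k-3)}$ is a subgroup and $p_1$, $q_1$, $q_2$ are linear, collapses all of this into a single computation on $\mathbf{w} = \br + \br_0$: linearity gives $q_2(\mathbf{w}) = \bZ$, which kills the edges internal to $\{2,\dots,k-3\}$, and the even-degree condition at each vertex $m \geq 2$ then forces $w_{1,m} = w_{0,m} = r_{0,m}$, which is the claimed identity. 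The same linearity viewpoint would in fact streamline the earlier items as well --- for instance, injectivity of $p_2$ in (1) is immediate, since an element of the kernel with $p_2 = \bZ$ corresponds to an even graph whose only edges are loops, hence is empty --- but the surjectivity statements, the image characterizations in (2)--(4), and the counting arguments still have to be supplied before you have a proof of the lemma as stated.
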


\begin{ex}\label{example:V_0}
  We illustrate the statements of Lemma \ref{lem:v0} with an example. For \(k=6\), the set \(V_\bZ^{(k-3)} \) is given by
  \begin{align*}
    V_\bZ^{(3)} = \{&
                    (0,0,0,0,0,0),(0,0,0,1,1,1),(0,1,1,0,0,1),(0,1,1,1,1,0), \\
                  &(1,0,1,0,1,0),(1,0,1,1,0,1),(1,1,0,0,1,1),(1,1,0,1,0,0) \}.
  \end{align*}
  Then, we see directly that 
  \[
    p_2(V_\bZ^{(3)}) = \btZ{3},
  \]
  and
  \[
    p_1(V_\bZ^{(3)}) = \{(0,0,0),(0,1,1),(1,0,1),(1,1,0)\}.
  \]
  Moreover,
  \[
    p_2 ( \{ v \in V_\bZ^{(3)} \,:\, p_1(v) = \bZ_{3} \} ) = \{(0,0,0),(1,1,1)\} = V_\bZ^{(2)},
  \]
  and if \(\bv = (0)\)
  \[
    p_1( \{ \sigma \in V_\bZ^{(3)} \, : \, q_2(\sigma) = \bv \}) = \{(0,0,0),(0,1,1),(1,0,1),(1,1,0)\},
  \]
  and
  \[
    q_1( \{ \sigma \in V_\bZ^{(3)} \, : \, q_2(\sigma)) = \bv \}) = \{(0,0),(1,1),(0,1),(1,0)\} = \btZ{2}.
  \]
  Finally, with the notation of (4) in the lemma, let \(\bv = (1)\) and \(\br = (1,0,1,1,0,1)\). Then
  \( \br_0 = (0,0,0,1,1,1)\), \(q_1(\br_0) = (1,1) \),\((r_{0,2},r_{0,3})= (0,1)\) and
  \[
    q_1(\br) = (1,0) \equiv (1,1) + (0,1) \pmod{2}.
  \]
\end{ex}

\begin{proof}[Proof of Lemma \ref{lem:v0}]
 
  In this proof we use repeatedly the well-known (elementary) fact from graph theory that the number of vertices in an undirected simple
  graph with odd degree is even.
  
  Suppose \(\bv = (v_1,v_2,\cdots,v_{\ell}) \in  \btZ{\ell} \), where \( \ell = \frac{(k-3)(k-4)}{2} \) and
  \(\br = (0,0,\cdots,0,v_1,v_2,\cdots,v_\ell)  \in \btZ{(k-3)(k-2)/2} \). The associated graph
  \(\mathcal{G}({\br})\) is a undirected simple graph on \(k-3\) vertices.
  Then, there is a unique element \(\bar{\br} \in V^{(k-3)}_{\bZ}\) such that \(p_2(\bar{\br}) = \bv \), that is,
  the one corresponding to the graph obtained by adding loops to the vertices of \(\mathcal{G}({\br})\) with
  odd degree.
  The correspondence establishes an injection of \(\btZ{\ell}\) into \(V^{(k-3)}_{\bZ}\), which is
  actually seen to be a bijection by comparing the cardinality of the sets (cf. Proposition \ref{prop:V0_1}),
  proving (1). 
  Moreover, this argument also shows that for \(\br \in V^{(k-3)}_{\bZ}\) we have \(|p_1(\br)| \equiv 0 \pmod{2}\).

  Conversely,  let \( \bv \in \btZ{k-3}\) with \(|\bv| \equiv 0 \pmod{2}\). Set \(\br = (v_1,v_2,\cdots,v_{k-3},0,0,\cdots,0) \in \btZ{(k-3)(k-2)/2} \),
  then the graph \(\mathcal{G}(\br)\) is a graph with exactly an even number of loops. Moreover, the graph \(\mathcal{G}_1\) with all
  vertices of even degree obtained by joining pair of vertices with loops with exactly one edge corresponds to a vector
  \(\br_1 \in \btZ{(k-3)(k-2)/2}\) such that \(p_1(\br_1) = \bv \), proving (2).

  Let \(\bv \in V_{\bZ}^{(k-4)}\), and \(\bar{\bv} = \bZ_{k-3} \oplus \bv \), we are to prove that \(\bar{\bv} \in V_{\bZ}^{(k-3)}\), in
  other words, that
  \[
     \chi_\bZ = \prod_{i=1}^{k-3} (\chi_{0,i})^{\bar{v}_{0,i}} \prod_{i<j} (\chi_{i,j})^{\bar{v}_{i,j}},
   \]
   equivalently,
   \[
     \bZ_{k-3} = \sum_{i=1}^{k-3} \bar{v}_{0,i} \, \bm{e}_i + \sum_{i<j} \bar{v}_{i,j} \, (\bm{e}_i + \bm{e}_j ).
   \]
   Notice that by the definitions, we have
   \[
     \bar{v}_{0,i} = 0, \qquad \bar{v}_{1,j} = v_{0,j-1}, \qquad \bar{v}_{n,m} = v_{n-1,m-1},
   \]
   where \(i \in \{ 1,2,\cdots, k-3\}\), \(j \in \{2,3,\cdots,k-3\}\), and \(2\leq n< m \leq k-3 \). Thus
   \begin{align*}
     \sum_{i=1}^{k-3} \bar{v}_{0,i} \, \bm{e}_i + \sum_{i<j} \bar{v}_{i,j} \, (\bm{e}_i + \bm{e}_j )
     &= \sum_{j=2}^{k-3} \bar{v}_{1,j} (\bm{e}_1 + \bm{e}_j) + \sum_{2\leq i < j} \bar{v}_{i,j} \, (\bm{e}_i + \bm{e}_j ) \\
     &= |p_1(\bv)| \bm{e}_1 +  \sum_{i=1}^{k-4} v_{0,i} \bm{e}_{i+1} + \sum_{i< j} v_{i,j} (\bm{e}_{i+1} + \bm{e}_{j+1} ) 
   \end{align*}
   and this is equal to \(\bZ_{k-3}\) since \(\bv \in V_{\bZ}^{(k-4)}  \)  (and  \(|p_1(\bv)| \equiv 0 \pmod{2} \) by (2) ).
   The converse follows in the same way, proving (3).
   
   Next, let  \(\bv \in \btZ{(k-4)(k-5)/2} \). By (2), we have
   \[
     p_1(\{ \sigma \in V_0^{(k-3)} \, : \, q_2(\sigma) = \bv \}) \subset \{ \sigma \in \btZ{k-3} \, : \, |\sigma| \equiv 0 \pmod{2} \}.
   \]
   and by (1), we have
   \[
     |\{ \sigma \in V_0^{(k-3)} \, : \, q_2(\sigma) = \bv \}| = 2^{k-4},
   \]
   it suffices to show that for \(\sigma \in \btZ{k-3} \) with  \(|\sigma| \equiv 0 \pmod{2}\), there is an
   element \(\bv \in \{ \sigma \in V_0^{(k-3)} \, : \, q_2(\sigma) = \bv \}  \) such that \(p_1(\bv) = \sigma \).
   Let \(\sigma \in \btZ{k-3} \) with  \(|\sigma| \equiv 0 \pmod{2}\). Let \(\bar{\bv} = \bv \oplus \bZ_{k-4} \oplus \sigma\),
   the associated graph \(\mathcal{G}(\bar{\bv})\), it is a graph on \(k-3\) vertices with an even number \(|\bv|\) of loops
   where the vertex \(1\) has degree \(1\) (if it has a loop) or \(0\). We consider the two cases separatedly.

   Suppose that degree of the vertex \(1\) is \(0\). In this case, the subgraph \(\mathcal{G}_1 \) of \(\mathcal{G}(\bar{\bv}) \) obtained by
   removing the vertex \(1\) is a graph on \(k-4\) vertices with \(n = |\bv| \equiv 0 \pmod{2}\) loops, let \(\mathcal{G}_0 \) be \(\mathcal{G}_1 \)
   without the loops. As in (1), we know that \(\mathcal{G}_0 \) has an even number of vertices with odd degree. Let
   \(a\) (resp. \(b\)) be the number of vertices with odd degree (resp. even degree) in \(\mathcal{G}_0 \) that have a loop in \(\mathcal{G}_1 \).
   Let \(m_1\) (resp. \(m_0 \) ) be the number of vertices of odd degree in \(\mathcal{G}_1\) (resp. \(\mathcal{G}_0 \)), then we have
   \( m_1 = m_0 - a + b \). Since \(a + b = n \equiv 0 \pmod{2} \) and \(m_0 \equiv 0 \pmod{2} \), then \(a\) and \(b\) have the same parity and therefore \(m_1 \equiv 0 \pmod{2}\).
   Let \(\mathcal{G} \) be the graph obtained from \(\mathcal{G}(\bar{\bv})\) by adding edges from \(1\) to each of the (even number of) vertices 
   with odd degree. Then, \(\mathcal{G} \) is a graph where all vertices have even degree. It corresponds to a vector \(\br \in V_{\bZ}^{(k-3)} \) with \(p_1(\br) = \bv \)
   and \(q_2(\bv) = \sigma \). The case where the vertex \(1\) has a loop is dealt in a similar way. This proves the first part of (4). The second
   part follows directly from (1).

   Finally, let \(\bv \in \btZ{(k-4)(k-5)/2} \). By (1) and (3) the existence of a unique \(\br_0 \in V_{\bZ}^{(k-3)} \) with \(p_1(\br_0) = \bZ_{k-3}\) and
   \(q_2(\br_0) = \bv \) is guaranteed. First, we consider the case \(\bv = \bZ_{(k-4)(k-5)/2}\), where we have \( \br_0 = \bZ_{(k-3)(k-2)/2} \).
   Let \(\br \in V_{\bZ}^{(k-3)} \) with \(q_2(\br)= \bZ_{(k-4)(k-5)/2}\). Since \( p_1(\br) =  (r_{0,1},r_{0,2},\cdots,r_{0,k-3}) \), we are to prove
   \((r_{1,2},r_{1,3},\cdots,r_{1,k-3}) = q_1(\br) = (r_{0,2},r_{0,3},\cdots,r_{0,k-3}) \). The graph \(\mathcal{G}(\br)\) is a graph with \(k-3\) vertices of even degree,
   with an even number of loops and edges only of the form \((1,j)\) for \(r_{1,j} = 1\). If \(r_{0,j} =1 \) for \(j\geq 2 \) there is a loop in the vertex
   \(j\) and there must be a vertex \((1,j) \) to make the degree of \(j\) even, thus \(r_{1,j} = r_{0,j}=1 \). Similarly, if there is no loop in $j$, then
   there is no vertex \((1,j) \) in the graph. This proves (5) for the case \(\br = \bZ_{(k-4)(k-5)/2} \).

   Next, for general \(\bv \in \btZ{(k-4)(k-5)/2} \), let \(\bar{\bv} \in V_0^{(k-3)} \) the unique vector with \(p_1(\bar{\bv}) = p_1(\br) \) and
   \(q_2(\bar{\bv}) = \bZ_{(k-4)(k-5)/2}\). By our argument above, we have
   \((\bar{r}_{1,2},\bar{r}_{1,3},\cdots, \bar{r}_{1,k-3}) = (\bar{r}_{0,2},\bar{r}_{0,3},\cdots,\bar{r}_{0,k-3}) \).
     The graph \(\mathcal{G}(\br_0) \) is a simple graph with no loops and the graph \(\mathcal{G}(\bar{\bv}) \) is a graph where the edges that are not loops
     are of the form \((1,j)\) for \(j \geq 2 \)
     and where if such an edge appear then there is loop in \(j\). Both graphs have all vertices with even degree. From this, it is easy to see that
   the graph  \(\mathcal{G}(\bs)\) corresponding to \(\bs = \br_0 + \bar{\br} \pmod{2} \) has all even vertices and therefore \(\bs \in V_{\bZ}^{(k-3)} \).
   Moreover, \(p_1(\bs) = p_1(\br) \) and \(q_2(\bs) = q_2(\br) \), therefore, by (4), we have \(\bs = \br \), proving (5). This completes the proof of Lemma \ref{lem:v0}.
\end{proof}

%%%%%%%%%%%%%%%%%%%%%%%%%%%%%%%%%%%%%%%%%%%%%%%%%%%%%%%%%%%%%%%%%%%%%%%%%%%%%%%%%%
\subsection{Summation via Fourier transforms}
\label{sec:transformation-sum}
%%%%%%%%%%%%%%%%%%%%%%%%%%%%%%%%%%%%%%%%%%%%%%%%%%%%%%%%%%%%%%%%%%%%%%%%%%%%%%%%%%

With the preparations of the previous sections, we proceed to compute the innermost sum appearing
in \eqref{eq:InnerSum1}. By \eqref{eq:fouriercoeff7}, we have
\begin{align*}
  \sum_{\bs \in \setC{k-1}{v}{w}} g_{k-1}(u,\bs)  R^{(k,N)}_\mu(u,\bs)&= \frac{1}{4 u^{(k-2)\Delta}} \sum_{\bs \in \btZ{k-3}} g^{(v,w)}_{k-3}(u,\bs)
                                                                 R^{(v,w)}_\mu(u,\bs)
              = \frac{1}{2^{k-1} u^{(k-2)\Delta }} \sum_{\rho \in \btZ{k-3}} \widehat{g^{(v,w)}_{k-3}}(u,\rho) \widehat{R_\mu^{(v,w)}}(u,\rho) \\
         &= \frac{1}{4 u^{(k-2)\Delta}} \exp(a_0^{(\mu)}) \sum_{\rho \in \btZ{k-3}} \widehat{g^{(v,w)}_{k-3}}(u,\rho) \sum_{\xi \in \btZ{k-3}} C_\xi^{(\mu)} \delta_{\xi,\rho} \\
                       &= \frac{1}{4 u^{(k-2)\Delta}} \exp(a_0^{(\mu)}) \sum_{\rho \in \btZ{k-3}} \widehat{g^{(v,w)}_{k-3}}(u,\rho) \sum_{\br \in V^{(k-3)}_{\bZ}}  T^{(\sigma_{\rho}(\br))}(\mathbf{a}^{(\mu)}).
\end{align*}

By Proposition \ref{prop:fourier_g}, the sum in the last line can be written as 
\begin{align*}
  &\sum_{\rho \in \btZ{k-3}} \widehat{g^{(v,w)}_{k-3}}(u,\rho) \sum_{\br \in V^{(k-3)}_{\bZ}} T^{(\sigma_{\rho}(\br))}(\mathbf{a}^{(\mu)})  \nonumber\\ 
  &\quad = \sum_{\br \in V^{(k-3)}_{\bZ}} T^{(\br)}(\mathbf{a}^{(\mu)}) \sum_{\rho \in \btZ{k-3}}  (-1)^{v |\rho|}\left(u^{2 \varphi_{k-3}(\rho)\Delta}+ (-1)^{v+w}u^{2(k-2-\varphi_{k-3}(\rho))\Delta}\right) \prod_{i=1}^{k-3} \left( \tanh(a_i^{(\mu)})^{1-r_{0 i}}\right)^{\rho_i}.
\end{align*}
Setting \( A_i^{(\br)} = (-1)^v \tanh(a_i^{(\mu)})^{1-r_{0 i}} \), we obtain
\begin{equation*} %\label{eq:expsumA}
  \sum_{\br \in V^{(k-3)}_{\bZ}} T^{(\br)}(\mathbf{a}^{(\mu)}) \sum_{\rho \in \btZ{k-3}} \left(u^{2 \varphi_{k-3}(\rho)\Delta}+ (-1)^{v+w}u^{2(k-2-\varphi_{k-3}(\rho))\Delta}\right)  \prod_{i=1}^{k-3} \left(A_i^{(\br)}\right) ^{\rho_i},
\end{equation*}
or equivalently
\begin{equation}
  \label{eq:expsumB}
  \sum_{\br \in V^{(k-3)}_{\bZ}} T^{(\br)}(\mathbf{a}^{(\mu)}) \left(f^{(\br)}_{k-3}(u^{2 \Delta}) + (-1)^{v+w} g^{(\br)}_{k-3} (u^{2 \Delta})\right),
\end{equation}
where the functions \(f^{(\br)}_k(\tau) \) and \(g^{(\br)}_k(\tau)\) are given by
\[
  f^{(\br)}_{k}(\tau) = \sum_{\rho \in \btZ{k}} \tau^{\varphi_k(\rho)} \prod_{i=1}^{k} (A_i^{(\br)})^{\rho_i}, \qquad
  g^{(\br)}_{k}(\tau) = \sum_{\rho \in \btZ{k}} \tau^{k+1-\varphi_k(\rho)} \prod_{i=1}^{k} (A_i^{(\br)})^{\rho_i}.
\]

Next, we compute explicitly the functions \(f^{(\br)}_{k}(\tau)\) and \(g^{(\br)}_{k}(\tau)\). For simplicity, we consider
the general case
\[
  f_{k}(\tau) = \sum_{\rho \in \btZ{k}} \tau^{\varphi_k(\rho)} \prod_{i=1}^{k} A_i^{\rho_i}, \qquad g_{k}(\tau) = \sum_{\rho \in \btZ{k}} \tau^{k+1-\varphi_k(\rho)} \prod_{i=1}^{k} A_i^{\rho_i},
\]
where \(A_i \in \C \) for \(i \in \{1,2,\cdots,k\}\). Note also that \(g_k(\tau) = \tau^{k+1} f_k(\tau^{-1})\).

\begin{prop}
  \label{prop:sumfg}
  For \(k \in \Z_{\geq 1} \), we have
  \begin{align*}  
    & f_k(\tau) + (-1)^{v+w} g_k(\tau) \\
    = & \frac{1}{2^k} \sum_{\ell=0}^k (1+\tau)^{k-\ell} (1-\tau)^{\ell} (1 + (-1)^{v+ w+\ell}\tau) \sum_{j_1 < j_2 < \ldots < j_{\ell}}  \prod_{i=0}^{\ell} \left( \prod_{n= j_i +1}^{j_{i+1}} (1+(-1)^{v+w+\ell-i} A_n) \right),
  \end{align*}
  where in the innermost product we have \(j_0 = 0 \) and \(j_{\ell+1} = k \). 
\end{prop}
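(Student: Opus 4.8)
The plan is to collapse the two sums defining $f_k$ and $g_k$ into a single product of $2\times 2$ transfer matrices, and then to expand that product as a weighted sum over sign patterns. First I would record the two recursions that govern $f_k,g_k$. Applying the first transformation formula of Theorem~\ref{thm:varphi} at $t=1$, i.e. $\varphi_{k+1}(\rho\oplus(v)) = v(k+1) + (-1)^v\varphi_k(\rho)$, and splitting the defining sums according to the last coordinate $\rho_{k+1}\in\{0,1\}$, one gets
\[
  f_{k+1}(\tau) = f_k(\tau) + A_{k+1}\,g_k(\tau), \qquad g_{k+1}(\tau) = \tau\bigl(A_{k+1}f_k(\tau) + g_k(\tau)\bigr),
\]
with $f_0 = 1$, $g_0 = \tau$. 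Equivalently, $\begin{pmatrix}f_k\\ g_k\end{pmatrix} = M_k M_{k-1}\cdots M_1\begin{pmatrix}1\\ \tau\end{pmatrix}$ with $M_i = \begin{pmatrix}1 & A_i\\ \tau A_i & \tau\end{pmatrix}$.

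Second, I would factor each transfer matrix as $M_i = D\,N_i$, where $D = \begin{pmatrix}1&0\\0&\tau\end{pmatrix}$ and $N_i = \begin{pmatrix}1 & A_i\\ A_i & 1\end{pmatrix}$. The gain is that every $N_i$ is diagonal in the fixed basis $(1,1),(1,-1)$, with eigenvalues $1\pm A_i$, while in that same basis $D$ becomes $\tfrac12\begin{pmatrix}1+\tau & 1-\tau\\ 1-\tau & 1+\tau\end{pmatrix}$. Since $f_k + (-1)^{v+w}g_k$ is the contraction of $\begin{pmatrix}f_k\\ g_k\end{pmatrix}$ against the covector $\bigl(1,(-1)^{v+w}\bigr)$, and since $\begin{pmatrix}1\\ \tau\end{pmatrix}=D\begin{pmatrix}1\\1\end{pmatrix}$, inserting $M_i=DN_i$ turns the target quantity into a scalar of the form
\[
  \bigl(1,(-1)^{v+w}\bigr)\,D N_k D N_{k-1}\cdots D N_1 D\begin{pmatrix}1\\1\end{pmatrix},
\]
an alternating product of $k+1$ factors $D$ and $k$ diagonal factors $N_i$, closed off at both ends.

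Third, I would expand this product in the eigenbasis as a sum over sign sequences (``paths''). Each $N_i$ preserves the current sign and contributes $1 + (\pm1)A_i$; each $D$ either preserves the sign, with weight $\tfrac{1+\tau}{2}$, or flips it, with weight $\tfrac{1-\tau}{2}$. Grouping the paths by the number $\ell$ of sign flips produces the prefactor $\tfrac{1}{2^k}(1+\tau)^{k-\ell}(1-\tau)^{\ell}$; the positions $j_1<\cdots<j_\ell$ of the flips cut $\{1,\dots,k\}$ into the $\ell+1$ blocks of the statement, the sign is constant on the $i$-th block, yielding exactly the factor $1 + (-1)^{v+w+\ell-i}A_n$, and summing separately over whether the $D$-transition adjacent to the fixed endpoint flips collapses the two remaining options into the single scalar factor $1 + (-1)^{v+w+\ell}\tau$. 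The dependence on $(-1)^{v+w}$ enters precisely through the boundary covector $\bigl(1,(-1)^{v+w}\bigr)$.

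I expect the main obstacle to be the bookkeeping of signs and block boundaries. One must read the matrix product in the orientation for which the block containing $A_k$ carries the sign $(-1)^{v+w}$ and the blocks alternate back towards $A_1$; a priori the naive right-to-left expansion lands on an equal but differently indexed sum, so a reflection $j_p\mapsto k+1-j_{\ell+1-p}$ of the summation indices may be required to match the stated form verbatim. One must also check that the two boundary contributions merge into the single factor $1+(-1)^{v+w+\ell}\tau$ rather than remaining as two separate terms. I would fix all orientation conventions on the case $k=1$, where the identity reduces to $1+(-1)^{v+w}\tau^2 + \bigl(1+(-1)^{v+w}\bigr)\tau A_1$, before writing out the general argument.
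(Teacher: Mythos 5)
Your proposal is correct and is essentially the paper's own proof: the same recurrence $f_k=f_{k-1}+A_k g_{k-1}$, $g_k=\tau(g_{k-1}+A_k f_{k-1})$ obtained from Theorem \ref{thm:varphi}(1), the same factorization of the transfer matrix into $\mathrm{diag}(1,\tau)$ times the symmetric matrix with off-diagonal entries $A_j$, the same Cayley-transform diagonalization, and the same expansion of the product of mixing factors $\tfrac{1+\tau}{2}\bI+\tfrac{1-\tau}{2}\bJ$ over sign-flip positions $j_1<\cdots<j_\ell$. The only difference is organizational: you contract directly against the covector $(1,(-1)^{v+w})$ and absorb the initial vector as $D\,(1,1)^{T}$, so the boundary factor $1+(-1)^{v+w+\ell}\tau$ and the block signs appear at once, whereas the paper expands the ordered product into diagonal blocks $S(\mathbf{j}),\bar S(\mathbf{j})$ times $\bJ^{\ell}$, computes $f_k$ and $g_k$ componentwise (splitting even and odd $\ell$), and recombines them at the end --- the underlying computation is identical.
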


\begin{proof}
  By property (1) of Theorem \ref{thm:varphi}, we obtain the system of simultaneous recurrence relations
  \begin{gather}
    \label{eq:recurrfg}
    f_k(\tau) = f_{k-1}(\tau) + A_k g_{k-1}(\tau), \qquad g_k(\tau) = \tau \left( g_{k-1}(\tau) + A_k f_{k-1}(\tau) \right) 
  \end{gather}
  with initial conditions \(f_0(\tau) = 1 \) and \(g_{0}(\tau) = \tau \).
  The recurrence \eqref{eq:recurrfg} can be written as 
  \begin{align*}
    \begin{bmatrix}f_k\\
      g_k
    \end{bmatrix}
    = \overleftarrow{\prod}_{j=1}^{k}
    \begin{bmatrix}
      1& A_j\\
      \tau A_j& \tau
    \end{bmatrix}
             \begin{bmatrix}f_0\\
               g_0
             \end{bmatrix},
  \end{align*}
  where \(f_0(\tau) = 1 \) and \(g_0(\tau) = \tau \). Notice that 
  \[
    \begin{bmatrix}
      1& A_j\\
      \tau A_j& \tau
    \end{bmatrix}
    =\begin{bmatrix}1&0\\
      0&\tau
    \end{bmatrix}
    \begin{bmatrix}1&A_j\\
      A_j&1
    \end{bmatrix}.
  \]

  %%%%%%%%%%%%%%%%%%%%%%%%%%%%%%%%%%%%%%%%%%%%%%%%%%%%%%%%%%%%%%%%%% 
  %%%%%%%%%%%%%%%%%%%%%%%%%%%%%%%%%%%%%%%%%%%%%%%%%%%%%%%%%%%%%%%%%% 
  Actually, we have
  \begin{align} \label{eq:recfg}
    \begin{bmatrix}f_k\\g_k
    \end{bmatrix}
    = C \overleftarrow{\prod}_{j=1}^{k}[a\bI+b\bJ]D(A_j) C
    \begin{bmatrix}f_0\\g_0
    \end{bmatrix},
  \end{align}
  where we $a=\frac{1+ \tau}2, b=\frac{1-\tau}2$, \(C\) is the Cayley transform
  \[
    C= \frac1{\sqrt2}\begin{bmatrix}1&1\\
      1&-1
    \end{bmatrix},
  \]
  and $D(x)$ is a two-by-two matrix-valued function given by
  \[
    D(x)=
    \begin{bmatrix}1+x&0\\
      0&1-x
    \end{bmatrix}.
  \]
  Indeed, \eqref{eq:recfg} follows immediately from the facts
  \[
    C
    \begin{bmatrix}1&A_j\\
      A_j&1
    \end{bmatrix}C = D(A_j), \quad
    C
    \begin{bmatrix}1&0\\
      0& \tau
    \end{bmatrix}C = \frac12 \Big((1+\tau)\bI+(1-\tau)\bJ\Big).
  \]
  Obviously, we have 
  \[
    \overleftarrow{\prod}_{j=1}^{k}[a \bI+b \bJ]D(A_j)
    =\sum_{\ell=0}^k a^{k-\ell}b^\ell \sum_{ \substack{\delta \in \btZ{k} \\ |\delta| = \ell } } \overleftarrow{\prod}_{j=1}^{k} D^{\delta_j}(A_j),
  \]
  where $D^{\delta_j}(A_j):=D(A_j)$ when $\delta_j=0$ and $D^{\delta_j}(A_j) := \bJ D(A_j)$ when $\delta_j=1$.

  For \(\delta \in \btZ{k}  \) with \(|\delta| = \ell \), define $j_i$ by enumerating 
  \[
    \{ j \, |\, \delta_j=1 \} = \{ (1\leq) j_1 < j_2 < \ldots < j_\ell (\leq k)\},
  \]
  then
  \begin{align*}
    \overleftarrow{\prod}_{j=1}^{k} D^{\delta_j}(A_j)
    = & D(A_k)\cdots \bJ D(A_{j_\ell})\cdots D(A_{j_{\ell-1}+1})\bJ D(A_{j_{\ell-1}})\cdots \bJ D(A_{j_1})\cdots D(A_1)\\
    = & D(A_k)\cdots D(-A_{j_\ell})\cdots D(-A_{j_{\ell-1}+1})D((-1)^2A_{j_{\ell-1}})\cdots D((-1)^\ell A_{j_1})
        \cdots D((-1)^\ell A_1)\bJ^\ell.
  \end{align*}
  It follows that
  \begin{align*}
    &\overleftarrow{\prod}_{j=1}^{k}[a \bI+b \bJ]D(A_j)\\
    =&\sum_{\ell=0}^k a^{k-\ell}b^\ell \sum_{j_1 < j_2 < \ldots < j_\ell}
       D(A_k)\cdots D(-A_{j_\ell})\cdots D(-A_{j_{\ell-1}+1})D((-1)^2A_{j_{\ell-1}})\cdots D((-1)^\ell A_{j_1})
       \cdots D((-1)^\ell A_1)\\
    =& \sum_{\ell=0}^k a^{k-\ell}b^\ell \sum_{j_1 < j_2 < \ldots < j_\ell}
       \prod_{i=j_\ell+1}^k D(A_i)\cdots \prod_{i={j_{\ell-1}+1}}^{j_\ell} D(-A_i)\cdots \prod_{i=1}^{j_1} D((-1)^\ell A_i) \bJ^\ell.
  \end{align*}

  Define, for a vector \( \mathbf{j} =\{j_1,j_2,\cdots,j_\ell \} \in \Z_{\geq 1}^{\ell} \) with \(1 \leq j_1 < j_2< \cdots < j_\ell \leq k \), the expressions
  \[
    S(\mathbf{j})  = \prod_{i=0}^{\ell} \left( \prod_{n= j_i +1}^{j_{i+1}} (1+(-1)^{\ell-i} A_n) \right), \qquad \bar{S}(\mathbf{j})  = \prod_{i=0}^{\ell} \left( \prod_{n= j_i +1}^{j_{i+1}} (1-(-1)^{\ell-i} A_n) \right),
  \]
  where \(j_0 := 0 \) and \(j_{\ell+1} := k \). Then, for \(\mathbf{j} \) as above we can write
  \[
    \prod_{i=j_\ell+1}^k D(A_i)\cdots \prod_{i={j_{\ell-1}+1}}^{j_\ell} D(-A_i)\cdots \prod_{i=1}^{j_1} D((-1)^\ell A_i)  =
    \begin{pmatrix}
      S(\mathbf{j}) & 0\\
      0 & \bar{S}(\mathbf{j}) 
    \end{pmatrix}.
  \]
  Noticing that the factor \(J^{\ell}\) depends only on the parity of \(\ell\), we obtain
  \begin{align*}
    f_k(\tau) =& \frac{1}{2^{k+1}}\sum_{\ell= 0}^{[k/2]} (1+\tau)^{k-2\ell}(1-\tau)^{2\ell} \sum_{j_1 < j_2 < \ldots < j_{2\ell}} \left( (1+\tau)S(\mathbf{j}) + (1-\tau) \bar{S}(\mathbf{j}) \right) \\
            &+ \frac{1}{2^{k+1}}\sum_{\ell= 1}^{[(k+1)/2]} (1+\tau)^{k-(2\ell-1)}(1-\tau)^{2\ell-1} \sum_{j_1 < j_2 < \ldots < j_{2\ell-1}} \left( (1-\tau)S(\mathbf{j}) + (1+t) \bar{S}(\mathbf{j}) \right),
  \end{align*}
  and
  \begin{align*}
    g_k(\tau) =& \frac{1}{2^{k+1}}\sum_{\ell= 0}^{[k/2]} (1+\tau)^{k-2\ell}(1-\tau)^{2\ell} \sum_{j_1 < j_2 < \ldots < j_{2\ell}} \left( (1+t)S(\mathbf{j}) - (1-\tau) \bar{S}(\mathbf{j}) \right) \\
            &+ \frac{1}{2^{k+1}}\sum_{\ell= 1}^{[(k+1)/2]} (1+\tau)^{k-(2\ell-1)}(1-\tau)^{2\ell-1} \sum_{j_1 < j_2 < \ldots < j_{2\ell-1}} \left( (1-\tau )S(\mathbf{j}) - (1+\tau) \bar{S}(\mathbf{j}) \right).
  \end{align*}
Hence the results follows.
\end{proof}

We remark here that for \(1 \leq \ell \leq k\), each set of \(\ell \) numbers \(j_i\) ( \(1 \leq j_1 < j_2 < \cdots < j_\ell \leq k \)) determine a unique vector \( \rho \in \btZ{k}\) such that \(|\rho| = \ell \) and where \(j_i \) is the position of the \(i\)-th one in \(\rho\). Likewise, each vector \( \rho \in \btZ{k}\) determines a unique set of \(\ell = |\rho| \) integers such that \(1 \leq j_1 < j_2 < \cdots < j_\ell \leq k \) by setting \(j_i \) as the position of the \(i\)-th one in \(\rho\).

By Proposition \ref{prop:sumfg} applied to \(A_i^{(\bm{r})}\), \eqref{eq:expsumB} is equal to
\begin{align} \label{eq:expsumC}
  &\frac{1}{2^{k-1} u^{(k-2) \Delta}}  \sum_{\ell=0}^{k-3} (1+u^{2\Delta})^{k-\ell} (1-u^{2\Delta})^{\ell} (1 + (-1)^{v+ w+\ell} u^{2\Delta}) \nonumber \\
  & \qquad \qquad \qquad \times \sum_{j_1 < j_2 < \ldots < j_{\ell}} \sum_{\br \in V_{\bZ}^{(k-3)}} T^{(\br)}(\mathbf{a}^{(\mu)}) \prod_{i=0}^{\ell} \left( \prod_{n= j_i +1}^{j_{i+1}} (1+(-1)^{w+\ell-i} \tanh(a_n^{(\mu)})^{1-2 r_{0 n}} ) \right).
\end{align}
Next we deal with the innermost sum over the set \(V_0^{(k-3)}\).
Let \(x \in \Z\),  \(\rho = (\rho_1,\rho_2, \cdots,\rho_{k-3})   \in \btZ{k-3} \) with \(|\rho|=\ell \), and \(1 \leq j_1 < j_2 < \cdots < j_{\ell} \) be the position of the ones in \(\rho\). We have
\begin{align*}
  &\sum_{\br \in V_{\bZ}^{(k-3)}} T^{(\br)}(\mathbf{a}^{(\mu)}) \prod_{i=0}^{\ell} \left( \prod_{n= j_i +1}^{j_{i+1}} \left(1 +(-1)^{x+\ell-i} \tanh(a_n^{(\mu)})^{1-2 r_{0 n}} \right) \right) \\
  & = \sum_{\br \in V_{\bZ}^{(k-3)}} T^{(\br)}(\mathbf{a}^{(\mu)}) \prod_{i=1}^{k-3} \left(1 + (-1)^{v_0 + v_i} \tanh(a_i^{(\mu)})^{1-2 r_{0 i}} \right).  
\end{align*}
with \(v_0 = x\) and  \(v_i = v_i (\rho) = \sum_{j=i}^k \rho_j\), for \(i = 1,2,\ldots,k-3\).

\begin{lem}
  \label{lem:sumv0}
  Let \(v_i \in \C \) for \(i \in \{0,1,2,\cdots,k-3\}\). We have
  \begin{equation}
    \label{eq:lemsumV0}
    \sum_{\br \in V^{(k-3)}_{\bZ}} T^{(\br)}(\mathbf{a}^{(\mu)}) \prod_{i=1}^{k-3} \left(1 + (-1)^{v_0 + v_i} \tanh(a_i^{(\mu)})^{1-2 r_{0 i}} \right) = \exp\left( \sum_{m=0}^{k-4}  \sum_{j=1}^{k-3-m} (-1)^{v_m + v_{m+j}} a_{m, m+j}^{(\mu)}\right).
  \end{equation}
\end{lem}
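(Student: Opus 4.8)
The plan is to use the graph-theoretic bijection $p_2$ of Lemma \ref{lem:v0}(1) to reindex the sum over the even graphs $V_{\bZ}^{(k-3)}$ by their proper (non-loop) edges alone; after that reindexing the constrained sum decouples into a product over independent edges. Write $n:=k-3$, suppress the superscript $\mu$, set $\epsilon_i:=(-1)^{v_i}$ (so that $\epsilon_i^2=1$), and denote by $a_{0,i}$ the loop coefficient (written $a_i$ in the statement) and by $a_{i,j}$, $1\le i<j\le n$, the proper-edge coefficients. I split $T^{(\br)}(\mathbf{a})$ into its loop part $\prod_{i=1}^n\cosh(a_{0,i})^{1-r_{0,i}}\sinh(a_{0,i})^{r_{0,i}}$ and its edge part $\prod_{i<j}\cosh(a_{i,j})^{1-r_{i,j}}\sinh(a_{i,j})^{r_{i,j}}$.

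First I absorb each explicit factor $1+\epsilon_0\epsilon_i\tanh(a_{0,i})^{1-2r_{0,i}}$ into the matching loop factor of $T^{(\br)}$. A one-line case check on $r_{0,i}\in\{0,1\}$, using $\cosh\pm\sinh=e^{\pm}$, shows the $i$-th loop slot collapses to $(\epsilon_0\epsilon_i)^{r_{0,i}}\,e^{\epsilon_0\epsilon_i a_{0,i}}$ in both cases. Since the exponential does not depend on $\br$, I extract $\exp\!\big(\sum_{i=1}^n\epsilon_0\epsilon_i a_{0,i}\big)$ — which is exactly the $m=0$ part of the target exponent — leaving
\[
  \exp\Big(\sum_{i=1}^n\epsilon_0\epsilon_i a_{0,i}\Big)\sum_{\br\in V_{\bZ}^{(n)}}\Big(\prod_{i=1}^n(\epsilon_0\epsilon_i)^{r_{0,i}}\Big)\prod_{i<j}\cosh(a_{i,j})^{1-r_{i,j}}\sinh(a_{i,j})^{r_{i,j}}.
\]

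Next I apply Lemma \ref{lem:v0}(1): since $p_2$ is a bijection $V_{\bZ}^{(n)}\to\btZ{n(n-1)/2}$, I may sum freely over the proper-edge configuration and recover the loops from the even-degree condition, namely $r_{0,i}\equiv d_i\pmod 2$, where $d_i$ is the number of proper edges incident to vertex $i$. Because $\epsilon_i^2=1$, we have $(\epsilon_0\epsilon_i)^{r_{0,i}}=(\epsilon_0\epsilon_i)^{d_i}$, and a handshake-type rearrangement — each proper edge $(i,j)$ contributes to both $d_i$ and $d_j$, together with $\epsilon_0^2=1$ — converts $\prod_i(\epsilon_0\epsilon_i)^{d_i}$ into the edge-indexed product $\prod_{i<j}(\epsilon_i\epsilon_j)^{r_{i,j}}$. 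The summand is now fully decoupled across edges, so the sum factorizes; each edge yields $\cosh(a_{i,j})+\epsilon_i\epsilon_j\sinh(a_{i,j})=e^{\epsilon_i\epsilon_j a_{i,j}}$, giving $\exp\!\big(\sum_{i<j}\epsilon_i\epsilon_j a_{i,j}\big)$. Combining with the loop exponential and matching the index set $\{(i,j):0\le i<j\le n\}$ with that of the statement reproduces the right-hand side exactly.

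The main obstacle is precisely this even-degree constraint, which couples loops to edges and so forbids a direct factorization of the sum over $V_{\bZ}^{(n)}$. The decisive observation is that every factor carrying the graph dependence is a $\pm1$ sign, so the loop indicators $r_{0,i}$ may be replaced by the integer edge-degrees $d_i$ in the exponents; only after this replacement does the handshake identity split the signs edgewise and make the bijection $p_2$ immediately applicable. The hyperbolic simplifications and the final reindexing are then routine.
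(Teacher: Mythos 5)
Your proof is correct, and it takes a genuinely different route from the paper's. The paper proves this lemma by induction on \(k\): after the same initial hyperbolic collapse of the loop slots (each factor \(\cosh(a_{0,i})^{1-r_{0,i}}\sinh(a_{0,i})^{r_{0,i}}\bigl(1+(-1)^{v_0+v_i}\tanh(a_{0,i})^{1-2r_{0,i}}\bigr)\) becomes a sign times \(e^{(-1)^{v_0+v_i}a_{0,i}}\)), it partitions \(V_{\bZ}^{(k-3)}\) into the fibers \(S(\bv)\) of the projection \(q_2\), and invokes parts (2)--(5) of Lemma \ref{lem:v0} to show that each fiber sum reproduces a copy of the same identity one size down, with \(v_1\) playing the role of \(v_0\); the induction then closes. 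You instead finish in one non-inductive stroke: using only Lemma \ref{lem:v0}(1), you reindex the sum by the unconstrained proper-edge configurations in \(\btZ{(k-3)(k-4)/2}\), recover the loops from the even-degree condition \(r_{0,i}\equiv d_i \pmod 2\), and observe that since every graph-dependent factor is a sign, the loop indicators may be replaced by the degrees; the handshake identity (\(\sum_i d_i\) is twice the number of proper edges, so the \(\epsilon_0\)-powers cancel, and each edge \((i,j)\) feeds one \(\epsilon_i\) and one \(\epsilon_j\)) then pushes all signs onto the edges, after which the sum factorizes edgewise into \(\prod_{i<j}\bigl(\cosh(a_{i,j})+\epsilon_i\epsilon_j\sinh(a_{i,j})\bigr)=\exp\bigl(\sum_{i<j}\epsilon_i\epsilon_j a_{i,j}\bigr)\). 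Your argument is shorter, bypasses the heavier structure results (3)--(5) of Lemma \ref{lem:v0} entirely (your handshake step even re-derives (2) on the fly), and exposes the mechanism: the even-degree constraint is exactly what permits trading loop signs for edge signs, and once that trade is made nothing couples the edges. What the paper's induction buys is coherence with the surrounding machinery --- the projections \(p_1,p_2,q_1,q_2\) and the fiber decomposition are developed and reused elsewhere in \S\ref{sec:graph-theoretical} --- but as a self-contained proof of this particular identity, your direct factorization is the cleaner one.
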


\begin{proof}
  The proof is by induction.  For simplicity, in this proof we drop the dependency of $\mu$ from the notation of the
  coefficients \(a_{i,j}^{(\mu)}\). It is immediate to verify the result for the cases \(k-3=1,2\).
  For \(\br \in V_0^{(k-3)}\), the single summand of \eqref{eq:lemsumV0} corresponding to
  \(\br\) is 
  \begin{align*}
   \prod_{i=1}^{k-3} \left( \cosh(a_i)^{1-r_{0 i}} \sinh(a_i)^{r_{0 n}} \right) \left(1 + (-1)^{v_0 + v_i} \tanh(a_i)^{1-2 r_{0 i}} \right) T^{(p_2(\br))}(p_2(\mathbf{a})),
  \end{align*}
  it is not difficult to see that it can be written as
  \begin{align*}
    & (-1)^{|p_1(\br)|v_0 + \sum_{i=1}^{k-3} \br_{0 i} v_i} \prod_{i=1}^{k-3}  \cosh(a_i) \left(1 + (-1)^{v_0 + v_i} \tanh(a_i) \right) T^{(p_2(\br))}(p_2(\mathbf{a}))  \\
    & = (-1)^{\sum_{i=1}^{k-3} \br_{0 i} v_i} \prod_{i=1}^{k-3}  \cosh(a_i) \left(1 + (-1)^{v_0 + v_i} \tanh(a_i) \right) T^{(p_2(\br))}(p_2(\mathbf{a})),
  \end{align*}
  since \(|p_1(\br)| \equiv 0 \pmod{2}\). Next, observe that
  \[
    \prod_{i=1}^{k-3}  \cosh(a_i) \left(1 + (-1)^{v_0 + v_i} \tanh(a_n) \right)  = \exp\left( \sum_{i=1}^{k-3} (-1)^{v_0 + v_i} a_{0 i} \right),
  \]
  thus, the expression above is given by 
  \[
    (-1)^{\sum_{i=1}^{k-3} \br_{0 i} v_i} \exp\left( \sum_{i=1}^{k-3} (-1)^{v_0 + v_i} a_{0 i}  \right) T^{(p_2(\br))}(p_2(\mathbf{a})).
  \]

  Next, for \(\bv \in \btZ{(k-4)(k-5)/2}\), we define the set \(S(\bv) \subset V_0^{(k-3)}\) as  \( S(\bv) =  \{ \sigma \in V_0^{(k-3)} \, : \, q_2(\sigma) = v \}\).
  By Lemma \ref{lem:v0}(4), we have  \(|S(\bv)| = 2^{k-4}\). For \(\bv \in \btZ{(k-4)(k-5)/2}\), we have
  \begin{align*}
    &\sum_{\br \in S(\bv)} T^{(\br)}(\bm{a}) \prod_{i=1}^{k-3} \left(1 + (-1)^{v_0 + v_i} \tanh(a_i)^{1-2 r_{0 i}} \right)  \\
    &\quad = \exp\left( \sum_{i=1}^{k-3} (-1)^{v_0 + v_i} a_{0 i}  \right) T^{(\bv)}(q_2(\bm{a})) \sum_{\br \in S(\bv)} (-1)^{\sum_{i=1}^{k-3} \br_{0 i} v_i} T^{(q_1(\br))}(q_1(\bm{a}),
  \end{align*}
  Let \(\bar{\br} \in V_0^{(k-3)}\) be the unique element such that \(p_1(\bar{\br})= \bZ_{k-3} \) and \(q_2(\bar{\br}) = \bv \). Then, by the proof of Lemma \ref{lem:v0}(5), we can write
  \(\br \in S(\bv) \) as 
  \[
    \br = \bar{\br} + \hat{\br},
  \]
  where \(p_1(\hat{\br}) = p_1(\br) \) and \(q_2(\hat{\br}) = \bZ_{(k-4)(k-5)/2}\). Moreover, also by Lemma \ref{lem:v0}(5), we have
    \[
      \hat{\br} = (\br_{0,1},\br_{0,2},\cdots, \br_{0,k-3},\br_{0,2},\cdots,\br_{0,k-3}) \oplus \bZ_{(k-4)(k-5)/2}.
    \]
  Therefore,  by \ref{lem:v0}(4) we have
  \begin{align*}
    \sum_{\br \in S(\bv)} (-1)^{\sum_{i=1}^{k-3} \br_{0 i} v_i} T^{(q_1(\br))}(q_1(\bm{a})) &= \sum_{\substack{ \br \in \btZ{k-3} \\ |\br|\equiv 0 \pmod{2}}} (-1)^{\sum_{i=1}^{k-3} \br_{0 i} v_i}
   T^{ ((\br_{0,2},\br_{0,3},\cdots,\br_{0,k-3}) + q_1(\bar{\br}))}(q_1(\bm{a})) \\
   &= \sum_{\substack{ \br \in \btZ{k-3} \\ |\br|\equiv 0 \pmod{2}}} (-1)^{\sum_{i=1}^{k-3} \br_{0 i} v_i}  \prod_{i=2}^{k-3} (\tanh(a_{1,i}^{1-2 \bar{r}_{1 i}}))^{\br_{0 i}} T^{(q_1(\bar{\br}))}(q_1(\bm{a})) \\
   &= \prod_{i=2}^{k-3}  \cosh(a_{1 i})^{1-\bar{r}_{1 i}} \sinh(a_{1 i})^{\bar{r}_{1 i}} \left(1 + (-1)^{v_1 + v_i} \tanh(a_{1 i})\right), 
  \end{align*}
  the last equality holding since \(|\br| \equiv 0 \pmod{2} \).
  
  The sum in \eqref{eq:lemsumV0} is given by the sums \(S(\bv)\) over all vectors \( \bv \in \btZ{(k-4)(k-5)/2}\). Namely, we have, by Lemma \ref{lem:v0}(3),
  that \eqref{eq:lemsumV0} is given by
  \begin{align*}
    &\exp\left( \sum_{i=1}^{k-3} (-1)^{v_0 + v_i} a_{0 i}  \right) \sum_{\br \in V_0^{(k-4)} }  \prod_{i=2}^{k-3} \cosh(a_{1 i})^{1-r_{1 i}} \sinh(a_{1 i})^{r_{1 i}} \left(1 + (-1)^{v_1 + v_i} \tanh(a_{1 i})\right) T^{(p_2(\br))}(q_2(\bm{a})) \\
    &\quad = \exp\left( \sum_{i=1}^{k-3} (-1)^{v_0 + v_i} a_{0 i}  \right) \sum_{\br \in V_0^{(k-4)} } T^{(\br)}(p_2(\bm{a})) \prod_{i=2}^{k}  \left(1 + (-1)^{v_1 + v_i} \tanh(a_{1 i})\right) 
  \end{align*}
  where each element \(\br \in V_0^{(k-4)}\) has entries \(\br = (r_{1,2},r_{1,3}, \cdots,r_{1,k}, r_{2,3}, r_{2,4}, \cdots, r_{k-4,k-3}) \).  The result follows by induction.
\end{proof}

Next, using Lemma \ref{lem:sumv0} with \(x = w \) in \eqref{eq:expsumC} we see that the main limit in the expression of the heat kernel is given by
\begin{align} \label{eq:sumheat1}
  &\lim_{N \to \infty}\left( \frac{1-u^{\frac{2\Delta}{N}}}{2 u^{\frac{\Delta}{N}}} \right) \sum_{\mu=0}^1 \sum_{k \geq 3}^N   \left( \frac{1+u^{\frac{2\Delta}{N}}}{2 u^{\frac{\Delta}{N}}} \right)^{N-3}  \nonumber  \\
  &\times  \frac12 J_\mu^{(k,N)}(x,y,u^{\frac1N})
    \Bigg\{
    \sum_{\rho \in \btZ{k-3}} 
    \begin{bmatrix}
      (-1)^{|\rho|+1} u^{\frac{\Delta}{N}} &  (-1)^{|\rho|+\mu} u^{\frac{\Delta}{N}} \\
       (-1)^{\mu+1}u^{-\frac{\Delta}{N}} &  u^{-\frac{\Delta}{N}}
    \end{bmatrix}
    \left(\frac{1-u^{\frac{2\Delta}N}}{1+u^{\frac{2 \Delta}N}}\right)^{|\rho|}  \nonumber  \\
  & \qquad \qquad \qquad
    \times \exp\left(a_0^{(\mu)}(u^{\frac1N}) + \sum_{m=0}^{k-4}  \sum_{j=1}^{k-3-m} (-1)^{(|\rho|+\mu-1)\delta_0(m) + \sum_{i=m}^{m+j-1} \rho_i} a_{m, m+j}^{(\mu)}(u^{\frac1N}) \right)  \Bigg\} ,
\end{align}
where we wrote the sum appearing at the right hand side of Lemma \ref{lem:sumv0} in terms of the vector \(\rho\) and
where \(\delta_y(x)\) is the Kronecker delta function.

Let us further simplify the factors appearing as arguments  in the exponential function in the above limit.
By Lemma \ref{lem:fourier4} we have \(a_0^{(\mu)}(u^{\frac1N}) = O\left(\frac1N \right) \). For \(\rho \in \btZ{k-3}\), we have
\begin{align*} 
  (-1)^{|\rho|}&\sum_{j=1}^{k-3} (-1)^{\sum_{i=1}^{j-1} \rho_i} a_{0, j}^{(\mu)}(u^{\frac1N)})  =   \\
  & \frac{ (1-u^{\frac1N})}{1-u^2} \Bigg[  \sqrt{2}g  x \sum_{j=1}^{k-3} (-1)^{\sum_{i=j}^{k-3} \rho_i} \left( u^{\frac{j}N} - u^{2-\frac{j+1}N}\right)  +  \sqrt{2}g y u \sum_{j=1}^{k-3} (-1)^{\sum_{i=j}^{k-3} \rho_i} \left( u^{-\frac{j+1}N} - u^{\frac{j}N}\right) \\
  &+(-1)^\mu \frac{2g^2}{(1+u^{\frac1N})} u^{\frac{k}N}(1-u^{1-\frac{k}N})(1-u^{1-\frac{k-1}N})
      \sum_{j=1}^{k-3} (-1)^{\sum_{i=j}^{k-3} \rho_i} \left( u^{-\frac{j+1}N} - u^{\frac{j}N}\right) \Bigg] + O(\tfrac{1}N).
\end{align*}
Using identity (4) of Theorem \ref{thm:varphi}, the above is equal to
{\footnotesize
\begin{align*} %\label{eq:sumaij0}
& \frac{ u^{\frac1N}}{1-u^{2}} \left(  \sqrt{2} g \left[ x  (1-u^{\frac{k-3}{N}})(1-u^{2-\frac{k-1}{N}})
                                   -  u y (1-u^{\frac{k-3}{N}})(1-u^{-\frac{k-1}{N}}) \right]
                                   - (-1)^\mu \frac{2g^2 u^{\frac{k}N}(1-u^{1-\frac{k}N})(1-u^{1-\frac{k-1}N})(1-u^{\frac{k-3}{N}})(1-u^{-\frac{k-1}{N}})}{(1+u^{\frac1N})}  \right)
  \nonumber  \\
    &\quad + \frac{2(1-u^{\frac1N})}{1-u^2}  \Bigg( \sqrt{2} g \left[ x \left( u^{2-\frac2N} \varphi_{k-3}(\rho;u^{-\frac1N}) - u^{\frac1N} \varphi_{k-3}(\rho;u^{\frac1N}) \right) +
                 u y \left( u^{\frac1N} \varphi_{k-3}(\rho;u^{\frac1N}) - u^{-\frac2N}\varphi_{k-3}(\rho;u^{-\frac1N}) \right)\right] \nonumber \\
               &\qquad \qquad \qquad \qquad + (-1)^\mu \frac{2g^2 u^{\frac{k}N}(1-u^{1-\frac{k}N})(1-u^{1-\frac{k-1}N})}{(1+u^{\frac1N})}\left( u^{\frac1N} \varphi_{k-3}(\rho;u^{\frac1N}) - u^{-\frac2N}\varphi_{k-3}(\rho;u^{-\frac1N}) \right) \Bigg)  + O(\tfrac{1}N) ,
\end{align*}}

On the other hand, the sum of the Fourier coefficients \(a_{i,j}^{(\mu)}(u^{\frac1N})  \) with \(1\leq i < j \) is given by
\begin{gather*}
  \sum_{m=1}^{k-4}  \sum_{j=1}^{k-3-m} (-1)^{\sum_{i=m}^{m+j-1} \rho_i} a_{m, m+j}^{(\mu)}(u^{\frac1N}) = \sum_{m=1}^{k-4}  \sum_{j=m}^{k-4} (-1)^{\sum_{i=m}^{j} \rho_i} a_{m,j+1}^{(\mu)} (u^{\frac1N}) \nonumber \\
  \qquad \qquad = \frac{g^2 (1-u^{\frac1N})^2 }{1-u^{2}} \sum_{m=1}^{k-4} (u^{-\frac{m}{N}}-u^{\frac{m +1}{N}})  \sum_{j=m}^{k-4} (u^{\frac{j}{N}}-u^{2 -\frac{j+3}{N}}) (-1)^{\sum_{i=m}^{j} \rho_i}  \nonumber \\
  \qquad \qquad = \frac{g^2 (1-u^{\frac1N})^2 }{1-u^{2}} \sum_{j=1}^{k-4} (u^{\frac{j}{N}}-u^{2 -\frac{j+3}{N}})  \sum_{m=1}^{j} (u^{-\frac{m}{N}}-u^{\frac{m +1}{N}})   (-1)^{\sum_{i=m}^{j} \rho_i},
\end{gather*}
and by using Theorem \ref{thm:varphi}(4) once more, we see that this is equal to
\begin{gather*} %\label{eq:sumaij}
  \frac{g^2}{1-u^2}\left( (1-u^{\frac1N})(1-u^{2-\frac1N}) (k-4)  -  u^{\frac1N}(1+u^{\frac2N}) (1-u^{\frac{k-4}N})(1-u^{2 - \frac{k}N}) + \frac{u^{\frac4N}(1-u^{\frac{2k-8}N})(1-u^{2 - \frac{2k-1}N})}{1+u^{\frac1N}}   \right) \nonumber \\
  + \frac{2g^2(1-u^{\frac1N})^2}{(1-u^2)} \sum_{j=1}^{k-4} (u^{\frac{j}{N}}-u^{2 -\frac{j+3}{N}}) \left( u^{\frac2N} \varphi_j(\pref_j(\rho);u^{\frac1N}) - u^{-\frac1N} \varphi_j(\pref_j(\rho);u^{-\frac1N}) \right),
\end{gather*}
where \(\pref_j(\rho)\) is the prefix of lenght \(j\) of \(\rho\). Concretely, if \(\rho \in \btZ{k}\) and \(j \leq k\), then  \(\pref_j(\rho) : \btZ{k} \to \btZ{j}\) is  the projection into \(\btZ{j}\) of the first \(j\) elements of \(\rho\).

Next, we define auxiliary functions $H_{\eta}^{(k,N)}$,$P^{(k,N)}(u^{\frac1N},\rho)$ and $Q^{(k,N)}_{\eta}(x,y,u^{\frac1N})$  containing the expressions appearing above. In \S \ref{sec:transf-summ-into}, we describe how to evaluatate certain sums containing $H_{\eta}^{(k,N)}$ and $P^{(k,N)}(u^{\frac1N},\rho)$, restricted to a fixed value \(|\rho|= \lambda \in \Z_{\geq 0}\), as iterated Riemann integrals.

\begin{dfn} \label{dfn:HPQ}
    Let \( \rho \in \btZ{k-4}\) and \(\eta \in \{0,1\} \). The functions $H_{\eta}^{(k,N)}(x,y,u^{\frac1N},\rho)$,$P^{(k,N)}(u^{\frac1N},\rho)$ and  $Q^{(k,N)}_{\eta}(x,y,u^{\frac1N})$ are given by
  \begin{align*}
    H_{\eta}^{(k,N)}(x,y,u^{\frac1N},\rho) &:= \exp\Bigg( (-1)^{\eta} \frac{2\sqrt{2} g(1-u^{\frac1N})}{1-u^2}\Big[ x \left( u^{2-\frac2N} \varphi_{k-3}(\rho;u^{-\frac1N}) - u^{\frac1N} \varphi_{k-3}(\rho;u^{\frac1N}) \right) \\
                               &\qquad \qquad \qquad \qquad \qquad \qquad+ u y \left( u^{\frac1N} \varphi_{k-3}(\rho;u^{\frac1N}) - u^{-\frac2N}\varphi_{k-3}(\rho;u^{-\frac1N}) \right)\Big] \Bigg),
  \end{align*}

  \begin{align*}
    P^{(k,N)}(u^{\frac1N},\rho) &:= \exp\left( \frac{2g^2(1-u^{\frac1N})^2}{(1-u^2)} \sum_{j=1}^{k-4} (u^{\frac{j}{N}}-u^{2 -\frac{j+3}{N}}) \left( u^{\frac2N} \varphi_j(\pref_j(\rho);u^{\frac1N}) - u^{-\frac1N} \varphi_j(\pref_j(\rho);u^{-\frac1N}) \right) \right) \\
    & \quad \times \exp\left( - \frac{2g^2 u^{\frac{k}N}(1-u^{\frac1N})}{(1-u^2)} (1-u^{1-\frac{k}N})^2\left( u^{\frac1N} \varphi_{k-3}(\rho;u^{\frac1N}) - u^{-\frac2N}\varphi_{k-3}(\rho;u^{-\frac1N}) \right) \right),
  \end{align*}

  \begin{align*}
    Q^{(k,N)}_{\eta}&(x,y,u^{\frac1N})  :=   \exp\left( \frac{(-1)^{1-\mu}\sqrt{2} g}{1-u^{2}} \left[ x (1-u^{\frac{k}{N}})(1-u^{2-\frac{k}{N}}) -  u y (1-u^{\frac{k}{N}})(1-u^{-\frac{k}{N}}) \right] \right)     \\
         &\, \times \exp\left( g^2\frac{t k}{N} + \frac{g^2}{1-u^2} \left(  -2(1-u^{\frac{k}N})(1-u^{2-\frac{k}N}) + \frac12(1-u^{\frac{2 k}{N}})(1-u^{2-\frac{2 k}{N}}) - (1-u^{1-\frac{k}N})^2(1-u^{\frac{k}N})^2 \right) \right).
  \end{align*}
\end{dfn}

We note here that in the limit \eqref{eq:sumheat1} the summands of the innermost sum consists of a sum of a radial
function on \(\rho\) multiplied by an exponential factor.Moreover, by Lemma \ref{lem:bij1}, the exponential factor
is also determined by fixing the norm $|\rho|$. This is an essential fact for the evaluation of the limit appearing
in the heat kernel of the QRM as a Riemann sum in \S \ref{sec:limit}.

\begin{rem}
  In \cite{TK1989}, in the context of a test system interacting with a heat bath consisting of harmonic oscillators,
  the Laplace transform of the reduced density matrix (given as a series of iterated integrals) is introduced in order to
  recover the evolution equation of the stochastic model in a generalized form. 
  It may be interesting to compare the discussion in \cite{TK1989} with our method for obtaining a reasonable evaluation of
  the sum involving $g_{k-1}(\bs) R_\mu(\bs)$ over $\bs$ using the Fourier transform on $\Z_2^k$ described in this section.
\end{rem}

\begin{rem}\label{q-Fourier}
  The computations using Fourier transform for the group \(\Z_2^k\) in this section can be interpreted directly in
  terms of the quantum Fourier transform for \(k\)-qubits (see e.g. \cite{BGS2004}). Thus, it may be interesting
  to describe the technique developed in this section in the setting of quantum computation (complex Hilbert spaces
  of dimension \(2^k\) and quantum Fourier transform) in place of the finite group setting
  (\(\Z_2^k\) and discrete Fourier transform).
\end{rem}

\begin{rem}\label{non-commutativity}
  The Hamiltonian \(\HRabi^{\e} \) of the asymmetric quantum Rabi model (AQRM) is given by
  \[
    \HRabi^{\e} := a^{\dagger}a + \Delta \sigma_z + g (a + a^{\dagger}) \sigma_x + \e \sigma_x,
  \]
  for \(\e \in \R\) (\cite{B2011PRL,bcbs2016} by the name of generalized quantum Rabi model). Here, we have again
  assumed the frequency $\omega$ of the bosonic mode to be $1$. This model is also important experimentally (see
  e.g. \cite{YS2018}). Even though the spectrum of \(\HRabi^{\e}\) is known to have degeneracies of multiplicity
  two for \(\e \in \frac12 \Z\), no \(\Z_2\)-symmetry (parity) has been observed
  in \(\HRabi^{\e}\) (and is hard to expect) for any nonzero value of \(\e \in \R\) (\cite{KRW2017}).
  
  Nonetheless, it is possible to follow the discussion in this section for the AQRM
  by using the Trotter-Kato formula with the operators $b^\dag b-g^2$ and $\Delta \sigma_z + \e \sigma_x$. The computation
  largely remains the same, with more complicated expressions, and, in particular, the finite groups
  $\Z_2^k\, (k=0,1,2,\ldots)$ appear as well.
  In fact, the appearance of the finite groups in the computation of this section is due to the
  decomposition of the matrix terms in \eqref{eq:matterms} (containing $\sigma_x, \sigma_z$) and is not related to the
  existence of a \(\Z_2\)-symmetry in the Hamiltonian. For instance,  in the case of the AQRM we may use 
  \[
    G_N^{(\e)}(u,\bs) :=  \frac{1}{2^N} \overrightarrow{\prod}_{i=1}^{N}[\bI+(-1)^{1-\bs(j)} \bJ ]u^{\Delta\sigma_z+ \e \sigma_x}.
  \]
  in place of $G_N(u,\bs)$ in \eqref{eq:non-com-term}.
  For more complicated or generalized models (i.e. Dicke model), other finite groups may appear in the computation
  depending on the definition of the simplified self-adjoint operator (the analog of $b^\dag b$), the
  non-commutativity among the terms in the objective Hamiltonian. 
  % and whether the Trotter-Kato product formula is used one or multiple times.
  
  We also remark that the choice of a pair of self-adjoint operators to be used in the Trotter-Kato product formula
  is non canonical. In fact, even in the case of the QRM, we have several possibilities. For instance, we can consider
  the pair of self-adjoint operators \(a^{\dagger}a + \Delta \sigma_z \) and \(g (a + a^{\dagger}) \sigma_x\). Since \(a^{\dagger}a\) and \(\Delta \sigma_z \) obviously
  commute, the heat kernel of \(a^{\dagger}a + \Delta \sigma_z \) can be obtained without difficulty. In this choice, the discussion using
  the Trotter-Kato product formula will be, however, highly complicated though the associated finite groups are still the
  family $\Z_2^k\,(k=0,1,2,\ldots)$.    
  Another option is to note that the Hamiltonian
  \[
    H_1 := a^{\dagger}a + \Delta \sigma_x + g (a + a^{\dagger}) \sigma_z,
  \]
  is unitarily equivalent to \(\HRabi\) (given by the finite dimensional Cayley transform
  $C= \frac1{\sqrt2}\begin{bmatrix}1&1\\ 1&-1 \end{bmatrix}$), thus by defining \(d := a + g \sigma_z \) we can consider the
  pair \(d^\dag d-g^2\) and \(\Delta \sigma_x\). In this cases, the discussion of this section should follow with appropriate
  changes in the computations.
\end{rem}

%%%%%%%%%%%%%%%%%%%%%%%%%%%%%%%%%%%%%%%%%%%%%%%%%%%%%%%%%%%%%%%%%%%%%%%%%%%%%%%%%%
\subsection{Riemann sums and residual terms}
\label{sec:transf-summ-into}
%%%%%%%%%%%%%%%%%%%%%%%%%%%%%%%%%%%%%%%%%%%%%%%%%%%%%%%%%%%%%%%%%%%%%%%%%%%%%%%%%%

In this subsection we compute the sums given in the previous section \S \ref{sec:spvl} by changing sums to integrals
with residual terms with explicitly given order. Concretely, for \(\lambda \geq 1\) we proceed to rewrite the sum
\begin{align}
  \label{eq:sumhp}
  \sum_{\substack{\rho \in \btZ{k-3} \\ |\rho| = \lambda  } }   H_{\eta}^{(k,N)}(x,y,u^{\frac1N},\rho) P^{(k,N)}(u^{\frac1N},\rho)
\end{align}
into an expression that can be interpreted as multiple iterated integrals over the $\lambda$-th simplex.

We start with a lemma used to deal with the sums including terms \(\varphi_j(\pref_j(\rho);s) \).
The reader may find useful to interpret the lemma in light of the bijection \eqref{eq:bij1} (cf. equation \eqref{eq:phidef} and
Lemma \ref{lem:bij1}).

\begin{lem}
  \label{lem:sumexp}
  For \(k \geq 1\) and \( 1 \leq \lambda \leq k \), for the indeterminates \(t,s \) we have
  \[
    \sum_{\substack{ \rho \in \btZ{k} \\ |\rho|=\lambda }}\exp\left( \sum_{j=1}^k t^j  \varphi_j(\pref_j(\rho);s) \right) =
    \sum_{i_1< i_2<\cdots< i_{\lambda}}^k \exp\left( \sum_{\substack{0 \leq \alpha<\beta \\ \beta - \alpha \equiv 1 \pmod{2}  }}^{\lambda} t^{i_\beta} s^{i_{\alpha}} [i_{\alpha+1}-i_{\alpha}]_s [i_{\beta+1}-i_{\beta}]_t \right),
  \]
  where \( i_0:=0 \), \( i_{\lambda+1} := k+1\).
  Moreover, for \(\rho  \in \btZ{k} \) with \(|\rho|=\lambda\), we have
  \[
     \exp\left( \sum_{j=1}^k t^j  \varphi_j(\pref_j(\rho);s) \right) = \exp\left( \sum_{\substack{0 \leq \alpha<\beta \\ \beta - \alpha \equiv 1 \pmod{2}  }}^{\lambda} t^{i_\beta} s^{i_{\alpha}} [i_{\alpha+1}-i_{\alpha}]_s [i_{\beta+1}-i_{\beta}]_t \right).
   \]
   where \( i_0:=0 \), \( i_{\lambda+1} := k+1\) and  \(i_j \), for \(1 \leq j \leq \lambda \), is the position of
   the $j$-th one in \(\rho\).
\end{lem}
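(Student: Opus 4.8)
The plan is to establish the second (``Moreover'') identity first, for a single fixed $\rho \in \btZ{k}$ with $|\rho| = \lambda$, and then to deduce the first identity by summing over all such $\rho$ via the bijection \eqref{eq:bij1}. Throughout, let $1 \le i_1 < i_2 < \cdots < i_\lambda \le k$ be the positions of the ones in $\rho$, with the conventions $i_0 := 0$ and $i_{\lambda+1} := k+1$.

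First I would reorganize the sum $\sum_{j=1}^k t^j \varphi_j(\pref_j(\rho); s)$ by grouping the index $j$ according to how many ones it sees. Writing $m(j)$ for the number of ones of $\rho$ among its first $j$ coordinates, $\pref_j(\rho)$ is the vector of $\btZ{j}$ whose ones sit at positions $i_1, \ldots, i_{m(j)}$; hence by \eqref{eq:varphit} one has $\varphi_j(\pref_j(\rho); s) = \sum_{a=1}^{m(j)} (-1)^{m(j)-a}[i_a]_s$, which depends on $j$ only through $m(j)$. Since $m(j) = \gamma$ is constant on each block $i_\gamma \le j \le i_{\gamma+1}-1$ (and $m(j) = 0$, contributing nothing, for $j < i_1$), summing the geometric factor over a block via $\sum_{j=i_\gamma}^{i_{\gamma+1}-1} t^j = t^{i_\gamma}[i_{\gamma+1}-i_\gamma]_t$ gives
\[
  \sum_{j=1}^k t^j \varphi_j(\pref_j(\rho); s) = \sum_{\gamma=1}^{\lambda} t^{i_\gamma}[i_{\gamma+1}-i_\gamma]_t \sum_{a=1}^{\gamma}(-1)^{\gamma-a}[i_a]_s .
\]

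The crux is then the purely combinatorial identity
\[
  \sum_{a=1}^{\gamma}(-1)^{\gamma-a}[i_a]_s = \sum_{\substack{0 \le \alpha < \gamma \\ \gamma - \alpha \equiv 1 \pmod{2}}} s^{i_\alpha}[i_{\alpha+1}-i_\alpha]_s ,
\]
which I expect to be the main obstacle, owing to the interlocking parity constraint and the boundary term at $\alpha = 0$ (where $i_0 = 0$ forces $s^{i_0}[i_1-i_0]_s = [i_1]_s$). I would prove it by induction on $\gamma$ using the telescoping relation $[i_a]_s - [i_{a-1}]_s = s^{i_{a-1}}[i_a - i_{a-1}]_s$: denoting by $S_\gamma$ and $R_\gamma$ the left- and right-hand sides, both satisfy the same two-term recurrence $X_\gamma + X_{\gamma-1} = [i_\gamma]_s$ together with $S_0 = R_0 = 0$. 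For $S_\gamma$ this is immediate from the alternating sum; for $R_\gamma$ it holds because the parity conditions defining $R_\gamma$ and $R_{\gamma-1}$ are complementary, so their union runs over all $\alpha \in \{0,\ldots,\gamma-1\}$ exactly once and telescopes to $[i_\gamma]_s - [i_0]_s = [i_\gamma]_s$. Hence $S_\gamma = R_\gamma$ for all $\gamma$.

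Substituting this identity back, relabelling $\gamma$ as $\beta$, and merging the two nested sums then yields precisely the claimed double sum over pairs $0 \le \alpha < \beta \le \lambda$ with $\beta - \alpha$ odd, which proves the ``Moreover'' statement. Finally, exponentiating and summing over all $\rho$ with $|\rho| = \lambda$ gives the first identity: by the bijection \eqref{eq:bij1}, such $\rho$ correspond one-to-one with the increasing tuples $1 \le i_1 < \cdots < i_\lambda \le k$, so the sum over $\rho$ becomes exactly $\sum_{i_1 < \cdots < i_\lambda}^{k}$ of the exponential of the per-$\rho$ expression just computed.
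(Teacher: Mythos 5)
Your proposal is correct, and every step checks out: the block decomposition $\sum_{j=i_\gamma}^{i_{\gamma+1}-1} t^j = t^{i_\gamma}[i_{\gamma+1}-i_\gamma]_t$ is valid (including the last block via the convention $i_{\lambda+1}=k+1$), the two-term recurrence $X_\gamma + X_{\gamma-1} = [i_\gamma]_s$ with $X_0=0$ does pin down both sides of your combinatorial identity (the complementary-parity union argument and the telescoping $\sum_{\alpha=0}^{\gamma-1}s^{i_\alpha}[i_{\alpha+1}-i_\alpha]_s = [i_\gamma]_s$ are exactly right, using $i_0=0$), and the passage from the per-$\rho$ identity to the summed identity via the bijection \eqref{eq:bij1} is the same final step the paper takes. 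Where you differ from the paper is in the architecture of the per-$\rho$ computation: the paper proceeds by induction on $\lambda$, peeling off the last one at position $i_\lambda$, applying the induction hypothesis to the prefix $\omega = \pref_{i_\lambda-1}(\rho)$ (whose boundary convention conveniently makes its $i_{\lambda}$ equal to the true position of the last one), and converting only the final block's alternating sum $\varphi_k(\rho;s)$ into parity-constrained $q$-differences by explicit pairing of consecutive terms, treated separately for $\lambda$ even and odd. Your proof instead handles all $\lambda$ blocks simultaneously and isolates the conversion $\sum_{a=1}^{\gamma}(-1)^{\gamma-a}[i_a]_s = \sum_{\gamma-\alpha \text{ odd}} s^{i_\alpha}[i_{\alpha+1}-i_\alpha]_s$ as a standalone lemma proved by recurrence in $\gamma$. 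The ingredients (constancy of $\varphi_j\circ\pref_j$ on blocks, geometric series, telescoping of $q$-numbers) are identical, but your organization buys a non-inductive global structure and avoids the even/odd case split, making the key identity explicit rather than embedding it in an induction; the paper's version, in exchange, needs only the single top block to be converted at each inductive stage.
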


\begin{proof}
  Let us first consider the case \(|\rho| = \lambda =1\). In this case \( \rho = \bm{e}_i\) for \( 1 \leq i \leq k \).
  From the definition of \(\varphi_j\), we verify that
  \[
    \varphi_j(\pref_j(\rho);s) =
    \begin{cases}
      0 & \text{ if } j < i \\
      [i]_s & \text{ if } i \leq j
    \end{cases},
  \]
  thus
  \[
    \sum_{j=1}^k t^j  \varphi_j(\pref_j(\rho),\;s) = [i]_s \sum_{j=i}^k t^j = t^i [k+1-i]_t [i]_s.
  \]
  Next, lets assume the result holds for all \( |\rho|= \lambda -1\) and consider the case \(|\rho|=\lambda\).
  Set \( \omega := \pref_{i_{\lambda}-1}(\rho) \), then we have
  \[
    \sum_{j=1}^k t^j \varphi_j(\pref_j(\rho);s) = \sum_{j=1}^{i_{\lambda-1}} t^j  \varphi_j(\pref_j(\omega);s) + \sum^k_{j=i_{\lambda}} t^j  \varphi_k(\rho;s),
  \]
  since \(\varphi_j(\pref_j(\rho);s) = \varphi_k(\rho;s)  \) for \(j \geq i_{\lambda}\).
  On the one hand, we have
  \[
    \sum_{j=1}^{i_{\lambda-1}} t^j  \varphi_j(\pref_j(\omega);s) = \sum_{\substack{0=\alpha<\beta \\ \beta - \alpha \equiv 1 \pmod{2}  }}^{\lambda-1} t^{i_\beta} s^{i_{\alpha}} [i_{\alpha+1}-i_{\alpha}]_s [i_{\beta+1}-i_{\beta}]_t,
  \]
  by induction since \(|\omega| = \lambda-1 \). On the other hand, we have
  \[
    \varphi_k(\rho;s) = 
    \begin{cases}
      \sum_{n=1}^{\frac{\lambda}2} [i_{2 n}]_s - [i_{2 n - 1}]_s     &    \text{ if }  \lambda \equiv 0 \pmod{2} \\
      \sum_{n=0}^{\frac{(\lambda-1)}2} [i_{2 n+1}]_s - [i_{2 n}]_s &    \text{ if }  \lambda \equiv 1 \pmod{2}
    \end{cases}.
  \]
  Let us consider the case \(\lambda \equiv 0 \pmod{2} \) since the alternative case is completely analogous. We immediately
  verify that 
  \[
    \sum_{n=1}^{\frac{\lambda}2} [i_{2 n}]_s - [i_{2 n - 1}]_s  =  \sum_{n=1}^{\frac{\lambda}2} s^{i_{2 n - 1}} [i_{2 n} - i_{2 n - 1}]_s,
  \]
  and substituting in the second sum of the right-hand side we obtain
  \[
    \sum^k_{j=i_{\lambda}} t^j  \varphi_k(\rho;s)  = t^{i_{\lambda}} \sum_{n=1}^{\frac{\lambda}2} s^{i_{2 n - 1}} [i_{2 n} - i_{2 n - 1}]_s \sum^{k-t_i}_{j=0} t^{j} =   t^{i_{\lambda}} [k+1-i_{\lambda}]_t \sum_{n=1}^{\frac{\lambda}2} s^{i_{2 n - 1}} [i_{2 n} - i_{2 n - 1}]_s ,
  \]
  finally, notice that since \(\lambda\) is even and \(2 n -1 \) for \( 1 \leq n \leq \frac{\lambda}2 \) runs over all odd integers
  smaller than \(\lambda\) we see that the above is equal to
  \[
    \sum_{\substack{0\leq\alpha< \lambda \\ \lambda - \alpha \equiv 1 \pmod{2}  }} t^{i_{\lambda}} s^{t_{\alpha}} [i_{\alpha+1} - i_{\alpha}]_s [i_{\lambda+1}-i_{\lambda}]_t,
  \]
  with \(i_{\lambda+1} := k+1 \), as desired.
\end{proof}

Let us consider a fixed \(\lambda \geq 1 \) and \(\rho \in \btZ{k-3} \) with \(|\rho|= \lambda \). As usual, we denote by 
\(1 \leq i_1 <i_2< \cdots< i_\lambda \) the position of \(1\) in \(\rho\). By Lemma \ref{lem:sumexp} and \eqref{eq:varphit}, we see
that \(H_{\eta}^{(k,N)}(u^{\frac1N},\rho) P^{(k,N)}(u^{\frac1N},\rho)\) is given by
\begin{align*}
  &\exp\left( (-1)^{\eta} \frac{2\sqrt{2} g(1-u^{\frac1N})}{1-u^2} \sum_{\gamma=1}^{\lambda} (-1)^{\gamma-1} \left[ x \left( u^{2-\frac2N} [i_\gamma]_{u^{-\frac1N}} - u^{\frac1N} [i_\gamma]_{u^{\frac1N}}  \right)
    + u y \left( u^{\frac1N} [i_\gamma]_{u^{\frac1N}} - u^{-\frac2N} [i_\gamma]_{u^{-\frac1N}} \right) \right] \right) \\
  & \times \exp\left( - \frac{2g^2(1-u^{\frac1N})}{(1-u^2)} u^{\frac{k}N}(1-u^{1-\frac{k}N})^2\left( u^{\frac1N} [i_\gamma]_{u^{\frac1N}} - u^{-\frac2N} [i_\gamma]_{u^{-\frac1N}} \right) \right)  \\
  & \times \exp\Bigg(  \frac{2g^2(1-u^{\frac1N})^2}{(1-u^2)} \sum_{\substack{0=\alpha<\beta \\ \beta - \alpha \equiv 1 \pmod{2}  }}^{\lambda} \bigg[ u^{\frac{\alpha}N} [i_{\alpha+1}-i_{\alpha}]_{u^{\frac1N}}
  \left( u^{\frac{2+i_{\beta}}N} [i_{\beta+1}-i_{\beta}]_{u^{\frac1N}} - u^{2-\frac{1+i_{\beta}}{N}} [i_{\beta+1}-i_{\beta}]_{u^{-\frac1N} } \right)  \\
  & \qquad \qquad \qquad \qquad \qquad \qquad \qquad  -  u^{-\frac{\alpha}N}[i_{\alpha+1}-i_{\alpha}]_{u^{-\frac1N}} \left( u^{\frac{i_{\beta}-1}N} [i_{\beta+1}-i_{\beta}]_{u^{\frac1N}} - u^{2 -\frac{4+i_\beta}{N}} [i_{\beta+1}-i_{\beta}]_{u^{\frac{-1}N}} \right)  \bigg]  \Bigg).
\end{align*}

Next, for \(1\leq \gamma \leq \lambda \), we immediately see that
\begin{align*}
  u^{2-\frac2N} [i_\gamma]_{u^{-\frac1N}} - u^{\frac1N} [i_\gamma]_{u^{\frac1N}} &=  - \frac{u^{2-\frac1N}}{1-u^{\frac1N}} (1-u^{-2 + \frac{i_\gamma+1}{N}})(1-u^{-\frac{i_\gamma}N}) \\
  u^{\frac1N} [i_\gamma]_{u^{\frac1N}} - u^{-\frac2N}[i_\gamma]_{u^{-\frac1N}} &= \frac{u^{-1/N}}{1-u^{\frac1N}}(1-u^{\frac{-i_\gamma}N})(1-u^{\frac{i_\gamma+2}N}),
\end{align*}
and similarly, for \( 1 \leq \beta \leq \lambda \), we have
\begin{align*}
   u^{\frac{2+i_{\beta}}N} [i_{\beta+1}-i_{\beta}]_{u^{\frac1N}} - u^{2-\frac{1+i_{\beta}}{N}} [i_{\beta+1}-i_{\beta}]_{u^{-\frac1N} }  &= - \frac{u^{\frac{2+i_{\beta+1}}{N}}(1-u^{\frac{i_\beta - i_{\beta+1}}{N}})(1-u^{2-\frac{i_\beta + i_{\beta+1}+2}{N}})}{1-u^{\frac1N}}  \\
  u^{\frac{i_{\beta}-1}N} [i_{\beta+1}-i_{\beta}]_{u^{\frac1N}} - u^{2 -\frac{4+i_\beta}{N}} [i_{\beta+1}-i_{\beta}]_{u^{\frac{-1}N}} &= - \frac{u^{\frac{i_{\beta+1}-1}N}(1-u^{\frac{i_\beta - i_{\beta+1}}{N}})(1-u^{2 - \frac{i_\beta + i_{\beta+1}}{N}+2 })}{1-u^{\frac1N}}.
\end{align*}

For \(\lambda \geq 1 \), define the function
\begin{align*}
    f^{(\eta)}_\lambda(&z_1,z_2,\cdots,z_\lambda;u^{\frac1N}) =  (-1)^{\eta+1}  \frac{2\sqrt{2} g}{1-u^{2}} \sum_{\gamma=1}^{\lambda} (-1)^{\gamma-1} \Big[ x u^2 (1-u^{-2+\frac{z_{\lambda+1-\gamma}}N})(1-u^{-\frac{z_{\lambda+1-\gamma}}N}) \\
              & \qquad \qquad \qquad \qquad \qquad \qquad \qquad \qquad \qquad \qquad \qquad \qquad - y u (1-u^{\frac{z_{\lambda+1-\gamma}}N})(1-u^{-\frac{z_{\lambda+1-\gamma}}N})  \Big] \\
              & \qquad - \frac{2 g^2 }{1-u^2} u^{\frac{k}N}(1-u^{1-\frac{k}N})^2 \sum_{\gamma=1}^{\lambda} (-1)^{\gamma-1} (1-u^{\frac{z_{\lambda+1-\gamma}}N})(1-u^{-\frac{z_{\lambda+1-\gamma}}N}) \\
              & \qquad- \frac{2 g^2 }{1-u^2}\sum_{\substack{0\leq\alpha<\beta \\ \beta - \alpha \equiv 1 \pmod{2}  }}^{\lambda}  u^{\frac{z_{\beta+1}-z_{\alpha}}{N}}(1-u^{2 - \frac{z_{\beta+1}+z_{\beta}}N })(1-u^{\frac{z_\beta - z_{\beta+1}}N})(1-u^{\frac{z_\alpha - z_{\alpha+1}}{N}})(1-u^{\frac{z_\alpha+z_{\alpha+1}}N}),
\end{align*}
where as before, we set \(z_0 := 0 \) and  \(z_{\lambda+1} := k-2 \). Notice that for fixed \( \lambda\),  \(f_\lambda^{(\eta)}(\bm{z};u^{\frac1N})\) is a smooth function on \( z_i\), with \(i=1,2,\cdots,\lambda \), for any \( u \in (0,1) \).

With this notation, equation \eqref{eq:sumhp} is given by
\begin{align*} %\label{eq:norm2}
     \sum_{\substack{\rho \in \btZ{k-3} \\ |\rho| = \lambda  } }   H_{\eta}^{(k,N)}(x,y,u^{\frac1N},\rho) P^{(k,N)}(u^{\frac1N},\rho) = \sum_{i_{1}< i_{2}<\cdots<i_{\lambda}}^{k-3}  e^{{f_\lambda^{(\eta)}(i_1,i_2,\cdots,i_\lambda;u^{\frac1N})}+ O\left(\tfrac1N\right)}.
\end{align*}

We are now in the position to write this sum as an iterated integral and a residual term with explicit order. We first describe the behavior of the function with respect to \(u \in (0,1)\). We note that since the terms of order $O(\tfrac1N)$ ultimately vanish due to the limit involved in the final computation of the heat kernel, from this point we omit them to improve the clarity of the exposition.

\begin{lem}\label{lem:uniform_bdd}
  Let \(\lambda \geq 1\) be fixed. The (real valued) function $e^{f_\lambda(\bm{z},u^{\frac1N})}$, where
  \(\bm{z}= \{z_1,z_2,\cdots,z_\lambda \}\), is uniformly bounded with respect to $u\, (0<u<1)$ for
  $0\leq z_1 \leq z_2 \leq \cdots \leq z_\lambda \leq k-3$. 
\end{lem}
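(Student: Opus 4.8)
The plan is to show that $f_\lambda^{(\eta)}(\bm{z};u^{1/N})$, regarded as a function of $u$ alone (with $\lambda,k,N$ and the parameters $x,y,g$ fixed), extends to a function that is jointly continuous in $(\bm{z},u)$ on $\Sigma_\lambda\times[0,1]$, where $\Sigma_\lambda=\{0\le z_1\le\cdots\le z_\lambda\le k-3\}$ is the closed simplex. Granting this, since $\Sigma_\lambda\times[0,1]$ is compact and $f_\lambda^{(\eta)}$ is real-valued, it attains a finite maximum there, so $e^{f_\lambda^{(\eta)}}$ is bounded above uniformly in $u\in(0,1)$ (and in $\bm{z}$), which is the assertion. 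On the open set $\Sigma_\lambda\times(0,1)$ continuity is immediate, as $f_\lambda^{(\eta)}$ is a finite sum of products of the continuous functions $u\mapsto u^{e}$ ($e\in\R$) divided by $1-u^2\neq 0$. The entire content is therefore the existence of finite limits as $u\to 1^-$ and as $u\to 0^+$, uniformly over $\Sigma_\lambda$.

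For $u\to 1^-$, each of the three groups of terms defining $f_\lambda^{(\eta)}$ carries the factor $(1-u^2)^{-1}$, which has a simple pole at $u=1$, so it suffices to check that the accompanying numerator in each group vanishes at $u=1$. In the first group each summand equals, after expanding, $x(u^2-u^{2-z/N}-u^{z/N}+1)-y(2u-u^{1+z/N}-u^{1-z/N})$ with $z=z_{\lambda+1-\gamma}$; a direct evaluation shows this expression and its $u$-derivative both vanish at $u=1$, so the quotient by $1-u^2$ even tends to $0$. In the second group the factor $(1-u^{1-k/N})^2$ (identically zero when $k=N$) and the factor $2-u^{z/N}-u^{-z/N}$ each vanish at $u=1$, and in the third group all four factors $(1-u^{\,\cdot\,})$ vanish there; in both cases the numerator vanishes to order $\ge 2>1$. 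Since all exponents occurring lie in a fixed bounded set and $\bm{z}$ ranges over the compact $\Sigma_\lambda$, these estimates (via Taylor expansion of $u\mapsto u^e$ about $u=1$) are uniform in $\bm{z}$, and each group has a finite limit.

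The limit $u\to 0^+$ is where the ordering $z_0\le z_1\le\cdots\le z_{\lambda+1}$ (with $z_0=0$, $z_{\lambda+1}=k-2$) is essential: here $(1-u^2)^{-1}\to 1$, so one only has to rule out surviving negative powers of $u$ after fully expanding each group into monomials $c(\bm{z})\,u^{e}$. For the first group all exponents $2,\,2-z/N,\,z/N,\,0$ are nonnegative; for the second, multiplying $u^{k/N}(1-u^{1-k/N})^2=u^{k/N}-2u+u^{2-k/N}$ by the dangerous term $-u^{-z/N}$ gives exponents $(k-z)/N,\,1-z/N,\,2-(k+z)/N$, all positive since $z=z_{\lambda+1-\gamma}\le k-3<k\le N$. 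The main computation is the double-sum group: expanding the product of the four factors and pairing with the prefactor $u^{(z_{\beta+1}-z_\alpha)/N}$, the smallest total exponent comes from taking the two negative-exponent terms $-u^{(z_\beta-z_{\beta+1})/N}$ and $-u^{(z_\alpha-z_{\alpha+1})/N}$ with the constant terms of the other two factors, yielding total exponent $\tfrac1N\big((z_{\beta+1}-z_\alpha)+(z_\beta-z_{\beta+1})+(z_\alpha-z_{\alpha+1})\big)=\tfrac1N(z_\beta-z_{\alpha+1})$, which is $\ge 0$ precisely because $\alpha+1\le\beta$ forces $z_{\alpha+1}\le z_\beta$. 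Hence no negative powers survive and each group has a finite limit as $u\to 0^+$, continuous in $\bm{z}$. Combining the two boundary analyses gives the continuous extension to $\Sigma_\lambda\times[0,1]$ and finishes the proof. I expect the last bookkeeping to be the main obstacle: verifying that, after expansion, the minimal exponent in the double sum is nonnegative by the monotonicity of the $z_i$.
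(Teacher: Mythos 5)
Your two boundary analyses ($u\to1^-$ and $u\to0^+$) are exactly the ones the paper performs, and on the substance you go further than the paper does: at $u\to0^+$ the paper only isolates a ``leading part'' and concludes boundedness of $e^{f_\lambda}$ from the fact that its potentially divergent piece carries the negative coefficient $-2g^2$ (i.e.\ $f_\lambda$ is allowed to tend to $-\infty$), whereas your complete expansion shows that \emph{every} monomial has nonnegative exponent --- the decisive case being the product of the two negative-exponent factors, with total exponent $(z_\beta-z_{\alpha+1})/N\ge0$ precisely by the ordering $z_{\alpha+1}\le z_\beta$ --- so that $f_\lambda$ itself stays bounded as $u\to0^+$, with no sign argument needed. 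Your $u\to1^-$ analysis (each numerator vanishes to order $\ge2$ against the simple zero of $1-u^2$, with constants controlled because all exponents lie in a fixed bounded set) is also correct.

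There is, however, one step that fails as stated: the asserted jointly continuous extension to $\Sigma_\lambda\times[0,1]$, where $\Sigma_\lambda=\{0\le z_1\le\cdots\le z_\lambda\le k-3\}$, does not exist, so compactness cannot be invoked in that form. The failure is at $u=0$. A monomial $u^{e(\bm{z})}$ with $e\ge0$ converges, as $u\to0^+$, to the function equal to $1$ where $e(\bm{z})=0$ and to $0$ where $e(\bm{z})>0$; this limit is discontinuous wherever $e$ vanishes on a proper face of the simplex, and the convergence cannot be uniform near such a face. Concretely, already for $\lambda=1$ (so $(\alpha,\beta)=(0,1)$, $z_0=0$, $z_2=k-2$) the double-sum term tends to $-2g^2$ as $u\to0^+$ for every fixed $z_1>0$, yet it is identically zero at $z_1=0$ because the factor $1-u^{(z_0-z_1)/N}$ vanishes; the same phenomenon occurs in the first group at the face $z_j=0$. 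Fortunately the conclusion needs only what you already established, assembled differently: for $u\in(0,\tfrac12]$ bound each monomial by the absolute value of its (constant) coefficient, using that all exponents are $\ge0$ and $0<u<1$, together with $(1-u^2)^{-1}\le\tfrac43$; for $u\in[\tfrac12,1)$ use $|1-u^{e}|\le C(1-u)$ with $C$ uniform over the bounded set of exponents involved (for negative exponents, $|1-u^{e}|\le u^{e}\,|e|\,(1-u)\cdot\mathrm{const}$ on this range), so the order-$\ge2$ vanishing of each numerator beats the simple pole of $(1-u^2)^{-1}$ uniformly in $\bm{z}$. The two ranges together give the uniform bound, and with this repair your proof is correct.
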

\begin{proof}
  It is enough to observe the behavior when $u\to 0$ and  $u \to 1$.  Clearly, when $u\to 1$ there is a limit for
  $f_\lambda(\bm{z},u^{\frac1N})$ which is bounded for any $0\leq z_1 \leq z_2 \leq \cdots \leq z_\lambda \leq k-3$.

  When $u=e^{-t}$ approaches $0$, let us observe the leading contribution in $f_\lambda(\bm{z},u^{\frac1N})$.
  Let \(1 \leq  j \leq \lambda \), then it is easy to see the leading part in $f_\lambda(z,u^{\frac1N})$ as $u\to 0^{+}$ of the term involving
  \(x, y \) and \(z_j\) in the first sum is given by
  \[
    (-1)^{\eta+\gamma-1} u^{-\frac{z_j}{N}}(2\sqrt2 g)(xu^2 - y u).
  \]
  Next, we easily see that the limit of the second sum as $u\to 0^{+}$ is either \(-2g^2\) or \(0\), according to
  $\lambda\equiv 1 \pmod{2}$ or $\lambda\equiv 0 \pmod{2}$, respectively.
  For \(0 \leq \alpha < \beta\) with \( \beta - \alpha \equiv 1 \pmod{2} \), since $0\leq z_{\alpha}\leq z_{\alpha+1}$ and $z_\beta \leq z_{\beta+1}$, the leading part
  in $f_\lambda(z,u^{\frac1N})$ as $u \to 0^{+}$ of the term involving \( g^2 \), \(\alpha \) and \(\beta \) in the last sum is given by
  \[
    - 2 g^2  u^2 u^{-\frac{z_{\alpha+1}}{N}} u^{-{\frac{z_\beta + z_{\beta+1}}{N}}}.
  \]
  Summing up, the leading part of $f_\lambda(z,u^{\frac1N})$ as $u\to 0^{+}$ is given by
  \[
   -2g^2 c -\sum_{j=1}^{\lambda} u^{-\frac{z_j}{N}}\left( (-1)^{\eta+\gamma}(2\sqrt2 g)(xu^2 - y u) 
      + 2 g^2 u^2 \sum_{\substack{ j-1< \beta \\ \beta \equiv j \pmod{2} }}^{\lambda}  u^{-{\frac{z_\beta + z_{\beta+1}}{N}}} \right),
  \]
  for  a constant \(c \in \{0,1\}\). It follows that $e^{f_\lambda(\bm{z},u^{\frac1N})}$ is bounded as $u\to 0^{+}$. 
\end{proof}

In order to deal with the multiple summation over the \(i_1,i_2,\cdots,i_\lambda \), we need the following simple lemma.

\begin{lem}
  \label{lem:sum12}
  For fixed \(\lambda \geq 1\) and \(a \in \Z_{\geq 0} \) with \(a \leq N\), we have
  \begin{equation*}
    \sum_{1\leq i_{1}< i_{2}<\cdots<i_{\lambda}}^{a} e^{{f_\lambda^{(\eta)}(i_1,i_2,\cdots,i_\lambda;u^{\frac1N})}} \quad = \sum_{0\leq i_{1}\leq i_{2}\leq\cdots \leq i_{\lambda}}^{a} e^{{f_\lambda^{(\eta)}(i_1,i_2,\cdots,i_\lambda;u^{\frac1N})}} + O(a^{\lambda-1}).
  \end{equation*}
\end{lem}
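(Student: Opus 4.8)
The plan is to realize the difference between the two sides as a sum over a small collection of ``degenerate'' index tuples and then bound that sum by combining a uniform bound on the summand with an elementary count. The first ingredient is Lemma~\ref{lem:uniform_bdd}, which I would invoke to obtain a constant $M$ such that $e^{f_\lambda^{(\eta)}(i_1,\ldots,i_\lambda;u^{\frac1N})} \leq M$ uniformly in $u \in (0,1)$ and in the ordered indices $0 \leq i_1 \leq \cdots \leq i_\lambda \leq a$ (since $a \leq N$, this range is exactly the one covered by that lemma).

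Next I would observe that every strictly increasing tuple $1 \leq i_1 < \cdots < i_\lambda \leq a$ is in particular a weakly increasing tuple $0 \leq i_1 \leq \cdots \leq i_\lambda \leq a$, so the left-hand sum is a subsum of the right-hand one. Consequently the difference
\[
  \sum_{0\leq i_{1}\leq \cdots \leq i_{\lambda}}^{a} e^{f_\lambda^{(\eta)}(i_1,\ldots,i_\lambda;u^{\frac1N})} - \sum_{1\leq i_{1}< \cdots < i_{\lambda}}^{a} e^{f_\lambda^{(\eta)}(i_1,\ldots,i_\lambda;u^{\frac1N})}
\]
is the sum of $e^{f_\lambda^{(\eta)}}$ over exactly those weakly increasing tuples that fail to be strictly increasing with $i_1 \geq 1$, namely those with $i_1 = 0$ or with $i_j = i_{j+1}$ for some $j$. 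By the uniform bound, the absolute value of this difference is at most $M$ times the number of such degenerate tuples.

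Finally I would count. The number of weakly increasing tuples $0 \leq i_1 \leq \cdots \leq i_\lambda \leq a$ is the multiset coefficient $\binom{a+\lambda}{\lambda}$, while the number of strictly increasing tuples $1 \leq i_1 < \cdots < i_\lambda \leq a$ is $\binom{a}{\lambda}$; hence the number of degenerate tuples is $\binom{a+\lambda}{\lambda} - \binom{a}{\lambda}$. Both binomial coefficients are polynomials in $a$ of degree $\lambda$ with leading coefficient $1/\lambda!$, so their leading terms cancel and the difference is a polynomial in $a$ of degree at most $\lambda - 1$. Therefore the difference of the two sums is $O(a^{\lambda-1})$, which is precisely the claim.

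The argument is essentially elementary combinatorics once the uniform bound is available, so I do not anticipate any serious obstacle. The only point demanding a little care is ensuring that the summand is bounded uniformly in $u$ over the \emph{entire} summation region rather than pointwise in $u$; this is exactly why the boundedness result (Lemma~\ref{lem:uniform_bdd}) is established beforehand, and keeping the index range $0 \leq i_j \leq a \leq N$ consistent with its hypotheses is the one detail I would double-check.
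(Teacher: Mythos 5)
Your proof is correct and takes essentially the same approach as the paper: a uniform bound on the summand (which the paper, slightly more carefully, says is \emph{verified in the same way as} Lemma~\ref{lem:uniform_bdd}, since that lemma is stated for indices up to $k-3$ rather than a generic $a \leq N$) combined with the count $\binom{a+\lambda}{\lambda} - \binom{a}{\lambda} = O(a^{\lambda-1})$. Your write-up merely makes explicit the subsum/degenerate-tuple bookkeeping that the paper leaves implicit.
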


\begin{proof}
  Since \(\exp\left({f_\lambda^{(\eta)}(i_1,i_2,\cdots,i_\lambda;u^{\frac1N})} \right)\) is uniformly bounded for \(0 \leq i_j \leq a \)
  and \(0 \leq u \leq 1 \) (this is verified in the same way as Lemma \ref{lem:uniform_bdd}), we see that the difference
  between the number of summands of the two sums is given by
  \[
    \binom{a+\lambda}{\lambda} - \binom{a}{\lambda} = O(a^{\lambda-1}).
  \]
\end{proof}

Finally, we transform the sum into integrals using Riemann-Stieltjes integration. We start by considering the case \(\lambda = 1 \) as it constitutes the basis
for the proof of the general case.

\begin{prop}
  \label{prop:sumintL1}
  Let \(a \in \Z_{\geq 0} \) with \(a \leq N\). We have
  \[
    \sum_{i=0}^{a} e^{{f_\lambda^{(\eta)}(i;u^{\frac1N})}} = \int_{0}^{a} e^{{f_\lambda^{(\eta)}(z;u^{\frac1N})}} d z + O\left(\frac1N\right).
  \]
\end{prop}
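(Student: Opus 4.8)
The plan is to read the left-hand side as a unit-mesh Riemann sum for the integral on the right and to control the discrepancy by first-order Euler--Maclaurin summation (equivalently, Riemann--Stieltjes integration by parts against the sawtooth). Write $F_N(z):=\exp\big(f_1^{(\eta)}(z;u^{\frac1N})\big)$ for the $\lambda=1$ integrand, so the claim reads $\sum_{i=0}^a F_N(i)=\int_0^a F_N(z)\,dz+O(1/N)$. The engine is a pair of uniform derivative bounds: I would first establish that, for $z\in[0,a]$ and uniformly in $u\in(0,1)$,
\[
  \Big|\frac{\partial F_N}{\partial z}(z)\Big|=O(1/N),\qquad \Big|\frac{\partial^2 F_N}{\partial z^2}(z)\Big|=O(1/N^{2}).
\]
Since $f_1^{(\eta)}$ depends on $z$ only through $u^{\pm z/N}=e^{\mp tz/N}$ (with $u=e^{-t}$), each $z$-differentiation produces a factor $\frac{\ln u}{N}$; the point is that the possibly unbounded parts of $\partial_z f_1^{(\eta)}$ occur multiplied by $F_N=e^{f_1^{(\eta)}}$, which decays exactly in the regimes where those parts grow. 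Thus the bounds follow by re-running the endpoint analysis as $u\to 0^+$ and $u\to 1^-$ used for Lemma~\ref{lem:uniform_bdd}, now applied to $\partial_z f_1^{(\eta)}\cdot F_N$ and to $\big(\partial_z^2 f_1^{(\eta)}+(\partial_z f_1^{(\eta)})^2\big)F_N$.

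Granting the derivative bounds, I would invoke the first-order Euler--Maclaurin identity
\[
  \sum_{i=0}^{a}F_N(i)=\int_0^a F_N(z)\,dz+\frac{F_N(0)+F_N(a)}{2}+\int_0^a\psi(z)\,F_N'(z)\,dz,
\]
where $\psi(z)=\{z\}-\tfrac12$ is the mean-zero sawtooth. The oscillatory remainder is the term that genuinely contributes at order $O(1/N)$: on each unit subinterval $[i,i+1]$ one writes $F_N'(z)=F_N'(i)+(F_N'(z)-F_N'(i))$, uses $\int_i^{i+1}\psi=0$ to kill the constant piece, and bounds $|F_N'(z)-F_N'(i)|\le\|F_N''\|_\infty=O(1/N^{2})$; summing over the at most $N$ subintervals (recall $a\le N$) yields $\int_0^a\psi F_N'=O(1/N)$. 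This is precisely the improvement over the naive bound $\int_0^a|F_N'|=O(a/N)=O(1)$, and it is where the slow variation of $F_N$ is used.

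The main obstacle is twofold. The delicate analytic point is the uniform-in-$u$ derivative bounds above: unlike $F_N$ itself, its $z$-derivative mixes a factor blowing up as $u\to0^+$ with the exponential damping of $F_N$, so one must re-run the case analysis of Lemma~\ref{lem:uniform_bdd} rather than differentiate a finite expression naively. The bookkeeping point is the endpoint contribution $\tfrac12(F_N(0)+F_N(a))$: since $F_N(0)=1$ and $F_N(a)=O(1)$, this is not itself $O(1/N)$, so obtaining the stated estimate requires either the symmetric (trapezoidal) endpoint normalization or carrying these boundary values forward so that they are absorbed once the sum is reassembled with the prefactor $(1-u^{2\Delta/N})/(2u^{\Delta/N})=O(1/N)$ and the outer sum over $k$ in \eqref{eq:sumheat1}. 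I would isolate these endpoint terms explicitly and treat the oscillatory integral as the sole source of the $O(1/N)$ asserted here. The case $\lambda=1$ then serves as the base step for general $\lambda$, where the same one-variable estimate is applied iteratively after peeling off the outermost index via Lemma~\ref{lem:sum12} and the simplex structure of Lemma~\ref{lem:sumexp}.
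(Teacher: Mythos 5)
Your proof is correct and follows essentially the paper's own route: the paper likewise treats the sum as a Riemann--Stieltjes integral against $\Xi(z)=z-\{z\}$, integrates by parts, and relies on the derivative bounds $f_1'=O(1/N)$, $f_1''=O(1/N^2)$ (coming from $\frac{d}{dz}u^{\pm z/N}=\pm\frac1N(\log u)\,u^{\pm z/N}$) together with the uniform boundedness of $e^{f_1}$ from Lemma \ref{lem:uniform_bdd}. The only mechanical difference is in the oscillatory remainder: you exploit $\int_i^{i+1}\psi\,dz=0$ interval by interval with a Taylor expansion of $F_N'$, while the paper expands $\{z\}$ in its Fourier sine series, folds the integral onto $[0,1]$ via $g(z)=\sum_{j=0}^{a-1}e^{f_1(z+j,u^{1/N})}$, and integrates by parts twice, using the telescoping identity $g'(1)-g'(0)=f_1'(a)e^{f_1(a)}-f_1'(0)e^{f_1(0)}=O(1/N)$ and $\int_0^1|g''|\,dz=O(a/N^2)$; these are two encodings of the same cancellation, both resting on the derivative estimates you state.

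Your endpoint observation is not mere bookkeeping, and you are right not to bury it: the trapezoidal term $\tfrac12\bigl(F_N(0)+F_N(a)\bigr)$ is genuinely of size $O(1)$ --- indeed $F_N(0)=1$ since $f_1^{(\eta)}(0;u^{1/N})=0$ --- and no rearrangement inside this proposition can make it $O(1/N)$. The paper's own proof concedes exactly this point: its display reads $\sum_i e^{f_1(i,u^{1/N})}=\int_0^a e^{f_1(z,u^{1/N})}\,dz+2\sum_{n\ge1}\int_0^a\cos(2\pi nz)e^{f_1(z,u^{1/N})}\,dz+O(1)$, so what is actually established there (as in your argument) is an error of $O(1)$, with only the oscillatory piece being $O(1/N)$. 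That weaker bound is also all that is consumed downstream: Lemma \ref{lem:sumint} claims only $O(a^{\lambda-1})$, which at $\lambda=1$ is $O(1)$, and these residuals are subsequently annihilated by the prefactor $(1-u^{2\Delta/N})/(2u^{\Delta/N})$ in the limit. So the proposition's stated $O(1/N)$ should be read as $O(1)$ (or as a statement about the oscillatory remainder alone); with that emendation, your proof and the paper's coincide.
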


\begin{proof}

  We write the sum as a Riemann-Stieltjes integral in the standard way
  \[
    \sum_{i=0}^{a} e^{f_1(i,u^{\frac1N})} = \int_0^{a} e^{f_1(z,u^{\frac1N})} d \, \Xi(z),
  \]
  where \(\Xi(z) =\sum_{1 \leq n < z} 1 = z- \{z\}\).  By partial integration, we see that 
  \begin{align*} %\label{eq:intlambda1}
    \sum_{i=1}^{a} e^{f_1(i,u^{\frac1N})} &=  \int_{0}^{a} e^{f_1(z,u^{\frac1N})}  d z + \int_{0}^{a} \{z\} f_1'(z,u^{\frac1N}) e^{f_1(z,u^{\frac1N})}  d z  \nonumber \\
                                   &= \int_{0}^{a} e^{f_1(z,u^{\frac1N})} + 2 \sum_{n=1}^{\infty} \int_{0}^{a} \cos(2 \pi n z ) e^{f_1(z,u^{\frac1N})} d z + O(1),
  \end{align*}
  the last equality is obtained by using the Fourier series of \(\psi(x) = x - [x] - \frac12 \), that is \( \psi(x) = - \sum_{n=1}^{\infty}\frac{sin(2 n \pi x)}{n \pi} \)
  (see also \cite{Ivic1985}, equation (A26)).
  
  Setting
  \[
    g(z)=\sum_{j=0}^{a-1}e^{f_1(z+j,u^{\frac1N})},
  \]
  we have
  \[
    \sum_{n=1}^{\infty} \int_{0}^{a} \cos(2 \pi n z ) e^{f_1(z,u^{\frac1N})} d z
    =\sum_{n=1}^{\infty} \int_{0}^{1} \cos(2 \pi n z ) g(z) d z.
  \]
  Now, integration by parts twice yields
  \[
    \int_{0}^{1} \cos(2 \pi n z ) g(z) d z
    = \frac1{4\pi^2n^2} (g'(1)-g'(0))- \frac1{4\pi^2n^2}\int_{0}^{1} \cos(2 \pi n z ) g''(z) d z.
  \]
  Hence 
  \begin{align*}
    \big|\sum_{n=1}^{\infty} \int_{0}^{1} \cos(2 \pi n z ) g(z) d z\big| 
    \leq &
        \sum_{n=1}^{\infty} \frac1{4\pi^2n^2} \Big[
        |g'(1)-g'(0)|+\big|\int_{0}^{1} \cos(2 \pi n z ) g''(z) dz \big| \Big] \\
    \leq & \frac1{4\pi^2} \zeta(2) \Big[
        |g'(1)-g'(0)|+\int_0^1 |g''(z)|dz \Big],
  \end{align*}
  where \(\zeta(s)\) is the Riemann zeta function.

  Next, since 
  \[
    g'(z)= \sum_{j=0}^{a-1} f_1'(z+j,u^{\frac1N})e^{f_1(z+j,u^{\frac1N})},
  \]
  we have 
  \begin{align*}
    g'(1)-g'(0) &= \sum_{j=0}^{a-1} f_1'(1+j,u^{\frac1N})e^{f_1(1+j,u^{\frac1N})} -  \sum_{j=0}^{a-1} f_1'(j,u^{\frac1N})e^{f_1(j,u^{\frac1N})}\\
                & = f_1'(a,u^{\frac1N})e^{f_1(a,u^{\frac1N})}- f_1'(0,u^{\frac1N})e^{f_1(0,u^{\frac1N})}.  
  \end{align*}
  Noticing that the summation on $j$ (over \(a\)) disappear and that $\frac{d}{dz} u^{\pm \frac{z}{N}}=\pm\frac1N (\log u) u^{\pm \frac{z}{N}}$,
  we immediately observe that $g'(1)-g'(0) = O(\frac1N)$.  Furthermore, 
  \[
    g''(z)= \sum_{j=0}^{a-1}\{f_1''(z+j,u^{\frac1N})+(f_1'(z+j,u^{\frac1N}))^2\}e^{f_1(z+j,u^{\frac1N})}.
  \]
  By Lemma \ref{lem:uniform_bdd}, there is a positive constant $C$ such that 
  \[
    |g''(z)| \leq C \sum_{j=0}^{a-1}\{|f_1''(z+j,u^{\frac1N})|+(f_1'(z+j,u^{\frac1N}))^2\}.
  \]
  Since again $\frac{d}{dz} u^{\pm \frac{z}{N}}=\pm\frac1N (\log u) u^{\pm \frac{z}{N}}$, there are positive uniform constants $A$ and $B$
   with respect to $u$ such that 
  \[
    |f_1''(z,u^{\frac1N})|\leq \log(u)^2\frac{A}{N^2}, \quad |f_1'(z,u^{\frac1N})| \leq \log(u)^2 \frac{B}{N}.
  \]
  It follows that 
  \[
    |g''(z)| \leq C(A+B^2) \log(u)^2 \frac{a}{N^2}.
  \]
  Therefore we have 
  \[
    \left|\sum_{n=1}^{\infty} \int_{0}^{a} \cos(2 \pi n z ) e^{f_1(z,u^{\frac1N})} d z \right| = O\left(\frac1N\right).
  \]
\end{proof}

\begin{lem}
  \label{lem:sumint}
  For fixed \(\lambda \geq 1\) and \(a \in \Z_{\geq 1}\) with \(a \leq N \), we have
  \begin{align*}
    &  \sum_{1 \leq i_{1} < i_{2} < \cdots < i_{\lambda}}^{a} e^{{f_\lambda^{(\eta)}(i_1,i_2,\cdots,i_\lambda;u^{\frac1N})}} =   \int_{0}^a \int_0^{z_\lambda} \cdots \int_0^{z_{2}} e^{{f_\lambda^{(\eta)}(z_1,z_2,\cdots,z_\lambda;u^{\frac1N})}}  d \bm{z} + O(a^{\lambda-1}).
  \end{align*}
\end{lem}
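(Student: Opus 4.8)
The plan is to prove the statement by induction on $\lambda$, peeling off the outermost index $i_\lambda$ at each step and reducing to a $(\lambda-1)$-fold sum to which the induction hypothesis applies. The base case $\lambda=1$ is exactly Proposition \ref{prop:sumintL1} (whose error $O(1/N)$ is certainly $O(a^{0})$). Throughout I will use the structural fact, already exploited at the end of the proof of Proposition \ref{prop:sumintL1}, that every occurrence of a variable $z_j$ in $f^{(\eta)}_\lambda(\bm z;u^{1/N})$ enters only through factors of the shape $u^{\pm z_j/N}$, so that $\partial_{z_j}f^{(\eta)}_\lambda = O(1/N)$ uniformly in $u\in(0,1)$. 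Combined with the uniform boundedness of $e^{f_\lambda}$ from Lemma \ref{lem:uniform_bdd}, this gives that $e^{f^{(\eta)}_\lambda}$ and each of its first partials $\partial_{z_j}e^{f^{(\eta)}_\lambda}=O(1/N)$ are uniformly bounded on the relevant simplex.

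For the inductive step, write $F(z_1,\dots,z_\lambda):=e^{f^{(\eta)}_\lambda(z_1,\dots,z_\lambda;u^{1/N})}$ and split the strict-simplex sum according to the value $i_\lambda=m$:
\[
  \sum_{1\le i_1<\cdots<i_\lambda\le a} F(i_1,\dots,i_\lambda)
  = \sum_{m=\lambda}^{a}\ \sum_{1\le i_1<\cdots<i_{\lambda-1}\le m-1} F(i_1,\dots,i_{\lambda-1},m).
\]
For fixed $m$ the inner summand $G_m(z_1,\dots,z_{\lambda-1}):=F(z_1,\dots,z_{\lambda-1},m)$ still enjoys the structural properties above, so the induction hypothesis (with $a$ replaced by $m-1$) converts the inner sum into $\int_{0\le z_1\le\cdots\le z_{\lambda-1}\le m-1}G_m\,d\bm z+O(m^{\lambda-2})$. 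Enlarging the integration domain from upper bound $m-1$ to $m$ changes the integral by at most the volume of a unit slab times $\sup|F|$, that is by $O(m^{\lambda-2})$; hence the inner sum equals $h(m)+O(m^{\lambda-2})$, where $h(m):=\int_{0\le z_1\le\cdots\le z_{\lambda-1}\le m}F(z_1,\dots,z_{\lambda-1},m)\,d\bm z$. Summing the error over $\lambda\le m\le a$ contributes $O(a^{\lambda-1})$, and one is reduced to showing $\sum_{m=\lambda}^{a}h(m)=\int_0^a h(z)\,dz+O(a^{\lambda-1})$, since by Fubini $\int_0^a h(z_\lambda)\,dz_\lambda$ is precisely the iterated integral over the full ordered $\lambda$-simplex on the right-hand side of the lemma.

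It remains to carry out this last one-dimensional sum-to-integral passage for $h$, and this is where the \emph{main obstacle} lies: I cannot simply quote Proposition \ref{prop:sumintL1}, because differentiating $h$ with respect to its upper limit produces a $(\lambda-2)$-fold integral of the bounded function $F$, so $h'(m)=O(m^{\lambda-2})+O(m^{\lambda-1}/N)$ rather than $O(1/N)$. However, since I now only need an error $O(a^{\lambda-1})$ (not $O(1/N)$), the crude comparison $\bigl|\sum_{m=0}^{a}h(m)-\int_0^a h\bigr|\le \int_0^a|h'(z)|\,dz+|h(0)|+|h(a)|$ suffices: using $h(m)=O(m^{\lambda-1})$ and the bound on $h'$ one finds $\int_0^a|h'|=O(a^{\lambda-1})+O(a^{\lambda}/N)=O(a^{\lambda-1})$ because $a\le N$, while the boundary terms are likewise $O(a^{\lambda-1})$. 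Collecting the three contributions (inner induction, limit adjustment, outer comparison) yields the claimed total error $O(a^{\lambda-1})$. The two delicate points to get right are resisting the temptation to reuse the sharp one-dimensional estimate and instead observing that the weaker variation bound on $h$ is enough, and verifying that the structural $O(1/N)$ control on the partials of $f^{(\eta)}_\lambda$ is genuinely inherited by each $G_m$ so that the induction hypothesis truly applies; the optional use of Lemma \ref{lem:sum12} serves only to pass freely between strict and non-strict simplices at cost $O(a^{\lambda-1})$.
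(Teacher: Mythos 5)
Your proof is correct. Its skeleton --- induction on $\lambda$, peeling off the outermost index, applying the induction hypothesis to the inner $(\lambda-1)$-fold sum with the last variable frozen, then converting the remaining one-dimensional sum over $i_\lambda$ into an integral --- is the same as the paper's (the paper first passes to the weak simplex via Lemma \ref{lem:sum12} and sums $i_\lambda$ from $0$ to $a$, while you keep the strict simplex and pay an $O(m^{\lambda-2})$ domain-enlargement cost per slice; these are equivalent bookkeeping choices). The genuine difference is the final step: the paper re-runs the machinery of Proposition \ref{prop:sumintL1} --- Riemann--Stieltjes integration against $z-\{z\}$, the Fourier expansion of the sawtooth, two integrations by parts, and Leibniz-rule estimates for $g'$ and $g''$ --- whereas you use the elementary first-order bound $\bigl|\sum_{m}h(m)-\int_0^a h\,dz\bigr|\le\int_0^a|h'|\,dz+O\bigl(\sup_{[0,a]}|h|\bigr)$, which yields $O(a^{\lambda-1})+O(a^{\lambda}/N)=O(a^{\lambda-1})$ precisely because $a\le N$. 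This is a real simplification: the second-order analysis is only needed to reach the sharp $O(1/N)$ of the base case, and is superfluous once the target error has been relaxed to $O(a^{\lambda-1})$. Two caveats, neither fatal. The point you flag --- that the induction hypothesis is being applied to $G_m=e^{f_\lambda^{(\eta)}(\cdot,\ldots,\cdot,m)}$, which is not literally of the form $e^{f_{\lambda-1}^{(\eta)}}$ --- is equally present in the paper's own proof; the honest formulation inducts over the class of exponentials of sums of $u^{\pm z_j/N}$-terms with bounded coefficients, and the structural verification you describe is exactly what makes that induction go through. Second, your claim that $\partial_{z_j}f^{(\eta)}_\lambda=O(1/N)$ holds \emph{uniformly in} $u\in(0,1)$ is an overstatement: differentiation brings out a factor $\frac1N\log u$, unbounded as $u\to0^+$ (the paper keeps the $\log(u)^2$ factors explicit in Proposition \ref{prop:sumintL1}). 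Since $u=e^{-t}$ is fixed in the lemma and in all of its applications, constants depending on $t$ are harmless and your estimates stand as written.
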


\begin{proof}
  The proof is by induction. The case \(\lambda = 1\) is given by Lemma \ref{lem:sum12} and Proposition
  \ref{prop:sumintL1}.
  Suppose the result holds for some \(\lambda-1\geq 1\). Then, by Lemma \ref{lem:sum12}, the sum in the left-hand side
  is, up to a factor of order \(O(a^{\lambda-1}) \), given by
  \begin{align*}
    \sum_{0\leq i_{1}\leq i_{2}\leq\cdots \leq i_{\lambda}}^{a} e^{{f_\lambda^{(\eta)}(i_1,i_2,\cdots,i_\lambda;u^{\frac1N})}}
    &= \sum_{i_\lambda=0}^{a} \left( \int_0^{i_\lambda}\int_0^{z_{\lambda-1}} \cdots \int_0^{z_2} e^{{f_\lambda^{(\eta)}(z_1,z_2,\cdots,z_{\lambda-1},i_\lambda;u^{\frac1N})}} d \bm{z} + R_{\lambda-1}^{(i_{\lambda})}(\bm{z}) \right)
  \end{align*}
  where the equality is obtained by applying the induction hypothesis with \(a = i_{\lambda} \) for each \(i_\lambda\).
  The residual terms are of order
  \[
    R_{\lambda-1}^{(i_{\lambda})}(\bm{z}) = O\left(i_{\lambda}^{\lambda-2}\right) = O\left(a^{\lambda-2}\right),
  \]
  and thus
  \[
    \sum_{i_\lambda=0}^{a} R_{\lambda-1}^{(i_{\lambda})}(\bm{z}) = O\left(a^{\lambda-1}\right).
  \]
  On the other hand, we observe as in the case of \(\lambda=1\) that
  \begin{align*}
    \sum_{i_\lambda=0}^{a}  \int_0^{i_\lambda}\int_0^{z_{\lambda-1}} \cdots \int_0^{z_2} &e^{{f_\lambda^{(\eta)}(z_1,z_2,\cdots,z_{\lambda-1},i_\lambda;u^{\frac1N})}} d \bm{z} = \int_{0}^a \int_0^{z_\lambda} \cdots \int_0^{z_{2}} e^{{f_\lambda^{(\eta)}(z_1,z_2,\cdots,z_\lambda;u^{\frac1N})}}  d \bm{z} \\
                                                      & \qquad + 2 \sum_{n=0}^{\infty} \int_{0}^a \left[ \int_0^{z_\lambda} \cdots \int_0^{z_{2}} e^{{f_\lambda^{(\eta)}(z_1,z_2,\cdots,z_\lambda;u^{\frac1N})}} d \bm{z} \right] \cos(2 \pi n z_\lambda) d z_{\lambda}.
  \end{align*}
  It remains to show that
  \[
    \sum_{n=0}^{\infty} \int_{0}^a \left[ \int_0^{z_\lambda} \cdots \int_0^{z_{2}} e^{{f_\lambda^{(\eta)}(z_1,z_2,\cdots,z_\lambda;u^{\frac1N})}} d \bm{z} \right] \cos(2 \pi n z_\lambda) d z_{\lambda} = O(a^{\lambda-1}),
  \]
  the proof follows in the same way as that of that of Proposition \ref{prop:sumintL1} by setting
  \[
    g(z_\lambda) = \sum_{j=0}^{a-1}  \int_0^{z_\lambda+j} \cdots \int_0^{z_{2}} e^{{f_\lambda^{(\eta)}(z_1,z_2,\cdots,z_\lambda+j;u^{\frac1N})}} d \bm{z},
  \]
  and noticing, by Leibniz's rule, that
  \[
    g'(1) - g'(0) = O(a^{\lambda-1}), \qquad g''(z_\lambda) = O(a^{\lambda-3}).
  \]
\end{proof}

%%%%%%%%%%%%%%%%%%%%%%%%%%%%%%%%%%%%%%%%%%%%%%%%%%%%%%%%%%%%%%%%%%%%%%%%%%%%%%%%%% 
\section{Heat kernel of the QRM} \label{sec:limit}
%%%%%%%%%%%%%%%%%%%%%%%%%%%%%%%%%%%%%%%%%%%%%%%%%%%%%%%%%%%%%%%%%%%%%%%%%%%%%%%%%%

In this section we complete the derivation of the analytical expression of the heat kernel and the partition function of the QRM.
In addition, we give the heat kernel and partition function for each of the parities of the QRM, and,
as an application we describe the spectral determinant of the parity Hamiltonians in terms of the $G$-function.

Recall from \S \ref{sec:matrixG} that the expression of the heat kernel \(\KRabi(x,y,t)\) is the sum of two limits multiplied by a factor \(K_0(x,y,u)\).
The first limit is given by
\begin{equation}
  \label{eq:HKlimit1}
   \frac12 \lim_{N \to \infty}  \left(\frac{1+u^{\frac{2\Delta}{N}}}{2 u^{\frac{\Delta}{N}}} \right)^{N-1} \left( J^{(1,N)}_0(x,y,u^{\frac{1}{N}}) \matrixZZ
    +   J^{(1,N)}_1(x,y,u^{\frac{1}{N}}) \matrixOO  \right),
\end{equation}
while the second limit, by the results of  \S \ref{sec:four-transf}  is given by
\begin{align} \label{eq:HKlimit2}
  &\lim_{N \to \infty}\left( \frac{1-u^{\frac{2\Delta}{N}}}{2 u^{\frac{\Delta}{N}}} \right) \sum_{k \geq 3}^N  \Bigg[ \frac12 J_0^{(k,N)}(x,y,u^{\frac1N}) Q_0^{(k,N)}(x,y,u^{\frac1N}) 
    \Bigg\{
    \sum_{\rho \in \btZ{k-3}} 
    \begin{bmatrix}
      (-1)^{|\rho|+1} u^{\frac{\Delta}{N}} &  (-1)^{|\rho|} u^{\frac{\Delta}{N}} \\
       -u^{-\frac{\Delta}{N}} &  u^{-\frac{\Delta}{N}}
    \end{bmatrix}
    \left(\frac{1-u^{\frac{2\Delta}N}}{1+u^{\frac{2 \Delta}N}}\right)^{|\rho|}  \nonumber \\
  &\qquad \qquad \qquad \qquad \qquad \qquad \qquad \qquad \qquad \qquad \qquad \qquad \times H_{1}^{(k,N)}(x,y,u^{\frac1N},\rho) P^{(k,N)}(u^{\frac1N},\rho) \Bigg\}   \nonumber  \\
  &\qquad \qquad \qquad \qquad + \frac12 J_1^{(k,N)}(x,y,u^{\frac1N}) Q_1^{(k,N)}(x,y,u^{\frac1N}) 
    \Bigg\{
    \sum_{\rho \in \btZ{k-3}}
    \begin{bmatrix}
      (-1)^{|\rho|+1} u^{\frac{\Delta}{N}} & (-1)^{|\rho|+1} u^{\frac{\Delta}{N}} \\
       u^{-\frac{\Delta}{N}} &  u^{-\frac{\Delta}{N}}
    \end{bmatrix}
    \left(\frac{1-u^{\frac{2\Delta}N}}{1+u^{\frac{2 \Delta}N}}\right)^{|\rho|}    \nonumber  \\
& \qquad \qquad \qquad \qquad \qquad \qquad \qquad \qquad \qquad \qquad \qquad \qquad \times H_{0}^{(k,N)}(x,y,u^{\frac1N},\rho) P^{(k,N)}(u^{\frac1N},\rho) \Bigg\} \Bigg].
\end{align}
where the functions $H_{\mu}^{(k,N)}$, $P^{(k,N)}$ and $Q_0^{(k,N)}$ are as in Definition \ref{dfn:HPQ}.

By expanding the geometric series in \(J^{(1,N)}_i\) for \(i=0,1 \), it is easy to verify that the limit \eqref{eq:HKlimit1} is equal to
\begin{align} \label{eq:HK0}
  e^{-2g^2 \frac{1-e^{-t}}{1+e^{-t}}}
                     \begin{bmatrix} \cosh & -\sinh
                       \\ -\sinh & \cosh
                     \end{bmatrix}
                                   \Big( \sqrt2 g(x+y)\frac{1-e^{-t}}{1+e^{-t}}\Big).
\end{align}

Next, we turn our attention to the limit \eqref{eq:HKlimit2}. First, we notice that the matrix factor appearing in the
sums is fixed for all \(\rho \) with \(|\rho| \equiv i \pmod{2} \), with \(i =0,1\). Thus, by partitioning the sum appearing in
\eqref{eq:HKlimit2} according to the norm \(\lambda\) of the vectors \(\rho\) and omitting the matrix factor for now, we obtain a sum of the type 
\begin{align} \label{eq:InfSum1}
  \lim_{N \to \infty}\left( \frac{1-u^{\frac{2\Delta}{N}}}{2 u^{\frac{\Delta}{N}}} \right)  \sum_{ \substack{\lambda \equiv i \\ \pmod{2 }}}^N
  \sum_{k = \lambda + 3}^N  & J_\eta^{(k,N)}(x,y,u^{\frac1N})Q_\eta^{(k,N)}(x,y,u^{\frac1N}) \nonumber \\
  &\times \Bigg\{ \sum_{\substack{\rho \in \btZ{k-3} \\ |\rho| = \lambda  } } \left(\frac{1-u^{\frac{2\Delta}N}}{1+u^{\frac{2 \Delta}N}}\right)^{\lambda} H_{1-\eta}^{(k,N)}(x,y,u^{\frac1N},\rho) P^{(k,N)}(u^{\frac1N},\rho) \Bigg\},
\end{align}
with \(i,\eta \in \{0,1\} \). Notice that if \( \lambda > N \), we have
\[
  \sum_{\substack{\rho \in \btZ{k-3} \\ |\rho| = \lambda  } } \left(\frac{1-u^{\frac{2\Delta}N}}{1+u^{\frac{2 \Delta}N}}\right)^{\lambda} H_{1-\eta}^{(k,N)}(x,y,u^{\frac1N},\rho) P^{(k,N)}(u^{\frac1N},\rho)  = 0,
\]
whence, \eqref{eq:InfSum1} is equal to
\begin{align*}
  \lim_{N \to \infty}\left( \frac{1-u^{\frac{2\Delta}{N}}}{2 u^{\frac{\Delta}{N}}} \right) \sum_{ \substack{\lambda \equiv i \\ \pmod{2 }}}^{\infty}
  \sum_{k = \lambda + 3}^N  & J_{\eta}^{(k,N)}(x,y,u^{\frac1N})Q_{\eta}^{(k,N)}(x,y,u^{\frac1N})  \left(\frac{1-u^{\frac{2\Delta}N}}{1+u^{\frac{2 \Delta}N}}\right)^{\lambda} \\
  &\times\Bigg\{ \sum_{\substack{\rho \in \btZ{k-3} \\ |\rho| = \lambda  } }  H_{1-\eta}^{(k,N)}(x,y,u^{\frac1N},\rho) P^{(k,N)}(u^{\frac1N},\rho) \Bigg\},
\end{align*}
and since the \(H_{\mu}^{(k,N)}(x,y,u^{\tfrac1N},\rho) P^{(k,N)}(u^{\tfrac1N},\rho) \) is  uniformly bounded (cf. the discussion at the beginning of \S \ref{sec:spvl}), the dominated convergence theorem shows that the above expression is equal to 
\begin{align}
  \label{eq:infsum2}
  \sum_{ \substack{\lambda \equiv i \\ \pmod{2 }}}^{\infty} \lim_{N \to \infty}\left( \frac{1-u^{\frac{2\Delta}{N}}}{2 u^{\frac{\Delta}{N}}} \right)
  \sum_{k = \lambda + 3}^N   &J_\eta^{(k,N)}(x,y,u^{\frac1N})Q_\eta^{(k,N)}(x,y,u^{\frac1N}) \left(\frac{1-u^{\frac{2\Delta}N}}{1+u^{\frac{2 \Delta}N}}\right)^{\lambda} \nonumber \\
  &\times \Bigg\{ \sum_{\substack{\rho \in \btZ{k-3} \\ |\rho| = \lambda  } }  H_{1-\eta}^{(k,N)}(x,y,u^{\frac1N},\rho) P^{(k,N)}(u^{\frac1N},\rho) \Bigg\}.
\end{align}
  
Thus, the limit \eqref{eq:HKlimit2} may be computed termwise for each value of \( \lambda  \geq 0 \).

The innermost sum in \eqref{eq:infsum2} is computed as an iterated integral by the results
of \S \ref{sec:transf-summ-into} and the next lemma gives the explicit computation of
\(J_{\mu}^{(k,N)}(x,y,u^{ \frac{1}N })Q_{\mu}^{(k,N)}(x,y,u^{ \frac{1}N }) \).
\begin{lem} \label{lem:Jbar}
  For \(\mu=0,1\), we have
  \begin{align*}
    &J_{\mu}^{(k,N)}(x,y,e^{-\frac{t}N }) Q_\mu^{(k,N)}(x,y,e^{-\frac{t}N }) \\
     &\quad= \exp\Big( (-1)^\mu  \frac{2\sqrt2 g e^{-t} }{1-e^{-2t}}\left( x (e^{-\frac{tk}N+1}+e^{-t + \frac{2 tk}N }) - y (e^{-\frac{tk}N }+e^{\frac{ t k}N }) - \sqrt2 g \frac{1+e^{-t}}{1-e^{-t}}(x-y) \right) \Big)\\
     &\quad \, \times \exp\Bigg( - 4g^2 \frac{1+e^{-2 t}}{1 - e^{-2 t}} + 2 g^2 \frac{e^{-\frac{t k}{N}}(1+ e^{-t+\frac{2 tk}{N}})}{1-e^{- t}} + 2 g^2\frac{e^{-\frac{t k }N}(1-e^{- t + \frac{ t k}{N}})(1-e^{-\frac{ t k}{N}})(1+e^{-t+\frac{2 t k}{N}})}{1-e^{-2 t}}    +O\left(\frac1N\right) \Bigg).
  \end{align*}
\end{lem}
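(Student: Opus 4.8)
The proof is a direct but lengthy computation. Since both $J_\mu^{(k,N)}$ and $Q_\mu^{(k,N)}$ are exponentials, it suffices to evaluate the exponent of $J_\mu^{(k,N)}$ in closed form, add the already explicit exponent of $Q_\mu^{(k,N)}$ from Definition \ref{dfn:HPQ}, and extract the leading asymptotics as $N\to\infty$. Throughout I would set $u=e^{-t}$ and $v:=u^{1/N}=e^{-t/N}$, and use repeatedly the exact identities $v^N=u$ and $v^{2N}=u^2$, so that the denominator $1-v^{2N}$ appearing in Definition \ref{dfn:JR} equals $1-u^2$. A key point to keep in mind is that $k$ ranges over $\lambda+3\le k\le N$, so $k/N$ stays of order one; hence factors $v^{\pm(k-1)}=u^{\pm k/N}(1+O(1/N))$ must be retained as order-one quantities rather than expanded away, and this is exactly the origin of the surviving $e^{\mp tk/N}$ exponentials in the statement.

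First I would evaluate the single sums $\sum_{j=k}^{N}\Lambda^{(j)}(v)$ and $\sum_{j=k}^{N}\Lambda^{(N-j+1)}(v)$ governing the terms linear in $x$ and $y$. Writing $\Lambda^{(j)}(v)=v^{j-1}-v^{2N-j}$ from \eqref{eq:notation}, each is a finite geometric series whose value carries a factor $(1-v)^{-1}$; this cancels exactly against the prefactor $\sqrt2\,g(1-v)/(1-v^{2N})$, leaving expressions of the form $\tfrac{\sqrt2 g}{1-u^2}\bigl(v^{k-1}-2u+u^2v^{1-k}\bigr)$ and its $y$-analogue. The term $-2Ng^2(1-v)/(1+v)$ is handled by $N(1-v)=t+O(1/N)$ and $1+v=2+O(1/N)$.

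The main obstacle is the $g^2$ part of the exponent of $J_\mu^{(k,N)}$, namely the diagonal sum $\sum_{i=k}^{N-1}\Omega^{(i,i)}(v)$ and the double sum $\sum_{i=k}^{N-2}\sum_{j=i+1}^{N-1}\Omega^{(i,j)}(v)$. Using $\Omega^{(i,j)}(v)=v^{j-i}-v^{i+j}-v^{2N-i-j}+v^{2N+i-j}$, each splits into geometric (double) sums. The diagonal sum is $O(N)$ and, against the prefactor $2g^2(1-v)^2/((1+v)^2(1-v^{2N}))=O(1/N^2)$, contributes only $O(1/N)$, so it may be discarded. The double sum, by contrast, is $O(N^2)$ and survives; the delicate step is to organize the four resulting geometric sums, noting that the double geometric summation produces a $(1-v)^{-2}$ that cancels the $(1-v)^2$ in the prefactor, and then to separate the pieces converging to the $N$-independent constant $-4g^2\frac{1+u^2}{1-u^2}$ from those retaining $u^{k/N}$-dependence. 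I expect several spurious $t$-linear and $k/N$-linear contributions of the form $\propto g^2 t\,(1-k/N)$ to appear here; these must cancel against the $-2Ng^2(1-v)/(1+v)$ term of $J_\mu^{(k,N)}$ and against the explicit $g^2 tk/N$ and constant terms of $Q_\mu^{(k,N)}$, leaving no bare $t$-linear term in the final answer.

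Finally I would add the exponent of $J_\mu^{(k,N)}$ to that of $Q_\mu^{(k,N)}$; the explicit terms of $Q_\mu^{(k,N)}$ are designed precisely to remove the residual $\rho$-independent contributions, and after collecting by powers of $g$ and by $x,y$ one obtains the asserted closed form. All error terms are $O(1/N)$ uniformly in $u\in(0,1)$: every factor occurring is a product of quantities of the form $u^{ak/N}$ and $(1-u^{b})/(1-u^2)$ that remain uniformly bounded for $0<u<1$ and $k\le N$, exactly as in the proof of Lemma \ref{lem:uniform_bdd}, so that the finitely many applications of $v^{c}=u^{c/N}(1+O(1/N))$ and $N(1-v)=t+O(1/N)$ accumulate only an $O(1/N)$ discrepancy.
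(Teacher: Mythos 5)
Your proposal follows essentially the same route as the paper's (very terse) proof: direct evaluation of the geometric series in $J_\mu^{(k,N)}$ using the key identity $(1-u^{\frac1N})(N-k-1)=t(1-\tfrac{k}{N})+O(\tfrac1N)$, which yields exactly the closed form (with the $x$-coefficient $\tfrac{\sqrt2 g}{1-u^2}(u^{k/N}-2u+u^{2-k/N})$ and the spurious $-g^2tk/N$ term you predict) that is then multiplied by $Q_\mu^{(k,N)}$, whose $+g^2 tk/N$ and constant terms effect precisely the cancellations you describe. One minor caveat: your claim that the $O(\tfrac1N)$ errors are uniform over all $u\in(0,1)$ is overstated, since the constants in $v^{c}=u^{c/N}(1+O(\tfrac1N))$ and $N(1-v)=t+O(\tfrac1N)$ grow with $t=-\log u$, so uniformity holds only on compact subsets of $(0,1)$ --- but this does not affect the lemma, which concerns fixed $t$.
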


\begin{proof}
  Direct evaluation of the geometric series using the identity
  \[
    (1-u^{\frac1N})(N-k-1)= \left[\frac{t}N + O\left(\frac1{N^2}\right)\right](N-k-1) = t(1-\frac{k}N) +O\left(\frac1N \right) \quad (N\to \infty).
  \]
  gives
  \begin{align*}
    J_\mu^{(k,N)}(x,y,u^{\frac1N})  &= \exp\Big(\frac{(-1)^\mu\sqrt2 g}{1-u^2} (1-u^{1-\frac{k}N}) \big[x u^{\frac{k}N} (1-u^{1-\frac{k}N} ) +y(1-u^{1+\frac{k}N})\big]\Big)\\
    & \times  \exp\Bigg( - g^2\frac{t k}{N} - \frac{2 g^2 (1-u^{1-\frac{k}{N}})(1-u^{1+\frac{k}{N}})}{1-u^2} + \frac{g^2 (1-u^{2-\frac{2 k}{N}})(1-u^{\frac{2 k}N})}{2(1-u^2)}  + O\left(\frac1N\right) \Bigg) .
    % = & \exp\Big(\frac{(-1)^\mu\sqrt2 g}{1-u^2}
  \end{align*}
  Then, the result is obtained multiplying by \(Q_{\mu}^{(k,N)}(u)\) and setting $u=e^{-t}$.

\end{proof} 

With these preparations, we proceed to the computation of the limit \eqref{eq:HKlimit2}, thus giving the
analytic formula for the heat kernel of the QRM.

%%%%%%%%%%%%%%%%%%%%%%%%%%%%%%%%%%%%%%%%%%%%%%%%%%%%%%%%%%%%%%%%%%%%%%%%%%%%%%%%%% 
\subsection{Analytical formula of the heat kernel and partition function} \label{sec:analyt-form-heal}
%%%%%%%%%%%%%%%%%%%%%%%%%%%%%%%%%%%%%%%%%%%%%%%%%%%%%%%%%%%%%%%%%%%%%%%%%%%%%%%%%% 

Finally, in this subsection we present the main results of this paper, namely, the analytical expressions for the heat kernel and the partition function of the QRM.
In the next theorem, for \(\lambda = 0 \), we employ the notation
\[
  \idotsint\limits_{0\leq \mu_1 \leq \cdots \leq \mu_\lambda \leq 1} f(x) d \bm{\mu_0} = f(x),
\]
for any function \(f\).

%%%%%%%%%%%%%%%%%%%%%%%%%%%%%%%%%%%%%%%%%%%%%%%%%%%%%%
%%%%%%%%%%% Theorem: Heat kernel of QRM %%%%%%%%%%%%%%
%%%%%%%%%%%%%%%%%%%%%%%%%%%%%%%%%%%%%%%%%%%%%%%%%%%%%%

\begin{thm} \label{thm:heat_kernel}
  The heat kernel $\KRabi(x,y,t) $ of the QRM is given by the uniformly convergent series
  \begin{align*}    
    \KRabi(&x,y,t;g,\Delta) =  K_0(x,y,t;g) \Bigg[ \sum_{\lambda=0}^{\infty} (t\Delta)^{\lambda} e^{-2g^2 (\coth(\tfrac{t}2))^{(-1)^\lambda}}
    \\
    &\quad \times \idotsint\limits_{0\leq \mu_1 \leq \cdots \leq \mu_\lambda \leq 1}  e^{4g^2 \frac{\cosh(t(1-\mu_\lambda))}{\sinh(t)}(\frac{1+(-1)^\lambda}{2}) + \xi_{\lambda}(\bm{\mu_{\lambda}},t;g)}  \cdot
          \begin{bmatrix}
            (-1)^{\lambda} \cosh  &  (-1)^{\lambda+1} \sinh  \\
            -\sinh &  \cosh
          \end{bmatrix}
                     \left( \theta_{\lambda}(x,y,\bm{\mu_{\lambda}},t;g) \right) d \bm{\mu_{\lambda}} \Bigg],
  \end{align*}
  with \(\bm{\mu_0} := 0\) and \(\bm{\mu_{\lambda}}= (\mu_1,\mu_2,\cdots,\mu_\lambda)\) and \(d \bm{\mu_{\lambda}} = d \mu_1 d \mu_2 \cdots d \mu_{\lambda} \) for \(\lambda \geq 1\). Here,
  $K_0(x,y,t;g)$ (cf. \eqref{eq:K0} ) is given by
  \begin{align*}
    K_0(x,y,t;g)
    & = \frac{e^{g^2t}}{\sqrt{\pi (1-e^{-2t})}} \exp\left( - \frac{1+e^{-2t}}{2(1-e^{-2t})} (x^2 + y^2) +  \frac{2 e^{-t} x y}{1-e^{-2t}} \right)\\%  & = \frac{e^{-2g^2t}}{\sqrt{\pi (1-e^{-2t})}} \exp\left(-\frac{x^2+y^2}{2\tanh t} + \frac{xy}{\sinh t}\right),
  \end{align*}
  and the functions \(\theta_{\lambda}(x,y, \bm{\mu_{\lambda}},t)\) and $\xi_\lambda(\bm{\mu_{\lambda}},t)$ are given by
\begin{align*} %\label{eq:auxfunc}
  \theta_{\lambda}(x,y, \bm{\mu_{\lambda}},t;g) &:= \frac{2\sqrt{2} g e^{-t}}{1-e^{-2t}}\left( x (e^{t}+e^{- t}) - 2 y \right) \left( \frac{1-(-1)^{\lambda}}{2} \right) - \sqrt{2} g (x-y) \frac{1+e^{-t}}{1-e^{-t}} \\
                           & \quad +   \frac{2\sqrt{2} g e^{-  t}}{1-e^{-2 t}} (-1)^{\lambda} \sum_{\gamma=0}^{\lambda} (-1)^{\gamma} \Big[ x  (e^{t(1 -   \mu_{\gamma}) } + e^{ t( \mu_{\gamma} - 1)})  -  y  (e^{- t \mu_{\gamma} }+ e^{ t \mu_{\gamma}})  \Big] \nonumber \\
  \xi_\lambda(\bm{\mu_{\lambda}},t;g) &:=  -\frac{2g^2 e^{-t}}{1-e^{-2t}} \left(e^{\frac12t(1-\mu_\lambda)}-e^{\frac12 t(\mu_{\lambda}-1)}\right)^2 (-1)^{\lambda}  \sum_{\gamma=0}^{\lambda} (-1)^{\gamma} (e^{- t \mu_{\gamma} }+ e^{ t \mu_{\gamma}})  \\
                     &\qquad  - \frac{2 g^2 e^{-t} }{1-e^{-2 t}} \sum_{\substack{0\leq\alpha<\beta\leq \lambda-1\\ \beta - \alpha \equiv 1 \pmod{2}  }}  \left( (e^{t(1-\mu_{\beta+1})} +  e^{t(\mu_{\beta+1}-1)} )-(e^{t(1-\mu_{\beta})} +  e^{t(\mu_{\beta}-1)}) \right) \nonumber  \\
  &\qquad \qquad \qquad \qquad \qquad \qquad \times ( (e^{t  \mu_{\alpha}} + e^{-t \mu_{\alpha}}) - (e^{t \mu_{\alpha+1}} + e^{-t \mu_{\alpha+1}})), \nonumber 
\end{align*}
where we use the convention \( \mu_0 = 0 \) whenever it appears in the formulas above.
\end{thm}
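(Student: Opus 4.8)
The plan is to assemble the stated formula from the decomposition established at the opening of \S\ref{sec:limit}: writing $u=e^{-t}$, the heat kernel equals $K_0(x,y,u)$ times the sum of the two limits \eqref{eq:HKlimit1} and \eqref{eq:HKlimit2}. Since $K_0(x,y,u)$ already has the form asserted in the theorem after substituting $u=e^{-t}$ (note $u^{-g^2}=e^{g^2t}$), the work reduces to showing that \eqref{eq:HKlimit1} produces precisely the $\lambda=0$ summand and that \eqref{eq:HKlimit2} produces the sum over $\lambda\ge 1$.

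First I would dispose of \eqref{eq:HKlimit1}. The geometric factor $(\tfrac{1+u^{2\Delta/N}}{2u^{\Delta/N}})^{N-1}=\cosh(\tfrac{t\Delta}{N})^{N-1}$ tends to $1$, and evaluating $\lim_{N}(J_0^{(1,N)}\bM_{00}+J_1^{(1,N)}\bM_{11})$ yields the closed expression \eqref{eq:HK0}. A short computation using the identity $2\coth t=\coth(\tfrac t2)+\tanh(\tfrac t2)$, together with the evaluation $\theta_0=\sqrt2\,g(x+y)\tanh(\tfrac t2)$ and $\xi_0=0$, then identifies \eqref{eq:HK0} with the $\lambda=0$ term of the theorem.

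The substance lies in \eqref{eq:HKlimit2}. I would (i) partition the inner sum over $\rho\in\btZ{k-3}$ according to $\lambda:=|\rho|$; because the matrix prefactors depend only on $\lambda\bmod 2$ and on the parity index $\eta$, this reduces \eqref{eq:HKlimit2} to the family of limits \eqref{eq:infsum2}, with the interchange of $\lim_N$ and $\sum_\lambda$ justified by dominated convergence, using the uniform boundedness of $H^{(k,N)}_{1-\eta}P^{(k,N)}$ (Lemma \ref{lem:uniform_bdd}). Next I would (ii) invoke Lemma \ref{lem:sumint} to rewrite the innermost sum $\sum_{|\rho|=\lambda}H^{(k,N)}_{1-\eta}P^{(k,N)}$ as the $\lambda$-fold iterated integral of $e^{f^{(\eta)}_\lambda}$ over $0\le z_1\le\cdots\le z_\lambda\le k-3$, up to a residual of order $O((k-3)^{\lambda-1})$, and (iii) evaluate $J^{(k,N)}_\eta Q^{(k,N)}_\eta$ by Lemma \ref{lem:Jbar}. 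The crucial reindexing is that the outer sum over $k$, carrying the prefactor $\tfrac{1-u^{2\Delta/N}}{2u^{\Delta/N}}\sim\tfrac{t\Delta}{N}$ and the factor $(\tfrac{1-u^{2\Delta/N}}{1+u^{2\Delta/N}})^\lambda\sim(\tfrac{t\Delta}{N})^\lambda$, becomes, after the rescaling $z_j=N\mu_j$ (which supplies the compensating $N^\lambda$) and $k\sim N\mu_{\lambda+1}$, a further outermost integration. Thus a term with $|\rho|=\lambda$ contributes the theorem's $(\lambda+1)$-st summand, carrying $(t\Delta)^{\lambda+1}$ and an $(\lambda+1)$-fold integral over $0\le\mu_1\le\cdots\le\mu_{\lambda+1}\le 1$.

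Finally I would pass to the limit $N\to\infty$ in the rescaled variables: the $q$-numbers $[i_\gamma]_{u^{\pm1/N}}$ appearing through Lemma \ref{lem:sumexp} and \eqref{eq:varphit} degenerate to hyperbolic functions of the $\mu_j$, and matching $f^{(\eta)}_\lambda$ combined with $\log(J^{(k,N)}_\eta Q^{(k,N)}_\eta)$ should reproduce $\theta_\lambda$, $\xi_\lambda$, and the scalar exponential prefactor. The two parity blocks $\eta\in\{0,1\}$ carry the two integer matrices together with the reciprocal factors $e^{\pm\theta_\lambda}$ coming from $H^{(k,N)}_{1-\eta}$; summing them (with the overall factor $\tfrac12$) assembles exactly the $\cosh/\sinh$ matrix of the theorem, with the entries $(-1)^{\lambda}\cosh$, $(-1)^{\lambda+1}\sinh$, $-\sinh$, $\cosh$ emerging from the $(-1)^{|\rho|}$ pattern. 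The residual terms $O((k-3)^{\lambda-1})$ and $O(1/N)$ accumulated in Lemmas \ref{lem:sumint} and \ref{lem:Jbar} vanish after multiplication by $(t\Delta/N)^{\lambda+1}$ and summation over $k$, while uniform convergence of the resulting series in $\lambda$ follows from the uniform bound on the integrand and the $1/\lambda!$ simplex volume. The main obstacle is precisely this last bookkeeping: tracking the many $q$-number factors through the simultaneous change of variables and limit, and confirming that the sum over $k$ together with the iterated integral over the ordered indices $i_1<\cdots<i_\lambda$ collapses to the single iterated integral over $0\le\mu_1\le\cdots\le\mu_{\lambda+1}\le 1$ with the stated integrand. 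I expect no conceptual difficulty here, the entire apparatus having been prepared in \S\ref{sec:four-transf} and \S\ref{sec:transf-summ-into}.
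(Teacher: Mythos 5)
Your outline follows the paper's own route step for step: the split of the kernel into $K_0$ times the two limits \eqref{eq:HKlimit1} and \eqref{eq:HKlimit2}, the identification of \eqref{eq:HKlimit1} with the $\lambda=0$ summand, the partition of \eqref{eq:HKlimit2} by $\lambda=|\rho|$ with the interchange of $\lim_N$ and $\sum_\lambda$ justified by dominated convergence, the conversion of the inner sums into iterated integrals via Lemma \ref{lem:sumint}, the evaluation of $J^{(k,N)}_\eta Q^{(k,N)}_\eta$ via Lemma \ref{lem:Jbar}, and the rescaling that turns the $k$-sum into the outermost integration and produces the index shift $|\rho|=\lambda \mapsto (t\Delta)^{\lambda+1}$ over the $(\lambda+1)$-simplex. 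This is exactly the argument of \S\ref{sec:limit}.

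However, one concrete step fails as written: your claim that $\xi_0=0$. Under the theorem's convention $\mu_0=0$, the first sum in the definition of $\xi_\lambda$ is \emph{not} empty at $\lambda=0$; it consists of the single term $\gamma=0$, so that
\[
  \xi_0(\bm{\mu_0},t;g)
  = -\frac{2g^2 e^{-t}}{1-e^{-2t}}\left(e^{t/2}-e^{-t/2}\right)^2 \cdot 2
  = -4g^2\tanh\left(\tfrac{t}{2}\right) \neq 0 .
\]
This nonzero value is precisely what makes the $\lambda=0$ identification close up: the theorem's $\lambda=0$ summand carries the exponent
\[
  -2g^2\coth\left(\tfrac{t}{2}\right) + 4g^2\coth(t) + \xi_0
  = 2g^2\tanh\left(\tfrac{t}{2}\right) - 4g^2\tanh\left(\tfrac{t}{2}\right)
  = -2g^2\tanh\left(\tfrac{t}{2}\right),
\]
which matches the factor $e^{-2g^2\frac{1-e^{-t}}{1+e^{-t}}}$ in \eqref{eq:HK0} (equivalently, \eqref{eq:functlambda0}). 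With $\xi_0=0$, your computation using $2\coth t = \coth(\tfrac{t}{2})+\tanh(\tfrac{t}{2})$ produces $e^{+2g^2\tanh(t/2)}$ instead, off by a factor $e^{4g^2\tanh(t/2)}$, so \eqref{eq:HKlimit1} would not equal the $\lambda=0$ term and the assembled series would not be the stated one. The repair is purely local — evaluate $\xi_0$ from its definition rather than declaring it empty — after which the rest of your plan goes through as in the paper.
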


\begin{rem}
  Note that the term corresponding to \(\lambda=0\) in the series is given explicitly (see \eqref{eq:HK0} and \eqref{eq:functlambda0} below) by
  \[
    e^{-2g^2 \tanh(\tfrac{t}2)} \begin{bmatrix}
           \cosh  &  - \sinh  \\
            -\sinh &  \cosh
          \end{bmatrix}
                     \left( \sqrt2 g(x+y)\frac{1-e^{-t}}{1+e^{-t}}) \right).
  \]
\end{rem}

%%%%%%%%%%%%%%%%%%%%
%%%% Proof %%%%%%%%%
%%%%%%%%%%%%%%%%%%%%

\begin{proof}

  For clarity, let us first define some notations to be used during the proof.

  For \(\lambda \geq 0 \), the functions \(\phi(s,t)\), \(\alpha_{\lambda}(x,y,t)\) and $\sigma_\lambda(s,t)$ are given by
  \begin{align*} %\label{eq:Jf}
    \phi(s,t) &:= - 4g^2 \frac{1+e^{-2 t}}{1 - e^{-2 t}} + 2 g^2 \frac{e^{-s t}(1+ e^{t(2s-1)})}{1-e^{- t}} + 2g^2\frac{e^{-st}(1-e^{t(s-1)})(1-e^{-st})(1+e^{t(2s-1)})}{1-e^{-2 t}}, \nonumber \\
    \alpha_{\lambda}(x,y,t) &:= \frac{2\sqrt{2} g e^{-t}}{1-e^{-2t}}\left( x (e^{t}+e^{- t}) - 2 y \right) \left( \frac{1-(-1)^{\lambda}}{2} \right) - \sqrt{2} g (x-y) \frac{1+e^{-t}}{1-e^{-t}},  \\
    \sigma_{\lambda}(s,t) &:=  -\frac{4g^2 e^{-t s}(1-e^{t(s-1)})^2}{1-e^{-2t}} \left( \frac{1-(-1)^\lambda}{2} \right) +
                 \frac{2g^2 e^{-t s}(1-e^{t(s-1)})^2(e^{t s}+ e^{-t s})}{1-e^{-2 t}}
  \end{align*} 
  for \( \bm{\mu_{\lambda}} =(\mu_1,\mu_2,\cdots,\mu_\lambda) \in \R^{\lambda} \) (where \(\bm{\mu_0} := 0 \in \{0\} = \R^0\) ), we define
  \begin{align*} %\label{eq:auxfunc}
    \vartheta_{\lambda}(x,y,\bm{\mu_{\lambda}},t) &:= \frac{2\sqrt{2} g e^{-  t}}{1-e^{-2 t}} (-1)^{\lambda} \sum_{\gamma=0}^{\lambda} (-1)^{\gamma} \Big[ x  (e^{t(1 -   \mu_{\gamma}) } + e^{ t( \mu_{\gamma} - 1)})  -  y  (e^{- t \mu_{\gamma} }+ e^{ t \mu_{\gamma}})  \Big]. \nonumber 
  \end{align*} 
  We note that these functions correspond to the expressions appearing inside the exponentials in
  \(J_{\mu}^{(k,N)}(x,y,u^{ \frac{1}N })\), \(Q_{\mu}^{(k,N)}(x,y,u^{ \frac{1}N })\) (see Lemma \ref{lem:Jbar}) and in the function \(f^{(\eta)}_\lambda\)
  (defined in \S \ref{sec:transf-summ-into}).

  To complete the computation of the heat kernel it remains to compute the limits in \eqref{eq:infsum2}. We consider the cases
  \(\lambda = 0 \) and \(\lambda \geq 1 \) by separate.

For \(\lambda= 0 \), the limit \eqref{eq:HKlimit2} is given by
\[
  \frac12 \lim_{N \to \infty}\left( \frac{1-u^{\frac{2\Delta}{N}}}{2 u^{\frac{\Delta}{N}}} \right)
  \sum_{k = \lambda + 3}^N   J_\eta^{(k,N)}(x,y,u^{\frac1N})Q_\eta^{(k,N)}(x,y,u^{\frac1N}) 
\]
since \( H_{1}^{(k,N)}(x,y,u^{\frac1N},\rho) = P^{(k,N)}(u^{\frac1N},\rho) = 1 \) for \(\rho = \bm{0}_n \)  with \(n \geq 1 \). By Lemma \ref{lem:Jbar}, the limit is the Riemann sum corresponding to the integral
\[
  \frac{t \Delta}2 \int_{0}^1 e^{(-1)^{\eta} \left( \alpha_1(x,y,t) + \vartheta_1(x,y,\bm{\mu_1} ,t) \right) + \phi(\mu_1,t)}  d \bm{\mu_1}.
\]

Notice that since \(\xi_1(\bm{\mu_1},t;g) + \sigma_1(\mu_1,t) = 0\), we can write
\begin{align} \label{eq:case0}
  \frac{t \Delta}2 \int_{0}^1 e^{(-1)^{\eta} \left(\alpha_1(x,y,t) + \vartheta_1(x,y,\bm{\mu_1} ,t)  \right) + \phi(\mu_1,t) + \sigma_1(\mu_1,t) + \xi_1(\bm{\mu_1},t;g) } d \bm{\mu_1}. 
\end{align}

Next, we consider the case \(\lambda \geq 1\). In this case, since \( H_{\eta}^{(k,N)}(x,y,u^{\frac1N},\rho),P^{(k,N)}(u^{\frac1N},\rho) \) are
non-vanishing, multiple iterated integrals appear in the computation. Let \(h_\lambda(x,y,t) = \frac{2\sqrt{2} g e^{-t}}{1-e^{-2t}}\left( x (e^{t}+e^{- t}) - 2 y \right) \left( \frac{(1-(-1)^{\lambda})}2\right)  \), then, by Lemma \ref{lem:sumint}, the limit \eqref{eq:HKlimit2} is given by
\begin{align*} %\label{eq:1}
  &\frac12 \lim_{N \to \infty}\left( \frac{1-u^{\frac{2\Delta}{N}}}{2 u^{\frac{\Delta}{N}}} \right)  \sum_{k = \lambda + 3}^N   J_\eta^{(k,N)}(x,y,u^{\frac1N})Q_\eta^{(k,N)}(x,y,u^{\frac1N}) \left(\frac{1-u^{\frac{2\Delta}N}}{1+u^{\frac{2 \Delta}N}}\right)^\lambda  \sum_{1\leq i_1<i_2 < \cdots < i_\lambda}^{k-3} e^{f_\lambda^{(1-\eta)}(i_1,\cdots,i_\lambda,u^{\frac1N})} \\
  = &  \frac{1}2 e^{(-1)^{\eta}  h_{\lambda+1}(x,y,t)  } \lim_{N \to \infty}\left( \frac{1-u^{\frac{2\Delta}{N}}}{2 u^{\frac{\Delta}{N}}} \right)\sum_{k = \lambda + 3}^N   J_\eta^{(k,N)}(x,y,u^{\frac1N})Q_\eta^{(k,N)}(x,y,u^{\frac1N}) \left(\frac{1-u^{\frac{2\Delta}N}}{1+u^{\frac{2 \Delta}N}}\right)^\lambda
       e^{\sigma_{\lambda+1}(\frac{k}N,t) }  \\
  & \qquad\qquad\quad \times \int_{0}^{k-3}\int_0^{\mu_\lambda} \cdots \int_0^{\mu_{2}}  e^{ (-1)^{\eta} \vartheta_{\lambda+1}(\bm{\nu},t) + \xi_{\lambda+1}(\bm{\nu} ,t;g)} d \bm{\mu_{\lambda}}
\end{align*}
with \( \bm{\nu}  = (\frac{t}N \mu_1, \frac{t}N \mu_2,\ldots,\frac{t}N \mu_{\lambda},\frac{t k}N)\) and \(d \bm{\mu_\lambda} = d \mu_1 d \mu_2 \cdots d \mu_\lambda \). The change of variable  \(  \mu_i \mapsto (k-3) \mu_i\)
for \(i \in \{ 1,2,\cdots, \lambda\}\)  yields
\begin{align*}
  \frac{1}2 e^{(-1)^{\eta} h_{\lambda+1}(x,y,t)}  &\lim_{N \to \infty} \left( \frac{1-e^{-t \frac{2\Delta}{N}}}{2 e^{-t \frac{\Delta}{N}}} \right) \sum_{k = \lambda + 3}^N  J_\eta^{(k,N)}\left(x,y,e^{- \frac{t}N}\right) Q_\eta^{(k,N)} \left(x,y,e^{- \frac{t}N}\right)e^{\sigma_{\lambda+1}(\frac{k}N,t) }  \\
  &\qquad \times  \left(\frac{1-e^{-t \frac{2\Delta}N}}{1+e^{-t \frac{2 \Delta}N}}\right)^{\lambda}  k^{\lambda}  \idotsint\limits_{0\leq \mu_1 \leq \cdots \leq \mu_{\lambda} \leq 1} e^{ (-1)^{\eta} \vartheta_{\lambda+1}(\bm{\nu}_2,t) + \xi_{\lambda+1}( \bm{\nu}_2,t;g) }  d \bm{\mu_\lambda},
\end{align*}
  where \(\bm{\nu}_2  = (\frac{t k}N \mu_1, \frac{t k}N \mu_2,\ldots,\frac{t k}N \mu_{\lambda},\frac{t k}N)\) and where, for clarity,
  we omitted terms of order \( O(k^{\lambda-1}) \) that vanish when taking the limit.
  
  The limit is the Riemann sum corresponding to the integral
\begin{align*}
  \frac{ (t\Delta)^{\lambda+1}}{2} e^{(-1)^{\eta}  \alpha_{\lambda+1}(x,y,t)} & \int_{0}^1 (\mu_{\lambda+1})^{\lambda} e^{\phi(\mu_{\lambda+1},t) +  \sigma_{\lambda+1}(\mu_{\lambda+1},t) } \idotsint\limits_{0\leq \mu_1 \leq \cdots \leq \mu_{\lambda} \leq 1} e^{ (-1)^{\eta} \vartheta_{\lambda+1}(\bm{\nu}_3 ,t)  + \xi_{\lambda+1}(\bm{\nu}_3 ,t;g) }  d \bm{\mu_{\lambda+1}},
\end{align*}
where \(\bm{\nu}_3 = (\mu_{\lambda+1} \mu_1, \mu_{\lambda+1} \mu_2,\ldots,\mu_{\lambda+1} \mu_{\lambda},\mu_{\lambda+1})\).
Finally, the change of variable \( \mu_i \mapsto \frac{\mu_i}{\mu_{\lambda+1}}\) for  \(i \in \{ 1,2,\cdots,\lambda\}\), gives
\begin{equation}
  \label{eq:caselambda1}
  \frac{ (t\Delta)^{\lambda+1} }{2}  \idotsint\limits_{0\leq \mu_1 \leq \cdots \leq \mu_{\lambda+1} \leq 1} e^{ (-1)^{\eta} \left( \alpha_{\lambda+1}(t)+\vartheta_\lambda(\bm{\mu_{\lambda+1}},t) \right) + \phi(\mu_{\lambda+1},t) + \sigma_{\lambda+1}(\mu_{\lambda+1},t) + \xi_\lambda(\bm{\mu_{\lambda+1}},t;g)} d \bm{\mu_{\lambda+1}},
\end{equation}
with \(\bm{\mu_{\lambda+1}} = (\mu_1,\mu_2,\cdots,\mu_\lambda,\mu_{\lambda+1}) \).

 From \eqref{eq:case0} and \eqref{eq:caselambda1}, the limit \eqref{eq:HKlimit2} is given by
\begin{align*}
  \sum_{\lambda=1}^{\infty} (t\Delta)^{\lambda}& \idotsint\limits_{0\leq \mu_1 \leq \cdots \leq \mu_{\lambda} \leq 1}  e^{\phi(\mu_{\lambda},t)+  \sigma_{\lambda+1}(\mu_{\lambda+1},t)  + \xi_{\lambda}(\bm{\mu_\lambda},t;g)} 
     \begin{bmatrix}
    (-1)^{\lambda} \cosh  &  (-1)^{\lambda+1} \sinh  \\
    -\sinh &  \cosh
  \end{bmatrix}
       \left( \alpha_{\lambda}(t) + \vartheta_{\lambda}(\bm{\mu_{\lambda}},t)  \right) d \bm{\mu_{\lambda}}.
\end{align*}

Notice that for \(\lambda \ge 1 \), \(\theta_\lambda(x,y,\bm{\mu_\lambda},t;g) = \alpha_{\lambda}(t) + \vartheta_{\lambda}(\bm{\mu_{\lambda}},t)  \) and
\[
  \phi(s,t) + \sigma_{\lambda}(s,t) =  -2g^2 (\coth(\tfrac{t}2))^{(-1)^\lambda} + 4g^2 \frac{\cosh(t(s-1))}{\sinh(t)} \left( \frac{1-(-1)^{\lambda}}{2} \right).
\]
Furthermore
\begin{equation}
  \label{eq:functlambda0}
  \theta_0(x,y,\bm{\mu_0},t;g) = \sqrt2 g(x+y)\frac{1-e^{-t}}{1+e^{-t}}, \qquad \qquad  \phi(0,t) + \sigma_{0}(0,t) + \xi_0(\bm{\mu_{0}},t;g) = -2g^2 \tanh(\tfrac{t}2),
\end{equation}
therefore the expression for the limit \eqref{eq:HKlimit1} can be written in a way consistent with the notation
of the limit \eqref{eq:HKlimit2}. The sum of the two limits multiplied by $K_0(x,y,g,t)$ gives the desired expression.
\end{proof}

%%%%%%%%%%%%%%%%%%%%%%%%%%%%%%%%%%%%%%%%%%%%%%%%%%%%%%%%%%%%%%%%%%%%%%
%%%%%%%%%%%%%%%%%%% Partition function %%%%%%%%%%%%%%%%%%%%%%%%%%%%%%
%%%%%%%%%%%%%%%%%%%%%%%%%%%%%%%%%%%%%%%%%%%%%%%%%%%%%%%%%%%%%%%%%%%%%%

Next, we give the explicit expression for the partition function \(\ZRabi(\beta;g,\Delta)\) of the QRM using the expression for the heat kernel of Theorem \ref{thm:heat_kernel}.

First, by Theorem \ref{thm:heat_kernel}, the trace of \(\KRabi(x,y,t;g,\Delta)\) is equal to
\begin{align*}
  2 K_0(x,y,t;g) &\Bigg\{ e^{-2g^2 \tanh(\tfrac{t}2)}\cosh\left( \theta_{0}(x,y, \bm{\mu_0},t; g) \right)  \\
  &+  e^{-2g^2 \coth(\tfrac{t}2)} \sum_{\lambda = 1}^{\infty} (t\Delta)^{2\lambda} \idotsint\limits_{0\leq \mu_1 \leq \cdots \leq \mu_{2 \lambda} \leq 1} e^{4g^2 \frac{\cosh(t(1-\mu_\lambda))}{\sinh(t)} +\xi_{2 \lambda}(\bm{\mu_{2 \lambda}},t;g)} 
                             \cosh\Big( \theta_{ 2 \lambda}(x,y, \bm{\mu_{2\lambda}},t;g)  \Big) d \bm{\mu_{2 \lambda}}   \Bigg\}.
\end{align*}

Furthermore, notice that 
\begin{align*}
  K_0(x,x,t;g) = \frac{e^{g^2t}}{\sqrt{\pi (1-e^{-2t})}} \exp\left( - \frac{1-e^{-t}}{1+e^{-t}}x^2\right)
\end{align*}
and, that for \(\lambda \equiv 0 \pmod{2}  \), we have
\begin{align*}
  \theta_{\lambda}(x,x, \bm{\mu_\lambda},t;g) = \frac{2\sqrt{2} g  x}{1 +e^{- t}} \sum_{\gamma=0}^{\lambda} (-1)^{\gamma}  \left( e^{- t \mu_{\gamma} } - e^{ t( \mu_{\gamma}- 1)}  \right)
\end{align*}
with \( \mu_0 = 0 \). Thus, we observe that $\tr \KRabi(x,x,t;g,\Delta)$ is equal to
\begin{align*}
   &\frac{2 e^{g^2t} e^{-x^2 \frac{1-e^{-t}}{1+e^{-t}} }}{\sqrt{\pi (1-e^{-2t})}} 
  \Bigg\{ e^{-2g^2\frac{1-e^{-t}}{1+e^{-t}}} \cosh\left(2\sqrt2 gx  \frac{1-e^{-t}}{1+e^{-t}}\right) \, +  e^{-2g^2 \coth(\tfrac{t}2)} \\
  &\quad \times \sum_{\lambda =1}^{\infty} (t\Delta)^{2\lambda} \idotsint\limits_{0\leq \mu_1 \leq \cdots \leq \mu_{2 \lambda} \leq 1} e^{4g^2 \frac{\cosh(t(1-\mu_\lambda))}{\sinh(t)}+\xi_{2 \lambda}(\bm{\mu_{2\lambda}},t;g)}   \cosh\left( \frac{2 \sqrt{2} g x}{1+e^{-t}} \sum_{\gamma=0}^{2\lambda} (-1)^{\gamma}  \left( e^{- t \mu_{\gamma} } - e^{ t( \mu_{\gamma}- 1)}  \right) \right) d \bm{\mu_{2\lambda}}  \Bigg\},
\end{align*}
and we proceed to give the analytical expression for the partition function.

%%%%%%%%%%%%%%%%%%%%%%%%%%%%%%%%%%%%%%%%%%%%%%%%%%%%%%%%%%%%%%%%%%%%%%%%%%%%%%%%%%%%
%%%%%%%%%%%%%%%%%%%%% Partition function %%%%%%%%%%%%%%%%%%%%%%%%%%%%%%%%%%%%%%%%%%%
%%%%%%%%%%%%%%%%%%%%%%%%%%%%%%%%%%%%%%%%%%%%%%%%%%%%%%%%%%%%%%%%%%%%%%%%%%%%%%%%%%%%

\begin{cor} \label{cor:Partition_function}
  The partition function \( \ZRabi(\beta;g,\Delta)\) of the QRM is given by
  \begin{align*}
    \ZRabi(\beta;g,\Delta) &= \frac{2 e^{g^2\beta}}{1-e^{- \beta}} \Bigg[ 1 + e^{-2g^2 \coth(\frac{\beta}2)} \sum_{\lambda=1}^{\infty} (\beta \Delta)^{2\lambda}
     \idotsint\limits_{0\leq \mu_1 \leq \cdots \leq \mu_{2 \lambda} \leq 1}
     e^{ 4g^2\frac{\cosh(\beta(1-\mu_{2\lambda}))}{\sinh(\beta)} +  \xi_{2 \lambda}(\bm{\mu_{2\lambda}},\beta;g) +\psi^-_{2 \lambda}(\bm{\mu_{2 \lambda}},\beta;g)} d \bm{\mu_{2 \lambda}}  \Bigg],
  \end{align*}
  where the function $\psi_\lambda^{\pm}(\bm{\mu_{\lambda}},t;g)$ is given by
  \begin{equation*}
  \psi_\lambda^{\pm}(\bm{\mu_{\lambda}},t;g) :=  \frac{2 g^2 e^{-t}}{1-e^{-2 t}}\left[ \sum_{\gamma=0}^{\lambda} (-1)^{\gamma}  \left( e^{ t\left(\tfrac12 - \mu_{\gamma}\right) } \pm e^{ t\left( \mu_{\gamma}- \tfrac12\right)}  \right)  \right]^2.
\end{equation*}
for  \(\lambda \geq 1\) and \(\bm{\mu_{\lambda}} = (\mu_1,\mu_2,\cdots,\mu_\lambda) \) and where \( \mu_0 = 0 \). 
\end{cor}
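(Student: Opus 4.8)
The plan is to read off $\ZRabi(\beta;g,\Delta)$ directly from the closed form of the heat kernel in Theorem \ref{thm:heat_kernel}, using $\ZRabi(\beta;g,\Delta) = \tr\, e^{-\beta \HRabi} = \int_{-\infty}^{\infty} \tr \KRabi(x,x,\beta;g,\Delta)\, dx$. Since the diagonal trace $\tr \KRabi(x,x,t;g,\Delta)$ has already been assembled just above the statement (the off-diagonal matrix entries cancel and only the two diagonal $\cosh$-terms survive), the only remaining task is to carry out the single Gaussian integration over $x \in \R$ with $t$ replaced by $\beta$.

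The one analytic tool I would use is the elementary identity
\[
  \int_{-\infty}^{\infty} e^{-a x^2}\cosh(b x)\, dx = \sqrt{\tfrac{\pi}{a}}\, e^{b^2/(4a)} \qquad (a>0),
\]
applied with $a = \tanh(\tfrac{\beta}2) = \frac{1-e^{-\beta}}{1+e^{-\beta}}$, which is exactly the coefficient of $-x^2$ in $K_0(x,x,\beta;g)$. For the $\lambda=0$ term and for each term of the series I would integrate the product of $e^{-x^2\tanh(\beta/2)}$ with the relevant $\cosh$, reading off $b$ from the argument of the hyperbolic cosine appearing in the diagonal trace.

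Two simplifications then finish the computation. First, the scalar prefactor collapses: combining $\sqrt{\pi/\tanh(\beta/2)}$ from the Gaussian with $\frac{1}{\sqrt{\pi(1-e^{-2\beta})}}$ and using $1-e^{-2\beta} = (1-e^{-\beta})(1+e^{-\beta})$ yields precisely $\frac{2e^{g^2\beta}}{1-e^{-\beta}}$, the overall factor in the statement. Second, the Gaussian produces the exponent $\frac{b^2}{4\tanh(\beta/2)}$, which I claim equals $\psi^-_{2\lambda}(\bm{\mu_{2\lambda}},\beta;g)$; this is the key algebraic identity. It follows from rewriting each summand of the argument of $\cosh$ as $e^{-\beta\mu_\gamma}-e^{\beta(\mu_\gamma-1)} = e^{-\beta/2}\bigl(e^{\beta(\frac12-\mu_\gamma)}-e^{\beta(\mu_\gamma-\frac12)}\bigr)$, so that the inner sum $S$ and the bracketed sum $\tilde S$ defining $\psi^-$ differ only by a factor $e^{\beta/2}$; squaring gives $\tilde S^2 = e^{\beta} S^2$, and a short manipulation converts $\frac{2g^2 S^2}{1-e^{-2\beta}}$ into $\frac{2g^2 e^{-\beta}}{1-e^{-2\beta}}\tilde S^2 = \psi^-_{2\lambda}$. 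The $\lambda=0$ term is the degenerate case in which the $\coth$ normalization reduces to $\tanh(\tfrac{\beta}2)$ and the residual exponential cancels, producing the leading $1$ inside the bracket.

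The step requiring genuine care — and the expected main obstacle — is the justification of interchanging the $x$-integration with the infinite sum over $\lambda$ and with the iterated $\bm{\mu}$-integrals over the simplex. I would handle this by invoking the uniform convergence of the heat-kernel series asserted in Theorem \ref{thm:heat_kernel} together with the uniform boundedness of the integrands established in Lemma \ref{lem:uniform_bdd}; dominated convergence, with the integrable Gaussian majorant $e^{-x^2\tanh(\beta/2)}$, then legitimizes termwise Gaussian integration, and Fubini's theorem handles the simplex integrals. After these interchanges, collecting the terms computed above yields the formula of Corollary \ref{cor:Partition_function}.
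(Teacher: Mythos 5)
Your proposal is correct and follows essentially the same route as the paper's own proof: starting from the trace expression $\tr \KRabi(x,x,\beta;g,\Delta)$ assembled just before the corollary, applying the Gaussian identity $\int_{-\infty}^{\infty} e^{-\alpha x^2}\cosh(\eta x)\,dx = \sqrt{\pi/\alpha}\,e^{\eta^2/(4\alpha)}$ termwise with $\alpha = \tanh(\tfrac{\beta}{2})$, collapsing the prefactor to $\frac{2e^{g^2\beta}}{1-e^{-\beta}}$, and identifying the resulting exponent with $\psi^-_{2\lambda}$ via exactly the factor-$e^{\beta/2}$ rewriting you describe (with the $\lambda=0$ term cancelling to give the leading $1$). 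The only difference is that you make explicit the dominated-convergence justification for interchanging the $x$-integral with the series and simplex integrals, a step the paper carries out without comment.
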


\begin{proof}

  Recall that for $\alpha>0$ and $\gamma,\eta \in \R$, we have the elementary identity
  \[
    \int_{-\infty}^\infty e^{-\alpha x^2} \cosh(x \, \eta)dx = \sqrt{\frac{\pi}{\alpha}}e^{\frac{\eta^2}{4\alpha}}.
  \]
  In particular,
  \[
    \int_{-\infty}^\infty e^{-\frac{1-e^{-\beta}}{1+e^{-\beta}}x^2}\cosh\Big(2\sqrt2 g x \frac{1-e^{-\beta}}{1+e^{-\beta}}\Big)dx
    =\pi^{\frac12} \sqrt{\frac{1+e^{-\beta}}{1-e^{-\beta}}} e^{2g^2\frac{1-e^{-\beta}}{1+e^{-\beta}}}
  \]
  and, for \(\lambda \geq 1 \), we have
  \begin{align*}
    &\int_{-\infty}^\infty e^{-\frac{1-e^{-\beta}}{1+e^{-\beta}}x^2} \cosh\left( \frac{2 \sqrt{2} g x}{1+e^{-\beta}} \sum_{\gamma=0}^{\lambda} (-1)^{\gamma}  \left( e^{- \beta \mu_{\gamma} } - e^{ \beta( \mu_{\gamma}- 1)}  \right)  \right) d x \\
    &\qquad = \pi^{\frac12} \sqrt{\frac{1+e^{-\beta}}{1-e^{-\beta}}}  e^{\frac{2 g^2}{1-e^{-2 \beta}}\left[ \sum_{\gamma=0}^{\lambda} (-1)^{\gamma}  \left( e^{- \beta \mu_{\gamma} } - e^{ \beta( \mu_{\gamma}- 1)}  \right)  \right]^2} = \pi^{\frac12} \sqrt{\frac{1+e^{-\beta}}{1-e^{-\beta}}} e^{\psi^-_{{\lambda}}(\bm{\mu},s,\beta;g)}.
  \end{align*}
  The result then follows from
  \begin{equation*}
    \ZRabi(\beta;g,\Delta):=  \int_{-\infty}^\infty  \tr \KRabi(x,x,\beta;g,\Delta) dx,
  \end{equation*} 
  and the expression for \(\tr \KRabi(x,x,t;g,\Delta) \).
\end{proof}

\begin{rem}
  The unitary operator $e^{-it\HRabi}$ (associated with the Schr\"odinger equation of to \(\HRabi\)) is of fundamental importance. In our case, the operator can be obtained from $e^{-\beta H}$ with $\beta > 0$ by meromorphic continuation to imaginary $\beta$ (with a fixed branch for each $\beta \in 2\pi i \Z$). We direct the reader to \cite{RW2020z} for the details.
\end{rem} 

%%%%%%%%%%%%%%%%%%%%%%%%%%%%%%%%%%%%%%%%%%%%%%%%%%%%%%%%%%%%%%%%%%%%%%%%%%%%%%%%%%%%%%%%%
% \subsection{A remark on computations through the action of $\mathfrak{S}_{\infty}$ on $\Z_2^{\infty}$}
\subsection{Interpretation of discrete paths through the action of $\mathfrak{S}_{\infty}$ on $\Z_2^{\infty}$} 
\label{sec:remark-computations}
%%%%%%%%%%%%%%%%%%%%%%%%%%%%%%%%%%%%%%%%%%%%%%%%%%%%%%%%%%%%%%%%%%%%%%%%%%%%%%%%%%%%%%%%%

  In this subsection we aim to clarify the rearrangement of the sums in the equations
  leading to \eqref{eq:HKlimit2} %{\color{blue} 
  and the resultant expression of the heat kernel.
    Namely, we now revisit the discussion on the ``discrete path integrals" appearing from the Trotter-Kato
    product formula started in the Introduction.
    
    Let us briefly describe the main points of the computation. First, in \S \ref{sec:matrixG}, by dealing with the
    non-commutative terms in the expression for the $N$-th power kernel $D_N(x,y,t)$ we obtained an expression
    for the product formula that can be naively seen as a discrete path integral.
    Then we employed harmonic analysis to reformulate the sum using Fourier analysis (notably, Parseval's formula) on
    $\Z_2^{m}\; (m \geq0)$. The resulting sum allowed us to ultimately replace the uncontrollable (infinitely many)
    changes of signature with non-trivial coefficients (at $\Z_2^{\infty}$) appearing in the exponents of exponential terms of the initial summands by various hyperbolic functions.
    To complete the final step, we rearranged the infinite sums according to the norms of the elements of
    $\Z_2^{m}\; (m \geq0)$. This rearrangement is consistent with the discussion of discrete paths
    (equivalently, elements of $\Z_2^{\infty}$), as we now explain. %}
  
  Recall from \S \ref{sec:four-transf} that the groups $\Z_2^{n}$  for \(n \geq 0 \) may be assumed to be
  embedded into the inductive limit \(\Z_2^{\infty} \). Next, we consider the action on $\Z_2^{\infty}$ of the infinite
  symmetric group $\mathfrak{S}_\infty$, defined by
  \[
    \mathfrak{S}_\infty := \varinjlim_n \mathfrak{S}_n,
  \]
  where, for $i \leq  j$, the injective homomorphims are given by the natural embedding (as a subgroup)
  of $\mathfrak{S}_i$ into $\mathfrak{S}_j$.

  The orbits of the action are exactly the sets
  \[
    \mathcal{O}_\lambda := \left\{ \sigma \in \Z_2^{\infty} \, : \, |\sigma|= \lambda  \right\},
  \]
  for \(\lambda \geq 0 \). Here, $|\cdot| : \Z_2^{\infty} \to \R $ is the function induced by the norms for each $\Z_2^{n}$ for
  $n \geq 0$. Canonical orbits representatives for $\mathcal{O}_\lambda\,\, (\lambda=1,2,3,\cdots)$ are given by the image of the elements
  \[
    \bm{0} \in \Z_2^{0}, \quad  (1) \in \Z_2^{1}, \quad (1,1) \in \Z_2^{2}, \quad \cdots ,\quad (1,1,\cdots,1) \in \Z_2^{n}, \cdots
  \]
  in \(\Z_2^{\infty} \). For instance,  when $\lambda=2$ we have $\mathcal{O}_2=\mathfrak{S}_\infty.[1,1]$, $[1,1]$ being the image of $(1,1) \in \Z_2^2$ in $\Z_2^{\infty}$.
  The orbit decomposition is then given by
  \begin{equation}
    \label{eq:orbitdecomposition}
    \Z_2^{\infty} = \bigsqcup_{n=0}^{\infty} \mathfrak{S}_{\infty} . [\underbrace{1,..,1}_{n}]
    \xleftrightarrow{ \text{label by $|\cdot|$}} \Z_{\geq 0}.
  \end{equation}
  and, from this point of view, the rearrangement of the sums in \eqref{eq:HKlimit2} is done according to the orbit
  decomposition of $\Z_2^{\infty}$ with respect the action of $\mathfrak{S}_\infty$ (through the orbit invariant $|\cdot|$).
  Each summand given by the iterated integral over the $\lambda$-th simplex (obtained by the computations in \S \ref{sec:transf-summ-into}) in the resulting sums is, by virtue of Lemma \ref{lem:bij1}, shown to be an orbit integral
  $\mathcal{O}_\lambda$.

  As discussed in the Introduction, we might also interpret the elements of the groups $\Z_2^{n}$ for $n \geq 0$ as paths
  between two points alternating between two states (represented in Figure \ref{fig:paths} by ``+'' and ``-'').
  In this interpretation, the rearranging of the sum \eqref{eq:HKlimit2} according to the norm $\lambda$ corresponds to
  grouping paths according to the number of times that the path is in the ``+'' state as shown in Figure
  \ref{fig:paths2} (compare with \eqref{eq:orbitdecomposition} above). 
  Therefore, we might say that the sum over the paths in $\Z_2^{n}$ arising from the Trotter-Kato product formula is
  ultimately reduced to a sum over points (labeled by $\Z_{\geq 0}$) which is then computed in an elementary way with
  the method described in this paper.

  %{\color{blue} 
  To summarize, the infinite series of the resultant expression of the heat kernel is considered as a sum over the orbits $\mathcal{O}_\lambda= \mathfrak{S}_\infty.\rho$ ($\rho\in \Z_2^{\infty}$ with $|\rho|=\lambda\in \Z_{\geq0}$), and each summand, given by an integral over the $\lambda$-th simplex, can be regarded as the integral over the fundamental domain of $\mathfrak{S}_\lambda (\subset \mathfrak{S}_\infty)$ acting on the orbit $\mathcal{O}_\lambda$
%  $\mathcal{O}_\lambda/\mathfrak{S}_\lambda$ of $\mathfrak{S}_\lambda \subset \mathfrak{S}_\infty$ in} $the orbit $\mathcal{O}_\lambda$ 
in $\Z_2^\infty$ by looking at the formula in Lemma \ref{lem:sumint} and Lemma \ref{lem:sumexp}.
%}

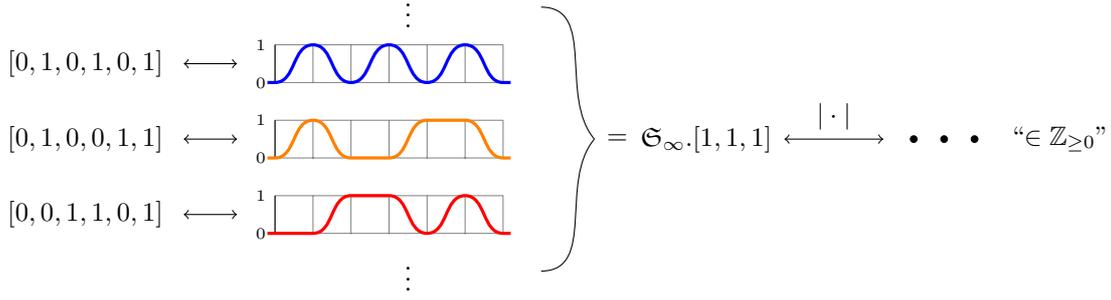
\begin{figure}[!ht]
  \centering
  \begin{tikzpicture}[domain=0:4]

    %%% grid, etc
    
    \node at (1.75,3) {$\vdots$};
    \node at (1.75,-0.5) {$\vdots$};
    
    \foreach \y in {0,1,2}
    {
      \draw[step=0.5,very thin,color=gray] (0,\y -0.01) grid (3,\y + 0.5);
      \draw[very thin] (0,\y) node[left] {\tiny{$0$}} -- (0,\y + 0.5) node[left] {\tiny{$1$}};
    }

    %% vectors

    \node at (-2.5,2.25) {$[0,1,0,1,0,1]$};
    \draw[<->] (-1.2,2.25) --  (-0.5,2.25);
    \node at (-2.5,1.25) {$[0,1,0,0,1,1]$};
    \draw[<->] (-1.2,1.25) --  (-0.5,1.25);
    \node at (-2.5,0.25) {$[0,0,1,1,0,1]$};
    \draw[<->] (-1.2,0.25) --  (-0.5,0.25);

    %% left arrows
    
    %% curves

    \draw[color=blue,very thick] (-0.1,2) -- (0,2) to[out=0,in=180] (0.5,2.5) to[out=0,in=180] (1,2)
    to[out=0,in=180] (1.5,2.5) to[out=0,in=180] (2,2) to[out=0,in=180] (2.5,2.5) to[out=0,in=180] (3,2)
    --  (3.1,2); 

    \draw[color=orange,very thick] (-0.1,1) -- (0,1) to[out=0,in=180] (0.5,1.5) to[out=0,in=180] (1,1)
    to[out=0,in=180] (1.5,1) to[out=0,in=180] (2,1.5) to[out=0,in=180] (2.5,1.5) to[out=0,in=180] (3,1)
    --  (3.1,1); 

    \draw[color=red,very thick] (-0.1,0) -- (0,0) to[out=0,in=180] (0.5,0) to[out=0,in=180] (1,0.5)
    to[out=0,in=180] (1.5,0.5) to[out=0,in=180] (2,0) to[out=0,in=180] (2.5,0.5) to[out=0,in=180] (3,0)
    --  (3.1,0); 
    
    %% brace
    
    \draw (3.5,3) to[out=0,in=130] (4.2,1.25) ;
    \draw (3.5,-0.5) to[out=0,in=-130] (4.2,1.25) ;
    %\draw[-> ] (4.5,1.25) -- (4.7,1.25);
    \node at (4.5,1.25) {$=$};
    
    %%% right hand side

    \node at (5.67,1.25) {$\mathfrak{S}_\infty.[1,1,1]$};
    
    %% second node

    \draw[<->] (6.7,1.25) -- node [above,midway] {$|\cdot|$}  (8,1.25);

    \draw[fill] (8.4,1.25) circle (1.25pt);
    \draw[fill] (8.8,1.25) circle (1.25pt);  % \node[above] {$\mathfrak{S}_\infty.[1,1,1]$};
    \draw[fill] (9.2,1.25) circle (1.25pt) node[right] {\quad``$\in \Z_{\geq 0}$''};
    %\node[circle,draw=black,fill=white!80!black,minimum size=1] at (6,1.5) {};
    
  \end{tikzpicture}
  \caption{Paths in the same orbit $\mathcal{O}_3=\mathfrak{S}_\infty.[1,1,1]$ for $\lambda=3$.}
  \label{fig:paths2}
\end{figure}

\begin{rem}
  In the general quantum interaction system case, the computation of the heat kernel using this method for a given
  Hamiltonian may produce the situation where the groups $\Z_2^n$ are replaced with a family of finite groups
  $\{G_n\}_{n\geq 0}$ that constitute a directed set (see also Remark \ref{non-commutativity}).
  In this case, it is necessary to find an appropriate invariant for the orbit decomposition with respect to certain
  group acting on the inductive limit $G_{\infty}$ of the family $\{G_n\}_{n\geq 0}$. We leave the detailed discussion for another occasion.
\end{rem}

\begin{rem}\label{rem:q-gravity}
In quantum gravity theory, we may find an important example where the path integral can be turned into a discrete summation defined over cosets of the modular group $SL_2(\Z)$. Loosely speaking, according to the theory of quantum gravity, in general, the space cannot be divided into infinity, so there may be no uncountable infinite number of paths to sum up in a path integral. In other words, as in our study, the path integral could turn to be discrete or particle-like, i.e. points. Actually, in \cite{ITT2015} (see also \cite{HITT2016}), using the Chern-Simons formulation, the partition function of the three-dimensional pure gravity given by a gravity path integral is exactly calculated by localization techniques developed in recent years. In addition, it is also worth noting that the resultant partition function is modular invariant.
\end{rem}

\begin{rem}
  We may describe the situation in \eqref{eq:orbitdecomposition} by the language of representation theory of $\mathfrak{S}_\infty$,
  that is, the space of the Fourier image also on $\Z_2^\infty$  by the representation induced from the trivial representation of
  its Young subgroups (see \cite{RW2020z}).   
\end{rem}

%%%%%%%%%%%%%%%%%%%%%%%%%%%%%%%%%%%%%%%%%%%%%%%%%%%%%%%%%
\subsection{Parity decomposition of the heat kernel} \label{sec:parity-decomp-heat}
%%%%%%%%%%%%%%%%%%%%%%%%%%%%%%%%%%%%%%%%%%%%%%%%%%%%%%%%%%

As we already mentioned in the Introduction, the Hamiltonian \(\HRabi \) possesses a
\(\Z_2(=\Z/2\Z)\)-symmetry indicated by the existence of a parity operator \(\Pi= - \sigma_{z} e^{-i \pi a^{\dag} a} \) satisfying
\([\Pi,\HRabi]=0\) and with eigenvalues \(p=\pm 1\). Consequently, the direct decomposition of the full
space $L^2(\R)\otimes \C^2$ into the invariant subspaces (corresponding to the positive and negative parity)  is 
\[
  L^2(\R)\otimes \C^2= \mathcal{H}_+\oplus \mathcal{H}_-.
\]
First, we introduce the decomposition of the Hamiltonian of the QRM. We follow the discussion in \cite{B2011PRL-OnlineSupplement} and suggest the reader to consult \cite{B2013MfI, KRW2017, Reyes2018PhD} for more details.

Let $(\hat{T}\psi)(z):= \psi(-z)$ $(\psi \in L^2(\R)$ be the reflection operator acting on \( L^2(\R)\),  $\bU$ be the unitary operator on $L^2(\R)\otimes \C^2$ given by
\[
  \bU:= \frac1{\sqrt2}\begin{bmatrix}1&1\\
    \hat{T}&-\hat{T}
  \end{bmatrix},
\]
and \(\bC\) the Cayley transform
\[
  \bC := \frac1{\sqrt2}\begin{bmatrix}1&1\\ 1&-1 \end{bmatrix}.
\]
The parity decomposition of the \(\HRabi\) is given by
\begin{align*} %\label{ParityOp}
(\bC \bU)^\dag \HRabi \bC \bU = \begin{bmatrix}H_+&0\\
    0&H_-
  \end{bmatrix},
\end{align*}
where the operators \(H_{\pm}\) are given by
\[
  H_{\pm} = a^{\dag} a + g ( a + a^{\dag}) \pm \Delta \hat{T}.
\]

Clearly, the subspaces
\begin{align}
  \label{eq:parityS}
  \mathcal{\bar{H}}_+ =  L^2(\R) \otimes \spn\left\{
  \begin{pmatrix} 
    1 \\
    0
  \end{pmatrix}\right\}
  \quad \text{ and } \quad
  \mathcal{\bar{H}}_- = L^2(\R) \otimes \spn\left\{
  \begin{pmatrix}
    0 \\
    1
  \end{pmatrix}\right\}
\end{align}
are invariant subspaces of the operator \( (\bC \bU)^\dag \HRabi \bC \bU\). Accordingly, we write
\( H_{\pm} = (\bC \bU)^\dag \HRabi \bC \bU |_{\mathcal{\bar{H}}_{\pm}} \).

Now, we proceed to compute the heat kernel of the parity Hamiltonians \(H_{\pm}\).

Recall that $\sigma_x=\begin{bmatrix} 0 & 1 \\ 1 & 0 \end{bmatrix}$ and
$\sigma_z=\begin{bmatrix} 1 & 0 \\ 0 & -1 \end{bmatrix}$. Notice that 
\[
  e^{-t \sigma_x} =
  \begin{bmatrix} \cosh & -\sinh
    \\ -\sinh & \cosh
  \end{bmatrix}(t)  \qquad \text{and} \qquad
  -\sigma_z e^{-t \sigma_x} = 
  \begin{bmatrix} -\cosh & \sinh
    \\ -\sinh & \cosh
  \end{bmatrix}(t). 
\]

For \(\epsilon,\delta \in \{+,-\}\), let us define four operators %(infinite dimensional matrices)
\( K_{\epsilon \delta} =  K_{\epsilon \delta}(x,y,t, \Delta) : L^2(\R) \to L^2(\R) \) by 
\begin{align*}
 (\bC \bU)^\dag \KRabi(x,y,t) \bC \bU 
  = \begin{bmatrix} K_{++} & K_{-+}\\ K_{+-}& K_{--}\end{bmatrix}.
\end{align*}
It is not difficult to see that
\begin{align*}
\frac{\partial}{\partial t} \begin{bmatrix} K_{++} & K_{-+}\\ K_{+-}& K_{--}\end{bmatrix}
= -\begin{bmatrix}H_+ & 0 \\ 0 & H_-\end{bmatrix}
\begin{bmatrix} K_{++} & K_{-+}\\ K_{+-}& K_{--}\end{bmatrix},
\end{align*}
and thus  \((\bC \bU)^\dag   \KRabi(x,y,t) \bC \bU \) is the heat kernel of the operator \((\bC \bU)^\dag \HRabi \bC \bU\).
Similarly, from this we see that $K_{++}$ (resp. $K_{--}$) is the heat kernel of $H_+$ (resp. $H_-$). One knows from the
general discussion for the $G$-function and constraint polynomials  (see e.g. \cite{B2011PRL} and \cite{KRW2017})
that $K_{--}(x,y,t, -\Delta)= K_{++}(x,y,t, \Delta)$. We will see this again below. 

Recall that the action of the (semigroup) operator \(e^{-t \HRabi}\) is given by
\[
  e^{-t \HRabi}\phi(x)= \int_{-\infty}^\infty  \KRabi(x,y,t;g,\Delta) \phi(y)dy
\]
for any compactly supported smooth function $\phi\in C_0^\infty(\R)\otimes \C^2$. From this expression, we have 
\begin{align*}
(\bC \bU)^\dag e^{-t \HRabi} \bC \bU ((\bC \bU)^\dag\phi)(x)
= & \begin{bmatrix} e^{-tH_+}&0\\
    0&e^{-tH_-}
  \end{bmatrix}((\bC \bU)^\dag\phi)(x)\\
= &\int_{-\infty}^\infty \begin{bmatrix} K_{++} & K_{-+}\\ K_{+-}& K_{--}\end{bmatrix} (x,y,t)((\bC \bU)^\dag\phi)(y)dy,
\end{align*}

From this expression, we observe the heat kernel is splitting along the two parities and  in Theorem \ref{Split_Kernel}
we give the explicit expression of the heat kernel by taking \(\phi \in H_{\pm} \) in the expression above.

For $\lambda\geq1$, define
\begin{align*}
  \Phi^\pm_{\lambda}(x,y,t;g) := e^{-2g^2 (\coth(\tfrac{t}2))^{(-1)^\lambda}} \idotsint\limits_{0\leq \mu_1 \leq \cdots \leq \mu_{\lambda} \leq 1}  e^{4g^2 \frac{\cosh(t(1-\mu_\lambda))}{\sinh(t)}(\frac{1+(-1)^\lambda}{2}) +  \xi_{\lambda}(\bm{\mu_{\lambda}},t;g)\pm \theta_{\lambda}(x,y, \bm{\mu_{\lambda}},t;g)} d \bm{\mu_{n}}
\end{align*}
and  
\begin{align*}
  \Phi^\pm_0(x,y,t;g) := e^{-2g^2\tanh\big(\frac{t}2\big) \pm\sqrt2 g(x+y)\tanh\big(\frac{t}2\big)} = e^{-2g^2\tanh\big(\frac{t}2\big) \pm \theta_{0}(x,y, \bm{\mu_{0}},t;g)}.
\end{align*}
Since $\theta_\lambda(x,y,\bm{\mu_{\lambda}},t;g)$, for \(\lambda \geq 0 \), is linear on $x$ and $y$, it is clear that 
\begin{align*}
  \Phi^\mp_\lambda(-x,-y,t;g)= \Phi^\pm_\lambda(x,y,t;g). 
\end{align*}

We now observe that 
\begin{align*}
   (\bC \bU)^\dag K_0 &(x,y,t;g)\begin{bmatrix} \cosh & -\sinh
    \\ -\sinh & \cosh
                     \end{bmatrix}
                     \Big(\theta_{2\lambda}(x,y, \bm{\mu_{2\lambda}},t;g)\Big) \bC \bU\\
  =& \bU^\dag K_0 (x,y,t;g) \bC e^{- \theta_{2\lambda}(x,y, \bm{\mu_{2\lambda}},t;g) \sigma_x} \bC \bU\\
  =& \bU^\dag K_0 (x,y,t;g)
     \begin{bmatrix} e^{- \theta_{2\lambda}(x,y, \bm{\mu_{2\lambda}},t;g)} & 0\\ 0& e^{\theta_{2\lambda}(x,y, \bm{\mu_{2\lambda}},t;g)} \end{bmatrix} \bU\\
  = & \frac12  \begin{bmatrix} 1 & \hat{T}\\ 1& -\hat{T}\end{bmatrix}
                                                K_0(x,y,t; g)
    \begin{bmatrix} e^{-\theta_{2\lambda}(x,y, \bm{\mu_{2\lambda}},t;g)} & 0\\ 0& e^{\theta_{2\lambda}(x,y, \bm{\mu_{2\lambda}},t;g)} \end{bmatrix}
     \begin{bmatrix} 1 & 1 \\ \hat{T} & -\hat{T} \end{bmatrix} \\
  =& \frac12  \begin{bmatrix} 
   K_0 e^{-\theta_{2\lambda}}+ \hat{T} K_0e^{\theta_{2\lambda}}\hat{T}
    & K_0 e^{-\theta_{2\lambda}} - \hat{T}K_0 e^{\theta_{2\lambda}}\hat{T}\\
    K_0 e^{-\theta_{2\lambda}}- \hat{T} K_0e^{\theta_{2\lambda}}\hat{T}
    & K_0 e^{-\theta_{2\lambda}}+ \hat{T} K_0 e^{\theta_{2\lambda}}\hat{T}
  \end{bmatrix}(x,y,t;g).
\end{align*}
Similarly 
\begin{align*}
 (\bC \bU)^\dag K_0 & (x,y,t;g) \begin{bmatrix} -\cosh & \sinh
                       \\ -\sinh & \cosh
                     \end{bmatrix}
                     \Big( \theta_{2\lambda+1}(x,y, \bm{\mu_{2\lambda+1}},t;g)\Big) \bC \bU\\
=& \bU^\dag K_0 (x,y,t;g) \bC (-\sigma_z)e^{- \theta_{2\lambda+1}(x,y, \bm{\mu_{2\lambda+1}},t;g)\sigma_x} \bC \bU\\
=& - \bU^\dag K_0 (x,y,t;g)
\begin{bmatrix} 
0& e^{\theta_{2\lambda+1}(x,y, \bm{\mu_{2\lambda+1}},t;g)} \\ e^{-\theta_{2\lambda+1}(x,y, \bm{\mu_{2\lambda+1}},t;g)} & 0 \end{bmatrix} \bU\\
= & -\frac12  \begin{bmatrix} 1 & \hat{T}\\ 1& -\hat{T}\end{bmatrix}
K_0 (x,y,t;g)
\begin{bmatrix} 0& e^{\theta_{2\lambda+1}(x,y, \bm{\mu_{2\lambda+1}},t;g)} \\ e^{- \theta_{2\lambda+1}(x,y, \bm{\mu_{2\lambda+1}},t;g)} &0 \end{bmatrix}
\begin{bmatrix} 1 & 1 \\ \hat{T} & -\hat{T} \end{bmatrix} \\
=& -\frac12  \begin{bmatrix} 
K_0 e^{\theta_{2\lambda+1}}\hat{T}+ \hat{T} K_0 e^{-\theta_{2\lambda+1}}
& -K_0 e^{\theta_{2\lambda}+1}\hat{T}+ \hat{T} K_0 e^{-\theta_{2\lambda+1}}\\
K_0 e^{\theta_{2\lambda+1}}\hat{T}- \hat{T} K_0 e^{-\theta_{2\lambda+1}}
& -K_0 e^{\theta_{2\lambda+1}}\hat{T}- \hat{T}K _0 e^{-\theta_{2\lambda+1}}
\end{bmatrix}(x,y,t;g).
\end{align*}

From the discussion, we obtain the heat kernel for the parity Hamiltonians \(H_{\pm}\).

%%%%%%%%%%%%%%%%%%%%%%%%%%%%%%%%%%
%%% Heat kernel for parities %%%%%
%%%%%%%%%%%%%%%%%%%%%%%%%%%%%%%%%%

\begin{thm} \label{Split_Kernel}
  The heat kernel $K_{\pm}(x,y,t, \Delta) (= K_{\pm \pm}(x,y,t, \Delta))$ of $H_\pm= \HRabi |_{\mathcal{H}_\pm}$ is given by 
  \begin{align*}
    K_{\pm}(x,y,t;g,\Delta)
    = & K_0 (x,y,t;g)\sum_{\lambda=0}^{\infty} (t\Delta)^{2\lambda} \Phi^-_{2\lambda}(x,y,t;g) \mp K_0 (x,-y,t;g) \sum_{\lambda=0}^{\infty}
        (t\Delta)^{2\lambda+1} \Phi^+_{2\lambda+1}(x,-y,t;g)
  \end{align*}
  Moreover, $K_{\pm \mp}(x,y,t, \Delta)=0$. In other words,
  \[
    \KRabi(x,y,t;g,\Delta) = K_{+}(x,y,t;g,\Delta)\oplus  K_{-}(x,y,t;g,\Delta).
  \]
\end{thm}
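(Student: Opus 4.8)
The plan is to read off the parity-decomposed kernels directly from the closed formula of Theorem~\ref{thm:heat_kernel} by conjugating with $\bC\bU$ term by term. First I would split the series of Theorem~\ref{thm:heat_kernel} into its even and odd parts according to the parity of $\lambda$, and observe that the matrix coefficient
\[
  \begin{bmatrix} (-1)^{\lambda}\cosh & (-1)^{\lambda+1}\sinh \\ -\sinh & \cosh \end{bmatrix}(\theta_\lambda)
\]
equals $e^{-\theta_{2\lambda}\sigma_x}$ when $\lambda$ is even and $-\sigma_z\, e^{-\theta_{2\lambda+1}\sigma_x}$ when $\lambda$ is odd, which matches the two elementary matrix identities for $e^{-t\sigma_x}$ and $-\sigma_z e^{-t\sigma_x}$ recorded before the statement. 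This is exactly the reason for computing, immediately before the theorem, the two conjugations $(\bC\bU)^\dagger(\cdot)\bC\bU$ applied to $K_0\, e^{-\theta_{2\lambda}\sigma_x}$ and to $K_0\,(-\sigma_z)e^{-\theta_{2\lambda+1}\sigma_x}$: each produces an explicit $2\times 2$ array whose entries are operators built from $K_0 e^{\pm\theta}$ and the reflection $\hat T$.

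The crux is then to pass from these operator entries to genuine integral kernels. For this I would use that precomposing an operator with kernel $F(x,y)$ by $\hat T$ replaces $y$ by $-y$, while postcomposing by $\hat T$ replaces $x$ by $-x$; thus $\hat T\,\mathrm{Op}_F\,\hat T$ has kernel $F(-x,-y)$, $\mathrm{Op}_F\,\hat T$ has kernel $F(x,-y)$, and $\hat T\,\mathrm{Op}_F$ has kernel $F(-x,y)$. Combining this with the parities already available — namely that $K_0(x,y,t;g)$ is even, so $K_0(-x,-y)=K_0(x,y)$ and $K_0(x,-y)=K_0(-x,y)$, and that each $\theta_\lambda$ is linear in $(x,y)$, so $\theta_\lambda(-x,-y)=-\theta_\lambda(x,y)$ and $\theta_\lambda(x,-y)=-\theta_\lambda(-x,y)$ — collapses each pair of terms in a given matrix entry. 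Concretely, in the even block the two summands of a diagonal entry coincide and give the $\Phi^-_{2\lambda}$ contribution $K_0\, e^{-\theta_{2\lambda}}$, while the two summands of each off-diagonal entry cancel; in the odd block the diagonal entries collapse to $\mp K_0(x,-y)e^{\theta_{2\lambda+1}(x,-y)}$, i.e. the $\Phi^+_{2\lambda+1}(x,-y)$ contribution with the appropriate sign, and the off-diagonal entries again cancel.

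Reassembling the even and odd contributions and restoring the scalar factors $(t\Delta)^\lambda e^{-2g^2(\coth(t/2))^{(-1)^\lambda}}$ together with the $\xi_\lambda$ exponentials inside the iterated integrals then yields exactly the $\Phi^\pm$-series of the statement for $K_{\pm}=K_{\pm\pm}$, while the vanishing of every off-diagonal entry gives $K_{\pm\mp}=0$ and hence the direct sum decomposition. The splitting of the sum and the termwise conjugation are legitimate because the series of Theorem~\ref{thm:heat_kernel} converges uniformly, and the identification of $K_{++},K_{--}$ with the heat kernels of $H_+,H_-$ is the one already established before the statement. The main obstacle is not any hard estimate but the careful bookkeeping of $\hat T$ under left/right composition together with the two parity symmetries; once these are tracked consistently the claimed cancellations and doublings follow, and in particular the relation $K_{--}(x,y,t,-\Delta)=K_{++}(x,y,t,\Delta)$ drops out because $(t\Delta)^{2\lambda}$ is even and $(t\Delta)^{2\lambda+1}$ is odd in $\Delta$.
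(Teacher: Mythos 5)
Your proposal is correct and follows essentially the same route as the paper: both rely on the conjugation formulas for $(\bC\bU)^\dagger K_0\, e^{-\theta_{2\lambda}\sigma_x}\,\bC\bU$ and $(\bC\bU)^\dagger K_0(-\sigma_z)e^{-\theta_{2\lambda+1}\sigma_x}\,\bC\bU$ computed before the theorem, together with the evenness of $K_0$ and the linearity of $\theta_\lambda$ in $(x,y)$, to double the diagonal entries and cancel the off-diagonal ones. The only cosmetic difference is that you track $\hat T$ at the level of integral kernels ($\hat T\,\mathrm{Op}_F\,\hat T \leftrightarrow F(-x,-y)$, etc.), whereas the paper applies the operators $k_{\epsilon\delta}$ to test functions and performs the corresponding changes of variables in the integrals — the same computation in different notation.
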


\begin{proof}
  For \(\epsilon, \delta \in \{+,- \}\), we define operators $k_{\epsilon \delta} = k_{\epsilon \delta}(x,y,t)\in \rm{End}_{\C}(\bar{\mathcal{H}}_\epsilon, \bar{\mathcal{H}}_\delta)$ by 
\[
  (k_{\epsilon \delta} v_{\epsilon})(x)= \int_{-\infty}^\infty  K_{\epsilon \delta}( x,y,t)v_{\epsilon}(y)dy
\]
for $v_{\epsilon}\in \bar{\mathcal{H}}_\epsilon$. Further, we write $k_{\epsilon \delta}$ as 
\[
  (k_{\epsilon \delta}v_{\epsilon})(x)= \sum_{\lambda=0}^{\infty} (\Delta t)^{\lambda} k_{\epsilon \delta}^{\lambda}v_{\epsilon}(x).
\]

By \eqref{eq:parityS}, we see that \(\bar{\mathcal{H}}_{\epsilon} \simeq L^2(\R) \). First,
we verify that \(K_{\pm \mp}(x,y,t;g,\Delta)=0 \). Let \(v \in L^2(\R)\) be a function with appropriate decay at $\pm \infty$
( e.g. \(v\) is a compactly supported function), then
\begin{align*}
  ({k_{+-}^{2\lambda} v})(x)  = & ({k_{- +}^{2\lambda} v})(x)  
  = \frac12\int_{-\infty}^\infty [K_0(x,y,t;g) \Phi^-_{2\lambda}(x,y,t;g) - \hat{T} K_0(x,y,t; g)\Phi^+_{2\lambda}(x,y;g)\hat{T}]v(y)dy \\
  = &\frac12 \int_{-\infty}^\infty K_0(x,y,t;g) \Phi^-_{2\lambda}(x,y,t;g)v(y)dy -\frac12 \int_{-\infty}^\infty K_0(-x,y,t;g) \Phi^+_{2\lambda}(-x,y,t;g)v(-y)dy\\
  = &\frac12 \int_{-\infty}^\infty K_0(x,y,t;g) \Phi^-_{2\lambda}(x,y,t;g)v(y)dy -\frac12 \int_{-\infty}^\infty K_0(-x,-y,t;g) \Phi^+_{2\lambda}(-x,-y,t;g)v(y)dy\\
  = &\frac12 \int_{-\infty}^\infty K_0(x,y,t;g) \Phi^-_{2\lambda}(x,y,t;g)v(y)dy   -\frac12 \int_{-\infty}^\infty K_0(x,y,t;g) \Phi^-_{2\lambda}(x,y,t;g)v(y)dy=0,
\end{align*}
and
\begin{align*}
  ({k_{+-}^{2\lambda+1} v}) (x)  = & -({k_{- +}^{2\lambda+1} v})(x) 
  = \frac12\int_{-\infty}^\infty [K_0(x,y,t;g) \Phi^+_{2\lambda+1}(x,y,t;g)\hat{T}- \hat{T} K_0(x,y,t;g)\Phi^-_{2\lambda+1}(x,y,t;g)]v(y)dy \\
  = &\frac12 \int_{-\infty}^\infty K_0(x,y,t;g) \Phi^+_{2\lambda+1}(x,y,t;g)v(-y)dy -\frac12 \int_{-\infty}^\infty K_0(-x,y,t;g) \Phi^-_{2\lambda+1}(-x,y,t;g)v(y)dy\\
  = &\frac12 \int_{-\infty}^\infty K_0(x,y,t;g) \Phi^+_{2\lambda+1}(x,y,t;g)v(-y)dy -\frac12 \int_{-\infty}^\infty K_0(-x,-y,t;g) \Phi^-_{2\lambda+1}(-x,-y,t;g)v(-y)dy\\
  = &\frac12 \int_{-\infty}^\infty K_0(x,y,t;g) \Phi^+_{2\lambda+1}(x,y,t;g)v(-y)dy -\frac12 \int_{-\infty}^\infty K_0(x,y,t;g) \Phi^+_{2\lambda+1}(x,y,t;g)v(-y)dy = 0.
\end{align*}

Thus, we see that $({k_{\pm \mp}^{\lambda} v})(x)=0$ for \( v \) with appropriate decay and \(\lambda \geq 0 \).

On the other hand, we have
\begin{align*}
({k_{++}^{2\lambda} v})(x)  &= ({k_{- -}^{2\lambda} v})(x) 
  = \frac12\int_{-\infty}^\infty [K_0(x,y,t;g) \Phi^-_{2\lambda}(x,y,t;g) + \hat{T} K_0(x,y,t;g)\Phi^+_{2\lambda}(x,y,t;g)\hat{T}]v(y)dy \\
  = &\frac12 \int_{-\infty}^\infty K_0(x,y,t;g) \Phi^-_{2\lambda}(x,y,t;g)v(y)dy + \frac12 \int_{-\infty}^\infty K_0(-x,y,t;g) \Phi^+_{2\lambda}(-x,y,t;g)v(-y)dy\\
  = &\frac12 \int_{-\infty}^\infty K_0(x,y,t;g) \Phi^-_{2\lambda}(x,y,t;g)v(y)dy  + \frac12 \int_{-\infty}^\infty K_0(-x,-y,t;g) \Phi^+_{2\lambda}(-x,-y,t;g)v(y)dy\\
  = &\frac12 \int_{-\infty}^\infty K_0(x,y,t;g) \Phi^-_{2\lambda}(x,y,t;g)v(y)dy  + \frac12 \int_{-\infty}^\infty K_0(x,y,t;g) \Phi^-_{2\lambda}(x,y,t;g)v(y)dy\\
  = & \int_{-\infty}^\infty K_0(x,y) \Phi^-_{2\lambda}(x,y,t;g)v(y)dy
\end{align*}
and 
\begin{align*}
  -({k_{+ +}^{2\lambda+1} v}) (x)  = & ({k_{- -}^{2\lambda+1} v})(x) 
  = \frac12\int_{-\infty}^\infty [K_0(x,y,t;g) \Phi^+_{2\lambda+1}(x,y,t;g)\hat{T} + \hat{T} K_0(x,y,t;g)\Phi^-_{2\lambda+1}(x,y,t;g)]v(y)dy \\
  = &\frac12 \int_{-\infty}^\infty K_0(x,y,t;g) \Phi^+_{2\lambda+1}(x,y,t;g)v(-y)dy  + \frac12 \int_{-\infty}^\infty K_0(-x,y,t;g) \Phi^-_{2\lambda+1}(-x,y,t;g)v(y)dy\\
  = &\frac12 \int_{-\infty}^\infty K_0(x,y,t;g) \Phi^+_{2\lambda+1}(x,y,t;g)v(-y)dy  + \frac12 \int_{-\infty}^\infty K_0(-x,-y,t;g) \Phi^-_{2\lambda+1}(-x,-y,t;g)v(-y)dy\\
  = &\frac12 \int_{-\infty}^\infty K_0(x,y,t;g) \Phi^+_{2\lambda+1}(x,y,t;g)v(-y)dy  + \frac12 \int_{-\infty}^\infty K_0(x,y,t;g) \Phi^+_{2\lambda+1}(x,y,t;g)v(-y)dy \\
  = & \int_{-\infty}^\infty K_0(x,y,t;g) \Phi^+_{2\lambda+1}(x,y,t;g)v(-y)dy = \int_{-\infty}^\infty K_0(x,-y,t;g) \Phi^+_{2\lambda+1}(x,-y,t;g)v(y)dy
\end{align*}
Hence, we have
\begin{align*}
({k_{\pm\pm} v})(x) 
  = \sum_{\lambda=0}^{\infty} \Delta^{2\lambda}\int_{-\infty}^\infty K_0(x,y,t;g) \Phi^-_{2\lambda}(x,y,t;g)v(y)dy
  \mp  \sum_{\lambda=0}^{\infty} \Delta^{2\lambda+1}\int_{-\infty}^\infty K_0(x,-y,t;g) \Phi^+_{2\lambda+1}(x,-y,t;g)v(y)dy.
\end{align*}
Thus we have the desired conclusion for \(K_{\epsilon\delta}\) as a distribution, whence the result follows as functions in the standard way.

\end{proof}

\begin{rem}
  Note that for  \( v  \in L^2(\R)\), we can write
  \[
    \int_{-\infty}^\infty K_0(x,y,t;g) \Phi^+_{2\lambda+1}(x,y,t;g)v(-y)dy = \int_{-\infty}^\infty K_0(x,y,t;g) \Phi^+_{2\lambda+1}(x,y,t;g) (\hat{T}v)(y)dy.
  \]
\end{rem}

%%%%%%%%%%%%%%%%%%%%%%%%%%%%%%%%%%%%%%%%%%%%%%%%%%%%%%%%%%%%%

To conclude this section, we compute the partition function \(\ZRabi^{\pm}(\beta;g,\Delta) \) of the parity
Hamiltonian \(H_{\pm} \).

\begin{cor}
  \label{cor:parityPart}
  The partition function \(\ZRabi^{\pm}(\beta;g,\Delta) \) for the parity Hamiltonian \(H_{\pm}\) is given by
  \begin{align*}
    \ZRabi^{\pm}(\beta;g,\Delta) =& \frac{ e^{g^2\beta}}{1-e^{- \beta}} \Bigg[ 1 + e^{-2g^2 \coth(\frac{\beta}2)}  \sum_{\lambda =1}^{\infty} (\beta\Delta)^{2 \lambda} \idotsint\limits_{0\leq \mu_1 \leq \cdots \leq \mu_{2 \lambda} \leq 1} e^{4g^2\frac{\cosh(\beta(1-\mu_{2\lambda}))}{\sinh(t)} +   \xi_{2\lambda}(\bm{\mu_{2 \lambda}},\beta;g) +\psi^-_{2 \lambda} (\bm{\mu_{2\lambda}},\beta;g)} d \bm{\mu_{2\lambda}}  \Bigg] \\
                              & \quad \mp \frac{ e^{g^2 \beta}}{1+e^{- \beta}} e^{- 2g^2 \tanh(\frac{\beta}2)} \sum_{\lambda = 0}^{\infty} (\beta \Delta)^{2\lambda+1} \idotsint\limits_{0\leq \mu_1 \leq \cdots \leq \mu_{2 \lambda+1} \leq 1} e^{\xi_{2\lambda+1}(\bm{\mu_{2 \lambda+1}},\beta;g) +\psi^+_{2 \lambda+1} (\bm{\mu_{2\lambda+1}},\beta;g)} d \bm{\mu_{2\lambda+1}},
  \end{align*}
  where the function $\psi_\lambda^{\pm}(\bm{\mu_{\lambda}},t;g)$ is as in Corollary \ref{cor:Partition_function}.
\end{cor}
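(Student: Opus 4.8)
The plan is to compute $\ZRabi^{\pm}(\beta;g,\Delta)$ directly from its definition as the trace of the parity heat kernel, $\ZRabi^{\pm}(\beta;g,\Delta) = \int_{-\infty}^{\infty} K_{\pm}(x,x,\beta;g,\Delta)\,dx$, feeding into this the closed expression for $K_{\pm}$ obtained in Theorem \ref{Split_Kernel}. Since that expression is a sum of two series—an \emph{even} series built from $K_0(x,y,t;g)\,\Phi^-_{2\lambda}(x,y,t;g)$ and an \emph{odd} series built from $K_0(x,-y,t;g)\,\Phi^+_{2\lambda+1}(x,-y,t;g)$—the calculation splits into two families of one-dimensional Gaussian integrals, one for each parity of $\lambda$. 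The basic tool throughout is the completion-of-the-square identity $\int_{-\infty}^{\infty} e^{-\alpha x^2 - cx}\,dx = \sqrt{\pi/\alpha}\,e^{c^2/4\alpha}$, equivalently the $\cosh$-integral used in the proof of Corollary \ref{cor:Partition_function}; note that since the result depends on $c$ only through $c^2$, the $\pm$ sign distinguishing $\Phi^{\pm}$ is immaterial once the integral is carried out, which is what allows a single quadratic form $\psi^{\pm}$ to emerge.

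For the even series I would set $y=x$ and use $K_0(x,x,\beta;g) = \frac{e^{g^2\beta}}{\sqrt{\pi(1-e^{-2\beta})}}\exp\!\big(-\tfrac{1-e^{-\beta}}{1+e^{-\beta}}x^2\big)$ together with the fact, recorded just before Corollary \ref{cor:Partition_function}, that $\theta_{2\lambda}(x,x,\bm{\mu_{2\lambda}},\beta;g)$ is linear in $x$. Completing the square with $\alpha = \tanh(\beta/2)$ converts the linear exponent from $\Phi^-_{2\lambda}$ into the quadratic form $\psi^-_{2\lambda}(\bm{\mu_{2\lambda}},\beta;g)$, while the Gaussian normalization collapses the prefactor to $\tfrac{e^{g^2\beta}}{1-e^{-\beta}}$ and preserves the factor $e^{-2g^2\coth(\beta/2)}$ and the $4g^2\tfrac{\cosh(\beta(1-\mu_{2\lambda}))}{\sinh\beta}$ term of $\Phi^-_{2\lambda}$ (here $\tfrac{1+(-1)^{2\lambda}}2 = 1$). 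The $\lambda=0$ term of this family must be checked separately, using the hand-made definition of $\Phi^-_0$, to reproduce the leading ``$1$'' inside the bracket.

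For the odd series I would instead set $y=x$ inside $K_0(x,-y,\beta;g)\,\Phi^+_{2\lambda+1}(x,-y,\beta;g)$, so the relevant Gaussian weight becomes $K_0(x,-x,\beta;g) = \frac{e^{g^2\beta}}{\sqrt{\pi(1-e^{-2\beta})}}\exp\!\big(-\tfrac{1+e^{-\beta}}{1-e^{-\beta}}x^2\big)$. The same identity—now with $\alpha = \coth(\beta/2)$ and with $\theta_{2\lambda+1}(x,-x,\cdots)$ linear in $x$—produces the term $\psi^+_{2\lambda+1}(\bm{\mu_{2\lambda+1}},\beta;g)$ and the prefactor $\tfrac{e^{g^2\beta}}{1+e^{-\beta}}$, together with the factor $e^{-2g^2\tanh(\beta/2)}$ coming from $(\coth(\beta/2))^{(-1)^{2\lambda+1}}$ in $\Phi^+_{2\lambda+1}$ (and the $\tfrac{1+(-1)^{2\lambda+1}}2 = 0$ kills the $\cosh$ term). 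The overall sign $\mp$ is inherited verbatim from the $\mp$ in front of the odd series in Theorem \ref{Split_Kernel}.

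The routine but delicate part—and the main obstacle—will be verifying the two algebraic identities asserting that $c^2/(4\alpha)$ equals $\psi^-_{2\lambda}$, respectively $\psi^+_{2\lambda+1}$; these amount to rearranging the alternating hyperbolic sums defining $\theta_{2\lambda}$ and $\theta_{2\lambda+1}$ into the squared bracket defining $\psi^{\pm}$, exactly as in the proof of Corollary \ref{cor:Partition_function}, and to confirming the $e^{t/2}$ factor that recasts $e^{\pm t\mu_\gamma}$ as $e^{\pm t(\mu_\gamma - 1/2)}$. Interchanging the infinite summation with the $x$-integration is justified by the uniform convergence of the series stated in Theorem \ref{thm:heat_kernel}. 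Finally, as a consistency check I would add $\ZRabi^{+} + \ZRabi^{-}$: the odd ($\mp$) contributions cancel and the even contributions double, recovering the expression for $\ZRabi$ in Corollary \ref{cor:Partition_function}.
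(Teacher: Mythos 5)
Your proposal is correct and follows essentially the same route as the paper: the paper's proof likewise takes the trace of $K_{\pm}$ from Theorem \ref{Split_Kernel}, applies the Gaussian identity $\int_{-\infty}^{\infty} e^{-\alpha x^2 \pm \eta x}\,dx = \sqrt{\pi/\alpha}\,e^{\eta^2/4\alpha}$ (whose independence of the sign is exactly why the $\pm$ in $\Phi^{\pm}$ is immaterial), and for the odd series records the same two facts you isolate—the form of $K_0(x,-x,\beta;g)$ with Gaussian weight $\coth(\beta/2)$ and the linearity in $x$ of $\theta_{2\lambda+1}(x,-x,\bm{\mu},\beta;g)$—before reducing to the computation of Corollary \ref{cor:Partition_function}. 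Your write-up simply fills in the square-completion details (including the $e^{\beta/2}$ rescaling matching $\eta^2/4\alpha$ to $\psi^{\mp}$) that the paper leaves implicit.
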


\begin{proof}
  The first part is computed in the same way as in the case of \(Z_{\text{Rabi}}(\beta) \) (cf. Corollary \ref{cor:Partition_function} ) by noticing that
  \[
    \int_{-\infty}^\infty e^{-\alpha x^2} \cosh(x \, \eta)dx = \int_{-\infty}^\infty e^{-\alpha x^2 \pm \eta x} dx = \sqrt{\frac{\pi}{\alpha}}e^{\frac{\eta^2}{4\alpha}}.
  \]
  For the second part, it is enough to observe that
  \[
    K_0(x,-x,t;g) = \frac{ e^{g^2t}}{\sqrt{\pi (1-e^{-2t})}} \exp\left( - \frac{1+e^{-t}}{1-e^{-t}}x^2\right),
  \]
   and, for \( \lambda \equiv 1 \pmod{2} \),
  \[
     \theta_{\lambda}(x,-x, \bm{\mu_{\lambda}},t;g) = -\frac{2\sqrt{2} g  x}{1 - e^{- t}} \sum_{\gamma=0}^{\lambda} (-1)^{\gamma}  \left( e^{- t \mu_{\gamma} } + e^{ t( \mu_{\gamma}- 1)}  \right),
  \]
  and then proceed as in the case of \(\ZRabi(\beta;g,\Delta)\).
\end{proof}

%- - - - - - - - - - - - - - - - - - - - - - - - - - - 
%%%%%%%%%%% Analytic properties %%%%%%%%%%%%%%%%%%%%%%
%- - - - - - - - - - - - - - - - - - - - - - - - - - - 

\section{Analytic properties of the heat kernel}
\label{sec:analyt-prop-heat}

In this short section, we discuss some estimates for the absolute value of the components of the heat kernel. These estimates allow
a direct proof of the pointwise and uniform convergence of the series appearing in the heat kernel and partition function, along
with other analytic properties. Despite the apparent complication of the formulas developed in this paper, it is not difficult
to obtain these estimates. We leave the detailed discussion, including the analytic continuation, to \cite{RW2020z}. 

Let \(\lambda \in \Z_{\geq 1}\), for fixed $x,y>0$ and $t >0$, there are real functions \( C_1(x,y,t), C_2(t), C_3(t) \geq 0 \) bounded in closed
intervals of the half plane $\Re(t)>0$, such that
\begin{align} \label{eq:estimateK}
  \left| \theta_{\lambda}(\bm{\mu_\lambda},x,y,t) \right| &\leq \left|\frac{\sqrt{2} g }{1-e^{-2 t}} \right| C_1(x,y,t) \nonumber \\
  \left| \psi_{\lambda}^{\pm}(\bm{\mu_\lambda},t) \right| &\leq \left|\frac{2 g^2 }{1-e^{-2 t}} \right| C_2(t) \nonumber \\
  \left\vert \xi_{\lambda}(\bm{\mu_{\lambda}},t) \right\vert  &\le \left|\frac{2 g^2 }{1-e^{-2 t}} \right| C_3(t) \lambda  
\end{align}
uniformly for \(0 \leq  \mu_1 \leq \mu_2 \leq \cdots \leq \mu_\lambda \leq 1\).  We refer to Lemma 3.1 of \cite{RW2020z} for the proof\footnote{We note that 
  in proof of Lemma 3.1 in \cite{RW2020z} the expressions of $s(x,y,t)$ and $S_n(t)$ are not correct. 
  The correct expression for $s(x,y,t)$ is
  \[
    s(x,y,t) = t\sum_{\gamma=0}^{\frac{\lambda-1}{2}} \bigg(x \int_{\mu_{2\gamma}}^{\mu_{2\gamma+1}} (e^{t(1-\alpha)}-e^{t(\alpha -1)}) d \alpha + y \int_{\mu_{2\gamma}}^{\mu_{2\gamma+1}}(e^{t \alpha}-e^{- t \alpha}) d \alpha \bigg)
  \]
  and the correct expression for $S_n(t)$ is
  \[
    S_n(t) = - t^2 \left( \int_{ \mu_{n}}^{ \mu_{n+1}} (e^{t \alpha }- e^{-t \alpha }  ) d \alpha  \right) \sum_{\substack{n<\beta\leq \lambda-1 \\ \beta - \alpha \equiv 1 \pmod{2}} }
    \left(  e^{-2 t}\int_{ \mu_{\beta}}^{ \mu_{\beta+1} } e^{t \alpha } d \alpha  - \int_{ \mu_{\beta}}^{ \mu_{\beta+1} } e^{- t \alpha} d \alpha    \right).
  \]
  These errors do not affect the proof, which follows as written in \cite{RW2020z} with no further changes.
  }.

As mentioned in Section \ref{sec:spvl}, the general theory of the Trotter-Kato product formula assures the pointwise uniform
convergence of the heat kernel. However, with the estimates above, we can prove the uniform convergence of the heat kernel directly.
Note that a similar result holds for the partition function.

\begin{thm}
  For any closed interval $I \subset (0,\infty)$, as a function of $t$ the series given in Theorem \ref{thm:heat_kernel} are uniformly convergent
  component-wise for fixed $x,y,g,\Delta >0$.
\end{thm}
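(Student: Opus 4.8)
The plan is to apply the Weierstrass M-test directly to the series in Theorem~\ref{thm:heat_kernel}, bounding each matrix entry of the $\lambda$-th summand by a term of a convergent numerical series whose majorant is uniform for $t$ in the fixed closed interval $I$. The crucial structural observation, which I would emphasize first, is the disparity in the $\lambda$-dependence of the three ingredients $\theta_\lambda$, $\xi_\lambda$ and the domain of integration: by the estimates \eqref{eq:estimateK}, $\vert\theta_\lambda\vert$ is bounded by a quantity $C_1(x,y,t)$ that is \emph{independent} of $\lambda$, whereas $\vert\xi_\lambda\vert$ grows only \emph{linearly}, being at most $\vert 2g^2/(1-e^{-2t})\vert\,C_3(t)\,\lambda$; meanwhile the simplex $0\le\mu_1\le\cdots\le\mu_\lambda\le1$ has volume $1/\lambda!$. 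The factorial in the denominator will dominate the exponential and power-of-$\lambda$ growth coming from the numerator.

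Concretely, I would estimate the absolute value of each entry of $(t\Delta)^\lambda\, e^{-2g^2(\coth(t/2))^{(-1)^\lambda}}\,\Phi_\lambda(x,y,t;g)$ in three steps. First, the entries of the matrix occurring in $\Phi_\lambda$, namely $\pm\cosh(\theta_\lambda)$ and $\pm\sinh(\theta_\lambda)$, are bounded in absolute value by $\cosh(\vert\theta_\lambda\vert)\le e^{\vert\theta_\lambda\vert}$, which by \eqref{eq:estimateK} is at most $\exp\!\big(\vert\tfrac{\sqrt2 g}{1-e^{-2t}}\vert\,C_1(x,y,t)\big)$, a constant independent of $\lambda$ and of $\bm{\mu_\lambda}$. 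Second, the scalar exponential factor inside the integral satisfies
\[
  \Big\vert e^{4g^2\frac{\cosh(t(1-\mu_\lambda))}{\sinh(t)}(\frac{1+(-1)^\lambda}{2})+\xi_\lambda(\bm{\mu_\lambda},t)}\Big\vert \le e^{4g^2\coth(t)}\, e^{\vert\frac{2g^2}{1-e^{-2t}}\vert\,C_3(t)\,\lambda},
\]
using $\cosh(t(1-\mu_\lambda))\le\cosh(t)$ for $0\le\mu_\lambda\le1$ together with the third line of \eqref{eq:estimateK}. Third, since $g^2>0$ and $\coth(t/2),\tanh(t/2)>0$, the prefactor $e^{-2g^2(\coth(t/2))^{(-1)^\lambda}}$ is at most $1$, again uniformly in $\lambda$. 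Integrating the resulting constant bound over the simplex contributes the factor $1/\lambda!$.

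Collecting these estimates, each entry of the $\lambda$-th term is bounded by $M_\lambda := A\,(|t\Delta|\,e^{B})^\lambda/\lambda!$, where $A$ absorbs the $\lambda$-independent constants and $B=\vert 2g^2/(1-e^{-2t})\vert\,C_3(t)$. Because $C_1,C_2,C_3$ and the elementary functions $\coth(t)$, $\tanh(t/2)$, $\coth(t/2)$ are continuous and hence bounded on the compact set $I\subset(0,\infty)$, the quantities $A$ and $B$ may be replaced by finite constants $\overline A,\overline B$ uniform over $t\in I$. Then
\[
  \sum_{\lambda\ge0} M_\lambda \le \overline A\sum_{\lambda\ge0}\frac{(|t\Delta|e^{\overline B})^\lambda}{\lambda!} = \overline A\,e^{|t\Delta|e^{\overline B}} < \infty,
\]
and the Weierstrass M-test gives the claimed uniform component-wise convergence on $I$; the overall factor $K_0(x,y,t;g)$ is a fixed continuous function of $t$ on $I$ and does not affect this. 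The identical argument applied to the trace (Corollary~\ref{cor:Partition_function}), where $\theta_\lambda,\xi_\lambda$ are replaced by the analogously estimated $\theta_\lambda,\xi_\lambda,\psi^\pm_\lambda$, yields the statement for the partition function. The main obstacle is purely bookkeeping: one must confirm that the bound on $\vert\theta_\lambda\vert$ is genuinely $\lambda$-independent while that on $\vert\xi_\lambda\vert$ is at worst linear in $\lambda$, since it is precisely this gap---sub-factorial growth in the exponent against the $1/\lambda!$ from the simplex volume---that turns the majorant into a convergent exponential series. Granting the estimates \eqref{eq:estimateK} (established via Lemma~3.1 of \cite{RW2020z}), the remaining work is the routine assembly described above.
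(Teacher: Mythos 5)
Your proposal is correct and follows essentially the same route as the paper: the paper's proof likewise invokes the estimates \eqref{eq:estimateK} and concludes by the Weierstrass $M$-test, with each component of the series majorized by a term of the form $c_0e^{c_2+c_3(x,y)}\,(c_1\Delta e^{c_4})^{\lambda}/\lambda!$ summing to $c_0e^{c_2+c_3(x,y)+c_1\Delta e^{c_4}}$. You simply spell out the bookkeeping (the $\lambda$-independence of the bound on $\theta_\lambda$, the linear growth of the bound on $\xi_\lambda$, the $1/\lambda!$ from the simplex volume, and compactness of $I$) that the paper compresses into ``it is immediate to verify.''
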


\begin{proof}

  By the estimates \eqref{eq:estimateK}, it is immediate to verify that there are constants $c_0,c_1,c_2,c_3(x,y),c_4>0$ such that any
  matrix component of $\KRabi(x,y,t; g,\Delta)$ is bounded above by 
  \[
    M = c_0 e^{c_2+ c_3(x,y)+ c_1\Delta e^{c_4}},
  \]
  and the result follows in the standard way by the Weierstrass $M$-test.
\end{proof}

In addition to verifying the convergence of the series, the estimates above are enough to show that the heat kernel and partition
functions of the QRM are holomorphic functions (with respect to the variable $t$) on certain regions of the complex plane. In particular, the
analytic continuation of the heat kernel gives the time evolution propagator of the QRM while the analytic continuation of the
partition function gives the meromorphic continuation of the spectral zeta function of the QRM. We refer the reader to \cite{RW2020z}
for details.

Next, we consider estimates with respect to the spatial variables. It is well-known, and elementary to verify, that for fixed $t>0$ there
are positive constants $a,b>0$ such that
\begin{equation}
  \label{eq:estimateMehler}
  |K_{0}(x,y,t; g )| \leq a e^{-b (x^2 + y^2)}.
\end{equation}
It is not difficult to extend this result to the case of the heat kernel of the QRM as follows.

\begin{prop}
  Let
  \[
    \KRabi(x,y,t; g,\Delta) =
    \begin{bmatrix}
      k_{1,1}(x,y,t; g, \Delta) & k_{1,2}(x,y,t; g, \Delta)\\
      k_{2,1}(x,y,t; g, \Delta) & k_{2,2}(x,y,t; g, \Delta)
    \end{bmatrix}.
  \]
  Then, for fixed $g,\Delta, t>0$, there are positive constants $a,b$ such that
  \[
    |k_{i,j}(x,y,t; g, \Delta)| \leq a e^{-b (x^2+y^2)},
  \]
  for $i,j \in \{1,2\}$.
\end{prop}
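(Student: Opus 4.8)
The plan is to extract the Gaussian decay from the Mehler factor $K_0$ and to show that the remaining series grows at most exponentially-linearly in $x,y$, so that the quadratic decay of $K_0$ dominates. Writing each component as $k_{i,j}(x,y,t;g,\Delta) = K_0(x,y,t;g)\,S_{i,j}(x,y,t;g,\Delta)$, where $S_{i,j}$ denotes the corresponding $(i,j)$-entry of the series $\sum_{\lambda\ge0}(t\Delta)^\lambda e^{-2g^2(\coth(\tfrac t2))^{(-1)^\lambda}}\Phi_\lambda(x,y,t;g)$ appearing in Theorem \ref{thm:heat_kernel}, the estimate \eqref{eq:estimateMehler} reduces the claim to bounding $|S_{i,j}(x,y,t;g,\Delta)| \le M(t,g,\Delta)\,e^{\kappa(t,g)(|x|+|y|)}$ uniformly in $x,y$.

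To bound a single summand, I would first note that each entry of the matrix $\begin{bmatrix}(-1)^\lambda\cosh & (-1)^{\lambda+1}\sinh \\ -\sinh & \cosh\end{bmatrix}(\theta_\lambda)$ is, in absolute value, at most $e^{|\theta_\lambda(x,y,\bm{\mu_{\lambda}},t;g)|}$. Since $\theta_\lambda$ is homogeneous and linear in $(x,y)$, the first estimate of \eqref{eq:estimateK} gives $|\theta_\lambda| \le \kappa(t,g)(|x|+|y|)$ with a constant uniform in $\lambda$ and over the simplex $0\le\mu_1\le\cdots\le\mu_\lambda\le1$ (for fixed $t$). The scalar prefactor inside the iterated integral, $e^{4g^2\frac{\cosh(t(1-\mu_\lambda))}{\sinh(t)}(\frac{1+(-1)^\lambda}{2})+\xi_\lambda(\bm{\mu_{\lambda}},t;g)}$, is bounded on the simplex by $e^{C_0(t,g)+C_3(t,g)\lambda}$ thanks to the boundedness of $\cosh(t(1-\mu_\lambda))/\sinh(t)$ for $\mu_\lambda\in[0,1]$ and the third estimate of \eqref{eq:estimateK}, which is linear in $\lambda$. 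Integrating over the simplex, whose volume is $1/\lambda!$, therefore yields $|(\Phi_\lambda)_{i,j}| \le \tfrac{1}{\lambda!}\,e^{C_0+C_3\lambda}\,e^{\kappa(|x|+|y|)}$.

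Summing over $\lambda$ completes the bound on $S_{i,j}$: the bounded factor $e^{-2g^2(\coth(\tfrac t2))^{(-1)^\lambda}}$ and the geometric term $(t\Delta)^\lambda$ combine with the above to give $|S_{i,j}| \le (\text{const})\sum_{\lambda\ge0}\frac{(t\Delta e^{C_3})^\lambda}{\lambda!}\,e^{\kappa(|x|+|y|)} = (\text{const})\,e^{t\Delta e^{C_3}}\,e^{\kappa(|x|+|y|)}$, which is the desired $M(t,g,\Delta)\,e^{\kappa(|x|+|y|)}$; the factorial ensures convergence uniformly in $x,y$. Finally, I would absorb the linear exponential into the Gaussian of \eqref{eq:estimateMehler}: choosing $b'=b/2$ and completing the square gives
\[
  -b(x^2+y^2)+\kappa(|x|+|y|) \le \frac{\kappa^2}{2b'} - b'(x^2+y^2),
\]
so that $|k_{i,j}| \le a\,M\,e^{\kappa^2/(2b')}\,e^{-b'(x^2+y^2)}$, proving the proposition with $b'=b/2$.

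The main obstacle is justifying that the growth of the hyperbolic matrix entries in the spatial variables is only linear and, crucially, uniform in the summation index $\lambda$: this hinges on $\theta_\lambda$ being a homogeneous linear function of $(x,y)$ whose alternating-sum structure does not accumulate with $\lambda$, as recorded in the uniform bound \eqref{eq:estimateK} (where the estimate for $\theta_\lambda$ carries no factor of $\lambda$, in contrast to that for $\xi_\lambda$). Once this uniformity is in hand, the factorial coming from the simplex volume guarantees a bound on $S_{i,j}$ that is uniform in $x,y$, and the Gaussian decay of the Mehler kernel does the rest.
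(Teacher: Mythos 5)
Your proposal is correct and follows essentially the same route as the paper's proof: factor out the Mehler kernel $K_0$, bound the hyperbolic matrix entries by $e^{|\theta_\lambda|}$ using the fact that $\theta_\lambda$ is linear in $(x,y)$ with coefficients bounded uniformly in $\lambda$ and $\bm{\mu_{\lambda}}$, control $\xi_\lambda$ by the $\lambda$-linear estimate so that the geometric factor is beaten by the $1/\lambda!$ simplex volume, and finally absorb the resulting $e^{\kappa(|x|+|y|)}$ into the Gaussian of \eqref{eq:estimateMehler}. The only cosmetic difference is that the paper re-derives the uniform-in-$\lambda$ linear bound on $\theta_\lambda$ explicitly (via the integral representation $s_\lambda$) rather than quoting \eqref{eq:estimateK}, and leaves the final completing-the-square step implicit, which you spell out.
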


\begin{proof}
  Let us rewrite the function $\theta_{\lambda}(x,y,\bm{\mu_{\lambda}},t; g)$ as
  \[
    \theta_{\lambda}(x,y,\bm{\mu_{\lambda}},t; g) = \sqrt{2} g(x+y) \tanh(t) - \frac{\sqrt{2} g t}{\sinh(t)} s_\lambda(x,y,\bm{\mu}_{\lambda},t),
  \]
  with
  \[
    s_\lambda(x,y,\bm{\mu}_{\lambda},t) = \sum_{\gamma=1}^{\frac{\lambda}{2}} \bigg(x \int_{\mu_{2\gamma-1}}^{\mu_{2\gamma}} (e^{t(1-\alpha)}-e^{t(\alpha -1)}) d \alpha + y \int_{\mu_{2\gamma-1}}^{\mu_{2\gamma}}(e^{t \alpha}-e^{- t \alpha}) d \alpha \bigg)
  \]
  if $\lambda \equiv 0 \pmod{2}$ and 
  \[
    s_\lambda(x,y,\bm{\mu}_{\lambda},t) = \sum_{\gamma=0}^{\frac{\lambda-1}{2}} \bigg(x \int_{\mu_{2\gamma}}^{\mu_{2\gamma+1}} (e^{t(1-\alpha)}-e^{t(\alpha -1)}) d \alpha + y \int_{\mu_{2\gamma}}^{\mu_{2\gamma+1}}(e^{t \alpha}-e^{- t \alpha}) d \alpha \bigg)
  \]
  if $\lambda \equiv 1 \pmod{2}$. In any of the two cases we can verify that
  \[
    0 \leq s_{\lambda}(x,y,\bm{\mu}_{\lambda},t) \leq (x+y) (e^t + e^{-t}).
  \]
  Then, let $\Xi(x,y,t)$ be the series
  \begin{align*}
     &\sum_{\lambda=0}^{\infty} (t\Delta)^{\lambda} e^{-2g^2 (\coth(\tfrac{t}2))^{(-1)^\lambda}} \idotsint\limits_{0\leq \mu_1 \leq \cdots \leq \mu_\lambda \leq 1}  e^{4g^2 \frac{\cosh(t(1-\mu_\lambda))}{\sinh(t)}(\frac{1+(-1)^\lambda}{2}) + \xi_{\lambda}(\bm{\mu_{\lambda}},t)} (-1)^{\lambda}  \cosh \left( \theta_{\lambda}(x,y,\bm{\mu_{\lambda}},t) \right) d \bm{\mu_{\lambda}},
  \end{align*}
  then, by \eqref{eq:estimateK} and the foregoing discussion there are positive constants $a,b,c,d$ such that
  \begin{align*}
    |\Xi(x,y,t)| &\leq a \sum_{\lambda=0}^{\infty} (t\Delta b)^{\lambda} \idotsint\limits_{0\leq \mu_1 \leq \cdots \leq \mu_\lambda \leq 1} ( e^{\theta_{\lambda}(x,y,\bm{\mu_{\lambda}},t)} + e^{-\theta_{\lambda}(x,y,\bm{\mu_{\lambda}},t)} )\\
               &\leq a e^{c(x+y)} \sum_{\lambda=0}^{\infty} \frac{(t\Delta b)^{\lambda}}{\lambda!} = d e^{c(x+y)},
  \end{align*}
  and a similar one for the remaining components of the heat kernel. Combining with the estimate \eqref{eq:estimateMehler} of the
  Mehler kernel we obtain the desired result.
\end{proof}

An immediate corollary of the above result is that the heat kernel is continuous with respect to the spatial variables $x,y$. 

\begin{cor}
  For fixed $g,\Delta, t>0$, the function $\KRabi(x,y,t ; g, \Delta)$ is continuous in the variables $x,y$.
\end{cor}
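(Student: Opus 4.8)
The plan is to deduce continuity from the locally uniform convergence of the series in Theorem~\ref{thm:heat_kernel}, combined with the continuity of each individual summand. First I would observe that the prefactor $K_0(x,y,t;g)$ is smooth in $x,y$ (it is the Gaussian Mehler kernel \eqref{eq:K0}), and that for each fixed $\lambda$ the functions $\theta_\lambda(x,y,\bm{\mu_{\lambda}},t;g)$ and $\xi_\lambda(\bm{\mu_{\lambda}},t;g)$ are continuous—indeed affine in $(x,y)$ and built from hyperbolic functions—so the integrand of the $\lambda$-th iterated integral is jointly continuous in $(x,y,\bm{\mu_{\lambda}})$ on the compact simplex $0\le\mu_1\le\cdots\le\mu_\lambda\le1$. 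Since a parameter integral of a jointly continuous function over a compact domain is continuous in the parameter, each term of the series is continuous in $(x,y)$.

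Next I would upgrade the pointwise estimate of the preceding proposition to a locally uniform one, reusing the construction of the majorant $\Xi(x,y,t)$ from its proof. Fix a compact set $Q\subset\R^2$. The estimates \eqref{eq:estimateK} show that on $Q$ each matrix entry of the $\lambda$-th summand is bounded, up to a constant depending only on $Q,g,\Delta,t$, by the volume $1/\lambda!$ of the simplex times $(t\Delta)^\lambda$ times a factor of the form $e^{C(t)\lambda}$ coming from the linear-in-$\lambda$ bound on $\xi_\lambda$. The crucial point is that the factorial $1/\lambda!$ beats the geometric growth $(t\Delta\,e^{C(t)})^\lambda$, so on $Q$ the series is dominated by a convergent numerical series independent of $(x,y)$. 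By the Weierstrass $M$-test the series then converges uniformly on $Q$.

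Finally, a uniform limit of continuous functions being continuous, the sum is continuous on $Q$; multiplying by the continuous factor $K_0$ preserves this, and since $Q$ was an arbitrary compact set, $\KRabi(\cdot,\cdot,t;g,\Delta)$ is continuous on all of $\R^2$. The main obstacle—already essentially resolved in the proof of the proposition—is securing the summable majorant uniformly in $(x,y)$: one must verify that the only $\lambda$-dependence that could destroy summability is the linear growth of $\xi_\lambda$, and that this is absorbed by the simplex volume $1/\lambda!$. Everything else (the contribution of $\theta_\lambda$ through the $\cosh$/$\sinh$ matrix entries, and the $\coth$ and $\cosh$ prefactors) is either bounded uniformly in $\lambda$ or, in the case of $\theta_\lambda$, contributes only the locally bounded factor $e^{c(x+y)}$ already handled in the proof of the proposition.
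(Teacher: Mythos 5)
Your proposal is correct and is essentially the argument the paper has in mind: the paper labels the corollary ``immediate'' from the preceding proposition precisely because the majorant constructed in that proposition's proof (simplex volume $1/\lambda!$ against the geometric factor $(t\Delta)^{\lambda}e^{C\lambda}$ from the $\xi_\lambda$ bound, with the $\theta_\lambda$ contribution absorbed into the locally bounded factor $e^{c(x+y)}$) gives locally uniform convergence of a series of continuous terms. Your write-up merely makes explicit the Weierstrass $M$-test step and the continuity of each iterated-integral summand, which is exactly the intended route.
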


A detailed analysis with the estimates above, or similar ones, may be used to prove further analytical properties of the heat
kernel or the partition functions of the QRM with respect to the space variables. We do not further pursue this direction in this
paper.

%%%%%%%%%%%%%%%%%%%%%%%%%%%%%%%%%%%%
%\section*{Acknowledgements}
%%%%%%%%%%%%%%%%%%%%%%%%%%%%%%%%%%%%%
\section*{Acknowledgements}

The authors would like to express their gratitude to the QUTIS group in the university of the Basque Country (UPV/EHU) for  the hospitality during the visit on the occasion of the Workshop on Quantum Simulation and Computation in February 2018, and particularly to \'I\~nigo L.~Egusquiza for the extensive discussions during the stay in Bilbao.  Furthermore, the authors greatly appreciate the ocassional discussions over the years with Daniel Braak on the heat-semigroup for
quantum interaction systems from the physics viewpoint. These fruitful discussions have stimulated the authors very much for advancing the
study on the heat kernel. The authors would like to express also their deep thanks to Takashi Ichinose for giving them valuable comments
on a recent development of the Trotter-Kato product formula and for the longterm stimulating discussions.
This work was partially supported by JST CREST Grant Number JPMJCR14D6, Japan, and
by Grand-in-Aid for Scientific Research (C) JP16K05063 and JP20K03560 of JSPS, Japan.

C.R.B and M.W. contributed equally to this work.

%%%%%%%%%%%%%%%%%%%%%%%%%%%%%%%%%%%%%%%%%%%%%%%%%%%%%%%%%%%%%%%%
%%%%%%%%%%%%%%%%%%%%%%%%%%%%%%%%%%%%%%%%%%%%%%%%%%%%%%%%%%%%%%%%

\begin{flushleft}

\bigskip

 Cid Reyes-Bustos \par
 Department of Mathematical and Computing Science, School of Computing, \par
 Tokyo Institute of Technology \par
 2 Chome-12-1 Ookayama, Meguro-ku, Tokyo 152-8552 JAPAN \par\par
 \texttt{reyes@c.titech.ac.jp}

 \bigskip

 Masato Wakayama \par
 Institute of Mathematics for Industry,\par
 Kyushu University \par
 744 Motooka, Nishi-ku, Fukuoka 819-0395 JAPAN \par
 \texttt{wakayama@imi.kyushu-u.ac.jp}

 \medskip

 Current address: \\
 Department of Mathematics, \par
 Tokyo University of Science \par
 1-3 Kagurazaka, Shinjyuku-ku, Tokyo 162-8601 JAPAN \par\par
 \texttt{wakayama@rs.tus.ac.jp}

\end{flushleft}

\end{document}